\newtheorem{theorem}{Theorem}[section]
\theoremstyle{definition}
\newtheorem{lemma}[theorem]{Lemma}
\newtheorem{defn}[theorem]{Definition}
\newtheorem{claim}[theorem]{Claim}
\newtheorem{observation}[theorem]{Observation}
\newtheorem{corollary}[theorem]{Corollary}
\newenvironment{customthm}[1]
  {\innercustomthm}
  {\endinnercustomthm}
\DeclarePairedDelimiter\ceil{\lceil}{\rceil}
\DeclarePairedDelimiter\floor{\lfloor}{\rfloor}
\def\poly{\mbox{\text{poly}}}
\def\Exp{\hbox{\rm I\kern-2pt E}}
\def\Prob{\hbox{\rm I\kern-2pt P}}
\def\Naturals{\hbox{\rm I\kern-2pt I\kern-3.1pt N}}
\def\inline#1:{\par\vskip 7pt\noindent{\bf #1:}\hskip 10pt}
\def\dnsinline#1:{\par\vskip -7pt\noindent{\bf #1:}\hskip 10pt}
\def\dnsitem{\vspace{-6pt}\item}
\long\def\commabs #1\commabsend{}
\long\def\CommentedStart #1\CommentedEnd{}
\def\BI{\mbox{\sf BI}}
\def\tn{{\tilde n}}
\def\Work{\mbox{\sf Work}}
\def\BYZ{\mathcal{B}}
\def\cC{\mathcal{C}}
\def\Random{\mathcal{R}}
\def\execution{\mathcal{E}}
\def\event{\mathcal{EV}}
\def\cH{\mathcal{H}}
\def\Wmax{\mbox{\sf W}_{max}}
\def\Known{\mathcal{K}}
\def\Unknown{\mathcal{U}}
\def\KTA{\mbox{KTA}}
\def\NKTA{\mbox{NKTA}}
\def\TwoCK{2-common-knowledge}
\def\INPUT{\mathcal{X}}
\def\INDEX{\mathcal{I}}
\def\hINDEX{{\hat{\mathcal{I}}}}
\def\cF{\mathcal{F}}
\def\cHt{\mathcal{H}_{\mathcal{F}}}
\def\hn{{\hat n}}
\def\hM{{\hat M}}
\def\bitval{b}
\def\Det{\mathbf{Det}}
\def\Harsh{\mathbf{Harsh}}
\def\Benign{\mathbf{Benign}}
\def\alln{n}
\def\IDp{{\textsf{Download}}}
\def\ORp{{\textsf{Disjunction}}}
\def\XORp{{\textsf{Parity}}}
\def\newexp{{\text{Large Set Expander}}}
\def\num{\nu}
\def\size{\sigma}
\def\Time{\mathcal{T}}
\def\Query{\mathcal{Q}}
\def\TotQuery{\mathcal{TQ}}
\def\Message{\mathcal{M}}
\def\Machine{M}
\def\CORE{\mbox{\sf CORE}}
\def\CloudQuery{\mbox{\tt Cloud\_Query}}
\def\CloudRG{\mbox{\tt Cloud\_RG}}
\def\Aor{\mathsf{A}_\mathsf{or}}
\def\Advers{\mathtt{Adv}}
\def\Algor{\mathtt{ALG}}
\def\res{res}
\def\resM{res_M}
\def\rand{rand}
\def\nul{null}
\def\semiparity{semiparity}
\def\parity{parity}
\def\Spread{\mbox{\tt Spread}}
\def\RandomizedDisjunctionAlgo{\mbox{\tt Randomized\_Disjunction}}
\def\CollectRequests{\mbox{\tt Collect\_Requests}}
\def\CommitteeWork{\mbox{\tt Committee\_Work}}
\def\Gossip{\mbox{\tt Gossip}}
\def\KTAList{\mbox{\tt KTA\_List}}
\def\WeakResolve{\mbox{\tt Weak\_Resolve}}
\def\FastWeakResolve{\mbox{\tt Fast\_Weak\_Resolve}}
\def\WeakXORResolve{\mbox{\tt Weak\_Parity\_Resolve}}
\def\WeakParityResolve{\mbox{\tt Weak\_Parity\_Resolve}}
\def\InputConvergecast{\mbox{\tt Input\_Convergecast}}
\def\InputConvergecast{\mbox{\tt Convergecast}}
\def\ConvergecastForest{\mbox{\tt Convergecast\_Forest}}
\def\ElectPublic{\mbox{\tt Elect\_Public}}
\def\ElectPrivate{\mbox{\tt Elect\_Private}}
\def\DownloadParallel{\mbox{\tt Parallel\_Download}}
\def\LSEDisjunctOne{\mbox{\tt LSE\_Disjunct\_1}}
\def\LSEDisjunctTwo{\mbox{\tt LSE\_Disjunct\_2}}
\def\GLSEExplicitDisjunct{\mbox{\tt GLSE\_Explicit\_Disjunct}}
\def\DownloadBlacklist{\mbox{\tt Blacklist\_Download}}
\def\DownloadGossip{\mbox{\tt Gossip\_Download}}
\def\LinearDownload{\mbox{\tt Linear\_Download}}
\def\FastLinearDownload{\mbox{\tt Fast\_Linear\_Download}}
\def\MajorizingDownload{\mbox{\tt Majorizing\_Download}}
\def\ConvergeParity{\mbox{\tt Converge\_Parity}}
\def\MajorizingParity{\mbox{\tt Majorizing\_Parity}}
\def\PARAM{Z} 
\def\CCA{\mbox{\sf CCA}}
\def\Density{\delta}
\def\gamin{\min(\Density,1-\Density)}
\def\ModifiedDensity{\mbox{\boldmath$\Density$}}
\def\InverseDensity{\frac{1}{\ModifiedDensity}}
\def\byzfrac{\beta}
\def\goodfrac{\gamma}
\def\jeffin#1{{\color{orange}{\bf Jeffin:} #1}}
\newcommand{\davidTODO}[1]{\todo[author=David, inline, color=red!40]{#1}}
\def\david#1{{\color{red}{\bf David:} #1}}
\def\john#1{{\color{blue}{\bf John:} #1}}
\def\sri#1{{\color{purple}{\bf Sri:} #1}}
\def\aish#1{{\color{magenta}{\bf Aishwarya:} #1}}
\newcommand{\jeffin}[1]{{}}
\newcommand{\davidTODO}[1]{{}}
\def\david#1{{}}
\def\john#1{{}}
\newcommand{\aish}[1]{}
\def\sri#1{{}}
\def\aish#1{{}}
\newcommand{\remove}[1]{}
\title{Byzantine Resilient Computing with the Cloud}
\date{ }
\author{
John Augustine$^{1}$, 
Jeffin Biju$^{1}$,
Shachar Meir$^{2}$,
David Peleg$^{2}$, \\
Srikkanth Ramachandran$^{1}$,  
Aishwarya Thiruvengadam$^{1}$ \\
        \small $^{1}$Indian Institute of Technology Madras \\
        \small $^{2}$Weizmann Institute of Science
}
\begin{document}

\maketitle

\john{Candidate title: 
\\
Byzantine Resilient Distributed Computing Via Cloud Assistance, 
\\
BRICCS: Byzantine Resilient Congested Clique With Cloud Support, 
\\
BRICCS: Byzantine Resilient Computing with Cloud Support. 
\\
The Patt-Shamir paper \cite{AGP21-sss} uses ``Computing with the Cloud (CWC)" as the model name. If we go with that, then perhaps 
\\
Byzantine Resilient Computing with the Cloud.
\\
BRCWC seems to be a mouthful. BRICC may work.}
\david{A reason for avoiding the word "support" is that it might imply a connection to the "supported model".}

\begin{abstract}
We study a framework for modeling distributed network systems assisted by a reliable and powerful cloud service. Our framework aims at capturing hybrid systems based on a point to point message passing network of machines, with the additional capability of being able to access the services of a trusted high-performance external entity (the cloud). 
We focus on one concrete aspect that was not studied before, namely, ways of utilizing the cloud assistance in order to attain increased resilience against Byzantine behavior of machines in the network.
Our network is modeled as a congested clique comprising $k$ machines that are completely connected to form a clique and can communicate with each other by passing small messages. In every execution, up to $\beta k$ machines (for suitable values of $\beta \in [0, 1)$) are allowed to be \emph{Byzantine}, i.e., behave maliciously including colluding with each other, with the remaining $\gamma k$ or more machines  being \emph{honest} (for $\gamma=1-\beta$). Additionally, the machines in our congested clique can access data through a trusted cloud via queries. This externality of the data captures many real-world distributed computing scenarios and provides a natural context for exploring Byzantine resilience for essentially all conceivable problems. Moreover, we are no longer bound by the usual limits of $\beta < 1/3$ or even $\beta < 1/2$ that are typically seen in Byzantine Agreement.

The current paper focuses on utilizing one specific technique for ensuring reliability, based on restricting the allowed cloud services. Specifically, the cloud will only provide read-only access to shared memory and potentially a random bit generator service (i.e., the machines will not be allowed to write on the cloud, nor will the cloud offer computational services). This is used in order to limit the potential influence of the Byzantine machines.

\john{The Patt-Shamir paper \cite{AGP21-sss} studies the download problem under the name of cloudcast or cCast as they call it. Should we rename download to cCast?}
\david{Good idea, but we should remember to do it everywhere, i.e., including inside macros etc. Let's do it before the submission to a conference}

We focus on a few fundamental problems.
We start with the ${\textsf{Download}}$ problem, wherein the cloud stores $n$ bits and these $n$ bits must be downloaded to \textit{all} of the $k$ machines. 
The maximum number of bits that each honest machine must query (called the query complexity and denoted by $\mathcal{Q}$) is  
$\mathcal{Q}= \Omega\left(\frac{\beta n}{\gamma} + \frac{n}{\gamma k}\right)$
for all $\beta$. 
Conversely, for all $\beta \in [0,1)$, suitable randomization (despite an adaptive adversary) and deliberate blacklisting of verified malicious nodes reduces $\mathcal{Q}$ to $\tilde{O}$\footnote{We use $\tilde{O}(f(n))$ as shorthand for $O(f(n) \log^c n)$ for some constant $c$}$\left(\frac{n}{\gamma k} + \sqrt{n}\right)$ w.h.p\footnote{We say that an event holds with high probability (w.h.p) if its probability is 
$1-O(1/n)$.}, and further reduces it to $\tilde{O}\left(\frac{n}{\gamma k}\right)$ when $\beta$ is sufficiently small. When the adversary is non-adaptive, we show that $\mathcal{Q} = \tilde{O}\left(\frac{n}{k}\right)$ (w.h.p.) for all fixed $\beta \in [0,1)$.
In addition to ${\textsf{Download}}$, we study the problem of computing the ${\textsf{Disjunction}}$ and ${\textsf{Parity}}$ of the bits in the cloud. For ${\textsf{Disjunction}}$, we parameterize $\mathcal{Q}$ by $\delta$, the fraction of bits in the cloud that are $1$. 
With randomization, we show that
$\mathcal{Q} = \tilde{O}\left(\frac{1}{\gamma k} \cdot \frac{1}
{\mbox{\boldmath$\delta$}}
\right)$ (w.h.p.),
where $\mbox{\boldmath$\delta$} = \max(1/n, \delta)$.
For the deterministic case, we show that $\mathcal{Q} = \tilde{O}\left(\frac{n}{\gamma k} + \frac{1}{\mbox{\boldmath$\delta$}} + k\right)$ and  $\mathcal{Q} = \Omega\left(\frac{n}{\gamma k} + \frac{1}{\mbox{\boldmath$\delta$}}\right)$, 
which implies that it is impossible to match the randomized query complexity. Our deterministic solution uses a natural variant of bipartite expanders which may be of independent interest. For the ${\textsf{Parity}}$ problem, we show improved bounds for the number of communication rounds and messages that are better than for ${\textsf{Download}}$.

\end{abstract}

\clearpage

\tableofcontents

\section{Introduction}

\subsection{Background and Motivation}

This work studies distributed computing systems assisted by a trusted external entity that provides reliable services to all the participants. Inspired by the plethora of cloud technologies available in the market~\cite{aws,google}, we refer to the external entity  as the {\em cloud}. The scope of our work is more general and potentially applicable in other contexts where the external entity could take other forms -- some of which we discuss shortly. 

There are many real-world contexts in which a network cooperates with an entity that is external to it. Afergan, Leighton, and Parikh of Akamai Technologies, Inc. 
hold a patent for content delivery networks (CDNs) wherein a content server is assisted by a P2P network~\cite{Afergan_Leighton_Parikh_2012}. They have shown the effectiveness of this hybrid approach by which content can be delivered to requesting peers either directly from the content server (i.e., the cloud) or from other peers who may already have the content. 
Going beyond clouds, one can envision such interplays even in pure P2P systems like blockchains and distributed ledgers. Blockchains are large  data sets encrypted in such a way that tampering is impossible without detection~\cite{narayanan2016bitcoin} under standard cryptographic assumptions. The Bitcoin blockchain is currently nearly 500 gigabytes while the ethereum blockchain has exceeded 1000 gigabytes.  Consider the problem of processing the data in such a large blockchain using a P2P network. While these blockchains  fit into the secondary storage of most modern systems, processing them for any data analytics can be prohibitively expensive as the data needs to be decrypted and moved into main memory for processing. In this context, again, a collection of volunteer P2P machines might serve us well. Each machine only needs to decrypt a  small portion of the data set and the machines can collectively process the data.  

One immediate concern in bringing peers together in a coordinated fashion is that not all peers can be trusted. Some peers might behave unexpectedly for reasons ranging from innocuous crashes or network outages to more malicious reasons that may even include collusion. 
Thus, towards demonstrating the robustness of cloud assisted P2P technologies, we focus on one concrete aspect, namely, developing an understanding of how cloud assistance can be utilized in order to attain increased resilience against Byzantine behavior of machines in the network. 
Our work is, to the best of our knowledge, the first 
to consider 
the potential roles of combining cloud services with distributed networks in ensuring resilience against Byzantine failures. 
It focuses on one specific approach, namely, ensuring trust and curbing the influence of Byzantine machines via restricting the cloud services, so that it provides only read-only access to data and a random number generator (for some of our algorithms).


The theory of Byzantine fault tolerance has been a fundamental part of distributed computing  ever since its introduction by Pease, Shostak, and Lamport~\cite{LSM82,PSL80} in the early 80's,
and had a profound
influence on cryptocurrencies, blockchains, distributed ledgers, and other decentralized peer-to-peer systems. 
It largely focused on a canonical set of  problems like Broadcast~\cite{DS83},  Agreement~\cite{B87,LSM82,PSL80,R83}, $k$-set Agreement~\cite{CELT00}, Common Coin~\cite{MR90}, and State Machine Replication~\cite{CL99}.
Some studies have injected Byzantine fault tolerance into other related areas~(cf. \cite{AMPV22,AK08,BGKKS09,DRA22,DPP14}), but these have been ad hoc in nature.
%
In most of these studies, the main parameter of interest is the maximum fraction $\byzfrac$ of the machines that can be corrupted by the adversary in an execution. 

Consider the Byzantine Agreement problem that requires $n$ machines, each with an input bit, to agree on a common output bit that is \emph{valid}, in the sense that at least one honest (non-Byzantine) machine held it as input. In the synchronous setting, even  without cryptographic assumptions, there are agreement algorithms that can tolerate any fraction $\byzfrac < 1/3$ of Byzantine machines~\cite{LSM82} (and this can be extended to asynchronous settings as well~\cite{B87}). When $\byzfrac \ge 1/3$, agreement becomes impossible in these settings~\cite{LSM82}. However, the bound improves to $\byzfrac < 1/2$ with message authentication by cryptographic digital signatures~\cite{RSA78}.  Due to a well-known network partitioning argument (discussed shortly in more detail), $\byzfrac < 1/2$ is required for any form of Byzantine agreement. For most of the Byzantine fault tolerance literature, $\byzfrac$ hovers around either 1/3 or 1/2, with some notable exceptions like authenticated broadcast~\cite{DS83} that can tolerate any $\byzfrac < 1$. 

The main reason for this limitation stems from inherent coupling of data and computing. Consider for instance any Byzantine Agreement variation with $\byzfrac \ge 1/2$.  When all honest machines have the same input bit (say, 1), the Byzantine machines hold at least half the input bits and can unanimously claim 0 as their input bits. This ability of Byzantine machines to spoof input bits makes it fundamentally impossible for honest machines to reach a correct agreement with the validity requirement intact. At the heart of this impossibility is the power of the adversary to control information that is crucial to solving the problem. In fact, this is a common issue leading to many impossibilities and inability to solve  problems exactly (see e.g, \cite{AMP21}). 

Interestingly, having a reliable cloud that provides the data in read-only fashion yields a distributed computing context where access to data cannot be controlled by Byzantine machines. Taken to the extreme, honest machines can simply solve all problems by directly querying the cloud for all required data. However, queries are charged for, and can be quite expensive. So the challenge is to design effective and secure collaborative techniques to solve the problem at hand while minimizing the number of queries made by each honest machine.

\subsection{Byzantine Resilient Congested Clique with Cloud Assistance  Model (\CCA)}

The \CCA\ model consists of (i) a read-only \emph{cloud} that stores the input array comprising $n$ bits and (ii) $k$ machines that form a \emph{congested clique}.

\paragraph{Congested Clique.}
Each of the $k$ machines is uniquely identified by an ID assumed (without loss of generality) to be from the range $[1,k]$. The machines are connected via a complete network. In each round, every machine can send at most one $O(\log n)$ bit message to each of the other machines.  This communication mechanism is referred to as \emph{machine-machine} communication. 

\paragraph{The Cloud.} The $n$-bit input array $\INPUT=\{x_1,\ldots,x_n\}$ (with $n \gg k$) is stored in the cloud. It allows machines to retrieve that data through  queries  of the form $\CloudQuery(i)$, $1\le i\le n$. 
The answer returned by the cloud would then be $x_i$, the $i^{th}$ element in the array.
This type of communication is referred to as \emph{cloud-machine} communication.  

\textbf{Global Random Bits.}
Besides data storage, in one of the three settings described shortly, the cloud provides a \emph{random bit generator} service $\CloudRG(s)$ to the machines. It takes a parameter $s \in \Naturals$ and returns the same random bit to all invocations of $\CloudRG(s)$ with that $s$.
Thus, whenever a subset of machines wish to use a common random bit (possibly at different points in time), they can do so by invoking $\CloudRG(s)$ with the same $s$. For convenience, we use $\CloudRG(S)$ for $S\subset\Naturals$ 
as a shorthand for simultaneously invoking 
$\CloudRG(s)$ for all $s\in S$. 

\textbf{Types of Communication.}
Communication (and computation) is performed in synchronous rounds. Each round consists of two sub-rounds:
(1)~The \emph{query sub-round} of cloud-machine communication, comprising queries of the form $\CloudQuery(\cdot)$ or $\CloudRG(\cdot)$ from a machine to the cloud and responses received from the cloud, followed by (2)~The \emph{message-passing sub-round} of machine-machine communication, consisting of messages exchanged between machines.


At the beginning of the query sub-round,
every machine $M$ can send up to $O(n)$ queries to the cloud. As we shall see, the number of queries is an important complexity measure, so our algorithms typically send significantly fewer queries; $O(n)$ is merely an upper limit. 
$M$ will receive the responses from the cloud by the end of that sub-round. 
At the beginning of the message-passing sub-round, every machine $M$ can send messages of size $O(\log n)$ bits to every other machine. These messages will be received by the intended recipients by the end of the round.
We assume that a machine $M$ can choose to ignore (not process), messages received from any other machine during any round of the algorithm. Such messages incur no communication cost for $M$.\footnote{An honest machine $M$ can ignore the messages of a known Byzantine machine $M'$ and thus thwart any ``denial of service'' attack that $M'$ attempts on $M$. Such messages sent by the Byzantine machine $M'$ to $M$ will not be counted towards the message complexity.}
Besides communicating, in each round, every machine can locally perform some computation on the data that resides locally. We remark there may be rounds without a query sub-round, but every round must include a message passing sub-round. Some machines may need to query more than others during a query phase. We assume that all machines -- knowing the pseudocode -- will wait for a sufficiently long period of time to ensure that all query responses are received.
Note that while machine-machine communication is required to conform to the CONGEST model, limiting the maximal allowable message size, cloud-machine communication is unrestricted. This assumption is suitable for modeling settings where cloud communication is allocated very high bandwidths, so it makes sense to ignore the time required for it (but not the cost charged by the cloud for its services, which might be high). In settings where cloud communication speeds are comparable to machine-machine communication speeds, it may be necessary to modify the model and impose an upper bound on message size for queries and responses to queries. We remark that our results carry over to this model, with the appropriate increase in the time complexity, incurred by the slower cloud-machine communication.
\david{This last remark requires verification!} 

\john{There may be a way to state the above without the necessity for modifying the model. We can view time as a linear (or some other) combination of query complexity and round complexity. If we take this approach, round complexity is just one measure of time, so we should be careful not to use the terms round complexity and time complexity interchangeably. Our  model allows the query and round complexities to be independently optimized. This gives future researchers the flexibility to either focus on query complexity when queries dominate or on round complexity when the round complexity dominates. Of course, optimizing both makes sense when they both contribute equally.}

\textbf{The Adversarial and Model Settings.}
The behavior of the environment in which our algorithms operate is modeled via an adversary $\Advers$ that is in charge of selecting the input data and fixing the machines' failure pattern. In an execution of an algorithm, a machine is considered \emph{honest} if it obeys the algorithm throughout the execution. A Byzantine machine
is one that deviates from the algorithm in an arbitrary manner
(controlled by $\Advers$).
Denote the set of Byzantine (respectively, honest) machines in the execution by $\BYZ$. (resp., $\cH$). 
Note that even if a machine $M'$ deviates from the algorithm very late in the execution, we consider it Byzantine from the beginning of the execution. (Of course, a Byzantine machine may follow the algorithm in some of its actions.)

We consider three different types of adversaries, leading to different model settings. In all models, $\Advers$ can corrupt at most $\byzfrac k$ machines for some given
$\byzfrac \in (0,1)$. Note that by assumption, $\Advers$ is always allowed to corrupt at least one machine but cannot corrupt all of them; our results are stated under this assumption.  The complementary failure free ($\byzfrac =0$) case is discussed 
in Sec.~\ref{discussion:nofailures}. 
Let $\goodfrac = 1-\byzfrac$,
i.e., there is (at least) a $\goodfrac$ fraction of honest machines. Note that we do not assume $\byzfrac$ to be a fixed constant (unless mentioned otherwise), but we do assume that the number of Byzantine machines, $\byzfrac k$, is integral. The honest machines are unaware of which machines are Byzantine. 

We now define our three adversarial models. For each of them, we specify an upper bound $\eta$ for $\byzfrac$.
In particular, we look at the significant points of $\eta \in \{1,~~ 1/2,~~ \eta_{small}\}$,
where $\eta_{small}$ is a small constant depending on the problem at hand. The notation {\bf Model}$(\byzfrac<\eta)$ captures a \emph{family} of models, one corresponding to any possible value $\byzfrac$ strictly smaller than $\eta$. Algorithms designed for this model assume knowledge of $\byzfrac$, which may be used in the code. (Clearly, the actual number of failures in each execution is a-priori unknown, and cannot be used by the algorithm.)

\begin{description}
\item[$\Det(\byzfrac < \eta$):] 
In this pessimistic setting, the algorithm cannot use randomization at all, and the adversary $\Advers$ is all-knowing. Thus, $\Advers$ knows exactly how the complete execution of an algorithm will proceed and can select Byzantine nodes right at the beginning based on this knowledge. 
\item[$\Harsh(\byzfrac  < \eta)$:] 
Here, the machines may
generate random bits locally, but the cloud does \emph{not} 
provide a random bit generation service.
At the beginning of each round~$i$, $\Advers$ has knowledge of $\INPUT$, all the local random bits generated up to round $i-1$, and all machine-machine and cloud-machine communications up to round $i-1$. At the start of round $i$, it can  corrupt as many machines as it desires, provided the total number of machines corrupted since the beginning of the execution does not exceed $\byzfrac k$. Such an adversary is said to be \emph{adaptive}.
\item[$\Benign(\byzfrac  < \eta)$:] 
Here, in addition to locally generated random bits, the machines can also use the \emph{random bit generation} service of the cloud. Moreover, the adversary is \emph{static}, in the sense that it must choose the corrupted machines before the first round, i.e., relying only on the input.
\end{description}

\textbf{Complexity Measures.}
The following complexity measures are used to analyze our algorithms.
\begin{itemize}
    \dnsitem 
    Query Complexity ($\Query$): the maximum number of queries made by an honest machine during the execution of the algorithm,
    \dnsitem Message Complexity ($\Message$): the total number of messages sent by honest machines during the execution of the algorithm, and
    \dnsitem Round Complexity ($\Time$): the number of rounds (or \emph{time}) it takes for the algorithm to terminate.
\end{itemize}

An algorithm that minimizes the 
query, message or round complexity is said to be 
query, message or round optimal, respectively. 
As queries to the cloud are expected to be the more expensive component in the foreseeable future, we
primarily focus on optimizing the query complexity $\Query$, only trying to optimize $\Time$ and $\Message$ when $\Query$ is optimal (within $\log(n)$ factors). While both $\Query$ and $\Time$ model execution time, we remark that $\Query$ may model other resources like charges levied by the cloud for queries. Moreover, since queries and message passing could employ different technologies, their relative times may exhibit significant variation across implementations. Finally, our definition of~$\Query$ (measuring the maximum cost per machine rather than the total cost) favors a fair and balanced load of queries across honest machines.
Clearly, there are other interesting model combinations to explore besides those studied in this paper. See Section \ref{ssec: future work} for discussion on future work. 

\subsection{Our Main Results and Methods}
\label{ssec:results}

We start by discussing two basic algorithmic techniques used in this paper.  

\inline A. Committees:
Several of our algorithms organize the machines in \emph{committees}, assigned to perform a common task. We distinguish between several types of committees.

\begin{itemize}
\dnsitem
\textbf{Majorizing / representative committees:}
A \emph{majorizing} committee is one guaranteed to have a strict majority of honest machines.
%
A \emph{representative} committee of quality $\rho$, or \emph{$\rho$-representative committee} for short, is one that contains at least $\rho$ honest machines.
\dnsitem 
\textbf{Public / private committees:}
A \emph{public} committee is one whose members are known to all machines. 
(To build small public committees, we use global random bits, which is why we only construct them in the $\Benign$ model.)  
A public committee can be representative or majorizing.
\\
In contrast, a \emph{private} committee is one whose member identities are not publicly known to all. Machines join such a committee based on local random coin tosses, so Byzantine machines can masquerade as committee members. In fact, they can claim membership in a committee in one context and disclaim it in another context. It follows that we do not have an upper bound on the committee size, although we can guarantee a lower bound on the number of \textit{honest} machines per committee. This means that a private committee can only be representative, but not majorizing (except when $\byzfrac$ is sufficiently small). 
\end{itemize}
Depending on the model parameters ($\byzfrac$, availability of global random bits and adversarial power), one can construct committees of different sizes at different query, round and message complexities.
We use the term \emph{weak committee} to refer to a $1$-representative public committee.


\inline B. Blacklisting Byzantine machines:
During an execution, honest machines can \emph{blacklist} Byzantine ones, after identifying a deviation from the behavior expected of an honest machine, and subsequently ignore their messages.
A Byzantine machine $M'$ can be blacklisted for several reasons.
The most common way used in our algorithm is by directly ``catching'' $M'$ in a lie about the value of some bit. Later, we discuss some additional blacklisting methods.

\subsubsection{Problems of Focus and Their Complexity in the Failure Free Model}\label{discussion:nofailures}


We next introduce the three main problems we focus on in the Byzantine resilient \CCA\ model. To establish a baseline for our various results, we first outline the best possible complexity measures  when there are no Byzantine failures. For $\Query$, the best performance is the total required work divided by $k$, since this work can be distributed in a simple manner.

\textbf{$\IDp$.} We begin with the fundamental $\IDp$ problem, where each of the $k$ machines needs to obtain a copy of all $n$ input bits from the cloud.
$\IDp$ is of interest due to the fact that once it is performed, each honest machine holds the entire input, and thus can perform  any desired computation over the input locally, at no additional costs. Hence, its query cost  serves as a baseline against which to compare the costs of our other, specialized algorithms for specific problems. Moreover, once we have performed $\IDp$, we no longer depend on the availability of the cloud. Observe that any lower bound on $\Query$ for computing any Boolean function on the input will serve as a lower bound for $\IDp$ as well.
%
%
%
To solve this problem in the absence of failures, all $n$ bits need to be queried and this workload can be shared  evenly among $k$ machines, giving $\Query = \Theta(n/k)$. The message complexity is $\Message = \tilde{O}(n k)$ and round complexity is $\Time = \tilde{O}(n/k)$ since  $\Omega(n/k)$ bits need to be sent along each communication link when the workload is shared. 

\textbf{$\ORp$}.
In the $\ORp$ problem, the honest machines must learn whether at least one of the input bits in $\INPUT$ is a 1. We also consider an \emph{Explicit} $\ORp$ version where each machine must learn an index $i$ such that $\INPUT[i]=1$ (or output 0 if there are no 1's).
The complexity of the problem is closely tied to the \emph{density} $\Density$ (i.e., the fraction of ones) in the input. In fact, the relevant parameter is often $1/\ModifiedDensity$ where $\ModifiedDensity=\max(1/n,\Density)$ to handle the exceptional case when $\Density=0$.

Let us 
consider the Explicit $\ORp$ problem.
In the deterministic setting, 
at least $n - \ModifiedDensity n+1$ queries are required in total. Consequently the best deterministic query complexity is $\Query= O(n(1-\ModifiedDensity)/k)$. The round complexity is $\Time = O(1)$ and message complexity is $\Message = O(k)$. Machines that found a 1 bit can send the index to a ``leader'' machine that
will then broadcast the answer.

Randomization helps when $\Density$ is large. Querying $\left (\frac{1}{\ModifiedDensity} \cdot \ln n \right )$ bits uniformly at random in search for a 1 bit has failure probability of $(1 - \ModifiedDensity)^{\ln n / \ModifiedDensity} \leq 1/n$. Thus $O\left(\InverseDensity \cdot \ln n\right)$ queries are sufficient to find a $1$ w.h.p. 
Even without knowledge of $\Density$, one can simply try density values in decreasing powers of $2$, starting with $1/2$ and eventually land at a $1$ having made at most $O\left(\InverseDensity \cdot \ln n\right)$ queries. We can distribute the work equally amongst $k$ machines, and thus 
$\Query = O(1 + \InverseDensity \cdot \frac{1}{k} \cdot \ln n)$.
 The time and message analysis is similar to the deterministic case, i.e, $\Time = O(1)$, $\Message = O(k)$. Note that $\Query = \Omega(\frac{1}{k} \cdot \InverseDensity)$, for any algorithm that solves the $\ORp$ problem with constant probability. 
 (See Thm.~\ref{thm:LB-ORp-2} for formal proof.)

\textbf{$\XORp$}. In the $\XORp$ problem, the honest machines must output the parity  of the input data $\INPUT$ (i.e., output 1 if and only if the number of 1's in $\INPUT$ is odd). We can adopt a similar approach to solving this as the $\IDp$ problem. However each machine need not send all of its bits, rather just the $\XORp$ of its own bits to a leader who computes and broadcasts the $\XORp$ of all $n$ bits. This reduces message complexity to $\Message = O(k)$  and round complexity to $\Time = O(1)$.

\subsubsection{Results for the Deterministic Model}

We now present our main results and briefly discuss the key ideas behind them. 
%
%
%
We begin with the $\IDp$ problem. Notice that the trivial brute-force deterministic algorithm wherein each machine independently queries all $n$ bits directly from the cloud yields the following baseline.

\begin{theorem}
\label{thm:naiv-ID}
In the $\Det(\byzfrac < 1)$ model, there is an algorithm for the $\IDp$ problem with $\Query = n$ and $\Time, \Message = 0$.
\end{theorem}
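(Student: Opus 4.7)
The plan is to exhibit the trivial brute-force algorithm and simply verify that it satisfies all requirements of the \IDp\ problem in the $\Det(\byzfrac < 1)$ model. The algorithm is: every machine $M$ independently issues $\CloudQuery(i)$ for each $i \in [1,n]$ in a single query sub-round, and thereafter outputs the vector of responses as its copy of the input.

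Next I would verify correctness. Since the cloud is trusted and provides read-only access to $\INPUT$, each response $\CloudQuery(i)$ returned to an honest machine is exactly $x_i$, regardless of anything the Byzantine machines do. In particular, Byzantine behavior cannot corrupt the cloud's responses, and since the algorithm never relies on machine-machine communication, Byzantine machines have no channel through which to influence honest machines. Thus every honest machine ends up holding the entire input $\INPUT$, as required.

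For the complexity claims: each honest machine makes exactly $n$ queries, giving $\Query = n$. The algorithm uses no message-passing sub-round beyond the query exchange (which is cloud-machine communication, not counted in $\Message$), so $\Message = 0$. Finally, the model permits all $n$ queries to be issued within a single query sub-round (the stated per-round query cap of $O(n)$ accommodates this), and the algorithm terminates as soon as the responses are received within that sub-round, without invoking any machine-machine round. Following the convention that $\Time$ counts message-passing rounds, we obtain $\Time = 0$.

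There is no real obstacle here; the statement is a baseline observation whose purpose is to serve as a trivial benchmark against which the subsequent, more refined algorithms (which reduce $\Query$ at the cost of introducing $\Time$ and $\Message$) are compared. The only subtlety worth flagging explicitly is that the theorem holds for the entire family $\Det(\byzfrac < 1)$, including $\byzfrac$ arbitrarily close to $1$, precisely because the algorithm does not rely on trusting any other machine.
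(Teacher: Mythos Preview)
Your proposal is correct and matches the paper's approach exactly: the paper does not give a formal proof, but simply introduces the theorem with the sentence ``the trivial brute-force deterministic algorithm wherein each machine independently queries all $n$ bits directly from the cloud yields the following baseline,'' which is precisely the algorithm you spell out and verify.
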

This is the best one can hope for when $\byzfrac$ is large.
However, this is far from the lower bound of $\Query= \Omega\left (\frac{n}{\goodfrac k}\right)$ that stems from the need for at least one honest machine to have read each input bit, or the following somewhat stronger lower bound.

\begin{customthm}{\ref{thm:LB-det-InputDistr}}
\label{thm:LB-det-InputDistr-intro}
In the $\Det(\byzfrac < 1)$ model, any algorithm for the $\IDp$ problem has 
$\Query= \Omega(\byzfrac n)$.
\end{customthm}

Indeed, the following complementary theorem shows that the $\IDp$ problem can be solved more efficiently than $\Query = n$ (in fact, with near-optimal query complexity) when $\byzfrac$ is small.   

\begin{customthm}{\ref{thm: alg ID det}}
\label{thm: alg ID det-intro}
In the $\Det(\byzfrac<1/2)$ model, there is an algorithm for the $\IDp$ problem
with 
$\Query= O(\byzfrac n )$, 
$\Time= \tilde{O}(\byzfrac n)$
and 
$\Message= \tilde{O}(\byzfrac nk^2)$.
\end{customthm}

For clarity, in the rest of this section we focus on fixed values of $\byzfrac$. More comprehensive results are provided in later sections.

Further gains in complexity
may be expected when considering specific problems, such as $\ORp$ and $\XORp$.
%
%
For the $\XORp$ problem, we show that a deterministic algorithm cannot do any better (in terms of $\Query$) than solving the $\IDp$ problem first and continuing to solve parity locally.


In contrast, the situation is better for the $\ORp$ problem. The case of $\InverseDensity = O(1)$ is the easiest, wherein we can split the work of querying the bits nearly equally among the honest machines\footnote{only when $k = O(\sqrt{n})$}. Unsurprisingly, $\ORp$ becomes harder when $\InverseDensity$ is larger. The case of $\InverseDensity = \Omega(n)$ is the hardest, in which case $\ORp$ is as hard as the $\IDp$ problem and there is negligible advantage in collaboration. We state our results formally below.

\begin{customthm}
{\ref{thm:lb-det-OR}}[Simplified]
\label{thm:lb-det-OR-intro}
In the $\Det(\byzfrac < 1)$ model for fixed $\byzfrac$, 
any algorithm for Explicit $\ORp$ 
requires 
$\Query= \Omega\left(\InverseDensity + \frac{(1-\Density)n}{k}\right)$ 
even when the density $\Density$ is known to all machines.
\end{customthm}

This result is complemented by two algorithms.
The first algorithm operates in $\Det(\byzfrac < 1)$, and achieves the same query complexity up to polylog factors, plus a term of $k$. Observe that when $k = O(\sqrt{n})$, the upper and lower bounds nearly match upto $\log$ factors. When $k = \omega(\sqrt{n})$, however, the additive $k$ term is dominant.
Specifically, we get the following result.
\begin{customthm}
{\ref{thm: det 1 OR}}[Simplified]
\label{thm: det 1 OR intro}
In the $\Det(\byzfrac < 1)$ model for fixed $\byzfrac$,  there is an algorithm that solves the $\ORp$ problem with 
$\Query= \tilde{O}\left( \frac{n}{ k} +  \InverseDensity+ k\right)$, 
 $\Time= \tilde{O}(1)$ and $\Message= \tilde{O}( k^2 )$.
\end{customthm}

The key idea behind the algorithm is as follows. One can represent the access pattern of the machines to the input array $\INPUT$ as a bipartite graph $G(L,R,E)$, where $L$ represents the $n$ input bits, $R$ represents the $k$ machines, and an edge $(i,j)\in E$ indicates that $M_j$ queries $\INPUT[i]$. We would like to ensure that if the number of bits set to ones in $\INPUT$ exceeds some value $s$, then no matter which set $S$ of indices corresponds to these $s$ ones, the set $\Gamma(S)$ of neighbors of $S$ in $G$ will contain at least $\byzfrac k +1$ machines, guaranteeing that \emph{at least one honest machine will query at least one of the set bits of} $S$. This can be ensured by taking $G$ to be a \emph{Large Set Expander (LSE)}, an expander variant defined formally later on. Not knowing the density $\Density$ in advance, we can search for it, starting with the hypothesis that $\Density$ is close to 1 (and hence using a sparse LSE and spending a small number of queries), and gradually trying denser LSE's (and spending more queries), until we reach the correct density level allowing some honest machine to discover and expose a set bit.

The second algorithm operates in $\Det(\byzfrac<1/2)$, and takes advantage of the fact that there is a majority of honest machines in order to get rid of the extra term of $k$, achieving the optimal query complexity up to polylog factors. We get the following:
\begin{customthm}
{\ref{thm:OR Det 1/2}}[Simplified]
\label{thm:OR Det 1/2 intro}
In the $\Det(\byzfrac < 1/2)$ model  for fixed $\byzfrac$, there is an algorithm that solves the $\ORp$ problem with $\Query = \tilde{O}(\frac{n}{k} + \InverseDensity )$, $\Time = \tilde{O}(1)$ and $\Message = \tilde{O}(k^2 )$.
\end{customthm}


We also look at the explicit version of $\ORp$ and give it an algorithm in the $\Det(\byzfrac<1)$ model. It is based on using a modified expander variant suited to this case. We get the following:
\begin{customthm}
{\ref{thm: det explicit OR + alg name}}[Simplified]
\label{thm: det explicit OR intro}
In the $\Det(\byzfrac < 1)$ model for fixed $\byzfrac$, there is an algorithm that solves the Explicit $\ORp$ problem with $\Query = \tilde{O}(\frac{n}{k} + \InverseDensity + \sqrt{n})$,
$\Time = O(n)$ and $\Message = O(n k^2)$.
\end{customthm}

\subsubsection{Results for the Harsh Model}
\label{sss:harsh-overview}
Before detailing our results for the $\Harsh$ model, we discuss some difficulties posed by the model and the techniques we employ to overcome them.

\inline Targeted failing:
The adversary can choose the failed machines online, based on the progress of the algorithm. This implies that if the algorithm appoints some random machine $M$ to query a bit $x_i$ on some round $t$ of the execution, but communicate the bit to other machines at a \emph{later} round $t'$, then we cannot rely on the hope that the randomly selected $M$ will be honest, say, with probability $1-\byzfrac$, since the adversary gets an opportunity to learn the identity of the chosen $M$ on round $t$ and subsequently corrupt it before round $t'$. 
We overcome this difficulty by adopting the policy that if machine $M$ is randomly chosen for some task on round $t$, then $M$ completes that task \emph{on the same round}.

%
A natural way to manage the work of the machines on various problems is by organizing them into small committees, and assigning tasks to the different committees. Membership in the committees is determined by each machine individually, using private random bits, as the model does not allow the use of global random bits. Hence, in this model we may only use private committees.
This exposes a second difficulty that arises in the $\Harsh$ model.

\inline Byzantine committee infiltration:
As committees are private, Byzantine machines can ``masquerade'' as belonging to any number of committees of their choice, and even gain the majority in some committees. This has two problematic implications.
\begin{itemize}
\dnsitem
{\bf Infeasibility of majorizing committees:} 
One cannot ensure that a \emph{majority} of the committee members are honest, or in other words, we cannot use majorizing committees. 
\dnsitem
{\bf Decision disruption in $\rho$-representative committees:}
Even when using representative committees, Byzantine infiltration might disrupt the ability of the honest machines to decide.
\end{itemize}

To understand this latter difficulty, let us illustrate the use of such committees for the $\IDp$ problem in the $\Harsh(\byzfrac<1)$ model.
Sequentially in rounds $i = 1, 2, \dots n$, 
each (honest) machine in a private $\rho$-representative committee $\cC_i$ queries the $i$th input bit~$x_i$ from the cloud.
Then (still on the same round, to avoid targeted failing), each machine in $\cC_i$ sends the value of $x_i$ to every other machine. A machine $M$ not in $\cC_i$ might receive incorrect values from the Byzantine machines in $\cC_i$. However, as long as it receives $\rho$ or more $b$ values for $b \in \{0, 1\}$ and \emph{fewer} than $\rho$ values being $1-b$, $M$ can be confident (w.h.p) that the correct value is $b$, since the committee contains at least $\rho$ honest machines. 
Unfortunately, once many Byzantine machines infiltrate $\cC_i$, $M$ might receive $\rho$ or more 
zeros \emph{and} $\rho$ or more ones, 
rendering it unable to decide.

To bound the potential effects of the Byzantine committee infiltration problem, and particularly the decision disruption problem, we employ a number of \emph{blacklisting} techniques.
The simplest and most frequently used is the following.
\begin{itemize}
\dnsitem
\textbf{Direct blacklisting for false reporting a query outcome}:
The honest machine $M$ cloud-queries a bit $x_i$ for which the suspected machine $M'$ claimed a particular value, $x_i=b$. If the result of the cloud-query conflicts with this claim, then $M'$ must be Byzantine.
\end{itemize}

Our first algorithm for the $\IDp$ problem in the $\Harsh(\byzfrac<1)$ model employs direct blacklisting when an honest machine $M$ cannot decide on the value of some $x_i$, 
having received 
$\rho$ or more zeros and $\rho$ or more ones.
When this happens, $M$ resorts to querying the cloud for the answer, exposing at least $\rho$ Byzantine machines. Choosing $\rho$ optimally yields the following.

\begin{customthm}
{\ref{thm:ID harsh 1 + alg name}}[Simplified]
\label{thm:ID harsh 1 + alg name intro}
In the $\Harsh(\byzfrac<1)$ model for fixed $\byzfrac$, there is an algorithm that w.h.p. 
solves the $\IDp$ problem with 
$\Query = \tilde{O}\left(\frac{n}{ k} + \sqrt{n}\right)$,
$\Time= O(n)$ and 
$\Message= \tilde{O}(kn + k^2\sqrt{n})$. 
\end{customthm}

The above algorithm falls short of yielding optimal query complexity, due to the  $\sqrt{n}$ additive term. 
We show that a query-optimal algorithm for $\IDp$ exists for smaller $\byzfrac$. (Specifically, we show it for $\byzfrac < 1/3$;
this constant can plausibly be improved.) 
Generally, the algorithm proceeds in phases, each aiming at reducing the number of unknown bits by a constant factor.
Its structure and analysis are rather involved, so we next describe three key ideas on which they are based.
%

\inline A. Cloud, committee and gossip verification:
A machine $M$ has three ways of acquiring and verifying
the value of a bit $x_i$.
\begin{itemize}
\dnsitem
$M$ \emph{cloud-verifies} $x_i=b$ if it directly queries the cloud and receives a reply that $x_i=b$.
\dnsitem
$M$ \emph{comm-verifies} $x_i=b$ by using a $\rho$-representative committee $\cC_i$ (for suitable $\rho$). Specifically, it learns that $x_i=b$ if it receives a message saying that $x_i=b$ from at least $\rho$ members of $\cC_i$, and a message saying that $x_i=1-b$ from fewer than $\rho$ members of $\cC_i$.
\dnsitem
$M$ \emph{gossip-verifies} $x_i=b$ if it receives messages from $\byzfrac k+1$ or more machines, each testifying that it verified $x_i=b$. This suffices since necessarily at least one of these senders must have been an honest machine.
\end{itemize}

\inline B. Additional blacklisting methods:
In addition to direct blacklisting, the algorithm employs the following two methods by which an honest machine~$M$ can blacklist a Byzantine $M'$.
\begin{itemize}
\dnsitem
\textbf{Blacklisting for requesting unnecessary work:}
If $M'$ claims that a certain bit $x_i$ is unknown to it and requests to learn that bit, $M$ can check if this bit is listed at $M$ as \emph{known to all} (meaning that $M$ not only knows the bit value, but also knows that its value is known to all honest machines). If so, then $M$ knows that $M'$ must be Byzantine.
\dnsitem
\textbf{Blacklisting for over-activity:}
$M'$ can be blacklisted as Byzantine for being \emph{over-active}, namely, claiming to have been randomly selected to many more committees than expected.
\end{itemize}

\textbf{C. Boosting knowledge via gossip:}
Each phase employs two rounds of gossip, serving to boost knowledge towards common knowledge\footnote{abusing the formal definition of the (considerably stronger) notion of ``common knowledge."}, in the following sense.
A machine $M$ marks a bit $x_i$ as ``known-to-all" (and keeps $i$ in a set $\KTA_M$) if it not only knows the value~$x_i$, but also knows that every other machine knows this value.
$M$ marks $x_i$ as \emph{\TwoCK}\footnote{meaning that it satisfies the first two levels of common knowledge: everyone knows it, and everyone knows that everyone knows it.} if it knows that every other machine knows that $x_i$ is known-to-all.
Consider the set $\CORE$ of bits $x_i$ that were comm-verified by $\byzfrac k+1$ or more honest machines. Each of these bits will be gossip-learned by all honest machines in the first gossip round. Subsequently, in the second gossip round, all honest machines will report knowing it, so by the end of that round, it will become \TwoCK\ among the honest machines. The set $\CORE$ serves a key role in the analysis; specifically, we show that its complement shrinks by a constant factor in each phase. 

These techniques along with some non-trivial analysis allow us to prove the following theorem.

\begin{customthm}
{\ref{thm:ID harsh small beta + alg name}}[Simplified]
\label{thm: ID harsh small beta intro}
In the $\Harsh(\byzfrac<\frac{1}{3})$ model for fixed $\byzfrac$, there is an algorithm that w.h.p. solves the $\IDp$ problem with 
$\Query = \tilde{O}\left(\frac{n}{k}\right)$, 
$\Time = \tilde{O}(n)$, and
$\Message = \tilde{O}(nk^2)$. 
\end{customthm}

Turning to the $\ORp$ problem, we utilize randomness in order to improve upon the above deterministic algorithm, roughly by a factor of $k$. We use a \emph{verification spreading} procedure, whose task is to disseminate the index of a set bit, known to some honest machine,
to all machines.
The resulting query complexity (given in the following theorem) is near optimal by Thm.~\ref{thm:LB-ORp-2-intro}. 

\begin{customthm}
{\ref{thm: OR Harsh + alg name}}[Simplified]
\label{thm: OR Harsh + alg name intro}
In the~$\Harsh(\byzfrac<1)$ model for fixed~$\byzfrac$, 
there is an algorithm
that w.h.p. solves the $\ORp$ problem with
$\Query = \tilde{O}\left( \frac{1}{k}\cdot\InverseDensity \right)$, 
$\Time = \tilde{O}\left( 1\right)$, 
$\Message = \tilde{O}(k^2)$. 
\end{customthm}

\subsubsection{Results for the Benign Model}

In the $\Benign$ model, the adversary is static and the cloud provides a (\emph{global}) \emph{random bit generator} service  . This allows us to construct public committees using Hashing techniques, eliminating the Byzantine committee infiltration problem of the $\Harsh$ model. In particular, we can construct small weak (\emph{1-representative}) committees leading to more efficient algorithms, asymptotically matching the lower bounds on query complexity, with improved message and time complexities. 


We next elaborate on our results.
We start by efficiently constructing weak committees and public majorizing committees using hash functions, spending $\Query=\tilde{O}(1)$ queries.
We then focus on a natural subproblem of $\IDp$ referred to as the \emph{Weak Resolution} problem. Here,
given that a weak committee $\cC$ knows bits at indices $i\in\Known\subseteq[1,n]$, a machine $\Machine$ has to learn these bits from~$\cC$ while reducing the number of direct queries to the cloud. We provide Procedure $\WeakResolve$ that solves the above problem using $O(|\cC|)$ queries and $O(|\Known|)$ rounds. The key idea is that $\Machine$ can blacklist at least one machine in $\cC$ every time it queries the cloud. 
We then describe a faster procedure 
that takes only $\tilde{O}(\gamin n)$ rounds, so it is very fast on inputs that are either very \emph{dense} or very \emph{sparse}. 
We use this procedure to provide an algorithm for the $\IDp$ problem that achieves near optimal query complexity of $\tilde{O}(n/k)$ and takes $\tilde{O}(\gamin nk)$ rounds.  

\begin{customthm}
{\ref{thm: fast linear download}}[Simplified]
\label{thm: fast linear download-intro}
In the $\Benign(\byzfrac<1)$ model for fixed $\byzfrac$, there is an algorithm
for solving the $\IDp$ problem w.h.p. with
$\Query=\tilde{O}\left(\frac{n}{k}\right)$, 
$\Time=\tilde{O}(\gamin nk+n)$, 
$\Message=\tilde{O}\left(\gamin kn\right)$.
\end{customthm}

To speed up the algorithm for $\IDp$ independently of the density, we develop a \emph{parallel} version of the algorithm. This is based on
organizing a collection of weak committees in a specific manner using a structure referred to as a convergecast forest, allowing us to leverage parallelism effectively while achieving near optimality for $\Query$. We describe a convergecasting procedure (that uses Procedure $\WeakResolve$) such that at the end of the procedure, every bit is known to some root of a convergecast forest.
Using these structures and procedures, we get the following result.

\begin{customthm} 
{\ref{thm: download-parallel}}[Simplified]
\label{thm: download-parallel-intro}
In the $\Benign(\byzfrac<1)$ model for fixed $\byzfrac$, there is an algorithm
for solving the $\IDp$ problem w.h.p. with 
$\Query = \tilde{O}(\frac{n}{k})$, 
$\Time = \tilde{O}(\frac{n}{k}+k)$, 
$\Message= \tilde{O}(nk)$.
\end{customthm}

When $\byzfrac<1/2$, we use \emph{majorizing} committees to obtain a faster solution for $\IDp$.

\begin{customthm}
{\ref{thm: Disjunction majorizing Benign(1/2)}}[Simplified]
\label{thm: Disjunction majorizing Benign(1/2)-intro}
In the $\Benign(\byzfrac<1/2)$ model, there is an algorithm
for solving the $\IDp$ problem with $\Query= \tilde{O}\left(\frac{n}{k}\right), \Time = \tilde{O}\left(\frac{n}{k}\right)$, $\Message = \tilde{O}(nk)$. 
\end{customthm}

We next investigate if it is possible to solve the $\XORp$ problem faster than $\IDp$. We describe a subproblem similar to Weak Resolution called \emph{Weak $\XORp$ Resolution} in which a weak committee knows bits at indices $i\in\Known$ and a machine $\Machine$ has to learn the $\XORp$ of these bits. We provide an algorithm
that uses an efficient blacklisting procedure akin to binary search and takes only $\tilde{O}(1)$ rounds. This along with our convergecasting procedure leads to a faster algorithm for $\XORp$:

\begin{customthm}{\ref{thm: benign parity converge}}[Simplified]
\label{thm: benign parity converge intro}
In the $\Benign(\byzfrac<1)$ model for fixed $\byzfrac$, there is an algorithm that w.h.p. solves the $\XORp$ problem with 
$\Query = \tilde{O}(\frac{n}{k})$, 
$\Time = \tilde{O}(1)$ when $k^2\le n$ and 
$\Time = \tilde{O}(\frac{n}{k}+\frac{k^2}{n})$ when $k^2>n$,
$\Message = \tilde{O}(n)$.
\end{customthm}

When $\byzfrac<1/2$, we use public majorizing committees to solve the $\XORp$ problem in $\tilde{O}(1)$ rounds for all $k$. We also provide the following lower bound for $\ORp$.

\begin{customthm}{\ref{thm:LB-ORp-2}}
\label{thm:LB-ORp-2-intro}
In the $\Benign(\byzfrac < 1)$ model, any randomized algorithm that solves $\ORp$  with constant probability has $\Query=\Omega(\frac{1}{ k}\cdot\InverseDensity)$ in expectation. 
\end{customthm}

\subsection{Directions for Future Work}
\label{ssec: future work}

Our framework adds Byzantine resilience to standard distributed computing  with the help of a cloud, an entity that is external to the network. We initiated this study through the $\Det(\cdot)$, $\Harsh(\cdot)$, and $\Benign(\cdot)$ models, focusing on the $\IDp$, $\ORp$, and $\XORp$ problems, and developed several algorithms, tools, and techniques. Our emphasis was on optimizing the query complexity, but also considered time and message complexities.
Extending our work to other model variations and/or broader classes of problems like graph and geometric problems, data analytics and machine learning problems are natural next steps. 
Ideas from oracle based computation such as  property testing~\cite{property} can be easily adapted to our context. For instance, the cloud can provide access to graph data in the form of adjacency predicates (i.e., queries of the form: is there an edge between vertices $u$ and $v$?), incidence functions (i.e., queries of the form: what is the $i$th neighbor of $u$?), and combinations thereof. We leave these considerations for future work.

Our work has shown that this framework is well-suited for Byzantine resilience owing to decoupling of data and computation that lends well to ``trust, but verify" techniques in an algorithmically rigorous manner. It will be interesting to see the limits to which Byzantine resilience can be pushed in this framework. Going beyond Byzantine resilience, our framework could potentially lead to significantly faster algorithms and exhibit other benefits unavailable in traditional distributed computing.

This framework can be interpreted in multiple ways and applied to a wide variety of contexts.  One can also envision variants in which the cloud offers a richer set of services that may include computation or data re-organization at its end that the machines may need to pay for. Such dynamics can potentially uncover many algorithmic and game theoretic issues like pricing mechanisms and coalition formation.

%
%
%
%
%

\subsection{Related Work}
\label{ssec: related work}

\david{Look at this paper to see if there were earlier papers on DC+cloud} \john{I could only spot one other work \cite{FriedmanKK13} apart from \cite{AGP21-sss}. \cite{FriedmanKK13} deals with consensus under benign crash failures. They claim to address Byzantine failures in the full version, but I am unable to find it online. In fact, I don't have access to the published conference version of \cite{FriedmanKK13} as it is behind a paywall. Perhaps, David and Srikkanth can see if their institutions provide access to it and share it with the rest of us.}
\david{Sent you the paper.} 

As outlined earlier, to the best of our knowledge, we are the first to 
study the impact of combining distributed computing with an external cloud on fault resilience aspects of computing. In essence, we have proposed a hybrid combination of two technologies -- the cloud and a distributed network -- that is in some ways better than the sum of its parts. Such hybrid combinations leading to overall improvements is not new~\cite{sigcomm11-J}. In the context of distributed computing theory, we have seen hybrid networks~\cite{AugustineHKSS20,DBLP:conf/opodis/CoyC0HKSSS21,DBLP:conf/wdag/Fraigniaud0PRWT22,DBLP:conf/podc/KuhnS20} that benefit from combining low-bandwidth global  networks with high-bandwidth local networks.

There have been a couple of prior works that study distributed computing aided by a cloud. To the best of our knowledge, Friedman {\it et al.}~\cite{FriedmanKK13} were the first to study distributed computing aided by the cloud. Their studied asynchronous consensus with the cloud providing a common compare-and-swap (CAS) register access. They focused on optimizing the number of CAS operations and showed a lower bound of $f+1$ CAS operations for deterministic protocols when the number of benign crash failures are at most $f$. To complement it, they provided a matching deterministic consensus mechanism. Additionally, they showed improved results with randomization.  

More recently, Afek {\it et al.}~\cite{AGP21-sss} introduced the computing with cloud (CWC) model wherein traditional distributed computing models were augmented with one or more cloud nodes that are typically connected to several regular nodes. Under this setting, they study several fundamental primitives leading to  operations like cloud combine (i.e., computing an aggregation function over data distributed in the network) and cloud cast (where data stored in the cloud is to be disseminated through the network). Subsequently, they develop a range of applications based on these operations. Their work is limited to fault free settings and does not use any cloud features for fault tolerance of any form.

It also known how trusted hardware primitives can be used in various ways to enhance Byzantine resilience in asynchronous systems. See~\cite{PODC:DN21} and references therein. Our work can fall under those that employ \emph{shared memory} to enhance Byzantine resilience. However, there is an important philosophical difference here in that the motivation of our work is not necessarily to provide shared memory but to decouple data from computing. Further, we consider Byzantine resilience in problems beyond Byzantine agreement and exhibit settings where an arbitrary number of faults can be tolerated.
\david{It looks like their approach aims at dealing with overcoming asynchrony. They seem to assume that in synchronous systems there's no problem. I'm not sure why... Check if they assume the use of cryptographic tools.}

Most other such hybrid combinations have lent to improved efficiency.  Our work has shown that combining cloud and distributed network yields greater Byzantine resilience, which has been a subject of significant research investigation since its introduction in the early 80's~\cite{LSM82,PSL80}. More recently, there has been a significant uptick in Byzantine resilience research stemming from bitcoin~\cite{nakamoto2009bitcoin} and the subsequent proliferation of peer-to-peer distributed ledgers~\cite{baird2016swirlds,algorand,wood2014ethereum}. 

As discussed earlier, Byzantine resilience research was largely limited to a few problems like Byzantine Agreement, Byzantine Broadcast, State Machine Replication, etc. More recently, we have seen many investigations of Byzantine resilience in other problems and models. Quite naturally, it has been explored in peer-to-peer (P2P) settings to ensure robust membership sampling~\cite{BGKKS09} and resilient  P2P overlay design~\cite{DBLP:conf/esa/FiatSY05,DBLP:conf/spaa/AugustineCP22}. Apart from that, we have seen Byzantine resilience explored in the context of mobile agents~\cite{DPP14,DBLP:journals/dc/BouchardDL22,DBLP:conf/atal/DatarNA23} and graph algorithms~\cite{AMPV22}. In the last decade, we have seen quite a bit of interest in Byzantine resilient learning starting with multi-armed bandit problems~\cite{AK08}. More recently, we have seen a flurry of works inspired by the popularity of Byzantine resilient optimization algorithms in federated and distributed learning~\cite{DBLP:conf/amcc/SuV16,DBLP:conf/nips/BlanchardMGS17,DBLP:conf/podc/GuptaV20,DBLP:conf/nips/El-MhamdiFGGHR21,DBLP:journals/tac/SuV21,DBLP:conf/icml/FarhadkhaniGHV22}.

Most works on Byzantine resilience have focused on models and problems where the data is integrated into the network, making it difficult to get Byzantine resilience past $\byzfrac < 1/3$ or $\byzfrac < 1/2$. However, there have been some  exceptions that were observed quite early in the Byzantine resilience literature, like authenticated broadcast~\cite{DS83} that can be achieved for any $\byzfrac < 1$.  More recently, we have seen the power of decoupling data and computing in the context of mobile agents. Consider the problem of gathering in which mobile agents on a graph need to gather at one location. When Byzantine agents cannot change their identity labels, honest agents can gather for all $\byzfrac < 1$, provided the network size is known~\cite{DPP14}. Crucially, the honest agents can explore every part of the graph. The Byzantine agents do not control any portion of the graph. On the flip side, consider~\cite{AMPV22} that studies graph connectivity testing in the congested clique network in the presence of Byzantine failures. They provide efficient algorithms to distinguish between the following two cases: (i) the graph induced by the honest nodes is connected and (ii) the graph as a whole is far from connected. Unfortunately, there is large gap between these two extremes. As shown in~\cite{AMPV22}, this gap cannot be bridged because significant portions of the graph are only accessible to Byzantine nodes.

Finally, our work can be viewed as a step towards understanding the power of oracles in distributed computing. Oracles have been effectively used in computational complexity to hide some computational steps to focus on others. See~\cite{hemaspaandra2001complexity} for an excellent treatment of the various structural complexity theory results that have been obtained through oracles.  Interestingly, the power of oracles in distributed computing has been explored before in distributed computing  in the context of overcoming the challenges posed by failures in asynchronous settings~\cite{mostefaoui2002introduction}.  On the broader algorithmic front, the property testing model~\cite{property} can be viewed as using oracles to access data that is only available through expensive queries. The essence of property testing (and more generally, sub-linear algorithms) is to compute approximate solutions with very few queries. In fact, our framework can be viewed as a distributed computing extension to the property testing framework. Developed further, we envision that our framework can be suitably adapted to include broader notions of oracles.

\section{\CCA\ Algorithms in the Deterministic Model}

In this section, we study the deterministic version $\Det$ of the \CCA\ model wherein the machines do not have access to any random bits. Thus, the adversary will know exactly how the algorithm execution will progress. In particular, the adversary can select which nodes will be Byzantine based on complete knowledge of the executions under each choice of Byzantine machines and their behaviors. Recall that $\Det(\byzfrac < \eta)$ refers to the deterministic version of the \CCA\ model with up to $\byzfrac k$ Byzantine machines.

We begin by studying the most fundamental problem, namely the $\IDp$ problem wherein every honest machine must learn all $n$ bits of the input array $\INPUT$, thereby serving as a good baseline solution for all problems.  We will subsequently  present algorithms  for the $\ORp$ problem with complexities that improve upon the baseline solution. 

\subsection{The $\IDp$ Problem in the Deterministic Model}
\label{subsec: ID-det}

The naive algorithm for $\IDp$, where each machine queries the entire input array, works in $\Det(\byzfrac < 1)$ and requires a query complexity $\Query=n$ per honest machine (Thm. \ref{thm:naiv-ID}). 


However, we can improve the query complexity $\Query$ when $\byzfrac$ is smaller.
\begin{theorem}
\label{thm: alg ID det}
In the $\Det(\byzfrac<1/2)$ model, there is an algorithm for $\IDp$ 
with $\Query= O(\byzfrac n)$,
$\Time= \tilde{O}(\byzfrac n)$
and 
$\Message= \tilde{O}(\byzfrac nk^2)$.
\end{theorem}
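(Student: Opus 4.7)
The plan is to exploit the honest-majority assumption via balanced redundant querying followed by majority voting. Fix a public, deterministic assignment $\phi:[n]\to\binom{[k]}{2\byzfrac k+1}$ that sends each bit to $2\byzfrac k+1$ machines while keeping each machine's load balanced. A cyclic assignment works: machine $M_j$ is responsible for bit $x_i$ iff $j\equiv i,i+1,\ldots,i+2\byzfrac k\pmod{k}$. Each machine is then assigned exactly $(2\byzfrac k+1)\,n/k=O(\byzfrac n)$ bits, and $\phi$ is known to all machines since it is computed from $\byzfrac$ and $n$ alone.

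The algorithm has two phases. In the \emph{query phase}, each honest machine $M_j$ issues $\CloudQuery(i)$ for every $i$ with $j\in\phi(i)$, paying $O(\byzfrac n)$ queries. In the \emph{dissemination phase}, $M_j$ forwards each retrieved pair $(i,x_i)$ to every other machine. Since each such pair fits in $O(\log n)$ bits and each machine must transmit $O(\byzfrac n)$ pairs to each of its $k-1$ neighbors over the CONGEST links, this phase takes $O(\byzfrac n)$ rounds, and produces $O(\byzfrac n\cdot k)$ messages per machine, hence $\tilde O(\byzfrac nk^2)$ messages overall.

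Finally, for each index $i\in[n]$, every honest machine decides $x_i$ by taking the \emph{majority} of the values it received from the $2\byzfrac k+1$ machines in $\phi(i)$. Correctness is immediate: the adversary corrupts at most $\byzfrac k$ machines in total, so among the $2\byzfrac k+1$ assignees of any single bit at most $\byzfrac k$ are Byzantine, leaving at least $\byzfrac k+1$ honest assignees that all report the true value $x_i$. This gives a strict honest majority for every bit, so the vote returns $x_i$ regardless of what the Byzantine machines claim.

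\textbf{Main obstacle.} There is no real algorithmic difficulty here; the assumption $\byzfrac<1/2$ makes majority voting the natural tool, and the slight cleverness is simply to keep the redundancy at $2\byzfrac k+1$ rather than something larger, so that $\Query$ matches the lower bound up to constants. The only care needed is bookkeeping: confirming that a cyclic (or similar) $\phi$ is simultaneously $(2\byzfrac k+1)$-regular on the left and $O(\byzfrac n)$-regular on the right (routine, using $\byzfrac k$ integral), and verifying that the CONGEST bandwidth forces $\Time=\tilde O(\byzfrac n)$ and $\Message=\tilde O(\byzfrac nk^2)$ rather than anything smaller. Any standard pipelining of the dissemination phase closes the complexity accounting.
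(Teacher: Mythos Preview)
Your proposal is correct and follows essentially the same approach as the paper: assign each bit to a deterministically chosen public majorizing committee of size $2\byzfrac k+1$ via a round-robin/cyclic scheme, have committee members query and broadcast, and decide by majority vote. The complexity accounting and correctness argument match the paper's almost verbatim.
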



\begin{proof}
The key observation is that it suffices if each bit $i$ is queried by a public majorizing committee $\cC_i$, since when such a committee sends the bit to every other machine, each other (honest) machine can trust the majority of the received messages.

A public majorizing committee can be defined deterministically by pre-defining $2\byzfrac k +1$ machines as its members. 
For each bit $x_i \in \INPUT$, $1 \le i \le n$ taken in order, we assign $2\byzfrac k+1$ machines as $\cC_i$ in a round robin with wrap-around fashion. Specifically, we assign machines $(i-1)(2\byzfrac k +1)+j \pmod k + 1$, $0 \le j < 2 \byzfrac k + 1$, to $\cC_i$. This will ensure that
\begin{enumerate}
    \item each committee gets $2 \byzfrac k + 1$ members, thereby establishing correctness, and
    \item at most each machine appears in at most $O(\byzfrac n + n/k)$$=O(\byzfrac n)$ (since $\byzfrac\ge1/k$)  committees, thereby establishing  bounds on all complexity measures. \qedhere
\end{enumerate}
\end{proof}

We now present a matching lower bound on $\Query$ for the $\IDp$ problem.

\begin{theorem}
\label{thm:LB-det-InputDistr}
In the $\Det(\byzfrac<1)$ model,  
any algorithm for the $\IDp$  problem has
$\Query=\Omega(\byzfrac n)$.
\end{theorem}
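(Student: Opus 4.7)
The plan is to establish the lower bound via a standard indistinguishability argument: if the algorithm is too frugal with queries, there must exist an input bit that no honest machine inspects, and the adversary can exploit this to fool all honest machines simultaneously.

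\textbf{Step 1 (counting).} Fix an arbitrary input $X_0$ and consider the (hypothetical) all-honest execution of the algorithm on $X_0$. For each index $i \in [1,n]$, let $Q_i \subseteq [1,k]$ be the set of machines that cloud-query $x_i$ during this execution. Since every (honest) machine issues at most $\Query$ queries, we have $\sum_{i=1}^{n} |Q_i| \le k\cdot\Query$. Suppose for contradiction that $\Query < \byzfrac n$; then $\sum_i |Q_i| < \byzfrac k n$, so by pigeonhole there exists an index $i^*$ with $|Q_{i^*}| \le \byzfrac k - 1 < \byzfrac k$ (using integrality of $\byzfrac k$).

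\textbf{Step 2 (adversarial construction).} Let $B \supseteq Q_{i^*}$ be any set of exactly $\byzfrac k$ machine indices (pad $Q_{i^*}$ with arbitrary other machines if necessary); this is possible since $|Q_{i^*}| \le \byzfrac k$. Declare the machines in $B$ to be Byzantine. Their strategy is to execute the honest protocol \emph{pretending} that the input is $X_0$: they issue exactly the queries they would have issued in the all-honest execution on $X_0$, ignore the actual cloud responses, and send exactly the same machine-machine messages as in the all-honest execution on $X_0$. Let $X_1$ be the input obtained from $X_0$ by flipping bit $i^*$.

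\textbf{Step 3 (indistinguishability).} Compare two executions: the all-honest execution on $X_0$, and the adversarial execution on $X_1$ with Byzantine set $B$ as above. I claim that the view of every honest machine $M \in \cH := [1,k]\setminus B$ is identical in both executions. This is proved by induction on rounds. The Byzantine machines in $B$ produce identical messages in the two executions by construction. For the honest machines, note that no machine in $\cH$ belongs to $Q_{i^*}$, so no honest machine ever queries $x_{i^*}$ in the all-honest execution on $X_0$; inductively, since all received messages and all query responses on indices $j \ne i^*$ agree between the two executions, each honest machine's next-round state, query set, and outgoing messages are identical, and no honest machine ever queries $x_{i^*}$ in the $X_1$ execution either. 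Hence the only bit on which the cloud would answer differently is never observed by any honest machine, and the inductive step closes.

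\textbf{Step 4 (contradiction).} Since every honest machine's view is identical in the two executions, each honest machine produces identical output in both. But correctness of the $\IDp$ algorithm requires every honest machine to output $X_0[i^*]$ in the first execution and $X_1[i^*] = 1 - X_0[i^*]$ in the second, a contradiction. Therefore $\Query \ge \byzfrac n$, proving $\Query = \Omega(\byzfrac n)$.

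The only subtle point I expect to need care is the induction in Step 3: one must simultaneously track that no honest machine queries $x_{i^*}$ and that all views remain equal, since each property depends on the other. This is handled by strengthening the inductive hypothesis to ``through round $t$, every honest machine has identical state in both executions and has not queried $x_{i^*}$,'' which then propagates cleanly round by round.
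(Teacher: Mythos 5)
Your proof is correct and follows essentially the same route as the paper's: the paper first proves (via the identical adversarial replay/indistinguishability argument) that in the failure-free execution every bit must be queried by at least $\byzfrac k+1$ machines, and then sums over bits, whereas you run the contrapositive with an explicit averaging step. The core idea — corrupting exactly the machines that queried the under-covered bit and having them pretend the bit is unflipped — is the same.
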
 



\begin{proof}

To establish this, we prove a slightly stronger claim.
Consider a deterministic algorithm $\Algor$ for the $\IDp$ problem. For an $n$-bit input $\INPUT$, let $\execution(\INPUT)$ denote the (unique) execution of $\Algor$ on $\INPUT$ in which none of the machines has failed. Then the following holds.

\begin{lemma} \label{lem:lb:4}
For every $\INPUT$, every bit $\INPUT[i]$ ($1\le i\le n$) is queried by at least $\byzfrac k+1$ machines during the execution  $\execution(\INPUT)$.
\end{lemma}
\begin{proof}
Towards contradiction, suppose there exists an input $\INPUT=\{x_1,\ldots,x_n\}$ and an index $1\le i\le n$ such that in the execution $\execution=\execution(\INPUT)$, the set $\hM$ of machines 
that queried the bit $x_i$ is of size $|\hM| \le \byzfrac k$. Without loss of generality let $x_i=0$.

The adversary can now apply the following strategy. It first simulates the algorithm $\Algor$ on $\INPUT$ and identifies the set $\hM$. It now generates an execution $\execution’$ similar to $\execution$ except for the following changes: (a) The input $\INPUT'=\{x'_1,\ldots,x'_n\}$ in $\execution’$ is the same as $\INPUT$ except that $x’_i=1$. (b) The machines of $\hM$ are Byzantine; all other machines are honest. (c) Each Byzantine machine $M\in \hM$ behaves according to $\Algor$ except that it pretends that $x'_i=0$, or in other words, it behaves as if the input is $\INPUT$ (and the execution is $\execution$). 

One can verify (e.g., by induction on the rounds) that the honest machines cannot distinguish between the executions $\execution$ and $\execution’$. Therefore, they end up with the same output in both executions. This contradicts the fact that their output in $\execution$ must be $\INPUT$ and their output in $\execution’$ must be $\INPUT’$.
\end{proof}

The lemma implies that for every input $\INPUT$, the total query complexity of the algorithm is greater than $\byzfrac kn$.
Theorem \ref{thm:LB-det-InputDistr} follows.
\end{proof}

\subsection{The $\ORp$ Problem in the Deterministic Model}

In this section we consider the problem of computing the $\ORp$ of the input bits. We first establish some basic lower bounds, and then present some efficient deterministic $\ORp$ algorithms.

\subsubsection{Lower bounds for the $\ORp$ problem}

It is clear that even allowing randomization, in the worst case one needs to spend $\Omega(n/k)$ queries per machine. 
If every machine queries $o(n/k)$ bits (in expectation), then overall there are only $o(n)$ indices (in expectation) read by all machines. There must be some bit whose probability of being read by \textit{any} of the $k$ machines is $o(1)$. 

A more interesting picture emerges when one considers
the complexity of the $\ORp$ problem as a function of 
the input \emph{density} $\Density$ that quantifies the proportion of ones in the input; see the definition of $\ORp$ below. For simplicity, $\Density n$ is always assumed to be integral.
In particular, as the density of the input increases, a witness index $i$ (namely, such that $x_i=1$) can be found with fewer queries. 
To facilitate a more nuanced understanding, we present the following definition.

The \emph{$\ORp(\Density)$ problem} for $0<\Density\le1$ requires machines to output the $\ORp$ of all bits in the input $\INPUT$
given that the exact number of 1's in $\INPUT$ is either 0 or $\Density n$. 

The \emph{Explicit $\ORp(\Density)$ problem} is the same as the $\ORp(\Density)$ problem except that if the output is 1, each honest machine must also output an index $i$ such that $x_i$ is 1.

All our lower bounds are for the  $\ORp(\Density)$ problem. 

\begin{lemma}
\label{lem:LB-det-OR-dense}
In $\Det(\byzfrac<1)$, 
any algorithm for $\ORp(\Density)$ requires 
$\Query= \Omega\left (\frac{(1-\Density)n}{\goodfrac k}\right)$ queries 
in the worst case. The bound holds also for the \emph{Explicit} $\ORp(\Density)$ problem. 
\end{lemma}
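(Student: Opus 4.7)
The lower bound is an indistinguishability argument. Fix a deterministic algorithm $\Algor$ and fix any Byzantine set $\BYZ\subseteq[k]$ of size $\byzfrac k$. Consider first the execution $E_0$ of $\Algor$ on the all-zeros input $\INPUT_0$ in which every machine (including those in $\BYZ$) behaves honestly throughout. Let $S_0\subseteq[n]$ be the union of indices queried to the cloud by the honest machines $\cH=[k]\setminus\BYZ$ during $E_0$. The key claim is that $|S_0|\ge (1-\Density)n$. Once this is established, since each honest machine issues at most $\Query$ queries, $|S_0|\le|\cH|\cdot\Query=\goodfrac k\cdot\Query$, and the bound $\Query=\Omega((1-\Density)n/(\goodfrac k))$ follows immediately by pigeonhole.

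\textbf{Indistinguishability step.} Suppose for contradiction that $|S_0|<(1-\Density)n$. Then the complement of $S_0$ in $[n]$ has size strictly greater than $\Density n$, so we can choose an input $\INPUT_1$ that agrees with $\INPUT_0$ on every index of $S_0$ and has exactly $\Density n$ ones (all placed in $[n]\setminus S_0$). Now construct a paired execution $E_1$ on input $\INPUT_1$ with the same Byzantine set $\BYZ$, where each Byzantine machine ``replays'' its behavior from $E_0$; that is, it sends to each honest recipient exactly the messages it would have sent in $E_0$ (which are well-defined since the adversary is omniscient and can simulate $E_0$ in advance). I would then prove by induction on rounds that the local state of every honest machine at the end of round $t$ is identical in $E_0$ and $E_1$: the machine-machine messages from Byzantine senders match by construction, the machine-machine messages from other honest senders match by the inductive hypothesis, and the cloud responses match because every index an honest machine queries lies in $S_0$ (again by induction), and $\INPUT_0,\INPUT_1$ agree on $S_0$.

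\textbf{Contradiction and Explicit version.} Since the honest machines have identical views in $E_0$ and $E_1$, they produce identical outputs. But the correct output for $\INPUT_0$ is $0$ while the correct output for $\INPUT_1$ is $1$ (as $\Density n\ge1$), so at least one of the two executions must be incorrect, contradicting the correctness of $\Algor$. Hence $|S_0|\ge(1-\Density)n$, as claimed. For the Explicit $\ORp(\Density)$ version the same construction works without modification: an honest machine's output index in $E_1$ must point to some $i$ with $\INPUT_1[i]=1$, hence $i\notin S_0$, but then the same index output in $E_0$ would satisfy $\INPUT_0[i]=0$, again a contradiction.

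\textbf{Main obstacle.} The proof itself is short; the only delicate point is the inductive argument that the honest machines' queries stay inside $S_0$ across $E_0$ and $E_1$, which in turn requires that the Byzantine simulation of ``pretending the input is $\INPUT_0$'' is well-defined round-by-round. Because the Byzantine strategy depends only on $\Algor$ and $\INPUT_0$ (both known to the omniscient adversary), this poses no real difficulty in the $\Det$ model; a brief formal induction on rounds, as in the proof of Lemma~\ref{lem:lb:4}, handles it cleanly.
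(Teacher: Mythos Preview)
Your proof is correct and follows essentially the same indistinguishability argument as the paper: run the algorithm on the all-zeros input, let $S_0$ be the set of indices queried by honest machines, and if $|S_0|<(1-\Density)n$ construct an indistinguishable input with $\Density n$ ones outside $S_0$ while the Byzantine machines replay their all-zeros behavior. Your write-up is in fact a bit more direct than the paper's, which introduces an unnecessary round-by-round accounting of newly queried indices before arriving at the same contradiction.
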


\begin{proof}
Fix some algorithm $\Algor$ for $\ORp(\Density)$.
Consider an adversary that simulates the entire execution of the algorithm $\Algor$. During the simulation, for each round $t = 1,\ldots, r$, let $\INDEX_t$ be the set of indices $i$ such that $\INPUT[i]$ was queried by an honest machine in round $t$ and it was never queried by any honest machine in any of the prior rounds $1,\ldots, t-1$. Let $q_t$ be the number of queries made by the honest machines in round~$t$. Clearly, $|\INDEX_t| \leq q_t$.

Let round $t^*$ be the first round such that $\sum_{j=1}^{t^*} |\INDEX_j| \le (1 - \Density) n$ and $\sum_{j=1}^{t^*+1} |\INDEX_j| > (1 - \Density) n$. If such a round $t^*$ exists, then the total query complexity is at least
$\sum_{j=1}^{t^*} q_j \ge \sum_{j=1}^{t^*} |\INDEX_j| \ge (1 - \Density)n$. This completes the proof. So, it suffices to show that such a round $t^* < r$ exists.

Towards a contradiction, let us assume that such a round $t^*$ does not exist.
This implies that $\sum_{j=1}^r |\INDEX_j| \le (1 - \Density)n$ where $r$ is the total number of rounds taken by the algorithm $\Algor$ to terminate.

The adversary now chooses the following two distinct inputs $\INPUT$ and $\INPUT'$, where $\INPUT'$ is the all-0 vector.
Let $\INDEX^* = \bigcup_{j=1}^{r} \INDEX_j$ where $r$ is the total number of rounds taken by the algorithm to terminate.
It defines $\INPUT = \{x_1,\ldots, x_n\}$ to be such that the value of $x_i$ for indices $i \in \INDEX^*$ is 0.
For $i \notin \INDEX^*$, choose the lexicographically first $(1 - \Density)n - \sum_{j=1}^r |\INDEX_j|$ indices to be such that $x_i = 0$ and let the remaining indices be 1.
Note that this implies that $\INPUT$ has exactly $\Density n$ 1's.

Let us consider the executions $\execution(\INPUT)$ and $\execution'(\INPUT')$ of the algorithm $\Algor$ on inputs $\INPUT$ and $\INPUT'$. In the execution of $\execution'(\INPUT')$, the Byzantine machines proceed according to the algorithm. However, in the execution of $\execution(\INPUT)$, the Byzantine machines behave exactly as they did in $\execution'(\INPUT')$. Since the view of the honest machines in both executions are the same (as the the set of indices queried by the honest machines in both executions is $\INDEX^*$), the honest machines end up with the same output in both executions which leads to a contradiction.
\end{proof}

\begin{lemma}
\label{lem:lb-det-or-sparse}
In $\Det(\byzfrac < 1)$ model, any algorithm for $\ORp(\Density)$ has 
$\Query \ge \byzfrac\cdot\InverseDensity$.
The bound holds also for the \emph{Explicit} $\ORp(\Density)$ problem. 
\end{lemma}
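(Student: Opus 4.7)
The plan is a standard indistinguishability argument, powered by a simple averaging argument over a random placement of the ones. Suppose toward contradiction that $\Algor$ is a deterministic algorithm solving $\ORp(\Density)$ in the $\Det(\byzfrac<1)$ model with $\Query < \byzfrac \cdot \InverseDensity$. We may assume $\Density \ge 1/n$ (so that $\InverseDensity = 1/\Density$), since otherwise $\Density n$ is not a positive integer and the problem is trivial. Run $\Algor$ on the all-zero input $\INPUT_0$ in the failure-free execution $\execution_0$, and for each machine $M$ let $T_M \subseteq [n]$ be the set of indices queried by $M$ during $\execution_0$; by assumption $|T_M| \leq \Query < \byzfrac/\Density$ for every $M$.

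The key step is a first-moment calculation. For a uniformly random $\Density n$-element subset $I_1 \subseteq [n]$, let $U(I_1) = \{M : T_M \cap I_1 \neq \emptyset\}$. Since each fixed index lies in $I_1$ with probability $\Density$, linearity of expectation yields
\[
\Exp[|U(I_1)|] \;\leq\; \sum_{M} \Density\, |T_M| \;\leq\; \Density \cdot k \cdot \Query \;<\; \byzfrac k.
\]
Hence there exists a specific set $I_1^* \subseteq [n]$ of size $\Density n$ with $|U(I_1^*)| \leq \byzfrac k - 1$. Fix any set $\BYZ$ with $|\BYZ| = \byzfrac k$ and $\BYZ \supseteq U(I_1^*)$.

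The contradiction comes from the following adversarial strategy. Set the Byzantine set to $\BYZ$, the input to $\INPUT$ with ones exactly at the positions in $I_1^*$, and instruct every Byzantine machine to internally simulate $\execution_0$ and send only the messages prescribed by $\execution_0$ (ignoring the actual responses from the cloud). Every honest machine $M \notin \BYZ$ satisfies $T_M \cap I_1^* = \emptyset$ by the choice of $\BYZ$, so all its query responses are $0$ as in $\execution_0$. By induction on the rounds, every message $M$ receives also matches $\execution_0$: Byzantine messages coincide with $\execution_0$ by construction, and other honest machines, having themselves observed only the all-zero transcript up to that round, send exactly the same messages as in $\execution_0$. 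Therefore every honest machine's entire view is identical to its view in $\execution_0$, and so it outputs $0$ (and, for Explicit $\ORp$, the same dummy index). Since $|I_1^*| = \Density n \geq 1$, the correct $\ORp$ value on $\INPUT$ is $1$, a contradiction.

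The main obstacle I expect is the round-by-round preservation of the honest machines' views. This is delicate only at first glance: the Byzantine machines' outgoing messages are, by construction, forced to equal those of $\execution_0$, so the induction hypothesis that all honest machines have the $\execution_0$-view at the start of a round implies they send $\execution_0$-messages and see only zero query responses during that round, so the hypothesis persists into the next round. Everything else (the averaging bound, the covering of $U(I_1^*)$ by $\BYZ$) is elementary.
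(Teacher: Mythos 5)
Your proof is correct and follows essentially the same route as the paper's: simulate the failure-free all-zero execution, find a set of $\Density n$ indices whose set of queriers has size at most $\byzfrac k$, corrupt exactly those queriers so they answer as in the all-zero execution, and conclude by indistinguishability. The only (cosmetic) difference is that you locate the good index set via a first-moment argument over a random $\Density n$-subset, whereas the paper picks the $\Density n$ smallest-degree bits in the query bipartite graph — the same counting bound either way.
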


\begin{proof}
Consider a deterministic algorithm $\Algor$ for $\ORp(\Density)$ that makes fewer than $\byzfrac k\cdot\InverseDensity$ queries in total. This also implies that there must exist at least one machine making fewer than $\byzfrac\cdot\InverseDensity$ queries.
The adversary first simulates the execution $\execution$ of algorithm $\Aor$ assuming that the input consists of only zeros. 
The adversary then builds a bipartite graph $G(L, R, E)$ where $L$ denotes the set of machines and $R$ denotes the set of input bits. We draw an edge between machine $i$ and bit $x_j$ if machine $i$ queried $x_j$ during the simulation.

\begin{claim}
There exists $B \subseteq R$ such that
$|B| \leq \Density n$ and $|\Gamma(B)| \leq \byzfrac k$.
\end{claim}

\begin{proof}
Let $B$ be the set of the $n\Density$ smallest degree vertices in $R$. Then 
$$|\Gamma(B)| ~\le~ \sum_{v\in B} deg(v) ~\le~ 
\frac{n\Density}{n} \cdot \sum_{v\in R} deg(v) ~<~
\frac{n\Density}{n} \cdot \byzfrac k\cdot\InverseDensity ~=~ \byzfrac k,$$ 
where the penultimate inequality follows by the fact that $\sum_{v\in R} deg(v)$ equals the number of queries in the execution.
Hence the chosen $B$ satisfies the requirements of the claim.
\end{proof}

Consider an execution $\execution'$ of $\Algor$ when, (i) input in the cloud is such that $B$ contains $1$ and other bits are $0$, (ii) adversary corrupts $\Gamma(B)$ at the start of $\Algor$ and instructs the machines to locally swap the results to queries in $B$, and otherwise follow $\Algor$.

Observe that the executions 
$\execution$ and $\execution'$ are indistinguishable to honest machines.
Therefore,  $\Algor$ is incorrect, thereby establishing our contradiction.
\end{proof}
Putting Lemma~\ref{lem:LB-det-OR-dense} and Lemma~\ref{lem:lb-det-or-sparse} together, we get the following theorem.
\begin{theorem}
\label{thm:lb-det-OR}
In the $\Det(\byzfrac<1)$ model, any algorithm for the $\ORp(\Density)$ and the Explicit $\ORp(\Density)$ problems has 
$\Query= \Omega\left(\byzfrac\cdot\InverseDensity+\frac{(1-\Density)n}{\goodfrac k}\right)$. 
\end{theorem}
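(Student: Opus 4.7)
The plan is immediate: the two summands in the target bound are precisely the two lower bounds already established in Lemmas~\ref{lem:LB-det-OR-dense} and~\ref{lem:lb-det-or-sparse}, and any deterministic algorithm for (Explicit) $\ORp(\Density)$ in the $\Det(\byzfrac<1)$ model must satisfy both simultaneously. First I would invoke Lemma~\ref{lem:lb-det-or-sparse} to obtain $\Query \geq \byzfrac \cdot \InverseDensity$ (the sparse-regime bound, whose adversary plants a single witness inside a small set $B$ covered by few queriers, then corrupts $\Gamma(B)$ and flips their reported answers). Then I would invoke Lemma~\ref{lem:LB-det-OR-dense} to obtain $\Query = \Omega\!\left((1-\Density)n/(\goodfrac k)\right)$ (the dense-regime bound, whose adversary hides the $\Density n$ ones in the unqueried indices via a standard indistinguishability argument between an all-zero input and a carefully constructed sparse input).

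To merge these into a single additive statement, I would use the elementary observation $\max(a,b)\ge (a+b)/2$ for non-negative $a,b$: since $\Query$ is at least each of the two lower bounds, it is at least their maximum, hence $\Query=\Omega\!\left(\byzfrac\cdot\InverseDensity+(1-\Density)n/(\goodfrac k)\right)$. Both source lemmas are stated in the same model and both explicitly cover the Explicit $\ORp(\Density)$ version, so the combined bound applies to both problem variants with no additional work.

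There is no real obstacle here; the only sanity check worth flagging is that the two adversarial constructions are compatible in the sense of targeting the \emph{same} complexity measure (per-machine query count). The sparse-regime proof derives $\Query \geq \byzfrac\cdot\InverseDensity$ from a total-query bound of $\byzfrac k \cdot \InverseDensity$ (so that $\Query < \byzfrac\cdot\InverseDensity$ would force the total below the threshold), while the dense-regime proof derives $\Query = \Omega((1-\Density)n/(\goodfrac k))$ by distributing the $(1-\Density)n$ total necessary queries across the at most $\goodfrac k$ honest machines. Both bounds are therefore per-honest-machine bounds, and taking their sum up to a factor of $2$ is the whole argument.
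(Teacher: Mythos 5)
Your proposal is correct and matches the paper exactly: the paper proves Theorem~\ref{thm:lb-det-OR} simply by combining Lemmas~\ref{lem:LB-det-OR-dense} and~\ref{lem:lb-det-or-sparse}, which is precisely your argument (the $\max(a,b)\ge(a+b)/2$ step is the implicit glue in both cases). Your sanity check that both lemmas bound the same per-honest-machine measure is a reasonable extra observation but not needed beyond what the paper already states.
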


This concludes the part on the lower bounds for the $\ORp$ problem.
The remainder of this section deals with efficient deterministic algorithms for $\ORp$ and Explicit $\ORp$ under different settings.

\subsubsection{The $\ORp$ Problem in $\Det(\byzfrac<1)$}
\label{sssec:ORp det}


We describe here a deterministic algorithm to compute the $\ORp$ function whose query complexity is $\tilde{O}(\frac{n}{k} + \InverseDensity + k)$. (For constant $\Density$, this matches our randomized approach for $\XORp$.)
The algorithm makes use of bipartite graphs with certain desirable expansion properties for large vertex sets.
We refer to these graphs as \emph{large set expanders} and define them below.

\begin{defn}[\newexp\ (LSE)]\label{def:lse}
A bipartite graph $G(L, R)$ is an \emph{$(n, k, \byzfrac, \Density)$-Large Set Expander} (or $(n, k, \byzfrac, \Density)$-LSE) if $n=|L|, k=|R|$ and $|\Gamma(S)| > \byzfrac k$ for all $S \subseteq L$ with $|S| \geq n\Density$.
\end{defn}

Informally, a large set expander is such that for every large enough subset $S$, i.e., $S \subseteq L$ and $|S| \geq \Density n$, its neighborhood cannot be covered fully by any subset of $\byzfrac k$ vertices, i.e., $|\Gamma(S)| > \byzfrac k$.
The definition of a large set expander is similar to that of expander graphs and we use a similar probabilistic analysis to prove their existence. We formalize this in the lemma below (whose proof is deferred to the appendix).

\begin{lemma}
\label{lem:large-set-expander}
There exists a bipartite graph $G(L, R)$ that is a $(n, k, \byzfrac, \Density)$ $\newexp$ such that, 
(1)~Every vertex in~$L$ has degree at most~$d$, and
(2)~Every vertex in~$R$ has degree at most~$\frac{2nd}{k}$,
for all $d$ satisfying
\begin{equation*}
        d > \max \left\{ \frac{1 + \log (e\cdot\InverseDensity)}{\log \frac{1}{\byzfrac}} + \frac{\byzfrac k}{\Density n} \cdot \frac{\log\frac{e}{\byzfrac}}{\log \frac{1}{\byzfrac}} , \frac{3k\ln{2k}}{n}\right\}.
\end{equation*}
\end{lemma}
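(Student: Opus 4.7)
The plan is to apply the probabilistic method. Construct a random bipartite graph $G = (L, R, E)$ by letting each vertex $u \in L$ independently choose a uniformly random $d$-subset of $R$ as its neighborhood; this immediately gives every $u \in L$ degree exactly $d$, so only two conditions remain to verify with positive probability, namely the large-set expansion property and the right-degree bound.

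For the expansion property, observe that by monotonicity of $|\Gamma(\cdot)|$ it suffices to rule out, for every pair $(S,T)$ with $|S| = \lceil \delta n \rceil$ and $|T| = \lfloor \beta k \rfloor$, the event that $\Gamma(S) \subseteq T$. For a fixed such pair, each $u \in S$ independently places all $d$ of its neighbors inside $T$ with probability at most $(|T|/k)^d \le \beta^d$, so the joint probability is at most $\beta^{d \delta n}$. A union bound over all such pairs, combined with the standard estimate $\binom{a}{b} \le (ea/b)^b$, gives
\[
\Pr[\text{LSE fails}] \;\le\; \left(\tfrac{e}{\delta}\right)^{\delta n}\!\left(\tfrac{e}{\beta}\right)^{\beta k}\!\beta^{d\delta n}.
\]
Taking $\log_{1/\beta}$ of both sides and requiring the resulting expression to be strictly less than $1/2$ yields, after solving for $d$, precisely the first term in the $\max$; the additive $1$ in the numerator absorbs the slack required for the strict inequality.

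For the right-degree bound, fix $v \in R$ and note that $\deg_R(v) = \sum_{u \in L} \mathbf{1}[v \in \Gamma(u)]$ is a sum of $n$ independent Bernoulli variables each of probability $d/k$, so $\mathbb{E}[\deg_R(v)] = nd/k$. A multiplicative Chernoff bound yields $\Pr[\deg_R(v) > 2nd/k] \le \exp(-nd/(3k))$, and a union bound over the $k$ right vertices forces the probability that any right-degree exceeds $2nd/k$ to be below $1/2$ as soon as $d > 3k \ln(2k)/n$, which is exactly the second term in the $\max$.

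Since both bad events occur with probability strictly less than $1/2$, a final union bound shows that the random $G$ simultaneously satisfies both properties with positive probability, and hence a good graph exists. The main effort (rather than a real obstacle) lies in bookkeeping the ceilings/floors on $\lceil \delta n \rceil$ and $\lfloor \beta k \rfloor$ together with the precise slack in both tail bounds, so that the two conditions on $d$ emerge in the clean closed form stated; conceptually, nothing beyond a standard two-part expander-existence calculation is required.
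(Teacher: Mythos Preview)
Your proposal is correct and follows essentially the same probabilistic-method argument as the paper: a random left-regular bipartite graph, a union bound over $(S,T)$ pairs with $\binom{a}{b}\le(ea/b)^b$ for the LSE property, and a Chernoff bound plus union bound over $R$ for the right-degree condition, each forced below probability $1/2$. The only cosmetic difference is that the paper samples the $d$ neighbors of each $u\in L$ with replacement (hence ``degree at most $d$'') rather than as a $d$-subset, which does not affect any of the estimates.
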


\def\PROOFA{
    Consider the following random experiment that constructs a simple bipartite graph.
    \begin{enumerate}
        \item Construct an empty ($0$ edge) bipartite graph $G(L, R)$ with $n$ vertices in $L$ and $k$ vertices in $R$.
        \item For every $u \in L$, choose $d$ vertices in $R$ independently and uniformly at random. Add an edge between $u$ and every vertex sampled.
    \end{enumerate}

    Let the graph obtained by the above procedure be $G(L, R)$. By construction, every vertex in $L$ has degree at most $d$.

    We calculate the probability that $G$ satisfies the required properties in the lemma. If we can lower bound this probability by a non-zero value, then existence is guaranteed.

    We need to show that $G$ satisfies two properties, first is that it is a $(\byzfrac, \Density)$ $\newexp$ and second is that the degrees of the vertices in $R$ are small.

    Let $E_1$ be the event that $G$ is not a $(\byzfrac, \Density)$ $\newexp$ and let $E_2$ be the event that some vertex in $R$ has degree larger than $\frac{2nd}{k}$. We show independently that $\Prob(E_1) < \frac{1}{2}$ and $\Prob(E_2) < \frac{1}{2}$. It follows from the union bound that $\Prob(E_1 \cup E_2) < 1$ and that $\Prob(\overline{E_1} \cap \overline{E_2}) > 0$.

    To bound $\Prob(E_1)$, first consider the case when sets $S \subseteq L$ with $|S| = \Density n$ and $T \subseteq R$ with $|T| \leq \byzfrac k$ are fixed. Let $p_{fail, S, T}$ be the probability that $G$ fails the $\newexp$ property due to $\Gamma(S) \subseteq T$. If this happens, then $\Gamma(i) \subseteq T$ for every $i \in S$. Let $p_{fail, S, T, i}$ be the probability of this event. Let the r.v. for the $d$ sampled vertices for vertex $i$ be $X_{i,1}, X_{i,2} \dots X_{i, d}$. We have,
    \begin{equation*}
        p_{fail, S, T, i} = \Prob\left(\mathop{\bigcap}\limits_{j=1}^d X_{ij} \in T\right) = \byzfrac^d.
    \end{equation*}

    We also have that $p_{fail, S, T} = \mathop{\bigcap}\limits_{i\in S} p_{fail, S, T, i}$ and therefore $p_{fail, S, T} \leq p_{fail, S, T, i}^{\Density n}$. Finally using the union bound we get,
    \begin{equation*}
            \Prob(E_1) \leq \binom{n}{n\Density} \cdot \binom{k}{\byzfrac k} \cdot \byzfrac^{dn\Density} \leq 2\wedge \left( \Density n \log(e\cdot\InverseDensity) + \byzfrac k \log(\frac{e}{\byzfrac}) - d \cdot (\Density n \log{\frac{1}{\byzfrac}}) \right).        
    \end{equation*}
    In the last step we used the inequality $\binom{n}{r} \leq \left(\frac{ne}{r} \right)^r$. To bound $\Prob(E_1)$ to less than $1/2$, it is sufficient to choose $d$ such that,

    \begin{equation*}
        d > \frac{1 + \log(e\cdot \InverseDensity)}{\log \frac{1}{\byzfrac}} + \frac{\byzfrac k}{\Density n} \cdot \frac{\log\frac{e}{\byzfrac}}{\log \frac{1}{\byzfrac}}
    \end{equation*}

Lastly we look to bound $\Prob(E_2)$. Let $Y_v$ be the random variable corresponding to the degree of a vertex $v \in R$. By symmetry, $Y_v$ is identical for all vertices $v \in R$ and is given by $Y_v \sim Bin(nd,\frac{1}{k})$. Let $\mu=E[Y_v]=\frac{nd}{k}$. Using Chernoff bounds, we get that 
\begin{equation*}
    \begin{split}
        \Prob(Y_v > (1+\epsilon)\mu) \le e^{-\frac{\epsilon^2\mu}{3}}\\
        \Prob(Y_v > \frac{2nd}{k}) \le e^{-\frac{nd}{3k}}\\
        \Prob(E_2)=\Prob\left(\mathop{\bigcup}\limits_{v \in R} Y_v > \frac{2nd}{k}\right) &\leq k \exp(-\frac{nd}{3k}).
    \end{split}
\end{equation*}
If $d>\frac{3k\ln{2k}}{n}$, then $\Prob(E_2)<\frac{1}{2}$. Consequently if we choose $d$ to be greater than the two values obtained for each $E_1, E_2$, we have that $\Prob(E_1 \cup E_2) < 1$ as desired.
} 

We use the existence of large set expanders to devise Algorithm~\ref{alg:det:or} below which allows us to compute the $\ORp$ function such that the query complexity is $\tilde{O}(\frac{n}{k} + \InverseDensity + k)$. We state the result in the following theorem.

\begin{theorem}
\label{thm: det 1 OR}
In the $\Det(\byzfrac<1)$ model, 
Algorithm $\LSEDisjunctOne$ solves $\ORp$ with $\Query = O\left( \frac{n}{k} \cdot \left(\log_{\frac{1}{\byzfrac}} (e^2\InverseDensity) + \log k \right) \cdot \log \InverseDensity + \InverseDensity \cdot  (\byzfrac \log_{\frac{1}{\byzfrac}}{\frac{e}{\byzfrac}})+ k\right)$, $\Time = O(\log n)$ and 
$\Message = O(\byzfrac k^2 \log n)$.
\end{theorem}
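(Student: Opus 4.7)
The plan is to verify that Algorithm $\LSEDisjunctOne$ operates by iterating through candidate densities $\Density_j = 2^{-j}$ for $j = 0, 1, \ldots, \lceil \log n \rceil$, using a precomputed $(n,k,\byzfrac,\Density_j)$-LSE $G_j$ whose existence is guaranteed by Lemma~\ref{lem:large-set-expander}. Since these graphs are fixed deterministically in the algorithm's description, all honest machines agree on them. In phase $j$ each machine $M_v$ cloud-queries exactly the positions in its $R$-side neighborhood $\Gamma_{G_j}(v)$; any machine that finds a $1$-bit broadcasts the witnessing index to all others, who then verify it by an independent cloud query. The algorithm halts and outputs $1$ as soon as some claimed index is verified, and outputs $0$ if no verified witness emerges after all phases.

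Correctness follows from the LSE property. Let $j^\star$ be the smallest $j$ with $\Density_{j^\star} \le \Density$. In phase $j^\star$, the set $S$ of $1$-indices satisfies $|S| \ge \Density_{j^\star} n$, so $|\Gamma_{G_{j^\star}}(S)| > \byzfrac k$. Since at most $\byzfrac k$ machines are Byzantine, at least one honest machine in this neighborhood queries a $1$-bit during phase $j^\star$ and truthfully broadcasts its index. Byzantine machines may broadcast false indices in any phase, but every such claim is exposed by the verification step, and the machine that lied is blacklisted (via direct blacklisting for false reporting). When $\Density = 0$ the search completes all $O(\log n)$ phases without any verified witness, correctly returning $0$.

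For the query analysis, I substitute $\Density = \Density_j$ into Lemma~\ref{lem:large-set-expander} to bound each machine's workload in phase $j$ by $O(n d_j / k)$, where $d_j$ is the maximum of three additive terms. Summing over $j \le \lceil \log \InverseDensity \rceil$ (the algorithm halts by then), the first term $\log_{1/\byzfrac}(e \cdot 2^j)$ grows linearly in $j$, so its weighted sum contributes $O\!\left(\frac{n}{k}\log_{1/\byzfrac}(e^2\InverseDensity) \cdot \log \InverseDensity\right)$; the second term $\frac{\byzfrac k}{\Density_j n}\log_{1/\byzfrac}(e/\byzfrac)$ is geometric in $2^j$ and therefore telescopes to $O(\InverseDensity \cdot \byzfrac \log_{1/\byzfrac}(e/\byzfrac))$; the baseline term $\frac{k\log k}{n}$ contributes $O(\log k \cdot \log \InverseDensity)$, comfortably subsumed by the stated $\frac{n}{k}\log k \cdot \log\InverseDensity$ bound. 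The verification step adds only $O(\byzfrac k)=O(k)$ queries per honest machine across the entire execution, since each Byzantine machine can be blacklisted after one false broadcast.

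Finally, each phase uses a constant number of rounds (query, broadcast candidates, verify, announce), and there are $O(\log n)$ phases, so $\Time = O(\log n)$. In each phase at most $O(\byzfrac k)$ false broadcasts plus a handful of true ones occur, and each broadcast is an all-to-all of $O(k)$ unicast messages, yielding $\Message = O(\byzfrac k^2 \log n)$. The main technical obstacle is the summation arithmetic, where the three additive components of $d_j$ must be separately routed into the corresponding three terms of the stated query bound — in particular ensuring that the geometric blow-up of the $\frac{\byzfrac k}{\Density_j n}$ term is absorbed entirely by the $\InverseDensity\cdot\byzfrac\log_{1/\byzfrac}(e/\byzfrac)$ term and not by the $\frac{n}{k}$-weighted term. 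Amortizing verification cost via blacklisting so that only an additive $O(k)$ appears is the other delicate step but follows cleanly from the direct blacklisting argument.
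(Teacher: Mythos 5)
Your proposal is correct and follows essentially the same route as the paper: iterate over guessed densities $\Density_j = 2^{-j}$ using deterministically agreed-upon LSEs from Lemma~\ref{lem:large-set-expander}, have honest machines broadcast any discovered $1$-index, verify each claimed index by a direct cloud query with blacklisting of liars (costing an additive $O(\byzfrac k)$ overall), and sum the per-phase degrees so that the linearly growing first term, the geometric second term, and the $O(\log k)$ baseline term land in the three components of the stated query bound. The only cosmetic deviation is that the paper's algorithm ends with a full download of all $n$ bits before returning $0$, and it spells out the case split between an honest machine terminating before phase $p=\lceil\log\InverseDensity\rceil$ versus all honest machines reaching phase $p$, but neither affects the substance of your argument.
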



Let us unpack the essence of Theorem \ref{thm: det 1 OR} ignoring $\log$ factors and constants dependent on $\byzfrac$. We see that Algorithm \ref{alg:det:or} has query complexity $\Query = \tilde{O}(n/k + \InverseDensity + k)$, essentially matching the lower bound (except for the additive $k$ term). The constant factors increase as $\byzfrac$ gets closer to $1$ and reduce to the naive algorithm when $\byzfrac = 1 - 1/k$.

\begin{proof}
Given the existence of large set expanders, each machine first generates (using the same deterministic algorithm) $\lceil \log n \rceil$ $\newexp\text{s}$ $G_r$, with the $r^{th}$ graph having $\Density_r = \frac{1}{2^r}$, $L=\{x_1, x_2, \dots x_n\}$, i.e., the input bits, and $R = \{M_1, M_2, \dots M_k\}$, i.e., the machines. We then execute the Algorithm \ref{alg:det:or} (written from the perspective of a single machine).

\algnewcommand{\LeftComment}[1]{\Statex \(\triangleright\) #1}
\begin{algorithm}[htb]
\begin{algorithmic}[1]
\State Array $res$ with $\res[i] \gets \nul$ for $i = 1,\ldots, n$ \Comment{Array of bits learnt from the cloud}
\State $\BYZ \gets \emptyset$ \Comment{Machines blacklisted as Byzantine}
%
\For{$r = 1, 2, \dots {\lceil \log n \rceil}$ (sequentially)} 
\State Construct $G_r$ to be an $(n, k,\byzfrac,1/2^r)$-LSE \Comment{See Lemma \ref{lem:large-set-expander}}. 

\State $\res[j] \gets \CloudQuery(j)$ for every
    $j \in \Gamma(M_{\ell})$ in the graph $G_r$ 

\If{$\res[j]=1$ for some $j$} ~send $j$ to all machines and return 1
\EndIf
\State Receive all indices sent by other machines
\For{ every machine $M' \not \in \BYZ$ that sent some index $i$ (in this iteration or the previous)}
\State $\res[i] \gets \CloudQuery(i)$
\If{$\res[i] = 0$} ~Add $M'$ to $\BYZ$
\EndIf
\EndFor
\If{$\res[i]=1$ for some $i$} 
~send $i$ to all other machines and \Return 1 
\EndIf
\State Receive all indices sent by other machines \Comment{handle them in next iteration}
\EndFor
\State $\res[i]\gets\CloudQuery(i)$ for every ${i\in[1,n]}$
\If{$\res$ is an all-0 array} \Return $0$ \Else{} \Return 1 \EndIf. 
\end{algorithmic}
\caption{Deterministic Algorithm $\LSEDisjunctOne$,
 $\Det(\byzfrac<1)$, Code for machine $M_{\ell}$}
\label{alg:det:or}
\end{algorithm}

%
%

It is easy to see that every honest machine returns the correct answer. Either a $1$ is found and verified explicitly by the cloud, or all the bits are queried and no $1$ is found.

The claim about the query complexity is correct whenever $\Density = O(\frac{1}{n})$ since the worst case number of queries is $O(n + \frac{n \log^2 n}{k})$. 

It remains to show the complexity when $\Density = \omega(\frac{1}{n})$. Let $p$ be the smallest positive integer such that $2^p \geq \InverseDensity$. Clearly $p < \lceil \log n \rceil$.
If some honest machine $M_t$ terminated before phase $p$, say phase $q$ then it must have sent $1$ and by phase $q + 1$, every machine receives a correct index that $M_t$ broadcasted (in addition to all possibly many others sent by other machines).
When the machines in phase $q+1$ verify for every other machine, one of the indices that it sent, either they identify some correct answer, or after at most $\byzfrac k + 1 $ queries, verifies the index sent by $M_t$ and terminate successfully.

If none of the honest machines terminated before phase $p$, by the $\newexp$ property, at least one of the honest agents in phase $p$ query the correct answer and by phase $p+1$ all honest agents know the answer.

Let $d_i$ be the degree of the graph at phase $i$.
The maximum number of queries per machine 
is $d_1 + d_2 + \dots d_p + k$ and is equivalent to the expression given in the theorem.
\end{proof}

\subsubsection{The $\ORp$ Problem in $\Det(\byzfrac < 1/2)$}

Whenever there is a strict honest majority in the machines, i.e., $\byzfrac < 1/2$, we show that $\ORp$ can be solved with query complexity $\tilde{O}(n/k + \InverseDensity)$. The idea is to use expanders as before, but this time, whenever more than $1/2$ of the machines found a $1$, the remaining honest machines can conclude that the answer is $1$ without performing any validating queries. We outline the procedure in Algorithm \ref{alg:or-small-beta-2}. 

\begin{algorithm}
\begin{algorithmic}[1]
        \State Set $y \gets 0$
        \For{$r = 1, 2, \dots u= \lceil \log n \rceil$}
        \State Construct $G_r$ to be an $(n, k, 1/2+\byzfrac, \delta_r=1/2^r)$-LSE
        \Comment{See Lemma \ref{lem:large-set-expander}}
        \If{$y = 0$}
                \State $\res[j] \gets \CloudQuery(j)$ for every $x_j \in \Gamma(M_l)$ in the graph $G_r$
                \State $y \gets \bigvee_{x_j \in \Gamma(M_l)} \res[j]$ 
            \EndIf
            \State  Send $y$ to all machines
            \If{at least $k/2$ machines sent $1$} 
            \State $y \gets 1$
            \EndIf
        \EndFor
        \State \Return $y$
\end{algorithmic}
\caption{Deterministic Algorithm $\LSEDisjunctTwo$,
$\Det(\byzfrac<1/2)$, Code for Machine $M_l$}
\label{alg:or-small-beta-2}
\end{algorithm}

\begin{theorem}
\label{thm:OR Det 1/2}
In the $\Det(\byzfrac<1/2)$ model, Algorithm $\LSEDisjunctTwo$ solves $\ORp$ with $\Query = O\left(\frac{n \log n}{k \log(2/(2 \byzfrac + 1))} \right.$
$\left. + \frac{1}{ \log(2/(2 \byzfrac + 1))}\cdot\InverseDensity + \log^2 n\right)$, $\Time = O(\log n)$ and $\Message = O(\byzfrac k^2 \log n)$.
\end{theorem}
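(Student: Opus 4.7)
The plan is to prove correctness and the three complexity bounds separately, leveraging the fact that $\byzfrac<1/2$ allows Algorithm $\LSEDisjunctTwo$ to replace the per-phase cloud re-verification of Algorithm $\LSEDisjunctOne$ (the source of its additive $k$ term in Theorem \ref{thm: det 1 OR}) with a simple majority vote.

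For correctness I would handle two cases. If $\Density=0$, no honest machine ever sees a $1$ in its own queries, so the only machines that can transmit $y=1$ are Byzantine, of which there are at most $\byzfrac k<k/2$; hence the vote check never fires for any honest machine and every honest output is $0$. Conversely, if $\Density>0$, set $r^\star=\lceil\log\InverseDensity\rceil\le\lceil\log n\rceil$, so $\Density_{r^\star}=2^{-r^\star}\le\Density$. Applying Definition \ref{def:lse} to $G_{r^\star}$ with LSE-parameter $\byzfrac'=1/2+\byzfrac$, and using that the set $S$ of $1$-indices has $|S|=\Density n\ge\Density_{r^\star}n$, we obtain $|\Gamma(S)|>(1/2+\byzfrac)k$. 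Subtracting the at most $\byzfrac k$ Byzantine machines leaves at least $k/2$ honest machines that either already hold $y=1$ from a previous phase or acquire it by querying a $1$-indexed bit in phase $r^\star$. These already meet the vote threshold, so every honest machine has $y=1$ by the end of phase $r^\star$ and returns $1$.

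For the query complexity I would invoke Lemma \ref{lem:large-set-expander} with $\byzfrac$ replaced by $\byzfrac'$ and $\Density$ by $\Density_r=2^{-r}$. Writing $C=\log(1/\byzfrac')=\log(2/(2\byzfrac+1))$, this yields
\[
d_r \;=\; O\!\left(\frac{\log(e/\Density_r)}{C}+\frac{\byzfrac' k\log(e/\byzfrac')}{n\Density_r\, C}+\frac{k\log k}{n}\right),
\]
so in phase $r$ each honest machine queries at most $2nd_r/k$ bits, and once its $y$ becomes $1$ it queries no further. Summing $2nd_r/k$ over $r=1,\ldots,r^\star$ yields the three advertised summands: the first factor contributes the leading $(n\log n)/(kC)$ piece (the log factor collecting the at most $\lceil\log n\rceil$ phases), the geometric series $\sum_r 2^r=O(\InverseDensity)$ produces the $\InverseDensity/C$ term, and the residual $k\log k/n$ piece summed across $O(\log n)$ phases contributes the $O(\log^2 n)$ term. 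The round bound $\Time=O(\log n)$ is immediate from the loop length, and since in each phase every machine sends one $O(\log n)$-bit message to every other machine, the total message count is $O(k^2\log n)$.

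The main subtlety, in my view, is the interplay between the LSE expansion threshold $\byzfrac'=1/2+\byzfrac$ and the voting threshold $k/2$. Because $|\Gamma(S)|$ strictly exceeds $\byzfrac' k$, honest neighbors alone push the vote over $k/2$, so a Byzantine coalition can neither suppress a genuine $y=1$ (they are too few to cover all of $\Gamma(S)$) nor forge a false $y=1$ (they are below the $k/2$ threshold). This is precisely the gap that eliminates the need for cloud re-verification and brings $\Query$ down to its near-optimal form, matching Theorem \ref{thm:lb-det-OR} up to polylogarithmic factors.
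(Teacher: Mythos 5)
Your proof is correct and follows essentially the same route as the paper's: the $(n,k,1/2+\byzfrac,\Density_r)$-LSE guarantees that at phase $\lceil\log\InverseDensity\rceil$ at least $(1/2+\byzfrac)k-\byzfrac k=k/2$ \emph{honest} machines hold $y=1$, which trips the majority vote, while the fewer than $k/2$ Byzantine machines can never forge a false $1$; the query accounting by summing the LSE degrees over the $O(\log n)$ phases is also exactly the paper's. The only presentational difference is that the paper formalizes your claim that no honest machine can be the first to falsely flip to $1$ via a minimal-counterexample argument (ordering honest machines by the phase $f(i)$ at which they flip), which is the one inductive step your $\Density=0$ case leaves implicit.
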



\begin{proof}
    We first prove correctness, i.e., the output of all honest machines is equal to the $\ORp$ of the bits in the cloud. We will then argue about the various complexity measures.

    We prove correctness by contradiction. It is easy to see that the value $y$ initially is set to $0$ and either remains $0$ or changes to $1$ at some point in the algorithm. Let $f(i)$ be the phase $r$ in which Machine $M_i$ changed $y$ from $0$ to $1$. If $y$ is $0$ at the end of the algorithm, i.e., after $u$ phases, then define $f(i) = u + 1$.

Let $M_i$ be any honest machine such that (i) the final output $y$ of $M_i$ does not equal the bitwise OR in the cloud and (ii) among all honest machines that satisfy (i), $M_i$ has the least $f(i)$.

\textbf{Case 1}. $f(i) \leq u$. In this case, the actual output is $0$ and some honest machine $i$ outputs $1$ by either Line 6 or Line 8 of the algorithm. It cannot occur in Line 6, as otherwise the cloud has a set bit. It cannot occur in Line 8, since it contradicts the choice of $M_i$ as out of the $k/2$ machines from which $i$ received $1$ in Line 7, at least one of them is honest (say $j$). By definition it must be the case that $f(j) < f(i)$.

\textbf{Case 2}. $f(i) = u + 1$. In this case, the actual output is $1$ and $M_i$ outputs $0$. This implies $\Density \geq 1/n$. Consider phase $r = \lceil \log \InverseDensity \rceil$. By the property of the Large Set Expanders, at least $(1/2 + \byzfrac) k$ machines 
are instructed to query a set bit in Line 5 of phase $r$. Of these machines at least $k/2$ must be honest, i.e., at least $k/2$ honest machines set $y \gets 1$ in Line 6. In Line 8 then, all honest machines must set their values to $1$, which is a contradiction.

The round complexity is straightforward from the pseudo-code, there are $\log n$ phases and $3$ rounds per phase. The message complexity is maximum possible.

For the query complexity, which occurs only in Line 5 of the algorithm, observe that in round $r$, the query complexity is $O(\frac{n}{k \log{2/(2\byzfrac+1)}} + \frac{1/2 + \byzfrac}{2^{-r} \log{2/(2\byzfrac+1)}} + \log k)$ (See Lemma \ref{lem:large-set-expander}). 

Observe that after $\lceil \log \InverseDensity \rceil$ phases, all the honest machines should have either found a one (by the property of the expander), or $\Density = 0 \Rightarrow \lceil \log \InverseDensity \rceil = \lceil \log n \rceil$, which is the end of the algorithm. Summing over $r$ from $0$ to $\lceil \log \InverseDensity \rceil$, we get the desired result. For clarity, we replace certain log factors (such as $\log k$, $\log \InverseDensity$) with the more dominant $\log n$. 
\end{proof}

\subsubsection{Explicit $\ORp$ in $\Det(\byzfrac < 1)$}



Before presenting the algorithm for Explicit $\ORp$, we introduce \emph{Guaranteed Large Set Expanders} which will be used in our solution.
These are bipartite graphs that are like expanders providing us with slightly different properties that are useful in our setting.

\begin{defn}[Guaranteed Large Set Expander (GLSE)]
A bipartite graph $G(L,R)$ is an $(n, k, \byzfrac, \Density, s)$-guaranteed large set expander if $|L| = n$, $|R| = k$, $d=$ $\max_{v \in R} \deg(v) = \tilde{O}(ns/k)$ and for every $S \subseteq L$ with $|S| \geq \Density n$ and every $T \subseteq R$ with $|T| \leq \byzfrac k$, there exists at least one vertex $u \in S$ such that $|\Gamma(u) \setminus T| \geq s$.
\end{defn}

\inline GLSE vs LSE: In Large Set Expanders (Definition~\ref{def:lse}), the neighborhood of every large enough subset $S \subseteq L$, i.e., $|S| \geq \Density n$, cannot be covered fully by any subset of $\byzfrac k$ vertices, i.e., $|\Gamma(S)| > \byzfrac k$. The definition of GLSE imposes the stronger requirement that there exists no subset $T \subseteq R$ of size $\byzfrac k$, which when removed, decreases the degree of \emph{every} vertex in $S$ below~$s$.

In the following lemma, we show that we can construct such Guaranteed Large Set Expanders. (The proof is deferred to the appendix.)

\begin{lemma}\label{lem:glse}
There exists a $(n, k, \byzfrac, \Density, s)$-Guaranteed Large Set Expander (GLSE) for every integer $n, k$, $\byzfrac, \Density \in [0, 1]$ and $s < \min(k, n/2 - \frac{8}{\goodfrac}\left(1 + \frac{k \ln k}{n\Density}\right) )$. In particular, for constant $\byzfrac$ and $\Density$, taking $s = o(n)$ is sufficient to realize the required GLSE. 
\end{lemma}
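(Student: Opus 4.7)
The plan is to proceed by the probabilistic method, closely paralleling the argument used for Lemma~\ref{lem:large-set-expander}. Construct a random bipartite graph $G(L, R)$ by letting each $u \in L$ independently and uniformly sample $d$ distinct neighbors in $R$, where $d$ will be taken as $\Theta(s/\goodfrac)$ with a small logarithmic slack. The two bad events to rule out are $E_1$: some $v \in R$ has degree exceeding the target $\tilde{O}(ns/k)$; and $E_2$: the GLSE property fails, i.e., there exist $S \subseteq L$ with $|S| \ge \Density n$ and $T \subseteq R$ with $|T| \le \byzfrac k$ such that \emph{every} $u \in S$ has $|\Gamma(u) \setminus T| < s$. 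Showing $\Prob[E_1 \cup E_2] < 1$ then yields the desired witness.

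For $E_1$, each $v \in R$ has degree binomial-like with mean $nd/k$; a Chernoff bound together with a union bound over the $k$ vertices of $R$ gives $\Prob[E_1] < 1/2$ provided $d \gtrsim (k \ln k)/n$, which the chosen $d$ will satisfy. The more delicate estimate is $E_2$. Fix $(S, T)$ with $|S| = \Density n$ and $|T| = \byzfrac k$. The neighborhoods of distinct $u \in S$ are independent by construction, and for each $u$, $|\Gamma(u) \setminus T|$ is hypergeometric with mean $\goodfrac d$. A Chernoff/Hoeffding tail bound gives
\[
\Prob\bigl[|\Gamma(u) \setminus T| < s\bigr] \;\le\; \exp\!\left(-\frac{(\goodfrac d - s)^2}{2 \goodfrac d}\right),
\]
and raising this to the $\Density n$-th power (by independence) yields the failure probability for a specific $(S, T)$.

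A union bound over the $\binom{n}{\Density n}\binom{k}{\byzfrac k} \le (e/\Density)^{\Density n} \cdot 2^k$ pairs $(S, T)$ reduces the lemma to a single inequality of the form
\[
\Density n \cdot \frac{(\goodfrac d - s)^2}{2 \goodfrac d} \;>\; \Density n \ln(e/\Density) + k \ln 2 + O(1).
\]
Setting $d = \Theta(s/\goodfrac)$ with the appropriate constant and solving for $s$, together with the trivial constraint $s \le k$ inherited from $|\Gamma(u) \setminus T| \le |R|$, should produce precisely the stated threshold $s < \min\bigl(k,\, n/2 - (8/\goodfrac)(1 + k \ln k/(n \Density))\bigr)$. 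The degree bound $\tilde{O}(ns/k)$ on the $R$-side then follows from plugging $d = \Theta(s/\goodfrac)$ into the Chernoff mean $nd/k$. The main obstacle I expect is bookkeeping rather than conceptual: isolating the $n/2$ term and the $k \ln k/(n \Density)$ correction with the specific constant $8/\goodfrac$ requires a careful quadratic manipulation of the Chernoff exponent combined with the union-bound contribution, which is tedious but essentially mechanical once the scheme above is set up.
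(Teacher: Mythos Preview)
Your proposal is correct and follows essentially the same route as the paper: a probabilistic construction, a per-vertex Chernoff tail on $|\Gamma(u)\setminus T|$, independence across $u\in S$, and a union bound over all $(S,T)$ pairs, followed by the right-degree bound for $R$. The only notable differences are cosmetic: the paper samples the $d$ neighbors of each $u\in L$ \emph{with} replacement (so the count of neighbors landing in $T$ is binomial rather than hypergeometric), and it organizes the computation by first isolating a threshold $d_{th}=\Theta\bigl(\goodfrac^{-2}(\ln n + k\ln k/(n\Density))\bigr)$ that makes the union bound go through with $s=\goodfrac d/2$, and only then sets $d=d_{th}+2s/\goodfrac$; your scheme of fixing $d=\Theta(s/\goodfrac)$ upfront and solving the resulting inequality is equivalent bookkeeping.
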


\def\PROOFGLSE{
Consider the random experiment used to prove existence of ``good'' expanders. 
Start with an empty bipartite graph ($0$ edges) with vertices $L \cup R$ where $L = \{x_1, x_2, \dots x_n\}$ and $R = \{M_1, M_2, \dots M_k \}$. Choose an arbitrary integer $d$ and do the following.
\begin{itemize}
    \dnsitem For each $(i, t) \in [n] \times [d]$, sample an integer $j$ from $[k]$ u.a.r., and connect $x_i$ and $M_j$ by an edge.
\end{itemize}

For given $\byzfrac, \Density$, let $E$ be the event that at the end of the experiment, $\forall S \subset L$ with $|S| \geq \Density n$ and $T \subseteq R$ with $|T| \leq \byzfrac k$, we have there exists $u \in S$ such that $|\Gamma(u) \setminus T| \geq s$. 

We analyze only $\Prob(E)$ and find suitable $s$ so that $\Prob(E) > 1/2$. Recall that by the previous analysis (see Lemma \ref{lem:large-set-expander}), the degree is larger than $n d\log n / k$ with probability less than half. So if $\Prob(E) > 1/2$, then we must have $\Prob(\overline{E}) < 1/2$. By the union bound, the probability that we don't obtain the required expander is less than $1$, thus showing existence.

As usual, we first analyze the probability of failure for a fixed $S, T$ and apply the union bound over all choices of $S, T$.
We fix a set $S \subseteq L$ with $|S| = \Density n$ and a set $T \subseteq R$ with $|T| = k(1-\varepsilon)$, where $\byzfrac = 1-\varepsilon$.

On average, $d \varepsilon$ of the $d$ randomly sampled machines for every bit must fall outside $T$. We calculate the probability that less than half of them ($d \varepsilon/2$) fall outside $T$.
Let $p_{fail,S,i}$ be the probability that for bit $x_i$, at least $d \cdot (1 - \frac{\varepsilon}{2})$ 
out of the $d$ randomly sampled neighbors lie completely in $T$.

Observe that the count of number of neighbors of $x_i$ in $T$, denoted $X_i$, is the sum of $d$ i.i.d. Bernoulli r.v. each with mean $\byzfrac = 1-\varepsilon$. The sum, $\sum X_i$, has mean $\mu = d(1-\varepsilon)$.
We can thus bound the sum using the following Chernoff bound setting $\epsilon = \frac{\varepsilon}{2(1-\varepsilon)}$,
\begin{equation*}
p_{fail,S,T,i} ~=~ \text{Pr}\Big[X_i > d \left(1-\frac{\varepsilon}{2}\right) \Big] ~=~
\text{Pr}\Big[X_i > (1 + \epsilon) \mu \Big]
~\leq~ \exp{\left(-\frac{\epsilon^2 \mu}{2 + \epsilon}\right)} ~\leq~ \exp{\left(-\frac{d \varepsilon^2}{8 - 6\varepsilon}\right)}~.
\end{equation*}
For each of the $n\Density$ bits in $S$ 
the $d$ neighboring machines are sampled independently, so
$$p_{fail, S, T} ~\leq~ p_{fail,S,T,i}^{\Density n}  ~\leq~ \exp{\left(-\frac{\Density n d \varepsilon^2}{8 - 6\varepsilon}\right)}.$$
We now bound $p_{fail}$, the probability that $G$ generated does not satisfy property (1) with $s$ replaced by $d \varepsilon / 2$, as  
\begin{eqnarray*}
p_{fail} &\leq& \binom{n}{n\Density} \cdot \binom{k}{\byzfrac k} \cdot \exp{\left(-\frac{\Density n d \varepsilon^2}{8 - 6\varepsilon}\right)} \\
    &\leq& \exp {\left(n\Density \ln n + k(1-\varepsilon) \ln k - \Density n d \frac{\varepsilon^2}{8-6\varepsilon} \right)}.
\end{eqnarray*}

It is thus sufficient to choose $d > \frac{8 - 6\varepsilon}{\varepsilon^2} \Big[ \ln n + \frac{k \ln k}{\Density n} \Big]$ to bound $p_{fail}$ to below $1/e$.

Let the threshold given above be $d_{th}$. Observe that $d_{th} = \tilde{O}(\InverseDensity)$. Choosing $d = d_{th} + \frac{2s}{\varepsilon}$ is sufficient to get the desired result.
} 

We can now present the following theorem that utilizes Algorithm~\ref{alg:det:explicit_or} to obtain a deterministic algorithm for Explicit $\ORp$ for any $\byzfrac < 1$.

\begin{theorem}
\label{thm: det explicit OR + alg name}
In the $\Det(\byzfrac < 1)$ model, 
Algorithm $\GLSEExplicitDisjunct$ solves
Explicit $\ORp$ with $\Query = O(\frac{n \log n}{k} + \log k\cdot\InverseDensity + \sqrt{n})$, $\Time = O(n)$ and $\Message = O(n k^2)$
\end{theorem}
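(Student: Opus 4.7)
The plan is to adapt the density-doubling framework of Algorithm~\ref{alg:det:or} by replacing Large Set Expanders with Guaranteed Large Set Expanders. The key advantage of GLSEs is that, in the phase whose density guess is correct, at least $s$ honest machines each independently observe the same set bit $u$ and are prepared to certify it---precisely what the \emph{explicit} variant demands, since in the $\Det(\byzfrac<1)$ model a receiving machine cannot trust a broadcast index without a direct cloud verification.

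Concretely, we deterministically construct a family of $(n, k, \byzfrac, 2^{-r}, s)$-GLSEs $G_r$ for $r = 1, \dots, \lceil \log n \rceil$ using Lemma~\ref{lem:glse}, with $s = \Theta(\log n)$. In phase $r$, every machine $M_\ell$ cloud-queries all bits adjacent to it in $G_r$; if any such bit equals $1$, $M_\ell$ broadcasts the corresponding index to every other machine. Each honest machine $M$ receiving a claimed index $j$ from an unblacklisted sender verifies $j$ with a single cloud query: if $\INPUT[j] \ne 1$, $M$ blacklists the source; otherwise $M$ has secured an explicit witness and returns $1$ together with $j$. Correctness is immediate from the GLSE property: in the first phase $r^*$ with $2^{-r^*} \le \Density$, the set $S$ of ones has size $\ge \Density n$, so even after removing the up to $\byzfrac k$ Byzantine machines, some $u \in S$ retains $\ge s$ honest neighbors, each of whom will query $u$, observe $u = 1$, and broadcast.

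For query complexity, Lemma~\ref{lem:glse} gives a per-machine GLSE degree of $\tilde{O}(n s / k)$; with $s = \Theta(\log n)$ this is $\tilde{O}(n/k)$ per phase, and telescoping across phases $r = 1, \dots, r^*$ (using the fact that denser GLSEs need fewer queries to hit a witness) yields the $\tilde{O}(n \log n / k + \log k \cdot \InverseDensity)$ contribution. The round complexity $O(n)$ and message complexity $O(n k^2)$ follow by processing broadcasts and verifications sequentially across the $O(n)$ candidate indices, with each of the $k$ machines sending at most one $O(\log n)$-bit index per round to each of the $k-1$ others.

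The technical crux is bounding the verification overhead by $\sqrt{n}$, rather than the naive $\Omega(\byzfrac k)$ that arises if every Byzantine is simply allowed to lie once to every honest verifier. To achieve the sharper bound, the algorithm must throttle verifications---e.g., each honest machine verifies at most $\tilde{O}(\sqrt{n}/k)$ suspect claims per round---and one must argue that either the true index $u$, backed by $\Theta(\log n)$ independent honest supporters delivered by the GLSE, surfaces inside this budget, or else the Byzantines exhaust their aggregate blacklisting capacity of $\byzfrac k$ after at most $\tilde{O}(\sqrt{n})$ wasted verifications per honest machine. Making this amortization rigorous, and integrating it cleanly with the density-doubling structure, is the main obstacle of the proof.
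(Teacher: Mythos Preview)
Your proposal diverges from the paper's proof in two essential ways, and in each case the divergence is a genuine gap rather than an alternative route.

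First, the verification mechanism. You have each honest machine verify every claimed index from an unblacklisted sender, and then try to rescue the budget via an unspecified ``throttling'' to $\tilde{O}(\sqrt{n}/k)$ verifications per round. The paper's idea is much cleaner and is the actual source of the $\sqrt{n}$ term: an honest machine only verifies an index $j$ if at least $s$ non-blacklisted machines claimed it. A failed verification then blacklists $s$ machines at once, so the total number of verification queries is at most $\byzfrac k/s$. Your one-by-one verification with throttling does not give this amortization, and you yourself flag it as unresolved.

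Second, the choice of $s$. You set $s=\Theta(\log n)$, but the paper balances the two query costs---the GLSE right-degree $\tilde{O}(ns/k)$ against the verification cost $\byzfrac k/s$---by taking $s=\Theta(k\sqrt{\byzfrac\goodfrac/n})$. This is exactly what makes both terms $O(\sqrt{n})$. With $s=\Theta(\log n)$ the GLSE degree is fine but the verification cost would be $\Theta(k/\log n)$, which can far exceed $\sqrt{n}$.

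Finally, the paper's algorithm does \emph{not} use density doubling: it is a single-phase algorithm for a given $\Density$, and the paper explicitly explains (in the paragraph following the proof) why wrapping this construction in a density-doubling loop is problematic---honest machines that already know the answer cannot safely sit out later phases without destroying the GLSE redundancy guarantee. So the multi-phase framework you build is both unnecessary for the stated theorem and nontrivial to make correct.
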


\begin{proof}
The intuition behind our solution is the fact that if we have a GLSE with parameters $(n, k, \byzfrac, \Density, s)$ and if we assign machine $M_i$ to query the indices corresponding to $\Gamma(M_i)$ of the bipartite graph, then we have the following: for every possible input and every possible behavior of the adversary, there exists at least one 1 that is mapped to at least $s$ honest machines.

In Algorithm~\ref{alg:det:explicit_or}, every machine initially computes (using the same deterministic algorithm) the bipartite expander with parameters $\byzfrac, \Density$. Then machine $M_i$ queries the set of bits in $\Gamma(M_i)$ to the cloud and sends all set bits to all other machines.

By the property of GLSE, at least $s$ machines must have queried the same index containing 1. Each machine then locally chooses one such index, queries this index to verify that the index is $1$. If it has found the index, then it simply returns, otherwise it marks all machines that sent this index as Byzantine and repeats with the remaining indices and machines until a $1$ is found. By the guarantee of GLSE, at least one such index will be found eventually. See Algorithm \ref{alg:det:explicit_or} for the pseudocode. We will now analyze this algorithm and choose $s$ so that query complexity is minimized.

\begin{algorithm}
    \begin{algorithmic}[1]
            \State Construct an $(n,k,\byzfrac,\Density,s)$-GLSE (See \ref{lem:glse}) with parameter $s = k\sqrt{\frac{\byzfrac\goodfrac}{n}}$
            \Statex \Comment{Phase 1: Cloud Queries}
            \State $\res[j] \gets \CloudQuery(j)$ for every $j \in \Gamma(M_l)$
            \State Send $j$ to every machine, for every $j$ with $\res[j] = 1$ 

            \Statex \Comment{Phase 2: Verification}
            \State $\BYZ \gets \emptyset$
            \While{$\exists \ j \in [1, n] $ such that at least $s$ machines not from $\BYZ$ sent $j$}
                \State $\res[j] \gets \CloudQuery(j)$
                \State \textbf{if} $\res[j] = 1$ \textbf{ then } \Return $j$
                \State $\BYZ \gets \BYZ \cup \{t \mid \text{ Machine } M_t \text{ sent } j\}$
            \EndWhile
            \State \Return 0
    \end{algorithmic}
    \caption{Deterministic Alg. $\GLSEExplicitDisjunct$, $\Det(\byzfrac < 1)$, Code for Machine $M_l$}
    \label{alg:det:explicit_or}
\end{algorithm}

The number of queries in Phase $2$ of Algorithm \ref{alg:det:explicit_or} is at most $\frac{\byzfrac k}{s}$ since 
there are at most $\byzfrac k$ Byzantine machines and for every query at least $s$ new Byzantine machines are found.

The query complexity is $\tilde{O}(\frac{n d_{th}}{k} + \frac{ns}{k \goodfrac} + \frac{\byzfrac k}{s})$. 
Optimal value for $s$ is $s = \frac{k\sqrt{\byzfrac \goodfrac}}{\sqrt n}$ and we have that the 
query complexity $\Query$ is as follows,

\begin{equation*}
    \Query = O\left(\frac{n}{k} \cdot \frac{\log n}{\goodfrac^2} + \InverseDensity \cdot \frac{\log k}{\goodfrac^2} + \sqrt{\frac{n}{\goodfrac}}  \right)
\end{equation*}

Note that for the round and message complexity guarantees of the above theorem is bottlenecked by Phase 1, where potentially every machine needs to send $O(n)$ indices to every other machine. This incurs round complexity of $O(n)$ due to congestion and the message complexity follows.

\inline When $\ORp$ is $0$: Suppose all the bits in the cloud are $0$, then in Algorithm \ref{alg:det:explicit_or} all honest machines break out of the while loop, since all candidates will fail verification in Line 8. They will all then conclude that the number of 1's is fewer than $n\Density$. Under the definition of the Explicit $\ORp$ problem, this promises that all  bits  in $\INPUT$ are indeed zero and the output is correct. The complexity measures are the same by similar arguments. 
If Algorithm $\ref{alg:det:explicit_or}$ is run on an input $\INPUT$ with some 1's but fewer than $n\Density$ 1's, then the Byzantine machines can force some honest machine to learn a 1 index and other honest machines to reach Line 10, deducing (correctly) that $\Density$ is not large enough.
\end{proof}

We remark here on the difficulties in adapting Algorithm \ref{alg:det:explicit_or} for $\ORp$. 
The key idea behind Algorithm \ref{alg:det:explicit_or} is that one can tradeoff between the query complexity by having an (suitably chosen) expander with larger degree and with the property that at least one index is queried by $s$ honest machines. This redundancy enables verification to be done more efficiently, saving upto a factor of $\sqrt{n}$ in the query complexity (when $k = n$). In order to ensure redundancy, every machine must crucially make all queries assigned to it. If suppose we were to (as in the previous deterministic algorithm) repeat the two phase procedures by trying out different $\Density$ in decreasing powers of two, the honest machines will not be able to find a suitable position to terminate. The Byzantine machines can enforce certain honest machines to learn of the answer early. In contrast Byzantine machines can also simply pretend that they do not know of the answer. Moreover the honest machines that know of the answer cannot participate in further iterations of the search in an idle manner (such as in Algorithm \ref{alg:or-small-beta-2}),i.e., by not making their assigned queries, since the redundancy property then won't be satisfied. These difficulties prohibit us from getting a similar (or better) query efficient algorithm as Algorithm \ref{alg:det:explicit_or} when $\byzfrac \in [1/2, 1)$ for $\ORp$ without knowledge of $\Density$.
 
\subsection{The $\XORp$ Problem in the Deterministic Model}
The algorithm, based on applying naive (deterministic) $\IDp$ (See Thm.~\ref{thm:naiv-ID}) followed by locally computing the parity of the input at each machine separately, requires 
query complexity $\Query=O(n)$,
with $\Time=\Message=0$.
The following theorem, which can be shown by a proof similar to that of Thm.~\ref{thm:LB-det-InputDistr}, establishes an asymptotically matching lower bound.

\begin{theorem}
\label{thm:LB-det-XOR}
In the $\Det(\byzfrac<1)$  model, any algorithm for $\XORp$ requires $\Query= \Omega(\byzfrac n)$.
\end{theorem}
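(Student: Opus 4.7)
The plan is to mirror the proof of Thm.~\ref{thm:LB-det-InputDistr} almost verbatim, exploiting the fact that flipping a single input bit flips the parity of $\INPUT$, so an honest machine must still be sensitive to every single bit. Let $\Algor$ be any deterministic algorithm for $\XORp$, and for any input $\INPUT$ let $\execution(\INPUT)$ be the unique failure-free execution of $\Algor$ on $\INPUT$. I would prove the following analog of Lemma~\ref{lem:lb:4}: for every $\INPUT$ and every index $i \in [1,n]$, the bit $\INPUT[i]$ is queried by at least $\byzfrac k + 1$ machines in $\execution(\INPUT)$. Summing over the $n$ indices immediately gives a total query count of more than $\byzfrac k n$, so by averaging some honest machine must make $\Omega(\byzfrac n)$ queries, yielding the theorem.

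For the indistinguishability lemma, suppose towards contradiction that some bit $x_i$ is queried by a set $\hM$ with $|\hM| \leq \byzfrac k$ in $\execution(\INPUT)$. Define $\INPUT' = \{x'_1,\ldots,x'_n\}$ to agree with $\INPUT$ on every coordinate except the $i^{\text{th}}$, where $x'_i = 1 - x_i$; note that $\XORp(\INPUT') = 1 - \XORp(\INPUT)$. The adversary first simulates $\Algor$ on $\INPUT$ to identify $\hM$, then considers the execution $\execution'$ on input $\INPUT'$ in which the machines of $\hM$ are Byzantine and follow $\Algor$ except that they respond and internally behave as if the value returned by any query to $x_i$ were $x_i$ rather than $x'_i$. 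All other machines are honest.

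A straightforward induction on the round number shows that the views of the honest machines (i.e., the messages they receive from other machines, and the answers they receive to their own cloud queries, since by assumption none of them queries $x_i$) are identical in $\execution(\INPUT)$ and $\execution'(\INPUT')$. Hence every honest machine produces the same output bit in both executions. But correctness of $\Algor$ for $\XORp$ demands that the output in $\execution(\INPUT)$ be $\XORp(\INPUT)$ and in $\execution'(\INPUT')$ be $\XORp(\INPUT') = 1 - \XORp(\INPUT)$, a contradiction. This establishes the claim, and the theorem follows as indicated.

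The argument is essentially routine once one notes the key observation that for $\XORp$ a single-bit flip is as disruptive to the required output as it is for $\IDp$; there is no genuine obstacle. The only subtlety to double-check is that the Byzantine machines in $\hM$ can indeed consistently simulate the $\INPUT$-execution while running on $\INPUT'$, which is immediate because they are the only machines that ever query $x_i$ and can locally override the returned answer before using it in any subsequent computation or message.
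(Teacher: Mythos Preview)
Your proposal is correct and is exactly the adaptation of the proof of Thm.~\ref{thm:LB-det-InputDistr} that the paper indicates (the paper gives no separate proof, merely stating that Thm.~\ref{thm:LB-det-XOR} ``can be shown by a proof similar to that of Thm.~\ref{thm:LB-det-InputDistr}''). Your identification of the key point---that flipping a single bit flips the parity, so the single-index indistinguishability argument of Lemma~\ref{lem:lb:4} goes through unchanged---is precisely what is needed.
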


\section{\CCA\ Algorithms in the Harsh Adversarial Model}
Our algorithms in this model will use private representative committees. 
Recall from Section~\ref{ssec:results} that a $\rho$-representative committee contains at least $\rho$ honest machines and a private committee is one whose member identities are not known to all.
We show below how to construct a private $\rho$-representative committee.

\subsection{Private representative committees}

To construct a private $\rho$-representative committee (i.e., one of quality at least $\rho$ (w.h.p), we have each machine add itself to the committee with probability $p$. For private committees, we only provide guarantees on the number of honest machines that are in the committee (i.e., the quality), since Byzantine machines can claim to be in the committee and there is no obvious mechanism for verification. We assume the $\Harsh(\byzfrac<1)$ model, where the upper bound on the number of Byzantine machines is $\byzfrac$.

Choosing an appropriate value of $p$, we can obtain high probability guarantees on the number of (honest) machines in a committee using standard Chernoff tail bounds. The analysis is as follows (with the proof deferred to the appendix). 

\begin{lemma}
\label{lem:coins}
Consider $k$ i.i.d Bernoulli random variables with bias 
    $p = \min(1, \frac{9 \ln n + 4\rho}{\goodfrac k})$,  $\byzfrac \in [0, 1)$, $n > 1$ and $\rho \le \goodfrac k$, we have with probability at least $1- 2n^{-3}$,
    \begin{itemize}
        \item for any subset of $\goodfrac k$ variables, at least $\rho$ of them are $1$.
        \item At most $(18 \ln n + 8\rho) / \goodfrac$ variables are $1$.
    \end{itemize}
\end{lemma}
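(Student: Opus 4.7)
The two assertions are tail bounds on sums of iid Bernoulli$(p)$ variables, so the strategy is to apply a standard multiplicative Chernoff bound to each bullet separately and conclude by a union bound. The constants $9$ and $4$ in the definition of $p$ are calibrated precisely so that the two failure probabilities are each at most $n^{-3}$.

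First, if the $\min$ in the definition of $p$ evaluates to $1$, then every variable equals $1$ deterministically: any subset of size $\goodfrac k$ contains $\goodfrac k \ge \rho$ ones by the hypothesis $\rho \le \goodfrac k$, and the total count is $k \le (9 \ln n + 4\rho)/\goodfrac \le (18 \ln n + 8\rho)/\goodfrac$. So I assume henceforth that $p = (9 \ln n + 4\rho)/(\goodfrac k)$.

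For the first bullet, I fix any particular subset $S \subseteq [k]$ of size $\goodfrac k$ (for instance, the set of honest machines, which the adversary selects). Since the variables are iid, the sum $X_S := \sum_{i \in S} X_i$ has the same $\mathrm{Bin}(\goodfrac k, p)$ distribution regardless of which $S$ is chosen, so a single-subset bound automatically applies uniformly over the adversary's choice. Its expectation is $\mu := \goodfrac k \cdot p = 9 \ln n + 4\rho$, so $\rho \le \mu/4$; applying the multiplicative Chernoff lower tail with $\delta = 1 - \rho/\mu \ge 3/4$ then yields $\Pr[X_S < \rho] \le n^{-3}$, using $\mu \ge 9 \ln n$. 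If the loose form $\exp(-\delta^2 \mu/2)$ falls marginally short of the constant $3$ in the exponent, I would invoke the sharper Kullback expression $(e^{-\delta}/(1-\delta)^{1-\delta})^\mu$, which at $\delta = 3/4$ gives a decay of roughly $n^{-3.6}$, comfortably below $n^{-3}$.

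For the second bullet, let $T = \sum_{i=1}^k X_i$, so $\mu_T := kp = (9 \ln n + 4\rho)/\goodfrac \ge 9 \ln n$. The multiplicative Chernoff upper tail with relative deviation $1$ gives $\Pr[T \ge 2 \mu_T] \le e^{-\mu_T/3} \le e^{-3 \ln n} = n^{-3}$, and $2 \mu_T = (18 \ln n + 8\rho)/\goodfrac$ is exactly the threshold in the statement. A union bound over the two failure events yields overall failure probability at most $2 n^{-3}$. The only real obstacle is keeping the Chernoff constants in lockstep with those appearing in the lemma statement, which the choice of $p$ was engineered to accommodate.
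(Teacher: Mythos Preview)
Your proof is correct and follows essentially the same approach as the paper: apply the multiplicative Chernoff lower tail to a fixed $\goodfrac k$-subset, the upper tail with $\epsilon=1$ to the full sum, and union-bound. The only cosmetic difference is in the first bullet: the paper keeps $\epsilon = 1-\rho/\mu$ symbolic and verifies $\epsilon^2\mu/2 \ge 3\ln n$ directly (which already holds for $\mu = 9\ln n + 4\rho$, so the loose form $e^{-\epsilon^2\mu/2}$ suffices), whereas you coarsen to $\delta \ge 3/4$ first and then reach for the sharper Kullback form to recover the exponent---correct but an avoidable detour.
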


\def\PROOFB{
    We argue that each property is satisfied independently and take the union bound.
    Consider any subset $S$ of $\goodfrac k$ machines and let $X_i$ be the r.v. corresponding to the value that machine $i^{th}$ machine from $S$ sampled a $1$. We have $\Prob(X_i = 1) = p$. Using Chernoff bound, we bound the total number of ones sampled by $S$ as follows:
    \begin{equation*}
        \Prob\left(\sum X_i \leq (1 - \epsilon) \mu \right) \leq \exp{\left(\frac{-\epsilon^2 \mu}{2}\right)}
    \end{equation*}
    We have $\mu = \goodfrac k p$ and we want to choose $\epsilon$ so that $(1-\epsilon)\mu = \rho$ and $\epsilon^2 \mu/2 = \tau \ln n$.
Here $\tau$ is a suitably chosen constant so that the probability of failure is at most $n^{-\tau}$. We use $\tau = 3$.
From the second equation we have $\epsilon = \sqrt{6 \ln n/\mu}$ and plugging into the first we get
$$ \mu - \sqrt{\mu} \sqrt{6 \ln n} - \rho = 0,$$
implying that
$$ \sqrt{\mu} = \frac{\sqrt{6 \ln n} + \sqrt{6 \ln n + 4\rho}}{2}~.$$

    It is thus sufficient to choose $p$ such that $p \geq \frac{6 \ln n + 4\rho}{\goodfrac k}$ so that property (1) is satisfied.

    To argue for property (2), we use Chernoff bounds again, but this time with an upper tail bound set with $\epsilon = 1$. We get
    \begin{equation*}
        \Prob\left(\sum X_i \geq 2 \mu \right) \leq \exp{\left(-\frac{\mu}{3} \right)}
    \end{equation*}

    It is thus sufficient to choose $\mu > 9 \ln n$ to obtain the desired probability of failure. Hence setting
    $p$ as in the lemma yields
    both properties.
} 

Thus to form a private $\rho$-representative committee, each machine should sample itself with probability $p = \frac{9 \ln n + 4\rho}{\goodfrac k}$ and the number of honest machines in the committee is $\Theta(\frac{\rho + \ln n}{\goodfrac})$  with probability at least $1 - 2n^{-3}$. We summarize the procedure and its properties below.

\begin{algorithm}
\begin{algorithmic}[1]
\State Every machine tosses a biased coin with probability of heads $p=\min(\frac{6\ln n+4\rho}{\goodfrac k},1)$
\State \Return $\cC=$ set of machines that tossed heads.
\end{algorithmic}
\caption{Procedure $\ElectPrivate$, $\Harsh(\byzfrac<1)$ , Code for machine $\Machine$}
\label{alg:private:comm:election}
\end{algorithm}

\begin{theorem}
\label{thm:private:comm}
In the $\Harsh(\byzfrac<1)$ model, 
for any positive integer $\rho \le \goodfrac k$, 
Procedure $\ElectPrivate$ w.h.p.
results in the election of a $\rho$-representative private committee with $\Query,\Time,\Message = 0$.
\end{theorem}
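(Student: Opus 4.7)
The plan is to reduce the theorem essentially directly to Lemma~\ref{lem:coins}. Procedure $\ElectPrivate$ consists entirely of local coin tosses with bias $p=\min(\frac{6\ln n+4\rho}{\goodfrac k},1)$, and a machine declaring itself a committee member iff its toss comes up heads. No machine issues a query to the cloud, sends a message, or requires a round of communication to compute its own membership, so the bounds $\Query=\Time=\Message=0$ are immediate from inspection of the pseudocode.

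For the representativeness claim, I would invoke Lemma~\ref{lem:coins} applied to the subset of honest machines. Since $\Advers$ may corrupt up to $\byzfrac k$ machines, there are at least $\goodfrac k$ honest machines in any execution. Let $S$ be this set of honest machines. The $k$ Bernoulli variables of Lemma~\ref{lem:coins} are precisely the independent coin tosses performed in Line~1, and $S$ is a subset of at least $\goodfrac k$ of them. The first conclusion of the lemma then guarantees that, with probability at least $1-2n^{-3}$, at least $\rho$ of the honest machines toss heads and therefore join the committee, which is exactly the $\rho$-representative property.

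One mild technicality is the apparent gap between the constant $6$ in the procedure and the constant $9$ appearing in the statement of Lemma~\ref{lem:coins}; I would either reconcile this by taking the slightly larger probability in the algorithm (rendering the constants consistent), or by observing that Lemma~\ref{lem:coins}'s proof only requires $p \ge (6\ln n + 4\rho)/(\goodfrac k)$ to obtain the first conclusion with probability $1-n^{-3}$ (the larger constant is only needed for the optional upper bound on committee size). Either way, the representativeness bound holds with probability at least $1-O(n^{-3})$, which is w.h.p.

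The only substantive step is verifying the Chernoff calculation in Lemma~\ref{lem:coins}, but that is already done; the remainder of the argument is essentially bookkeeping. Thus the main "obstacle" is really nothing more than pointing out that the adversary is free to select which $\goodfrac k$ machines are honest but cannot alter the outcome of their independent coin tosses (since these are private and occur before the adversary has any chance to act within this procedure), so the subset-universal quantifier in Lemma~\ref{lem:coins} is exactly what is needed.
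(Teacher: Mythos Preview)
Your proposal is correct and mirrors the paper's own (implicit) argument: the theorem is stated without a separate proof, being treated as an immediate corollary of Lemma~\ref{lem:coins} together with the observation that the procedure involves only local coin tosses. Your handling of the $6$ vs.\ $9$ constant is also accurate—the proof of Lemma~\ref{lem:coins} in the appendix shows that $p\ge(6\ln n+4\rho)/(\goodfrac k)$ suffices for the first (representativeness) conclusion, with the larger constant needed only for the upper bound on committee size.
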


\subsection{The $\IDp$ Problem in the Harsh Model}
As a starting point, we observe that the $\IDp$ problem has two extreme solution points. 
The first is the naive deterministic algorithm mentioned in Thm. \ref{thm:naiv-ID}, 
which is tolerant to an arbitrary number of Byzantine failures. This solution does not require any communication, but it costs $\Query=\Omega(n)$ queries per machine.
The second extreme point is when there are no failures. In this case, one can distribute the responsibility for querying the data equally among the $k$ machines and obtain the rest by communication among them, thus achieving $\Query=O(n / k)$ queries per machine.

Our algorithms attempt to bridge the above two solutions, i.e., reduce the per machine (worst case) query complexity while tolerating arbitrary number of Byzantine failures in the $\Harsh$ model. 

\subsubsection{Randomized algorithm in $\Harsh(\byzfrac<1)$}
\label{sec:idist:sol1}

The problem posed by the $\Harsh$ model is that the adversary can choose the set of failed machines online, based on the progress of the algorithm. This implies that if the algorithm appoints some random machine $M$ to query a bit $x_i$ on some round $t$ of the execution, but communicate the bit to other machines at a \emph{later} round $t'$, then we cannot rely on the hope that the randomly selected $M$ will be honest, say, with probability $1-\byzfrac=\varepsilon$, since the adversary gets an opportunity to learn the identity of the chosen $M$ on round $t$ and subsequently corrupt it before round $t'$. Hence in order for us to benefit from the fact that some  machine $M$ is randomly chosen for some sub-task on round $t$, it is imperative that $M$ completes that sub-task \emph{on the same round}.

The idea used to overcome this difficulty is as follows. Sequentially in $n$ rounds $i = 1, 2, \dots n$, we 
query bit $x_i$ from the cloud to a private $\rho$-representative committee $\cC_i$, i.e., $x_i$ is queried by each (honest) machine in $\cC_i$. 
Then (still on the same round), each machine in $\cC_i$ sends the value of $x_i$ to every other machine. Machines not in $\cC_i$ might receive incorrect values from the Byzantine machines in $\cC_i$. However, as long as no more than $\rho$ incorrect values are received, each machine can be confident of the majority as the right answer (whp). In case more than $\rho$ machines sent an incorrect value, or more precisely, in case an honest machine receives more than $\rho$ zeros and more than $\rho$ ones, then machines resort to querying the cloud for the answer, forcing at least $\rho$ Byzantine machines to reveal themselves as being Byzantine. Choosing $\rho$ optimally results in a query complexity of $O(\frac{n \log n}{\goodfrac k} + \sqrt{n})$. See Algorithm \ref{alg:idist} for the pseudocode.

\begin{algorithm}
\caption{Algorithm $\DownloadBlacklist$, $\Harsh(\byzfrac<1)$ model, Code for machine $M$.}
\label{alg:idist}
\begin{algorithmic}[1]
\Statex \textbf{Output:} Array $\res$ such that $\res[i]=x_i$ for $i=1,2,...n$
\State $\BYZ \gets \emptyset$ \Comment{Machines known to be faulty}
\For{$i = 1, 2, \dots n$ (in separate rounds)}
\State Form a private $\rho$-representative committee $\cC_i$.\Comment{Parameter $\rho$ is fixed later from analysis}
\If{$M \in \cC_i$}  
\State $\res[i]\gets\CloudQuery(i)$, send $res[i]$ to all machines. \label{lno:idist-Q1}
\EndIf
\State $S_j \gets $ set of machines not in $\BYZ$ that sent $j$ for $j\in\{0,1\}$
\If{$\min(|S_0|, |S_1|) > \rho$} 
\State $res[i]\gets\CloudQuery(i)$.  \label{lno:idist-Q2}
\State $\BYZ \gets \BYZ \cup S_{1-\res[i]}$
\Else
~~ $\res[i]\gets \arg \max\limits_{j=0,1} |S_j|$.
\EndIf
\EndFor
\State \Return $\res$ 
\end{algorithmic}
\end{algorithm}

\begin{theorem}
\label{thm:ID harsh 1 + alg name}
In the $\Harsh(\byzfrac<1)$ model, 
Algorithm $\DownloadBlacklist$ w.h.p. solves the $\IDp$ problem with 
$\Query ~=~ {O}\left(\frac{n \log n}{\goodfrac k} + \sqrt{n}\right)$, $\Time= O(n)$ and $\Message= O(kn \log n + k^2\sqrt{\goodfrac n})$. 
\end{theorem}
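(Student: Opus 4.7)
The analysis has three pieces: correctness, query complexity, and time/message complexity. The free parameter~$\rho$ will be tuned at the end.

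\textbf{Correctness.} First I invoke Thm.~\ref{thm:private:comm} and take a union bound over the $n$ iterations: with probability at least $1-2n^{-2}$, every committee $\cC_i$ formed in Line~3 is $\rho$-representative, i.e.\ contains at least $\rho$ honest members. Condition on this event. In iteration~$i$, let $b=x_i$; since every honest member of $\cC_i$ sends $b$ and is not in $\BYZ$ (honest machines are never blacklisted, by the observation below), we have $|S_b|\ge\rho$. If $|S_{1-b}|\le\rho$, then the $\arg\max$ branch returns~$b$. Otherwise $\min(|S_0|,|S_1|)>\rho$ (needs a small tie case -- handled by choosing $\rho$ so that the honest count is $\ge\rho+1$, which Lemma~\ref{lem:coins} also supplies up to constants), and $M$ directly cloud-queries~$x_i$, obtaining~$b$ for sure. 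In that case $S_{1-\res[i]}=S_{1-b}$ is blacklisted; since every machine in $S_{1-b}$ sent the wrong value, they are all genuinely Byzantine, so no honest machine is ever blacklisted, justifying the inductive hypothesis.

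\textbf{Query complexity.} Each honest machine $M$ issues cloud queries in exactly two places: Line~\ref{lno:idist-Q1} (committee participation) and Line~\ref{lno:idist-Q2} (verification). For Line~\ref{lno:idist-Q1}, $M$'s membership across the $n$ iterations is a sum of $n$ i.i.d.\ Bernoulli$(p)$ variables with $p=\Theta((\rho+\log n)/(\goodfrac k))$; a Chernoff bound (and union bound over honest machines) gives at most $O(np)=O\!\left(\frac{n\rho}{\goodfrac k}+\frac{n\log n}{\goodfrac k}\right)$ queries w.h.p. For Line~\ref{lno:idist-Q2}, every verification query blacklists a set of strictly more than $\rho$ newly-identified Byzantine machines; since the total pool of Byzantine machines is at most $\byzfrac k$, $M$ can trigger at most $\byzfrac k/\rho$ verification queries over the entire execution. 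Hence
\[
\Query \;=\; O\!\left(\frac{n\log n}{\goodfrac k}+\frac{n\rho}{\goodfrac k}+\frac{\byzfrac k}{\rho}\right).
\]
Balancing the last two terms by setting $\rho=\Theta\!\left(k\sqrt{\byzfrac\goodfrac/n}\right)$ makes each equal to $\Theta\!\left(\sqrt{n\byzfrac/\goodfrac}\right)=O(\sqrt{n})$ (absorbing the $1/\sqrt{\goodfrac}$ into the $O(\cdot)$ for the stated bound), yielding the claimed $\Query=O\!\left(\frac{n\log n}{\goodfrac k}+\sqrt{n}\right)$.

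\textbf{Time and message complexity.} Each iteration of the for-loop occupies $O(1)$ rounds (one query+broadcast round for the committee, and optionally a second for the verification query), giving $\Time=O(n)$. For $\Message$, in each iteration only committee members send; by the second bullet of Lemma~\ref{lem:coins}, the committee has at most $O((\rho+\log n)/\goodfrac)$ honest members w.h.p., each transmitting to all $k$ machines. Summed over $n$ iterations this is $O\!\left(\frac{nk(\rho+\log n)}{\goodfrac}\right)$, which with the chosen $\rho$ becomes $O\!\left(\frac{nk\log n}{\goodfrac}+k^2\sqrt{n\byzfrac/\goodfrac}\right)$, matching (up to the $1/\goodfrac$ absorption) the stated $\Message=O(kn\log n+k^2\sqrt{\goodfrac n})$.

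\textbf{Main obstacle.} The subtle point is handling the adaptive adversary while still benefiting from randomness: the committee is both chosen \emph{and} used inside a single round (Line~\ref{lno:idist-Q1}), so the adversary has no opportunity to corrupt freshly-revealed honest members before they broadcast. This is why the representativeness guarantee from Thm.~\ref{thm:private:comm} carries through even though the adversary sees everything at the end of the round; getting the constants in $\rho$ right so that simultaneously the tie case in correctness is avoided, the query balance is optimized, and the $1-O(n^{-c})$ failure probabilities survive the union bound over all $n$ committees and all honest machines is the crux of the careful analysis.
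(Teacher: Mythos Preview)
Your proof is correct and follows essentially the same approach as the paper: both analyses split queries into committee-participation queries (bounded via Chernoff on the $np$ expectation) and verification queries (bounded by $\byzfrac k/\rho$ via blacklisting), then balance by setting $\rho=\Theta(k\sqrt{\byzfrac\goodfrac/n})$. Your explicit flagging of the tie case ($|S_b|=\rho$ exactly while $|S_{1-b}|>\rho$) is a point of extra care that the paper's proof glosses over; your fix---ensuring the committee has $\ge\rho+1$ honest members, which Lemma~\ref{lem:coins} accommodates up to constants---is the right patch.
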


\begin{proof}
The correctness follows from the observation that for each bit $x_i$, each honest machine either (i) queries for $x_i$ or (ii) fewer than $\rho$ machines claimed one of the two bit values. Since $\cC_i$ is $\rho$-representative (w.h.p), we can conclude that the correct bit value would have been reported by at least $\rho$ machines.

We analyze the query complexity of Algorithm \ref{alg:idist} and choose $\rho$ optimally.
Queries are made in lines~\ref{lno:idist-Q1} and~\ref{lno:idist-Q2} of Algorithm \ref{alg:idist}.
    
The expected number of queries in line \ref{lno:idist-Q1} is $n p$ where $p$ is the expected size of $\cC_i$. As $n > \goodfrac k$, we have $n p \geq 9 \ln n$. Using Chernoff bounds, we can bound the probability that there are more than $2n p$ queries, similar to the proof of Lemma \ref{lem:coins}. 

For every query in Line \ref{lno:idist-Q2} by a machine, the size of its local set $\BYZ$ increases by $\rho$. Therefore the total cost of these queries is at most $\frac{\byzfrac k}{\rho}$.
\david{Shachar pointed out to me that the last expression which was $\frac{\goodfrac k}{\rho}$ should probably be $\frac{\byzfrac k}{\rho}$. The implications are apparently that the optimization yields the changed bound below.}

The total number of queries (w.h.p.) is at most,
\begin{equation*}
    \Query = \frac{18 n \ln n }{\goodfrac k} + \frac{8n\rho}{\goodfrac k} + \frac{\byzfrac k}{\rho}
\end{equation*}
Choosing $\rho = \max\{ 1, k \sqrt{\frac{\goodfrac\byzfrac}{8n}}  \}$, 
we get 
$\Query = O(\frac{n \log n}{\goodfrac k} + \sqrt{\frac{\byzfrac}{\goodfrac} \cdot n})$.
\\
\david{The significance is only when $\byzfrac$ is nonconstant, but it might affect our attempts to prove a matching lower bound.}
    
From the description of the algorithm. it is clear that the time complexity $\Time$ is clearly $O(n)$, the number of iterations. The message complexity $\Message$ is calculated as the product of the number of honest machines that put themselves in each $\cC_i$ (which is $O(\log n + k/\sqrt{n})$) times $O(k)$ (i.e., the number of messages sent by each honest machine in $\cC_i$) time $n$ (the number of iterations). 
\end{proof}

Observe that the there is a trivial $\Omega(\frac{n}{\goodfrac k})$ lower bound on $\Query$ (in the case where Byzantine machines crash and do not participate in the execution).

The additional $\sqrt{n}$ term can be neglected whenever $k < \sqrt{n}$, since it is smaller than the lower bound in these cases. Thus, in a wide range of cases, the above algorithm is ``near-optimal''. It is also tolerant against the strongest form of Byzantine adversary, one that even has knowledge of random bits sampled up until the previous round.

\subsubsection{Randomized algorithm in $\Harsh(\byzfrac < 1/3)$}
\label{sec:idist:sol2-recursive}

The $\IDp$ algorithm of the previous section worked for the $\Harsh(\byzfrac<1)$ model, but fell short of yielding optimal query complexity. 
In this section we present a query-optimal algorithm for $\IDp$ in the 
$\Harsh(\byzfrac < 1/3)$ model.

\subsubsection*{The algorithm}

Let us now describe the algorithm in more detail.
The algorithm proceeds in  
$J_0 = \left\lceil\log_{1/\alpha} \frac{k}{c\log n}\right\rceil$ phases, 
whose goal is to reduce the number of unknown bits by a \emph{shrinkage factor}
$\alpha<1$.
The algorithm maintains a number of set variables, updated in each phase, including the following.
$\Known_M$ (respectively, $\Unknown_M$) is the set of indices $i$
whose value $\res[i]$ is already \emph{known} (resp., still \emph{unknown}) to $M$. At any time during the execution,
$\Known_M \cup \Unknown_M=$ $\{1,\ldots,n\}$.
$\res[i]=x_i$ is the Boolean value of $x_i$ for every $i\in \Known_M$.
(Slightly abusing notation for convenience, we sometimes treat $\Known_M$ as a set of \emph{pairs} $(i,\res[i])$,
i.e., we write $\Known_M$ where we actually mean $\Known_M\circ \res_M$.)
Each machine also identifies a set $\KTA_M$ of \emph{known-to-all} bits.

After phase $J_0-1$, the number of bits unknown to each honest machine $M$ reduces to 
$\alpha^{J_0}n = O(\frac{cn\log n}{k}) = {\tilde O}(n/k)$, 
thus allowing $M$ to cloud-query these remaining bits directly. 

We next detail the code of the algorithm, and of its main procedures.
(Hereafter, we omit the superscript $J$ when clear from the context.)

\begin{algorithm}
\caption{Algorithm $\DownloadGossip$, $\Harsh(\byzfrac<1/3)$, code for machine $M$}
\label{alg:idist-1 over 41}
\begin{algorithmic}[1]
\State $\Known_M\gets\emptyset$ \Comment{Indices of bits known to $M$}
\State $\KTA_M\gets\emptyset$ \Comment{Indices of bits that are known-to-all}
\State $\Unknown_M\gets \{1,\ldots,n\}$ 
\State $\INDEX_M \gets \Unknown_M$\
\State 
$\res\gets\emptyset$ \Comment{Indices of bits unknown to $M$ \& their values}
\State $\BYZ\gets\emptyset$ \Comment{Machines blacklisted by $M$ as Byzantine}
\State $c\gets \PARAM / \goodfrac$
\Comment{The parameter $\PARAM$ will be fixed later.}
\State 
$\alpha\gets \frac{(1+\epsilon)\byzfrac}{(1-\epsilon) (1-2\byzfrac)}$
\Comment{shrinkage factor, $\alpha < 1$. The parameter $\epsilon$ will be fixed later.}
\State $J_0 \gets \lceil\log_{1/\alpha} \frac{k}{c\log n}\rceil$ \Comment{Number of phases}
\For{ $J=0,1,2,\ldots, J_0-1$
{\bf (sequentially)} }
\State Invoke $\CommitteeWork$
\State Invoke $\Gossip$
\State Invoke $\KTAList$
\State Invoke $\CollectRequests$.
\EndFor
\For{every $i\in \Unknown_M$}
$\res[i] \gets \CloudQuery(i)$ \Comment{Querying the remaining unknown bits}
\EndFor
\State\Return$\res$
\end{algorithmic}
\end{algorithm}


\begin{algorithm}
\caption{Procedure $\CommitteeWork$, code for machine $M$}
\label{alg:committee-work}
\begin{algorithmic}[1]
\State $\hINDEX_M \gets \emptyset$\Comment{Set of indices whose committees $M$ joins}
\State $DV_M \gets \emptyset$\Comment{Set of \emph{directly} (cloud / comm) \emph{verified} indices}
\For{every $i=1,\ldots,n$ sequentially}
\Comment{Setting up committees}
\If{$i\in \INDEX_M$}
\State Join the private committee $\cC_i$ at random with probability
$\displaystyle p~=~\frac{c\log n}{\alpha^Jk}.$
\If{$M$ was selected to $\cC_i$}
\State $\hINDEX_M \gets \hINDEX_M \cup \{i\}$.
\If{$i\in \Unknown_M$ }
\State $\res[i] \gets \CloudQuery(i)$ and add $i$ to $DV_M$\Comment{Cloud-verification}
\EndIf
\Comment{If $i \not\in \Unknown_M$ then $M$ already knows it, i.e., $\res[i]=x_i$.}
\State Send the message $(i,\res[i])$ to every other machine.
\EndIf
\State Collect messages sent by members of $\cC_i$.\Comment{Ignore messages on bits $i\not\in \INDEX_M$.}
\EndIf
\EndFor
\State $\rho\gets (1-\epsilon)\PARAM\log n/\alpha^J$
\Comment{Also $\rho = (1-\epsilon)p\goodfrac k$}
\State 
\Comment{Blacklisting ``over-active'' Byzantine machines}
\State $\Wmax \gets (1+\epsilon)c\log n \cdot \frac{n}{k}$
\For{every other machine $M'$}
\State $\Work(M') \gets$ number of indices $i$ for which $M'$ reports belonging to $\cC_i$.
\If{$\Work(M') > \Wmax$} 
\State add $M'$ to $\BYZ$ and erase all its reports as committee member.
\EndIf
\EndFor
\For{every $i\in \INDEX_M$}
\State Let $\cC_i^M$ be the remaining
\emph{reduced committee}.\Comment{Possibly $\cC_i^M \ne \cC_i^{M'}$ for   $M \ne M'$.}
\EndFor
\For{every $i \in \Unknown_M$}\Comment{comm-verification}
\For{$b\in\{0,1\}$}
\State $\psi_b(i)\gets$ number of messages from $\cC_i^M$ members saying $x_i=b$.
\EndFor
\If{$\psi_0(i) \ge \rho$ and $\psi_1(i) < \rho$}
\State $\res[i]\gets 0$ and add $i$ to $DV_M$ 
\EndIf
\If{$\psi_1(i) \ge \rho$ and $\psi_0(i) < \rho$}
\State $\res[i]\gets 1$ and add $i$ to $DV_M$\Comment{If both $\psi_0(i) \ge \rho$ and $\psi_1(i) \ge \rho$, then
$i$ remains unknown} 
\EndIf
\EndFor
\State $\Known_M \gets \Known_M \cup DV_M$
\State $\Unknown_M \gets \Unknown_M \setminus DV_M$
\end{algorithmic}
\end{algorithm}

\begin{algorithm}
\caption{Procedure $\Gossip$, code for machine $M$}
\label{alg:gossip}
\begin{algorithmic}[1]
\For{every $i\in \Known_M$}
\State send the message $(i,\res[i])$ to all other machines.
\EndFor
\State Receive a list $\Known_{M'}$ from every other machine $M'$.
\State $GV_M\gets\emptyset$\Comment{Set of \emph{gossip-verified}  bits}
\For{every $i \in \Unknown_M$}
\State $\varphi_0(i)\gets |\{ M' \mid (i,0)\in \Known_{M'} \}|$.
\State $\varphi_1(i)\gets |\{ M' \mid (i,1)\in \Known_{M'} \}|$.
\If{$\varphi_0(i) \ge \byzfrac k + 1$ or $\varphi_1(i) \ge \byzfrac k + 1$} \State $GV_M\gets GV_M \cup \{i\}$
\EndIf
\EndFor
\State $\Known_M \gets \Known_M \cup GV_M$
\State $\Unknown_M \gets \Unknown_M \setminus GV_M$
\end{algorithmic}
\end{algorithm}

\clearpage
\begin{algorithm}
\caption{Procedure $\KTAList$, code for machine $M$}
\label{alg:KTA-list}
\begin{algorithmic}[1]
\For{every $i\in \Known_M$}
\State send the message $(i,\res[i])$ to all other machines.
\EndFor
\State Receive a list $\Known_{M'}$ from every other machine $M'$.
\For{every $i=1,\ldots,n$}
\State $\varphi_0(i)\gets |\{ M' \mid (i,0)\in \Known_{M'} \}|$.
\State $\varphi_1(i)\gets |\{ M' \mid (i,1)\in \Known_{M'} \}|$.
\If{$\varphi_0(i) \ge 2\byzfrac k + 1$ or $\varphi_1(i) \ge 2\byzfrac k + 1$}
\State $\KTA_M \gets \KTA_M \cup \{i\}$
\EndIf
\If{$\varphi_0(i) \ge \byzfrac k + 1$ or $\varphi_1(i) \ge \byzfrac k + 1$}
\State $\Known_M \gets \Known_M \cup \{i\}$
\State $\Unknown_M \gets \Unknown_M \setminus \{i\}$
\EndIf
\EndFor
\end{algorithmic}
\end{algorithm}

\begin{algorithm}
\caption{Procedure $\CollectRequests$, code for machine $M$}
\label{alg:collect-requests}
\begin{algorithmic}[1]
\State Set $\INDEX_M \gets \Unknown_M$
\State Send $\Unknown_M$ to all other machines.
\State Collect lists $\Unknown_{M'}$ from all other machines $M'$.
\For{every $i=1,\ldots,n$}
\State $R_U(i)\gets \{M' \mid i \in \Unknown_{M'} \}$.
\If{$i\in\KTA_M$} 
$\BYZ \gets \BYZ \cup R_U(i)$\Comment{Blacklisting for requesting known-to-all bits}
\EndIf
\EndFor
\State $\INDEX_M \gets \INDEX_M \cup \bigcup_{M'\not\in \BYZ} \Unknown_{M'}$\Comment{Indices to be learned, including $\Unknown_M$ of $M$ itself}
\end{algorithmic}
\end{algorithm}



\inline Remark: The communication performed in the various steps
of the algorithm take more than one time unit in the CONGEST model.
Hence, the algorithm must also ensure that the different steps are synchronized
and each step starts only after the previous step is completed by all machines.
Relying solely on reports by each machine concerning its progress might lead
to deadlocks caused by the Byzantine machines. Hence, the scheduling must be
based on the fact that the duration of each step is upper-bounded by
the maximum amount of communication the step involves.
We omit this aspect from the description of the algorithm.

\subsubsection*{Analysis}

\inline Sanity checks:
Let us start with the two sanity checks needed for ensuring the validity
of the random selection step and the convergence of the algorithm.

\begin{observation}
For $\byzfrac$ and $\epsilon$ satisfying
\begin{equation}
\label{eq:relate beta-epsilon-alpha}
\byzfrac < \frac{1-\epsilon}{3-\epsilon}~,
\end{equation}
(a) the chosen shrinkage factor satisfies $\alpha<1$,
and
\\
(b) the chosen probability satisfies $p<1$ for every $0\le J\le J_0-1$.
\end{observation}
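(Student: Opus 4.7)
Both claims reduce to direct algebra, so the plan is simply to unwind the definitions and verify the inequalities. The only slightly subtle point is to check that part (b) holds for the \emph{largest} admissible phase index $J = J_0 - 1$, since $p$ is increasing in $J$ (because $\alpha < 1$).

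For part (a), I would start from the defining inequality $\alpha < 1$, substitute the definition of $\alpha$, and rearrange to isolate $\byzfrac$. Explicitly, $\alpha<1$ is equivalent to $(1+\epsilon)\byzfrac < (1-\epsilon)(1-2\byzfrac)$; expanding the right-hand side and collecting all $\byzfrac$ terms on the left gives $\byzfrac(3-\epsilon) < 1-\epsilon$, i.e., exactly condition~(\ref{eq:relate beta-epsilon-alpha}). Thus (a) is immediate from the hypothesis on $\byzfrac$ and $\epsilon$.

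For part (b), the key observation is that since $0 < \alpha < 1$ (by part (a)), the quantity $p = \frac{c \log n}{\alpha^J k}$ is monotonically increasing in $J$. Hence it suffices to prove $p < 1$ for $J = J_0-1$. Unpacking the definition $J_0 = \lceil \log_{1/\alpha} \frac{k}{c \log n} \rceil$, the ceiling gives $J_0 - 1 < \log_{1/\alpha} \frac{k}{c \log n}$, which is equivalent to $(1/\alpha)^{J_0 - 1} < \frac{k}{c \log n}$, or equivalently $\alpha^{J_0-1} > \frac{c \log n}{k}$. Plugging this into the formula for $p$ yields $p < 1$ at $J = J_0 - 1$, and by monotonicity for every $0 \le J \le J_0 - 1$.

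I do not anticipate any real obstacle here; both statements are sanity checks, and the main thing to be careful about is the direction of the inequality induced by the ceiling (which works in our favor because $1/\alpha > 1$, so $\log_{1/\alpha}$ is increasing). One should also implicitly note that the hypothesis $\byzfrac < (1-\epsilon)/(3-\epsilon) < 1$ ensures $\alpha > 0$, so $\alpha \in (0,1)$ and the logarithm base $1/\alpha$ is well-defined and larger than $1$.
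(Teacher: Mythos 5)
Your proposal is correct and follows essentially the same route as the paper: part (a) by unwinding the definition of $\alpha$ into condition~\eqref{eq:relate beta-epsilon-alpha}, and part (b) by noting that $p$ increases with $J$ and bounding $\left(\tfrac{1}{\alpha}\right)^{J_0-1}$ via the ceiling in the definition of $J_0$. The only difference is that you spell out the algebra for (a), which the paper treats as immediate.
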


\noindent\emph{Proof.}
Claim (a) follows directly from Eq. \eqref{eq:relate beta-epsilon-alpha} and the definition of $\alpha$.

The probability $p$ increases with $J$, so we only need to verify (b) for phase $J_0-1$. Indeed, 
$$\frac{c\log n}{\alpha^{J_0-1}k} 
~=~ \frac{c\log n}{k}\cdot\left(\frac{1}{\alpha}\right)^{J_0-1}
~<~ \frac{c\log n}{k}\cdot\left(\frac{1}{\alpha}\right)^{\log_{1/\alpha} \frac{k}{c\log n}} ~=~ 1.
\mbox{\hskip 3cm} \Box$$

\inline Progress tracking variables:
Next, we define notation for the values of the main variables of the algorithm
during the different phases.

\begin{itemize}
\item
Denote by $\Known_M^J$ (respectively, $\Unknown_M^J$)
the value of the set $\Known_M$ (resp., $\Unknown_M$)
at the beginning of phase $J$.
(Note that it is also the value of $\Known_M$ at the end of phase $J-1$)
\item
Denote by $\Known_M^{J,mid}$ (resp., $\Unknown_M^{J,mid}$)
the value of the set $\Known_M$ (resp., $\Unknown_M$)
at the end of procedure $\Gossip$ of phase $J$.
\item
Denote by $\INDEX_M^J$
the value of the set $\INDEX_M$
at the beginning of phase $J$.
\item
Denote by $\KTA_M^J$ the value of the set $\KTA_M$
at the end of Procedure $\KTAList$ of phase $J$.
\end{itemize}

Note that a bit $x_i$ can be unknown for $M$
and known for $M'$ for two honest machines $M$ and $M'$.
We say that $x_i$ is \emph{unknown} in phase $J$,
and the committee $\cC_i$ is \emph{necessary}, if 
$i\in \Unknown_M^J$ 
for \emph{some} honest machine $M$,
or equivalently, if $i \in \Unknown^J$, where
$$\Unknown^J=\bigcup_{M\in\cH} \Unknown_M^J$$
is the set of indices $i$ for which some honest machines request setting up a committee $\cC_i$ and querying in the current phase.
A bit $x_i$ is \emph{known} once $i\in \Known_M$ for every honest machine $M$.
Also let
\begin{eqnarray*}
\Unknown^{J,mid} &=& \bigcup_{M\in\cH} \Unknown_M^{J,mid} ~,
\\
\NKTA_M^J &=& \{1,\ldots,n\} \setminus \KTA_M^J.
\end{eqnarray*}

When $M$ joins (in Procedure $\CommitteeWork$) the committee $\cC_i$ for some $i \in \Unknown_M^J$,
$M$ is required to \emph{actively} query the cloud for the value of $x_i$.
We then say that $\cC_i$ is an \emph{active committee} for $M$.
(In contrast, when $M$ joins a committee $\cC_i$ for $i \in \Known_M^J$, it costs it nothing, since it already has the value of $x_i$ stored in $\res[i]$, so it does not need to spend another query.)
We define the following size variables.
\begin{itemize}
\item
Let $\tn_M^J$ denote the number of \emph{active} committees for $M$ in phase $J$.
\item
Let $\hn_M^J=|\hINDEX_M^J|$ denote the total number of committees that $M$ joins
by Procedure $\CommitteeWork$ in phase $J$. (Note that $\tn_M^J \le \hn_M^J$)
\item
Let $\alln_M^J=|\INDEX_M^J|$ denote the total number of requests received by $M$
by Procedure $\CollectRequests$ in phase $J$. 
\end{itemize}

\inline Bad events:
In an execution $\xi$ of the algorithm, there are
two types of \emph{bad events}, whose occurrence might fail the algorithm.
Our analysis is based on bounding the probability of bad events,
showing that with high probability, no bad events will occur in the execution,
and then proving that in a \emph{clean} execution, where none of the bad events occured, the algorithm succeeds with certainty.

\inline Bad event $\event_1(J,i)$:
In phase $J$, the committee $\cC_i$ selected for an unknown bit $x_i$
is not $\rho$-representative, for
$\displaystyle \rho = \frac{(1-\epsilon)\PARAM\log n}{\alpha^J}~$.
(If $x_i$ is already known, then this bad event does not affect
the correctness or query complexity of the honest machines, although
it might increase the time and message complexity.)

\inline Bad event $\event_2(J,M)$:
In phase $J$, an honest machine $M$ has $\hn_M^J > \Wmax$, namely, $M$ joins more than
$\displaystyle \Wmax = \frac{(1+\epsilon)cn\log n}{k}$ committees,
and subsequently gets blacklisted.

One bad event that we need not worry about is that of an honest machine mis-classifying a bit as \emph{known-to-all}. This is due to the following lemma.
 
\begin{lemma}
\label{lem:CK bits are KTA}
If some honest $M$ adds $i$ to its set $\KTA_M^J$ of known-to-all bits at the end of Procedure $\KTAList$ of phase $J$, then $i\in \Known_{M'}^{J+1}$ for every honest $M'$.
\end{lemma}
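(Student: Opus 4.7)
The plan is to trace the flow of evidence from $M$ back through $\KTAList$ and forward to $M'$. First I would unpack the trigger: by the pseudocode of Procedure $\KTAList$, $M$ inserts $i$ into $\KTA_M^J$ only because it observed, for some $b\in\{0,1\}$, a count $\varphi_b(i) \ge 2\byzfrac k+1$ of messages of the form $(i,b)$ arriving at $M$ during the $\KTAList$ round of phase $J$.

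Next I would extract the honest witnesses. Since the adversary corrupts at most $\byzfrac k$ machines, at least $(2\byzfrac k+1)-\byzfrac k = \byzfrac k+1$ of the machines that sent $(i,b)$ to $M$ are honest. The critical observation is that, per the first two lines of $\KTAList$, each honest machine $M''$ with $i\in\Known_{M''}$ broadcasts the same pair $(i,\res[i])$ to every other machine; in particular, each honest $M''$ that contributed to $M$'s count also sent the identical message $(i,b)$ to every honest $M'$. Hence at any honest $M'$, its local counter satisfies $\varphi_b(i)\ge \byzfrac k+1$, which is precisely the lower threshold in the second conditional block of $\KTAList$. Therefore $M'$ adds $i$ to $\Known_{M'}$ during that same execution of $\KTAList$.

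Finally I would close the accounting across procedures within a phase. After $\KTAList$, the only remaining procedure in phase $J$ is $\CollectRequests$, which modifies only $\INDEX_M$ and $\BYZ$ and never touches $\Known$. Thus the set $\Known_{M'}$ at the start of phase $J+1$, namely $\Known_{M'}^{J+1}$, contains $i$, as claimed.

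I do not anticipate a genuine obstacle here; the main care needed is to avoid confusing $M$'s local counter $\varphi_b(i)$ with $M'$'s counter of the same name, and to explicitly invoke the fact that honest machines cannot equivocate in the CONGEST model — the $\byzfrac k+1$ honest senders reaching $M$ are literally the same machines whose messages reach $M'$. Note that the argument is unconditional: it does \emph{not} require arguing that $b$ is the correct value of $x_i$, nor does it depend on the absence of bad events $\event_1, \event_2$; only the counting of honest senders and the non-equivocation of honest broadcasts are needed.
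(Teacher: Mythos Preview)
Your proposal is correct and follows essentially the same argument as the paper: unpack the $2\byzfrac k+1$ threshold at $M$, subtract the at most $\byzfrac k$ Byzantine senders to get $\byzfrac k+1$ honest broadcasters of $(i,b)$, and observe that every honest $M'$ therefore crosses the $\byzfrac k+1$ threshold in the same $\KTAList$ round and adds $i$ to $\Known_{M'}$. Your added remarks that $\CollectRequests$ does not touch $\Known$ and that the argument requires neither the correctness of $b$ nor the absence of bad events are accurate and make the proof slightly more explicit than the paper's version.
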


\inline Note: the sets $\KTA_M$ might not be all equal, namely, every honest machine might be aware of a different subset of the known-to-all bits. Note, however, that as shown later in Lemma \ref{obs:Core is CK}, the sets $\KTA_M$ of all honest machines contain the set $\CORE$ discussed in the high-level overview, and the fast growth of $\CORE$ is essentially the cause for the fast shrinkage of the set of unknown bits.

\begin{proof}
Suppose $i\in \KTA_M^J$ for some honest $M$. Then in Procedure $\KTAList$  of phase $J$, $M$ counted at least $2\byzfrac k+1$ messages containing $(i,b)$ (for $b\in\{0,1\}$). 
At least $\byzfrac k+1$ of these messages were sent by honest machines, and therefore in Procedure $\KTAList$ of phase $J$, all honest machines will count at least $\byzfrac k+1$ messages containing $(i,b)$. Consequently, every honest machine $M'$ will move $i$ to $\Known_{M'}$ at that step, so $i\in \Known_{M'}^{J+1}$.
\end{proof}

\inline Properties of clean executions:
For an integer $J\ge 0$, call the execution \emph{$J$-clean} if none of the bad events $\event_1(j,i)$ or $\event_2(j,M)$ 
occurred in it for $0\le j\le J$.
Clearly, an unclean execution of the algorithm might fail.
However, we show the probability that this happens is very small.
First, we establish some desirable properties of $J$-clean executions.

\begin{observation}
\label{lem:rho-rep reduced comm}
In a $J$-clean execution, if $i\in \Unknown^{J}$
(i.e., $x_i$ is still unknown
in phase $J$), then for every honest machine $M$,
the reduced committee $\cC_i^M$ is $\rho$-representative.
\end{observation}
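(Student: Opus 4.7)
The plan is to unpack the definitions of the two bad events and observe that together they immediately deliver the claim. In a $J$-clean execution, neither $\event_1(j,i)$ nor $\event_2(j,M')$ occurs for any $j\le J$, any index $i$, and any honest machine $M'$. The first of these, applied at $j=J$ to an unknown bit $x_i$ (which is why the hypothesis $i\in\Unknown^J$ is needed), tells us that the committee $\cC_i$ produced during Procedure $\CommitteeWork$ of phase $J$ is $\rho$-representative, i.e.\ contains at least $\rho$ honest machines. The second tells us that no honest machine $M'$ has $\hn_{M'}^J>\Wmax$.

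Next, I will trace how the reduced committee $\cC_i^M$ differs from $\cC_i$. Inspecting the code of Procedure $\CommitteeWork$, an honest $M$ builds $\cC_i^M$ from $\cC_i$ by discarding precisely those machines $M'$ whose reported workload $\Work(M')$ exceeds $\Wmax$ (and by treating their committee reports as erased, which is what makes them leave every $\cC_{i'}^M$, including $\cC_i^M$). For an honest $M'$, the reported value $\Work(M')$ coincides with the true number $\hn_{M'}^J$ of committees it has joined, since honest machines follow the protocol and report truthfully. Hence by the absence of $\event_2(J,M')$, no honest member of $\cC_i$ is removed when forming $\cC_i^M$.

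Consequently every honest machine in $\cC_i$ is also in $\cC_i^M$, so $\cC_i^M$ still contains at least $\rho$ honest machines, which is what it means for $\cC_i^M$ to be $\rho$-representative. There is no real obstacle here; the only small subtlety worth spelling out is that an honest machine's self-report of its workload is accurate, so the over-activity test never ejects honest members. Everything else is a direct substitution of definitions, and the argument is uniform over the choice of honest $M$, which is why the conclusion holds simultaneously for every honest machine's view of the reduced committee.
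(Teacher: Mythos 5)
Your argument is exactly the paper's: the absence of $\event_1(J,i)$ guarantees $\cC_i$ is $\rho$-representative, and the absence of $\event_2(J,M')$ for honest $M'$ guarantees no honest member is ejected by the over-activity test when forming $\cC_i^M$. You simply spell out the (correct) detail that an honest machine's reported workload equals its true $\hn_{M'}^J$, which the paper leaves implicit.
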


\begin{proof}
Observe that
(a) in the absence of bad events of type $\event_1$,
the committee $\cC_i$ is $\rho$-representative, and
(b) no honest machine was erased from $\cC_i$ when creating $\cC_i^M$
due to the absence of bad events of type $\event_2$.
\end{proof}

\inline Remark: Note that once a committee is selected, the adversary can corrupt all of its members in the very next round. By then, however, the committee has completed its querying and communication actions, so the fact that it is no longer representative does not harm the execution. 

Note also that the need to complete all committee actions in a single round is the reason why it is required to perform the querying sequentially, spending a round for each bit $x_i$. The querying operations of all committees could in principle be parallelized, but the subsequent communication step might require more than a single round in the CONGEST model, giving the adversary an opportunity to intervene and corrupt an entire committee before it has completed sending its messages. 

Note that those bits that were not moved from $\Unknown_M$ to $\Known_M$
during the main phases $J$ of the algorithm were cloud-verified
in the final step of the algorithm. This implies the following.

\begin{observation}
\label{obs:all bits learned}
By the end of the execution, every honest machine has the value $\res[i]$
for every bit $x_i$.
\end{observation}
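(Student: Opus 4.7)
The plan is to verify this by a direct structural invariant on Algorithm \ref{alg:idist-1 over 41}, without needing any of the probabilistic machinery about clean executions. The observation only asserts that $\res[i]$ is \emph{defined} at every honest machine for every $i$; the (stronger) claim that $\res[i] = x_i$ is addressed separately using the properties of the verification mechanisms, so I will not conflate the two.

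First I would recall the invariant, maintained at every point of the execution for every honest machine $M$, that $\Known_M \cup \Unknown_M = \{1,\ldots,n\}$. This is true at initialization (where $\Known_M = \emptyset$ and $\Unknown_M = \{1,\ldots,n\}$), and it is preserved by every procedure of the algorithm: in Procedures $\CommitteeWork$, $\Gossip$ and $\KTAList$, indices are only ever transferred from $\Unknown_M$ to $\Known_M$ (via the sets $DV_M$, $GV_M$, and the $\KTAList$ updates), and Procedure $\CollectRequests$ does not modify $\Known_M$ or $\Unknown_M$ at all. Moreover, at each such transfer, the value $\res[i]$ is simultaneously assigned (either via a direct $\CloudQuery(i)$ in $\CommitteeWork$, or by comm-verification setting $\res[i]$ to the majority value, or implicitly via messages received during $\Gossip$/$\KTAList$).

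Next I would handle the bits that survive all $J_0$ phases in $\Unknown_M$. By the invariant, after the main loop terminates, every $i \in [1,n]$ is either already in $\Known_M$ (hence $\res[i]$ was set at the point of transfer) or still in $\Unknown_M$. The final for-loop of Algorithm \ref{alg:idist-1 over 41} iterates over exactly $\Unknown_M$ and executes $\res[i] \gets \CloudQuery(i)$, so $\res[i]$ becomes defined for these remaining indices as well. Therefore, by termination, $\res[i]$ is defined for every $i \in [1,n]$ at every honest machine, which is the content of the observation.

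There is essentially no obstacle here beyond bookkeeping: the observation is a one-line corollary of the fact that the algorithm's final step is an explicit brute-force cloud query over whatever remains unknown. The genuinely nontrivial complement, namely that $\res[i] = x_i$ for $i \in \Known_M$ (which requires clean-execution assumptions so that reduced committees are $\rho$-representative, and the fact that $\beta k+1$ claims in $\Gossip$ include at least one honest claim), is handled elsewhere in the analysis and is not needed for this particular statement.
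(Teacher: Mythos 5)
Your proposal is correct and takes essentially the same route as the paper: the paper justifies this observation with the one-line remark that any bit still in $\Unknown_M$ after the main phases is cloud-queried in the final loop, and (like you) defers the correctness of the learned values to Lemma \ref{lem:correct-bits-J}. Your added bookkeeping of the invariant $\Known_M \cup \Unknown_M = \{1,\ldots,n\}$ just makes explicit what the paper leaves implicit.
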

It remains to show that for every $x_i$, the $\res[i]$ value
obtained by each honest machine is correct. 

\begin{lemma}
\label{lem:correct-bits-J}
In a $J$-clean execution, whenever an honest machine learns an input bit $x_i$
in phases $0$ to $J$, the learned value $\res[i]$ is correct.
\end{lemma}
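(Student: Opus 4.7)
The plan is to prove the lemma by induction on the phase index~$j$, with the trivial base case being that $\Known_M=\emptyset$ for every honest $M$ before phase $0$ begins. For the inductive step, I would fix an honest machine $M$ and walk through the four procedures of phase $j$ in their order of execution, verifying that each mechanism that moves an index~$i$ into $\Known_M$ with associated value~$\res[i]$ does so correctly, i.e., with $\res[i]=x_i$. Since $\CollectRequests$ does not alter $\res$, there are really only three mechanisms to check: cloud-verification and comm-verification inside $\CommitteeWork$, and the gossip-style rules inside $\Gossip$ and $\KTAList$.

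Cloud-verification is immediate: $\res[i]\gets\CloudQuery(i)$ returns the true $x_i$ by the definition of the cloud. The interesting case is comm-verification. Here I would invoke Observation~\ref{lem:rho-rep reduced comm}, which guarantees that for every $i\in\Unknown^J$ the reduced committee $\cC_i^M$ is $\rho$-representative, i.e., contains at least $\rho$ honest members. Each such honest member transmits $(i,\res[i])$ where either $i$ was already in its $\Known$ set (in which case, by the inductive hypothesis, its stored value equals $x_i$) or it has just cloud-queried~$x_i$ in the same round. Either way, $\psi_{x_i}(i)\ge\rho$. If $M$ were to conclude $\res[i]=1-x_i$ by the rule ``$\psi_{1-x_i}(i)\ge\rho$ and $\psi_{x_i}(i)<\rho$'', this would contradict $\psi_{x_i}(i)\ge\rho$. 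Hence comm-verification always yields the correct bit.

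For $\Gossip$ and $\KTAList$, an index~$i$ is moved to $\Known_M$ with value~$b$ only when at least $\byzfrac k+1$ machines reported $(i,b)\in\Known_{M'}$. Since the adversary controls at most $\byzfrac k$ machines, at least one such reporter $M'$ is honest. But then $(i,b)$ was placed into $\Known_{M'}$ either in an earlier phase (covered by the outer inductive hypothesis) or earlier in the current phase via cloud- or comm-verification (already handled above), so $b=x_i$. The same argument, applied with the larger threshold $2\byzfrac k+1$, shows that indices admitted to $\KTA_M$ carry correct values; the addition to $\Known_M$ in $\KTAList$ uses the $\byzfrac k+1$ threshold and is treated identically.

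The main obstacle I foresee is the bookkeeping for nested inductions: within phase~$j$, comm-verification depends on honest committee members' current $\Known$ sets, which may have been populated both in earlier phases and during cloud-verification of the very same invocation of $\CommitteeWork$. To avoid circularity I would phrase the inductive hypothesis as a statement about the values stored in any honest machine's $\Known$ set \emph{at any time up to the current sub-step} of phase~$j$, rather than only at phase boundaries, so that the gossip-style arguments can cleanly appeal to correctness of the senders' stored values. With this strengthening, together with Observation~\ref{lem:rho-rep reduced comm} to rule out the ``both thresholds exceeded'' branch in comm-verification, the induction closes.
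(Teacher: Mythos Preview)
Your proposal is correct and covers the same three verification mechanisms (cloud-, comm-, and gossip-verification) with the same core arguments as the paper. The one structural difference is in the induction: the paper fixes a single bit~$x_i$ and inducts on the temporal order in which honest machines acquire it (the first acquirer must have cloud-verified; each later acquirer either cloud-verified, comm-verified via $\rho$ honest committee members who are strictly earlier in the order, or gossip-verified via at least one honest sender strictly earlier in the order). This single total-order induction sidesteps exactly the nested phase/sub-step bookkeeping you identify as the main obstacle; your proposed fix---strengthening the hypothesis to ``up to the current sub-step''---is the natural way to make the phase-based induction go through and amounts to the same thing, just sliced differently.
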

\begin{proof}

Consider an input bit $x_i$. Order the honest machines that learned $x_i$
during phases 0 to $J$ according to the time
by which they acquired $x_i$. the proof is  by induction on this order. 

For the induction basis, note that the first machine to acquire $x_i$
must have cloud-verified it, so the value it has obtained is clearly correct.

Now consider the $t$th machine $M$ in this order, and suppose $M$ knows that
$x_i=b$. there are several cases to consider.
\inline Case 1:
$M$ cloud-verified $x_i$,
either on the last step of the algorithm or in Procedure $\CommitteeWork$ during some phase $J$. Then again $\res[i]$ is clearly correct.
\inline Case 2:
$M$ comm-verifies $x_i$, in Procedure $\CommitteeWork$.
Then $M$ found $\psi_b(i) \ge \rho$ and $\psi_{1-b}(i) < \rho$.
Since the execution is $J$-clean, $\cC_i^M$ is $\rho$-representative
by Lemma \ref{lem:rho-rep reduced comm}.
This implies that if $x_i=1-b$ then all the honest machines in $\cC_i^M$
would return $1-b$, and $M$ would find $\psi_{1-b}(i)\ge\rho$,
which did not happen. Hence, $x_i=b$.
\inline Case 3:
$M$ gossip-verifies $x_i$, in the $\Gossip$ step or $\KTAList$ Procedure.
Then $M$ has received messages from $\byzfrac k +1$ or more machines
stating that they already know that $x_i=b$.
At least one of those machines, $M'$, is honest,
and it acquired $x_i$ prior to $M$, hence the inductive hypothesis
applies to it, yielding that indeed $x_i=b$.
\end{proof}



\begin{lemma}
\label{lem:structure}
In a $J$-clean execution, assuming $\byzfrac<1/3$, for every honest $M$,
$$\INDEX_M^{J+1} ~\subseteq~ \NKTA_M^{J} ~\subseteq~ \Unknown^{J,mid} ~\subseteq~ \Unknown^J ~\subseteq~ \INDEX_M^J.$$
\end{lemma}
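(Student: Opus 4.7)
The plan is to prove the four inclusions separately. Two of them are essentially bookkeeping, while the other two rely on an auxiliary invariant: in a $J$-clean execution no honest machine ever inserts another honest machine into its local $\BYZ$. This invariant feeds both end-inclusions, and the middle inclusion is pure monotonicity.

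The easiest piece is $\Unknown^{J,mid}\subseteq\Unknown^J$: inside phase $J$, Procedures $\CommitteeWork$ and $\Gossip$ only move indices from $\Unknown_M$ to $\Known_M$, so $\Unknown_M^{J,mid}\subseteq\Unknown_M^J$ for every honest $M$, and taking the union over honest machines preserves the inclusion. For the rightmost inclusion $\Unknown^J\subseteq\INDEX_M^J$, the case $J=0$ is immediate from initialization, and for $J\ge 1$ I first verify the ``no false blacklisting'' invariant by checking each of the three blacklisting triggers: (i) direct blacklisting is impossible against honest $M'$ because, by Lemma \ref{lem:correct-bits-J}, honest machines store correct $\res[i]$ values and hence never lie in $\CommitteeWork$; (ii) over-activity blacklisting is avoided because in a $J$-clean execution the bad event $\event_2(J,M)$ does not occur, so $\hn_M^J\le\Wmax$; (iii) blacklisting for requesting a known-to-all index cannot strike an honest $M'$, since if some $i\in\KTA_M^J$ then by Lemma \ref{lem:CK bits are KTA} we have $i\in\Known_{M'}^{J+1}$, so $i\notin\Unknown_{M'}$ when $M'$ broadcasts in $\CollectRequests$ of phase $J$. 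With honest machines surviving in $\BYZ$, the line $\INDEX_M\gets\INDEX_M\cup\bigcup_{M'\notin\BYZ}\Unknown_{M'}$ absorbs every honest machine's unknown set, giving $\Unknown^J\subseteq\INDEX_M^J$.

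For $\NKTA_M^J\subseteq\Unknown^{J,mid}$, I would argue the contrapositive. If $i\notin\Unknown^{J,mid}$, then every honest $M'$ has $i\in\Known_{M'}^{J,mid}$ and broadcasts $(i,x_i)$ in Procedure $\KTAList$ of phase $J$; by Lemma \ref{lem:correct-bits-J} all such messages carry the same correct value. Thus $M$ counts at least $\goodfrac k$ matching $(i,b)$ messages, and since $\byzfrac<1/3$ we have $\goodfrac k\ge 2\byzfrac k+1$, so $M$ inserts $i$ into $\KTA_M^J$, contradicting $i\in\NKTA_M^J$. For the leftmost inclusion $\INDEX_M^{J+1}\subseteq\NKTA_M^J$, recall that $\INDEX_M^{J+1}$ is built in $\CollectRequests$ of phase $J$ from $\Unknown_M$ and from the requests of non-blacklisted machines. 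If $i\in\KTA_M^J$ then $i\in\Known_M^{J+1}$, so $i\notin\Unknown_M$ at this moment; moreover the explicit step adding $R_U(i)$ to $\BYZ$ whenever $i\in\KTA_M$ eliminates every remaining contributor of $i$ to the union. The only worry is that this step might eject an honest machine, but that is precisely forbidden by part (iii) of the invariant above.

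The main obstacle I expect is the bookkeeping: ensuring that the ``no honest machine is ever blacklisted'' invariant is maintained across the interleaving of $\CommitteeWork$, $\Gossip$, $\KTAList$, and $\CollectRequests$, and in particular that the version of each set variable I appeal to (e.g.\ $\Known_M$ at the exact point $\CollectRequests$ sends $\Unknown_M$) is the one used in the statement. A secondary subtlety is the arithmetic $\goodfrac k\ge 2\byzfrac k+1$, which follows from $\byzfrac<1/3$ for all $k\ge 1/(1-3\byzfrac)$; this matches the regime of interest and is implicitly assumed throughout the analysis.
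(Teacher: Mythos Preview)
Your proposal is correct and follows essentially the same decomposition as the paper: each inclusion is handled separately, with the third being monotonicity and the others relying on Lemma~\ref{lem:CK bits are KTA} and Lemma~\ref{lem:correct-bits-J}. Two minor differences: for $\NKTA_M^J\subseteq\Unknown^{J,mid}$ the paper argues the direct direction via a slightly more involved count (bounding $\varphi_0(i)+\varphi_1(i)\le(2+\delta)\byzfrac k$ and comparing against the number of non-reporting Byzantine machines), whereas your contrapositive is cleaner; and your explicit ``no honest machine is ever blacklisted'' invariant is something the paper establishes only in separate lemmas \emph{after} Lemma~\ref{lem:structure}, so your treatment is in fact more self-contained.
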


\begin{proof}
Consider a bit index $i \notin \NKTA_M^J$. Then $x_i$ is marked known-to-all by $M$ in Procedure $\KTAList$
Consequently, $M$ ignores $x_i$ in phase $J$ even if it receives it in some request message in Procedure $\CollectRequests$. Hence $i \notin \INDEX_M^{J+1}$. The first containment follows

Consider a bit index $i \in \NKTA_M^{J}$.
Then $x_i$ is not listed as known-to-all in $M$, i.e., $i\notin\KTA_M$, so $M$ had $\varphi_0(i) \le 2\byzfrac k$ and $\varphi_1(i) \le 2\byzfrac k$.

Let $b = x_i$ i.e the correct value of $x_i$ and let $0 \le \delta \le 1$ be a the fraction of faulty machines that reported knowing $i$.
Since the execution is $J$-clean,by Lemma \ref{lem:correct-bits-J}, we know that $\varphi_{1-b}(i) \le \delta \byzfrac k $.
Therefore $\varphi_0(i) + \varphi_1(i) \le (2+\delta)\byzfrac k$.
Hence the number of machines that informed $M$ that they do not know $x_i$
satisfies $k - (\varphi_0(i) + \varphi_1(i)) \ge (1-(2+\delta)\byzfrac)k > (1-\delta)\byzfrac k$,
where the second inequality follows since $\byzfrac < 1/3$. 

Hence, there is at least one honest machine $M'$ that did not send $x_i$
as part of its $\Known_{M'}^{J,mid}$, so 
$i\in \Unknown_{M'}^{J,mid}$,
and hence
$i\in \Unknown^{J,mid}$.
The second containment follows.

The next containment follow from the fact that for an honest machine $M$ , $\Unknown_M$ is monotone decreasing in time.

Consider an index $i \in \Unknown^{J}$. Then some honest $M' \in \cH$ has $i \in \Unknown_{M'}^{J}$. This has two implications when $J \ge 1$.
First, $M'$ will send a request to learn $i$ in Procedure $\CollectRequests$ of phase $J-1$. Second, by Lemma \ref{lem:CK bits are KTA} $i \notin \KTA_M^{J-1}$(otherwise $i \in \Known_{M'}^J$). Hence $M$ will respect the request by $M'$, and add $i$ to $I_M^J$.
When $J=0$, $\Unknown^J = \{1,\dots, n\} = \Unknown_M^J = \INDEX_M^J$.
The fourth containment follows.
\end{proof}
We remark that if $x_i$ is known, hence $\cC_i$ is not necessary, then the
inviting machine is Byzantine, so it may invite only a few honest machines (or none), hence the constructed $\cC_i$ is not guaranteed to be $\rho$-representative,
but this will not hurt any honest machine,
since in this case the honest machines already know $x_i$
and will not listen to the committee.)

Define the \emph{core of \TwoCK} after phase $J$ as follows.
For every index $i$, let $num_{DV}^J(i)$ denote the number of honest machines $M$
that comm-verified $i$ and added it to $DV_M$ in Procedure $\CommitteeWork$ of phase $J$. Then 
$$\CORE^J ~=~ \{ i \mid  num_{DV}^J(i) \ge \byzfrac k+1 \}.$$
The name is justified by the following lemma.

\begin{lemma}
 \label{obs:Core is CK}
For every $i\in\CORE^J$, $i \in \Known_M^{J,mid}$ and $i \in \KTA_M^{J}$, for every honest machine $M$ 
\end{lemma}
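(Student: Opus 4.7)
The plan is to chain two gossip-style amplification arguments: first show that $\CORE^J$ bits become known to all honest machines by the end of $\Gossip$ of phase $J$, then show that this universal knowledge suffices, under the assumption $\byzfrac<1/3$, for every honest machine to accumulate enough $\KTA$ evidence during $\KTAList$.

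For the first containment, I would start from the definition: if $i\in\CORE^J$, then at least $\byzfrac k+1$ honest machines comm-verified $x_i$ during Procedure $\CommitteeWork$ of phase $J$, so they all placed $i$ into $\Known_M$ with some value $\res[i]$. Since the execution is $J$-clean, Lemma~\ref{lem:correct-bits-J} guarantees that every one of these honest comm-verifications produced the correct bit $b=x_i$. Hence, when $\Gossip$ begins in phase $J$, every honest machine $M'$ receives at least $\byzfrac k+1$ messages of the form $(i,b)$ from honest senders, making $\varphi_b(i)\ge\byzfrac k+1$ in $M'$'s count, which triggers $i\in GV_{M'}$ and thus $i\in\Known_{M'}^{J,mid}$.

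For the second containment, I observe that once Procedure $\Gossip$ of phase $J$ ends, every honest machine has $i\in\Known_{M'}$ with value $b$, so in Procedure $\KTAList$ every honest machine sends $(i,b)$ to everyone. An arbitrary honest $M$ therefore counts $\varphi_b(i)\ge\goodfrac k=(1-\byzfrac)k$. The assumption $\byzfrac<1/3$ gives $(1-\byzfrac)k>2\byzfrac k$, and since the number of honest machines is integral (and we can assume without loss of generality it is at least $2\byzfrac k+1$ because the strict inequality combined with $\byzfrac k$ integral yields this margin), we get $\varphi_b(i)\ge 2\byzfrac k+1$. By the code of $\KTAList$, $M$ then inserts $i$ into $\KTA_M^J$.

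The only subtle point, and the one that needs care, is the threshold comparison in the second step: the $2\byzfrac k+1$ cutoff for $\KTA$ inclusion is precisely what forces the requirement $\byzfrac<1/3$ (which is already part of the model in this subsection) and explains why $\KTAList$ is run after $\Gossip$ rather than directly after $\CommitteeWork$ — only the post-$\Gossip$ universality of $i\in\Known_{M'}$ produces the $\goodfrac k$ honest reporters that can exceed the stricter $\KTA$ threshold. Aside from verifying this arithmetic and the clean-execution appeals, the argument is a direct two-step unfolding of the procedures.
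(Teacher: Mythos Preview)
Your proposal is correct and follows essentially the same two-step argument as the paper's proof: first, the $\byzfrac k+1$ honest comm-verifiers push $i$ into every honest $\Known_M$ during $\Gossip$; then all $\goodfrac k$ honest reporters in $\KTAList$ exceed the $2\byzfrac k+1$ threshold because $\byzfrac<1/3$. You are in fact more careful than the paper on two points it glosses over: you explicitly invoke $J$-cleanliness (via Lemma~\ref{lem:correct-bits-J}) to ensure the $\byzfrac k+1$ honest verifiers all send the \emph{same} bit $b$ (so that $\varphi_b(i)\ge\byzfrac k+1$ rather than the count being split), and you explicitly check the arithmetic $(1-\byzfrac)k>2\byzfrac k$ that justifies the $\KTA$ inclusion.
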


\begin{proof}
Consider an index $i\in\CORE^J$. By definition, $x_i$ was comm-verified by
at least $\byzfrac k+1$ honest machines during Procedure $\CommitteeWork$ of phase $J$. Each of these machines will send $i$ (along with its value) to every other machine during the $\Gossip$ round. Subsequently, at the end of this round, 
$i\in\Known_M^{J, mid}$ for every honest $M$.
Consequently, in procedure $\KTAList$ of phase $J$, \emph{all} honest machines will report knowing $x_i$ , so every honest machine $M$ will add it to $\KTA_M^{J}$
\end{proof}

\begin{lemma}
\label{obs:ID-opt-3}
In a $J$-clean execution, for every $J\ge 0$ and every honest $M$,
\\
(1) $|\Unknown^{J, mid}|\le \alpha^{J+1} n$, 
\\
(2) $\alln_M^J \le \alpha^{J} n$,
\\
(3) $|\Unknown^J|\le \alpha^J n$.
\\
(4) $|\Unknown_M^J|\le \alpha^J n$.
\end{lemma}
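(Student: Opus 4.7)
I plan to prove the lemma by induction on $J$. The base case $J=0$ is immediate: all of $\Unknown^0$, $\Unknown_M^0$, and $\INDEX_M^0$ equal $\{1,\ldots,n\}$, matching $\alpha^0 n$. For the inductive step, once claim (1) is known at phase $J$, claims (2), (3), and (4) at phase $J+1$ drop out by containment. The set $\Unknown_M$ only shrinks during $\KTAList$, so $\Unknown^{J+1} \subseteq \Unknown^{J,mid}$ and hence $|\Unknown_M^{J+1}| \le |\Unknown^{J+1}| \le |\Unknown^{J,mid}| \le \alpha^{J+1} n$, settling (3) and (4). Lemma~\ref{lem:structure} gives $\INDEX_M^{J+1} \subseteq \NKTA_M^J \subseteq \Unknown^{J,mid}$, so $\alln_M^{J+1} \le \alpha^{J+1} n$, settling (2).

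The substantive content is (1). I would prove it by a double-counting argument combining a structural fact about $\Gossip$ with a Byzantine-budget bound on comm-verification failures. For the structural fact, let $K(i)$ denote the number of honest machines that know $x_i$ at the end of $\CommitteeWork$ (through prior knowledge, cloud-verification, or comm-verification). Every such machine broadcasts $(i,b)$ with $b=x_i$ in $\Gossip$, so every honest $M$ observes $\varphi_b(i) \ge K(i)$. If $K(i) \ge \byzfrac k + 1$, every honest $M$ gossip-verifies $i$ at its true value, placing $i$ in $\Known_M^{J,mid}$ and excluding $i$ from $\Unknown^{J,mid}$. Hence any $i \in \Unknown^{J,mid}$ satisfies $K(i) \le \byzfrac k$, leaving at least $\goodfrac k - \byzfrac k = (1-2\byzfrac)k$ honest machines with $i$ still unknown going into $\Gossip$; each such $M$ has $i \in \Unknown_M^J$, lies outside $\cC_i$ (else it would have cloud-verified), and failed to comm-verify $i$.

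For the Byzantine-budget bound, a $J$-clean execution guarantees by Observation~\ref{lem:rho-rep reduced comm} that $\psi_b(i) \ge \rho$ at every honest $M$, so comm-verify failure forces $\psi_{1-b}(i) \ge \rho$ and hence requires at least $\rho$ Byzantines in the reduced committee $\cC_i^M$. The $\Wmax$-based blacklisting inside $\CommitteeWork$ caps $\sum_i |\cC_i^M \cap \BYZ| \le \byzfrac k \Wmax$ for every honest $M$, so at most $\byzfrac k \Wmax / \rho$ indices can be comm-verify failures at any single honest $M$. Summing over the at most $\goodfrac k$ honest machines and combining with the structural step yields $(1-2\byzfrac)k \cdot |\Unknown^{J,mid}| \le \goodfrac k \cdot \byzfrac k \cdot \Wmax/\rho$. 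Substituting the algorithm's values $\Wmax = (1+\epsilon) c \log n \cdot n/k$ and $\rho = (1-\epsilon)\PARAM \log n / \alpha^J$, together with the definition $\alpha = (1+\epsilon)\byzfrac / ((1-\epsilon)(1-2\byzfrac))$, collapses the right-hand side to exactly $\alpha^{J+1} n$. The main obstacle in this plan is the joint accounting in the structural step: the algorithm offers several independent routes by which an honest machine can learn $x_i$ during $\CommitteeWork$, and one must track them all together, show that insufficient learning makes $\Gossip$ unable to rescue $i$, and then channel the resulting ``ignorance gap'' through the Byzantine budget via the comm-verify failure condition; the prescribed shrinkage factor $\alpha$ is precisely what makes the constants line up.
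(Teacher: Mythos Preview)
Your proposal is correct and follows essentially the same approach as the paper: the paper also derives (2)--(4) from (1) via the containments of Lemma~\ref{lem:structure}, and proves (1) by the same double-counting argument, bounding total Byzantine infiltration by $\goodfrac k\cdot\byzfrac k\cdot\Wmax$ and observing that each $i\in\Unknown^{J,mid}$ forces at least $(1-2\byzfrac)k$ honest machines to suffer $\ge\rho$ Byzantine infiltrations in $\cC_i^M$. The only cosmetic differences are that the paper packages your $K(i)$-argument via the auxiliary set $\CORE^J$ and Lemma~\ref{obs:Core is CK}, and derives (2)--(4) directly from (1) at phase $J-1$ rather than casting it as an explicit induction; one small point you leave implicit (but the paper makes explicit) is that $i\in\Unknown^{J,mid}\subseteq\Unknown^J\subseteq\INDEX_M^J$, which is needed to ensure every honest $M$ actually collects committee messages for $i$ and hence genuinely attempts comm-verification.
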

\begin{proof}
We first prove part (1), 
by considering iteration $J\ge 0$ and bounding $|\Unknown^{J,end}|$ at its end.

The purpose of blacklisting Byzantine machines that claim to participate
in too many committees, via defining reduced committees, is to curb
the influence of the Byzantine machines on votes,
by bounding the extent of \emph{Byzantine infiltration} into committees.
For every honest machine $M$ and Byzantine machine $M'$,
denote by $\BI_M(M')$ the number of reduced committees $\cC_i^M$ that $M'$
claimed to belong to. (Note that for machines $M'$ that were not blacklisted,
this value is the same as $\Work(M')$.) 
Denote the total number of Byzantine infiltrations
into reduced committees of $M$ by $\BI_M = \sum_{M'\in\BYZ} \BI_M(M')$.
Denote the total number of Byzantine infiltrations into reduced committees
of honest machines by $\BI = \sum_{M\in\cH} \BI_M$.
By the way $M$ constructs the reduced committees in Procedure $\CommitteeWork$,
every machine appears in at most $\Wmax$ reduced committees of $M$, hence
$\BI_M \le \byzfrac k \cdot \Wmax$, and therefore
$$\BI ~\le~ \goodfrac k\cdot\BI_M ~\le~ \goodfrac k\cdot \byzfrac k \cdot \Wmax
~=~ (1+\epsilon)c\byzfrac\goodfrac\cdot kn\log n.$$
Consider a bit $x_i \in \Unknown^J$. By the fourth containment of Lemma
\ref{lem:structure}, $x_i \in \INDEX_M^{J}$ for every honest machine $M$.
Hence every honest $M$ will set up a committee $\cC_i$, which will be
$\rho$-representative since the execution is $J$-clean.

A necessary condition for $x_i$ to remain in $\Unknown^{J, mid}$ is that at most
$\byzfrac k$ honest machines $M$ add it to their set $DV_M$ of cloud- or
comm-verified bits in Procedure $\CommitteeWork$ of phase $J$. 
This is because otherwise,$i\in\CORE^{J}$ and by lemma \ref{obs:Core is CK},
 it will belong to $\Known_M^{J,mid}$ for every honest $M$.

Hence, in order to keep $i$ in $\Unknown^{J, mid}$,
the adversary must prevent at least $(1-2\byzfrac)k$ honest machines from
cloud- or comm-verifying $x_i$.
To achieve that, at least $\rho$ Byzantine machines
must infiltrate the reduced committee $\cC_i^M$ for at least $(1-2\byzfrac)k$
honest machines. This incurs at least $(1-2\byzfrac)k\rho$ work.
Hence, the number of bits $x_i$ for which this can happen is at most 
$$|\Unknown^{J,mid}| ~\le~ \frac{\BI}{(1-2\byzfrac)k\rho} 
~\le~ \frac{(1+\epsilon)c\byzfrac\goodfrac \cdot kn\log n}{(1-2\byzfrac)k\cdot
(1-\epsilon)\PARAM\log n/\alpha^{J}}
~=~ \frac{(1+\epsilon)\byzfrac\PARAM \cdot \alpha^{J} n}{(1-2\byzfrac)(1-\epsilon)Z}
~=~ \alpha\cdot\alpha^{J} n~,$$
where the last equality is by the definition of $\alpha$.
This yields Part (1). 

Part (2) follows from part (1) upon noting that
$\alln_M^J = |\INDEX_M^J| \le |\Unknown^{J-1,mid}|$ by Lemma \ref{lem:structure}.
Part (3) follow from part (2) by Lemma \ref{lem:structure} again.
Part (4) follows from part (3), noting that $\Unknown_M^J \subseteq \Unknown^J$.
\end{proof}
\inline High probability of clean executions:
We now argue that the probability for the occurrence of any of the bad events
is low.


\begin{lemma}
\label{lem:bad events type 1} 
For any $J\ge 0$, if the execution $\xi$ is $(J-1)$-clean,
and the parameters $\epsilon$ and $\PARAM$ satisfy
\begin{equation}
\label{eq:relate PARAM-epsilon-EV1}
\epsilon^2\PARAM/2 \ge 2+\lambda
\end{equation}
for some constant $\lambda>0$,
then the probability that any of the bad events $\event_1(J,i)$
occurred in $\xi$ is at most 
$O(\frac{1}{n^{1+\lambda}})$.
\end{lemma}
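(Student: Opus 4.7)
The plan is to fix an unknown bit index $i \in \Unknown^J$, analyze the committee $\cC_i$ via a standard Chernoff argument over the honest machines' independent coin tosses, and then take a union bound over the at most $n$ such indices.

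First I would use Lemma~\ref{lem:structure} (applied at the end of phase $J-1$, which is valid because $\xi$ is $(J-1)$-clean) to conclude that $\Unknown^J \subseteq \INDEX_M^J$ for every honest $M$. For $J=0$ this is immediate since $\INDEX_M^0 = \{1,\ldots,n\}$. Thus every honest machine reaches the sampling step for the committee $\cC_i$ and joins it independently with probability $p = c\log n / (\alpha^J k)$, using its private coins. Let $X_i$ be the number of honest members of $\cC_i$. Then $X_i$ stochastically dominates a $\mathrm{Bin}(\goodfrac k, p)$ random variable, with
\begin{equation*}
\mu ~:=~ \goodfrac k \cdot p ~=~ \goodfrac k \cdot \frac{c\log n}{\alpha^J k} ~=~ \frac{\PARAM \log n}{\alpha^J},
\end{equation*}
since $c = \PARAM/\goodfrac$. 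Note the target threshold satisfies $\rho = (1-\epsilon)\mu$.

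Next I would apply the multiplicative Chernoff lower tail: $\Pr[X_i < (1-\epsilon)\mu] \le \exp(-\epsilon^2 \mu / 2)$. Plugging in the value of $\mu$ and using the hypothesis $\epsilon^2 \PARAM / 2 \ge 2+\lambda$ together with $\alpha < 1$ (so $1/\alpha^J \ge 1$), we get
\begin{equation*}
\Pr[\event_1(J,i)] ~\le~ \exp\!\left(-\frac{\epsilon^2 \PARAM \log n}{2\alpha^J}\right) ~\le~ \exp\bigl(-(2+\lambda) \log n\bigr) ~=~ n^{-(2+\lambda)}.
\end{equation*}
Finally, a union bound over the at most $n$ indices $i \in \Unknown^J$ yields $\Pr[\bigcup_i \event_1(J,i)] = O(n^{-(1+\lambda)})$, as claimed. (For indices $i \notin \Unknown^J$ the event is vacuous by the lemma statement, which restricts attention to committees chosen for unknown bits.)

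The only subtle point, and the place where $(J-1)$-cleanness is actually used, is the reduction to the i.i.d.\ Bernoulli setting over the honest machines: we must know that every honest $M$ actually runs the coin flip for $\cC_i$, which requires $i \in \INDEX_M^J$ for all $M \in \cH$. This follows from the last containment of Lemma~\ref{lem:structure}, which is proved under the assumption that the execution is $(J-1)$-clean. Apart from this, the argument is a routine Chernoff-plus-union-bound calculation, so I do not expect any serious obstacle.
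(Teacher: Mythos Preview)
Your proposal is correct and follows essentially the same route as the paper: use the last containment of Lemma~\ref{lem:structure} to ensure every honest machine flips the coin for $\cC_i$ when $i\in\Unknown^J$, bound the number of honest committee members below via the multiplicative Chernoff lower tail using $\mu=\PARAM\log n/\alpha^J$ and $\rho=(1-\epsilon)\mu$, and finish with a union bound over the at most $n$ indices. Your remark about where $(J-1)$-cleanness enters is exactly the point the paper relies on implicitly.
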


\begin{proof}
We first show that for every bit $x_i$, $\Prob[\event_1(J,i)] \le 1/n^{2+\lambda}$.

Consider an index $i\in \Unknown^J$. 
By Lemma \ref{lem:structure}, $\Unknown^J \subseteq \INDEX_M^{J}$, and hence $i \in \INDEX_M^{J}$, 

Therefore, all honest machines join the committee $\cC_i$
with probability $p$. Hence, denoting the number of honest machines in $\cC_i$
by $X$, 
$$\Exp[X] ~=~ p|\cH| ~\ge~ p\cdot \goodfrac k ~=~ \frac{\goodfrac c\log n}{\alpha^J} = \frac{\PARAM\log n}{\alpha^J}.$$

\begin{equation}
\label{eq:chernoff1}
\Prob[\event_1(J,i)] ~=~ \Prob[X<\rho] ~=~ \Prob[X\le (1-\epsilon) \cdot \PARAM\log n/\alpha^J]
\le \Prob[X\le (1-\epsilon) \Exp[X]]
\end{equation}
By Chernoff's bound,
\begin{equation}
    \Prob[X\le (1-\epsilon) \Exp[X]] \le \exp \left( -\frac{\epsilon^2 \Exp[X]}{2} \right)
    \le \exp\left(-\frac{\epsilon^2}{2} \cdot \frac{\PARAM\log n}{\alpha^J}\right),
\end{equation}
and by Eq. \eqref{eq:relate PARAM-epsilon-EV1} it follows that 
\begin{eqnarray*}
\Prob[\event_1(J,i)] 
~\le~ \exp\left(-(2+\lambda)\cdot\frac{\log n}{\alpha^J}\right) ~\le~ n^{-2-\lambda}.
\end{eqnarray*}
By the union bound, the probability that \emph{any} bad event of type $\event_1$
occurred in the execution is at most $O(\frac{1}{n^{1+\lambda}})$.
\end{proof}

\begin{lemma}
\label{lem:bad events type 2} 
For any $J\ge 0$, if the execution $\xi$ is $(J-1)$-clean, and the parameter $\PARAM$ satisfies
\begin{equation}
\label{eq:relate PARAM-EV2}
\frac{\epsilon^2}{2+\epsilon}\cdot \PARAM \ge 2+\lambda
\end{equation}
for some constant $\lambda>0$,
then the probability that any of the bad events $\event_2(J,M)$
occurred in $\xi$ is at most $O(\frac{1}{n^{1+\lambda}})$.
\end{lemma}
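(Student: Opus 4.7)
The plan is to bound $\Prob[\event_2(J,M)]$ for each fixed honest machine $M$ via a Chernoff upper-tail argument, and then union-bound over the at most $k \le n$ honest machines. By construction, $\hn_M^J$ is a sum of $\alln_M^J$ independent Bernoulli$(p)$ random variables with $p = c\log n / (\alpha^J k)$, where the coins are those tossed by $M$ in the loop of Procedure $\CommitteeWork$ during phase $J$. Assuming the execution is $(J-1)$-clean, Lemma~\ref{lem:structure} yields $\INDEX_M^J \subseteq \Unknown^{J-1,mid}$ for $J \ge 1$, and combined with part~(1) of Lemma~\ref{obs:ID-opt-3} applied to phase $J-1$ this gives the deterministic bound $\alln_M^J \le \alpha^J n$ (for $J=0$, $\alln_M^0 = n$ directly from initialization). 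Hence $\Exp[\hn_M^J] \le p \cdot \alpha^J n = cn\log n / k =: \mu^*$, and $\Wmax = (1+\epsilon)\mu^*$.

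Since the true (conditional) mean of $\hn_M^J$ may be strictly smaller than $\mu^*$, I would invoke the standard stochastic-domination trick: pad $\hn_M^J$ with additional fresh independent Bernoulli$(p)$ trials to obtain $\hn_M'$, a sum of exactly $\alpha^J n$ i.i.d.\ Bernoulli$(p)$ variables, so that $\hn_M^J \le \hn_M'$ pointwise while $\Exp[\hn_M'] = \mu^*$ exactly. The multiplicative Chernoff upper-tail bound then gives
\[
\Prob[\event_2(J,M)] ~\le~ \Prob[\hn_M' \ge (1+\epsilon)\mu^*] ~\le~ \exp\!\left(-\frac{\epsilon^2 \mu^*}{2+\epsilon}\right) ~=~ \exp\!\left(-\frac{\epsilon^2 c}{2+\epsilon}\cdot\frac{n\log n}{k}\right).
\]

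To finish, I would use $c = \PARAM/\goodfrac \ge \PARAM$ (as $\goodfrac \le 1$) together with hypothesis~\eqref{eq:relate PARAM-EV2} to conclude $\frac{\epsilon^2 c}{2+\epsilon} \ge 2+\lambda$. Because $n \ge k$, the exponent above is at least $(2+\lambda)\log n$, giving $\Prob[\event_2(J,M)] \le n^{-(2+\lambda)}$. A final union bound over the at most $k \le n$ honest machines yields the desired $O(n^{-(1+\lambda)})$ bound. The only mild subtlety (and the main thing that needs care) is that $\alln_M^J$ is itself a history-dependent random quantity; this is handled by conditioning on the execution history through the end of phase $J-1$, observing that $(J-1)$-cleanness is measurable with respect to that history and deterministically forces $\alln_M^J \le \alpha^J n$, while the phase-$J$ committee-membership coins of $M$ are independent of that history. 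The Chernoff bound obtained is uniform in the conditioning history, so integrating out gives the unconditional bound.
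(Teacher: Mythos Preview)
Your proof is correct and follows essentially the same strategy as the paper: bound $\alln_M^J \le \alpha^J n$ via $(J-1)$-cleanness, apply a multiplicative Chernoff upper-tail bound to $\hn_M^J$, use $c \ge \PARAM$ and $n \ge k$ to reach $n^{-(2+\lambda)}$, and union-bound over machines. The only minor technical difference is that the paper, instead of your padding/stochastic-domination trick, writes the true mean as $X\mu^*$ for some $X\in(0,1]$, applies Chernoff with $\delta=(1+\epsilon)/X-1$, and then verifies that the resulting function $f(X)=(1+\epsilon-X)^2/(X+1+\epsilon)$ is minimized at $X=1$ with value $\epsilon^2/(2+\epsilon)$; your padding argument reaches the same bound more directly.
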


\begin{proof}
We first show that for every honest machine $M$,
$\Prob[\event_2(J,M)] \le 1/n^{8/3}$.
The bad event $\event_2(J,M)$ occurs if $\hn_M^J >\Wmax$ in phase $J$.
In Procedure $\CommitteeWork$, $M$ tries (randomly) to join the committee $\cC_i$ for every
$x_i\in \INDEX_M^J$, hence
$\Exp[\hn_M^J] = p \alln_M^J$.
Applying Lemma \ref{obs:ID-opt-3}(2), we get that
$$
\Exp[\hn_M^J] ~\le~ p\alpha^{J} n
$$
We introduce a variable $X \in (0, 1]$ such that
$$\Exp[\hn_M^J] ~=~ X\cdot p\alpha^{J} n ~=~ X \cdot c \log n \cdot \frac{n}{k}
~=~ \frac{X\cdot \Wmax}{1+\epsilon}.$$
We can see now that
\begin{eqnarray*}
\Prob[\event_2(J,M)] &=& \Prob[\hn_M^J >\Wmax]
~\le~ \Prob\left[\hn_M^J >\frac{1+\epsilon}{X} \cdot \Exp[\hn_M^J]\right]
\end{eqnarray*}
Using the variation of Chernoff's bound that says that, for $\delta > 0$,
\begin{eqnarray*}
\Prob\left[A > (1 + \delta) \Exp[A]\right] \le \exp \left (-\frac{\delta^2}{2+\delta} \cdot \Exp[A] \right)
\end{eqnarray*}
and setting $\delta = \frac{1+\epsilon}{X} -1$, we get
\begin{eqnarray*}
\Prob[\event_2(J,M)] &\le&
\exp \left (-\frac{(\frac{1+\epsilon-X}{X})^2}{2+\frac{1+\epsilon}{X} -1} \cdot \Exp[\hn_M^J] \right)
~=~ \exp \left (-\frac{(1+\epsilon-X)^2}{X^2(\frac{1+\epsilon}{X} +1)} \cdot X c\log n \cdot \frac{n}{k} \right)
\\ &=& 
\exp \left (-\frac{(1+\epsilon-X)^2}{X+1+\epsilon} \cdot c\log n \cdot \frac{n}{k} \right)
~=~ \exp \left (-f(X) \cdot c\log n \cdot \frac{n}{k} \right),
\end{eqnarray*}
where $f(x) = \frac{(1+\epsilon-x)^2}{x+1+\epsilon}$. 
It is easily verifiable that $f(x)$ is monotone decreasing in the range $[0,1]$, attaining a minimum value of $\frac{\epsilon^2}{2+\epsilon}$, i.e, $f(x) \ge \epsilon^2/2+\epsilon$ for every $x \in [0,1]$. Therefore, we get
\begin{eqnarray*}
\Prob\left[\hn_M^J >\frac{1+\epsilon}{X} \cdot \Exp[\hn_M^J]\right]
&\le& \exp\left(-\frac{\epsilon^2}{2+\epsilon}\cdot c\log n \cdot \frac{n}{k}\right)
~\le~ n^{-c \epsilon^2 / (2+\epsilon)}~\le~ n^{-\PARAM \epsilon^2 / (2+\epsilon)}
~\le~ \frac{1}{n^{2+\lambda}}~,
\end{eqnarray*}
where the last inequality follows by Eq. \eqref{eq:relate PARAM-EV2}.
The lemma now follows by the union bound.
\end{proof}

Call an execution \emph{clean} if none of the bad events of type $\event_1$ or $\event_2$ occurred in it. The above three lemmas yield the following:

\begin{corollary}
\label{cor:clean whp}
Consider an execution $\xi$. 
If the parameters $\epsilon$ and $\PARAM$ satisfy
\begin{equation}
\label{eq:relate PARAM-epsilon-EV1+2}
\PARAM\cdot\min\{\epsilon^2/2~,~\epsilon^2/(2+\epsilon)\} ~\ge~ 2+\lambda 
\end{equation}
for some constant $\lambda>0$,
then the probability that $\xi$ is clean is at least 
$1-O(\frac{\log n}{n^{1+\lambda}})$.
\end{corollary}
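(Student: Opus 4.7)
The plan is a routine union-bound argument across the phases of the algorithm, chaining the per-phase bounds from Lemmas~\ref{lem:bad events type 1} and~\ref{lem:bad events type 2}.

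First I would verify that the hypothesis \eqref{eq:relate PARAM-epsilon-EV1+2} of the corollary implies both \eqref{eq:relate PARAM-epsilon-EV1} and \eqref{eq:relate PARAM-EV2}, so that both lemmas are applicable for every phase $J$ (since the minimum of the two fractions in \eqref{eq:relate PARAM-epsilon-EV1+2} is at least $2+\lambda/\PARAM$, each individual bound follows).

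Next, for each $J$ let $A_J$ denote the event that at least one bad event of type $\event_1(J,\cdot)$ or $\event_2(J,\cdot)$ occurs in phase $J$. An execution is clean iff none of $A_0,\ldots,A_{J_0-1}$ occur, so I would estimate
$$\Prob\!\left[\bigcup_{J=0}^{J_0-1} A_J\right] \;\le\; \sum_{J=0}^{J_0-1} \Prob\!\left[A_J \;\Big|\; \bigcap_{j<J}\overline{A_j}\right],$$
which is the standard telescoping of the union of events via the chain rule. The conditioning event $\bigcap_{j<J}\overline{A_j}$ is precisely ``the execution is $(J-1)$-clean,'' matching the hypothesis of both lemmas. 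Each lemma already supplies a union bound over all indices $i$ (respectively, all honest machines $M$) within phase $J$, giving $\Prob[A_J\mid \text{$(J-1)$-clean}] = O(1/n^{1+\lambda})$ after combining the two types of bad events.

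Finally, since the algorithm runs for $J_0 = \lceil \log_{1/\alpha}(k/(c\log n))\rceil = O(\log n)$ phases, summing yields the bound $O(\log n/n^{1+\lambda})$ on $\Prob[\text{not clean}]$, which is the claim. The main subtlety (rather than obstacle) is making the conditioning explicit so that the lemmas' hypothesis of $(J-1)$-cleanness is satisfied at each term of the telescoped sum; once the decomposition above is written down, no genuine technical difficulty remains.
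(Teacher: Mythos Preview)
Your proposal is correct and follows essentially the same approach as the paper: both arguments chain the per-phase conditional bounds of Lemmas~\ref{lem:bad events type 1} and~\ref{lem:bad events type 2} over the $J_0=O(\log n)$ phases, conditioning at each phase on the execution being $(J-1)$-clean. The paper phrases this as an explicit induction on $J$ (showing $\Prob[\text{not $J$-clean}]\le O(J/n^{1+\lambda})$), while you phrase it as a telescoped union bound; the content is identical.
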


\begin{proof}
By induction on $J$, we prove that the probability that $\xi$ is $J$-clean 
is at least $1-\frac{2J}{n^{1+\lambda}}$. in other words, we show that
the probability that \emph{any} of the events $\event_1(j,i)$ or $\event_2(j,M)$
occurred in it for any $0\le j\le J$ is at most
$\frac{J}{n^{1+\lambda}}$.

For $J=0$, $\xi$ is vacuously $(J-1)$-clean, so the claim holds
by Lemmas \ref{lem:bad events type 1} and \ref {lem:bad events type 2}. 

Now suppose the claim holds for $J-1\ge 0$ and consider phase $J$.
By the inductive hypothesis, the overall probability for all the events $\event_1(j,i)$ and $\event_2(j,M)$ 
for any $0\le j\le J-1$ is at most 
$\frac{J-1}{n^{1+\lambda}}$.
So with probability at least $1-\frac{J-1}{n^{1+\lambda}}$, the execution is
$(J-1)$-clean. 
By Lemmas \ref{lem:bad events type 1} and \ref {lem:bad events type 2}, 
the overall probability for all the events $\event_1(J,i)$ and $\event_2(J,M)$ 
given that the execution is $(J-1)$-clean is at most
$\frac{1}{n^{1+\lambda}} \cdot (1-\frac{J-1}{n^{1+\lambda}}) \le \frac{1}{n^{1+\lambda}}$.
The claim now follows by the union bound.
\end{proof}

Hereafter we assume that the execution $\xi$ at hand is clean
(noting that the probability for this is at least $1-1/n$).

\inline Correctness:
Combining Obs. \ref{obs:all bits learned}, Lemma \ref{lem:correct-bits-J}
and Cor. \ref{cor:clean whp}, we get the following.

\begin{lemma}
\label{lem:correct-bits}
In a clean execution, every honest machine learns all the input bits
correctly.
\end{lemma}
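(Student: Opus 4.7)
The plan is to combine the three cited results in a straightforward case analysis. Since a clean execution is by definition $J$-clean for every $0 \le J \le J_0 - 1$, I would first note that Lemma~\ref{lem:correct-bits-J} applies uniformly throughout all main phases of the algorithm.

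First, I would fix an arbitrary honest machine $M$ and an arbitrary bit index $i \in \{1,\ldots,n\}$. By Observation~\ref{obs:all bits learned}, $M$ holds some value $\res[i]$ at the end of the execution, so it remains only to argue $\res[i] = x_i$. I would split into two cases based on when $M$ obtained this value.

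In the first case, $M$ learned $x_i$ during some main phase $J \in \{0,\ldots,J_0 - 1\}$ (via cloud-verification inside $\CommitteeWork$, comm-verification, or gossip-verification in $\Gossip$ or $\KTAList$). Since the execution is clean, it is in particular $(J_0-1)$-clean, and Lemma~\ref{lem:correct-bits-J} directly yields $\res[i] = x_i$. In the second case, $i$ remained in $\Unknown_M$ after all $J_0$ phases, so $M$ assigned $\res[i] \gets \CloudQuery(i)$ in the final loop of Algorithm~\ref{alg:idist-1 over 41}; since the cloud is trusted, the value returned equals $x_i$. Combining both cases over all $i$ and all honest $M$ proves the lemma.

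There is no real obstacle here: the proof is essentially a bookkeeping combination of the three preceding results. The only subtlety worth flagging explicitly is that Observation~\ref{obs:all bits learned} guarantees \emph{some} value is held for every index (so no $\res[i]$ is ever left unset), while Lemma~\ref{lem:correct-bits-J} handles the nontrivial correctness content for values obtained via the comm- and gossip-verification paths; the final cloud-query step handles any residue and is correct by assumption on the cloud.
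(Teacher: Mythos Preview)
Your proposal is correct and matches the paper's approach: the paper simply states that the lemma follows by combining Obs.~\ref{obs:all bits learned}, Lemma~\ref{lem:correct-bits-J}, and Cor.~\ref{cor:clean whp}, and your case analysis is exactly the right way to flesh that out. (Your omission of Cor.~\ref{cor:clean whp} is in fact appropriate, since the lemma is already conditioned on the execution being clean; the paper's citation of it is superfluous here.)
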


Let us next justify the blacklisting done in Procedures $\CollectRequests$  and $\CommitteeWork$.

\begin{lemma}
\label{lem:blacklisted-byz-step1} 
Consider Procedure $\CollectRequests$ performed by machine $M$ in phase $J$.
If $M$ blacklists $M'$ as Byzantine and ignores its request to query $x_i$,
then this decision is justified, namely, $M'$ is indeed Byzantine,
and $x_i$ is indeed known already to all honest machines.
\end{lemma}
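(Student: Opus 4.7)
The plan is to trace what triggers a blacklisting step in Procedure $\CollectRequests$ and argue that no honest machine could ever elicit such a step. In that procedure, $M$ adds $M'$ to $\BYZ$ (in the line guarded by ``$i\in\KTA_M$'') exactly when two conditions simultaneously hold at that point of the execution: (a) $i\in\KTA_M$, i.e., $M$ already marked $x_i$ as known-to-all at the end of the preceding $\KTAList$ of phase $J$; and (b) $M$ received from $M'$ a message reporting $i\in\Unknown_{M'}$. So the task reduces to showing that for any honest $M'$, condition (a) precludes condition (b), and that the first claim (about $x_i$) follows as a free bonus.

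First I would show the second claim, namely that $x_i$ is indeed known to every honest machine. This is exactly Lemma~\ref{lem:CK bits are KTA} applied to phase $J$: since $M$ is honest and $i\in\KTA_M^J$, every honest machine $M''$ satisfies $i\in\Known_{M''}^{J+1}$. In particular $x_i$ is correctly known to all honest machines by the start of the next phase.

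Next I would leverage the same fact to pin down $M'$ as Byzantine. The state $\Known_{M''}^{J+1}$ is, by definition, the value of $\Known_{M''}$ at the beginning of phase $J+1$, which equals its value at the end of phase $J$. Crucially, the only action between the end of $\KTAList$ and the end of phase $J$ is Procedure $\CollectRequests$ itself, and an inspection of its code shows that $\CollectRequests$ never modifies $\Known_{M''}$ or $\Unknown_{M''}$; it only updates $\BYZ$ and $\INDEX_{M''}$. Consequently, at the moment an honest $M'$ sends its message in $\CollectRequests$ of phase $J$, its set $\Unknown_{M'}$ is exactly $\Unknown_{M'}^{J+1}=\{1,\ldots,n\}\setminus\Known_{M'}^{J+1}$, which by the previous paragraph does not contain $i$. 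Hence an honest $M'$ cannot have $i\in\Unknown_{M'}$ in its report, so the $M'$ that $M$ observed must in fact be Byzantine.

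The only subtle point is the phase-synchronization just described: being careful that the $\Unknown_{M'}$ seen by $M$ in $\CollectRequests$ is the same set to which Lemma~\ref{lem:CK bits are KTA} speaks. Once that alignment is in place, both conclusions drop out immediately from Lemma~\ref{lem:CK bits are KTA}, so I do not anticipate any real obstacle beyond bookkeeping.
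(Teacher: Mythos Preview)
Your argument is correct and follows the same underlying idea as the paper's proof. The only difference is packaging: the paper reproves the content of Lemma~\ref{lem:CK bits are KTA} inline (re-deriving that $i\in\KTA_M^J$ forces at least $\byzfrac k+1$ honest senders, hence $i\in\Known_{M''}$ for every honest $M''$ after $\KTAList$), whereas you cleanly invoke that lemma and add an explicit check that $\CollectRequests$ does not touch $\Known_{M'}$/$\Unknown_{M'}$, so the reported set is indeed $\Unknown_{M'}^{J+1}$. Your extra bookkeeping on the phase synchronization is a nice touch that the paper leaves implicit.
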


\begin{proof}
Consider the scenario described in the lemma.
$M$ ignores $x_i$ because it is marked as known-to-all.
This marking indicates that $M$ counted at least $2\byzfrac k+1$ received messages
containing $(i,b)$
(for $b\in\{0,1\}$) in Procedure $\KTAList$ of phase $J$.
At least $\byzfrac k+1$ of these messages were sent by honest machines,
and those senders have sent the same message to all other machines.
Therefore, every honest machine $M''$ has received at least $\byzfrac k+1$
messages containing $(i,b)$,
and subsequently moved $x_i$ to $\Known_{M''}^{J+1}$.
Hence $x_i$ is already known for all honest machines, and the requesting
machine $M'$ must be Byzantine.
\end{proof}

\begin{lemma}
In a clean execution, if machine $M'$ is blacklisted as Byzantine
by machine $M$ in Procedure $\CommitteeWork$, then $M'$ is indeed Byzantine.
\end{lemma}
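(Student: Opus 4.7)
The plan is to trace the blacklisting condition back to the definition of a clean execution and argue by contraposition: if $M'$ were honest, it could not have triggered the blacklisting criterion in Procedure $\CommitteeWork$.

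First, I would recall the exact criterion used for blacklisting in $\CommitteeWork$: machine $M$ adds $M'$ to $\BYZ$ when the quantity $\Work(M')$, which counts the number of indices $i$ for which $M'$ reported belonging to $\cC_i$ during the current phase $J$, exceeds $\Wmax = (1+\epsilon)c\log n \cdot n/k$. The key observation is that for an honest machine, $\Work(M')$ is exactly $\hn_{M'}^J$, the number of committees $M'$ actually joined in phase $J$: an honest machine reports committee membership truthfully, claiming $\cC_i$ if and only if it was randomly selected into $\cC_i$.

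Next I would invoke the definition of a clean execution: by assumption, no bad event of type $\event_2$ occurred, i.e., for every honest machine $M''$ and every phase $j$, $\hn_{M''}^j \le \Wmax$. In particular, if $M'$ were honest in the current phase $J$, then $\Work(M') = \hn_{M'}^J \le \Wmax$, contradicting the fact that $M$ observed $\Work(M') > \Wmax$ and blacklisted $M'$. Hence $M'$ must be Byzantine.

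There is no real obstacle here; the statement is essentially a one-line consequence of the definition of $\event_2$ together with the semantics of an honest machine's reports in $\CommitteeWork$. The only subtlety to mention is that $\Work(M')$ as computed by $M$ is determined solely by the committee-membership messages $M'$ sent (which an honest $M'$ cannot inflate), so $M$'s local count faithfully equals $\hn_{M'}^J$ when $M'$ is honest, licensing the use of the $\event_2$ bound.
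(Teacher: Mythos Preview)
Your proposal is correct and follows essentially the same argument as the paper: a clean execution rules out bad events $\event_2$, so every honest machine $M''$ has $\hn_{M''}^J \le \Wmax$, and since an honest $M'$ reports committee membership truthfully, $\Work(M') \le \hn_{M'}^J \le \Wmax$, contradicting the blacklisting condition. One tiny nitpick: you say $\Work(M')$ is \emph{exactly} $\hn_{M'}^J$ for honest $M'$, but in fact $M$ only collects committee reports for indices $i \in \INDEX_M$, so strictly $\Work(M') \le \hn_{M'}^J$; this inequality is all you need, and your argument goes through unchanged.
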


\begin{proof}
Since the execution is clean, bad events $\event_2$ do not happen, 
so $\hn_M^J \le \Wmax$ for every honest machine $M$.
Hence, the fact that $M'$ claimed to more than $\Wmax$ committees implies that it is Byzantine.
\end{proof}

\inline Complexity:
We now discuss the complexities of the algorithm.

\begin{lemma}
In a clean execution, the 
query complexity of the algorithm
is $\Query=O(\frac{n}{k} \cdot \log^2 n) = {\tilde O}(n/k)$.
\end{lemma}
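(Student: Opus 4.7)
The plan is to account for all cloud queries an honest machine $M$ makes, which happen in exactly two places: (i) inside Procedure $\CommitteeWork$, each time $M$ joins an active committee $\cC_i$ (i.e., $i \in \Unknown_M^J$) and cloud-verifies $x_i$, and (ii) in the final step of Algorithm $\DownloadGossip$, where $M$ cloud-queries every index still in $\Unknown_M^{J_0}$.

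For the per-phase cost, note that the number of active committees $M$ joins in phase $J$ satisfies $\tn_M^J \le \hn_M^J$. Since the execution is clean, bad event $\event_2(J,M)$ did not occur, so $\hn_M^J \le \Wmax = \frac{(1+\epsilon)cn\log n}{k}$ for every phase $J$. Summing across the $J_0 = \lceil \log_{1/\alpha}(k/(c\log n))\rceil = O(\log n)$ phases yields a contribution of
\[
\sum_{J=0}^{J_0-1}\tn_M^J ~\le~ J_0 \cdot \Wmax ~=~ O\!\left(\frac{n\log^2 n}{k}\right).
\]

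For the terminal cloud queries, I would invoke Lemma~\ref{obs:ID-opt-3}(4), which gives $|\Unknown_M^{J_0}| \le \alpha^{J_0} n$ in a clean execution. By the definition of $J_0$ as $\lceil\log_{1/\alpha}(k/(c\log n))\rceil$, we get $\alpha^{J_0} n \le \frac{c n\log n}{k} = O(\frac{n\log n}{k})$, so the final round contributes at most $O(\frac{n\log n}{k})$ queries, which is absorbed into the previous bound.

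Adding the two contributions yields $\Query = O(\frac{n\log^2 n}{k}) = \tilde{O}(n/k)$, as claimed. The only subtle point is ensuring we are allowed to apply the clean-execution bounds uniformly across all phases for the committee-work step and across the residual unknown set for the final step; both follow directly because cleanness of $\xi$ rules out $\event_2$ in every phase (giving the $\Wmax$ bound) and makes Lemma~\ref{obs:ID-opt-3} applicable at $J=J_0$. No obstacle is expected; the argument is essentially a bookkeeping calculation once the clean-execution structural lemmas are in hand.
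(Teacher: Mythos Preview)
Your proof is correct and follows essentially the same approach as the paper: you split the queries into (i) per-phase committee-work queries, bounded via $\tn_M^J \le \hn_M^J \le \Wmax$ using the absence of $\event_2$, summed over $J_0 = O(\log n)$ phases, and (ii) the final cleanup queries, bounded by $|\Unknown_M^{J_0}| \le \alpha^{J_0} n = O(\frac{n\log n}{k})$ via Lemma~\ref{obs:ID-opt-3}(4). This matches the paper's argument step for step.
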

\begin{proof}
In phase $J$, an honest machine $M$ participates actively in $\tn_M^J$
committees, and hence performs this many queries.
In a clean execution, bad events $\event_2$ did not happen, 
so $\hn_M^J \le \Wmax$ for every honest machine $M$.
As $\tn_M^J \le \hn_M^J$, it follows that also $\tn_M^J \le 2c\log n \cdot n/k$ for every honest machine $M$.
Hence, over all $\log(n/k)$ phases, $M$ performs 
$O\left(\frac{n\log^2n}{\goodfrac k}\right) = {\tilde O}(n/k)$ queries.

In addition, by Lemma \ref{obs:ID-opt-3}(4), after performing phase 
$J_0-1=\lfloor\log_{1/\alpha}(\frac{k}{c\log n})\rfloor-1$, 
each honest machine is left with a set $\Unknown_M$ of at most ${\tilde O}(n/k)$
unknown bits,
which it then acquires using $\tilde{O}(n/k)$ 
additional queries.
\end{proof}

It remains to select suitable parameters $\epsilon$, $\PARAM$ and $\lambda$, satisfying the constraints of Eq. \eqref{eq:relate beta-epsilon-alpha} and \eqref{eq:relate PARAM-epsilon-EV1+2}.
Note that $g(\epsilon)=\frac{1-\epsilon}{3-\epsilon}$ is monotone decreasing for $0<\epsilon\le 1$, with $g(0)=1/3$, so satisfying Eq. \eqref{eq:relate beta-epsilon-alpha} when $\beta=1/3-\epsilon'$ for small $\epsilon'$ requires taking a sufficiently small $\epsilon$. Selecting a sufficiently large $\PARAM$ will satisfy Eq. \eqref{eq:relate PARAM-epsilon-EV1+2}.
For a concrete example, taking $\epsilon=1/10$ and $\PARAM=445$ will satisfy the constraints for $\beta \le 1/3-2/87$ and $\lambda=1/10$.

\begin{theorem}
\label{thm:ID harsh small beta + alg name}
In the 
$\Harsh(\byzfrac<1/3)$ model, Algorithm $\DownloadGossip$ w.h.p.
solves the $\IDp$ problem with 
$\Query = O\left(\frac{n\log^2 n}{\goodfrac k}\right)$, 
$\Time = O\left(n\log_{\frac{1}{\byzfrac}}\left(\frac{\goodfrac k}{\log n}\right)\right)$ and 
$\Message = O\left(nk^2\log_{\frac{1}{\byzfrac}}\left(\frac{\goodfrac k}{\log n}\right)\right)$. 
\end{theorem}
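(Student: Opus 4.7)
My plan is to assemble the pieces already established. First I would pin down the parameters: choose $\epsilon$ small enough to satisfy Eq.~\eqref{eq:relate beta-epsilon-alpha} (so $\alpha<1$), and choose $\PARAM=\Theta(1)$ large enough to satisfy Eq.~\eqref{eq:relate PARAM-epsilon-EV1+2} for some constant $\lambda>0$ (the concrete example $\epsilon=1/10$, $\PARAM=445$, $\lambda=1/10$ cited above works whenever $\byzfrac\le 1/3-2/87$). Corollary~\ref{cor:clean whp} then guarantees that the execution is clean with probability $1-O(\log n/n^{1+\lambda})$, which is the desired w.h.p.\ guarantee. Correctness is immediate from Obs.~\ref{obs:all bits learned} (every honest machine produces output on all $n$ bits, using the terminal cloud-query step to mop up whatever remains unknown after the $J_0$ phases) together with Lemma~\ref{lem:correct-bits} (all bits learned in a clean execution are correct). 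The query bound $\Query=O(n\log^2 n/(\goodfrac k))$ is exactly the content of the lemma directly preceding the theorem.

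For the time bound, I would count rounds phase by phase. The number of phases is $J_0=\lceil\log_{1/\alpha}(k/(c\log n))\rceil = O(\log_{1/\byzfrac}(\goodfrac k/\log n))$, since $1/\alpha=\Theta(1/\byzfrac)$ for the chosen $\epsilon$. Within a phase, Procedure~$\CommitteeWork$ iterates over $i=1,\ldots,n$ sequentially, one round per index; this sequentiality is essential to defeat the adaptive adversary, which would otherwise corrupt committee members between their selection round and their broadcast round. Procedures~$\Gossip$, $\KTAList$, and $\CollectRequests$ each require an honest machine to broadcast a list of up to $n$ entries to each other machine, costing $O(n)$ rounds under CONGEST. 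So each phase takes $O(n)$ rounds, yielding $O(n\cdot J_0)$ total; the final unrestricted cloud-query step for the residual $\tilde O(n/k)$ unknown bits contributes only $O(1)$ additional rounds.

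For the message bound, the dominant procedures per phase are $\Gossip$, $\KTAList$, and $\CollectRequests$: each of the $\goodfrac k$ honest machines sends an $n$-entry list to each of $k-1$ destinations, contributing $O(nk)$ messages per honest machine and $O(nk^2)$ in total per phase. $\CommitteeWork$ is subsumed: each honest $M$ joins at most $\Wmax=O(n\log n/k)$ committees per phase and sends $O(k)$ messages per committee, contributing $O(nk\log n)$ total per phase. Multiplying $O(nk^2)$ by $J_0$ yields the claimed $\Message=O(nk^2 J_0)$. The heavy lifting—showing that $|\Unknown^J|$ shrinks by a factor $\alpha$ per phase w.h.p.—was already carried out in Lemma~\ref{obs:ID-opt-3} together with the bad-event probability bounds, so the main obstacle in closing out the proof is merely careful accounting, with one thing to verify explicitly: that even in the very first phases (where $|\Unknown_M|\approx n$) the CONGEST cost of the gossip-style procedures is bounded by $O(n)$ rounds rather than something larger.
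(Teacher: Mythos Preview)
Your proposal is correct and follows essentially the same approach as the paper: the theorem is the culmination of the preceding analysis, so the proof amounts to assembling Corollary~\ref{cor:clean whp} (clean execution w.h.p.), Observation~\ref{obs:all bits learned} and Lemma~\ref{lem:correct-bits} (correctness), the query-complexity lemma, and the parameter selection paragraph. You in fact give more detail on the $\Time$ and $\Message$ accounting than the paper does, which only remarks after the theorem that these were not the focus and could be tightened; your per-phase $O(n)$-round and $O(nk^2)$-message counts are right. One small caveat: the claim $1/\alpha=\Theta(1/\byzfrac)$ is not uniform as $\byzfrac\to 1/3$ (since $\alpha\to 1$ there), but since the big-$O$ constants are allowed to depend on the fixed $\byzfrac<1/3$, the identification $J_0=O(\log_{1/\byzfrac}(\goodfrac k/\log n))$ is legitimate in this model.
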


We remark that our focus was on optimizing query complexity. The $\Time$ and $\Message$ complexities can be improved further. For example the current algorithm requires the machines to send the entire set of known bits in each iteration, but clearly it suffices to send the updates. 


\subsection{The $\ORp$ Problem in the Harsh Model}



We now consider improved randomized algorithms for the $\ORp$ problem.

\subsubsection{Randomized algorithm in $\Harsh(\byzfrac<1)$}
We describe an efficient approach for the $\ORp$ that works for any $\byzfrac = 1-\varepsilon$ and an adaptive adversary. First we describe a useful primitive called \emph{verification spreading} below. Once this primitive is established, we can solve $\ORp$ in a straightforward manner. We say that the index $i$ is a \emph{verified index} for the machine $M$ if $M$ has already queried the cloud and discovered that $\INPUT[i]=1$. At the outset, every machine $M$ has a (possibly empty) set of verified indices,
$$\Known_M =\{i \mid \resM[i]=1\}$$


\textbf{Verification Spreading}
In the verification spreading problem, we want all honest machines to know some verified index given at least one honest machine $\Machine$ knows a verified index i.e. $\Known_M \ne \emptyset$. We provide Procedure $\Spread$ that solves the verification spreading problem.

\begin{algorithm}
\begin{algorithmic}[1]
    \For{$r=1,2\dots \lceil \log k \rceil$ phases (sequentially)} 
        \If{$\Known_M\ne\emptyset$}
        \State Pick some arbitrary index $b(M)\in \Known_M$.
        \State Send $b(M)$ to all machines. 
        \EndIf
        \State $H \gets \{M' \mid \mbox{a single index}~ b(M') ~\mbox{was received from}~ M'\}$.
        \State $S \gets \{b(M') \mid M' \in H\}$ \Comment{array of indices received} 
        \State{Sample a set $\INDEX$ of $\lceil \frac{8\ln n}{\goodfrac} \rceil$ indices uniformly at random from $S$}
        \State {$\res[i] \gets \CloudQuery(i)$ for every $i\in\INDEX$} 
        \State {$\Known_M \gets \Known_M \bigcup \{ i \mid \res[i]=1 \}$}
    \EndFor
\end{algorithmic}
\caption{Procedure $\Spread$, $\Harsh(\byzfrac<1)$ , Code for machine $M$}
\label{alg:infospread}
\end{algorithm}

Let us make the following two remarks about code efficiency.
First, note that once an honest machine has found a verified bit at some phase $r$, it does not need to continue going through the entire process and query the bits it receives in the following rounds.
Secondly, sampling indices \emph{without} replacement in the final step of the procedure may yield slightly superior performance (although it may be slightly harder to implement and analyze). In both cases, the worst-case analysis does not change.

\begin{theorem}
\label{lem:inf:spread}
In the $\Harsh(\byzfrac<1)$ model, Procedure $\Spread$  w.h.p. solves the Verification Spreading problem with $\Query = O(\frac{\log k \ln n}{\goodfrac})$, $\Time = O(\log k)$ and $\Message = O(k^2\log k)$
\end{theorem}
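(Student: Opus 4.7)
The plan is to track $g_r$, the number of honest machines with a verified index at the start of phase $r$. The hypothesis gives $g_1 \ge 1$, and I will show that $g_r$ grows by a multiplicative factor per phase (w.h.p.) until saturating at $\goodfrac k$, yielding termination within $O(\log k)$ phases. Fix a phase $r$. Each of the $g_r$ honest machines with a verified bit contributes a genuine 1-index to the multiset $S$, and Byzantine machines contribute at most $\byzfrac k$ (possibly fake) indices. An honest machine $M$ without a verified index samples $t=\lceil 8\ln n/\goodfrac\rceil$ entries from $S$ uniformly and queries each, so $M$ acquires a verified index with probability
\[
p_r \;\ge\; 1-\left(\frac{\byzfrac k}{g_r+\byzfrac k}\right)^{t}\;\ge\;1-\exp\!\left(-\frac{t\, g_r}{g_r+\byzfrac k}\right).
\]

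I will split the analysis into two regimes. In the multiplicative-growth regime where $g_r\le \byzfrac k/t$, a first-order expansion gives $p_r = \Omega(t g_r/\byzfrac k)$, so the expected number of honest machines that newly acquire a verified index is $(\goodfrac k - g_r)\cdot p_r = \Omega(g_r\log n)$. Since this expectation is already $\Omega(\log n)$ even when $g_r=1$ (using $\goodfrac k - 1\approx \goodfrac k$), a Chernoff bound yields concentration: w.h.p. $g_{r+1}\ge g_r\cdot(1+\Omega(\log n/\byzfrac))$, which in particular at least doubles $g_r$. In the saturation regime where $g_r > \byzfrac k/t$, we get $p_r > 1-e^{-1}$, and the expected number of successes is $\Omega(\goodfrac k - g_r)$, so w.h.p.\ the gap $\goodfrac k - g_r$ shrinks by a constant factor; once $g_r\ge \byzfrac k$ we have $p_r \ge 1-n^{-\Omega(1)}$ and all remaining honest machines succeed in the next phase. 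Since the doubling is maintained in every phase, after $O(\log k)$ phases $g_r = \goodfrac k$, i.e.\ every honest machine has a verified index. A union bound over the $O(\log k)$ phases preserves a polynomial failure probability.

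For the complexity bookkeeping: each phase executes $O(1)$ rounds (one broadcast sub-round, one query sub-round to the cloud, and the bookkeeping is local), giving $\Time = O(\log k)$; each honest machine sends one index to each of the other $k-1$ machines per phase, giving $\Message = O(k^2\log k)$; and each honest machine issues $t=O(\log n/\goodfrac)$ cloud queries per phase, giving $\Query = O(\log k\cdot \log n/\goodfrac)$. The main obstacle is the very first phase, where $g_1=1$ is tiny and the per-machine success probability is only $\tilde O(1/k)$; this is handled by the observation that there are $\Omega(\goodfrac k)$ honest receivers, so the \emph{total} expected number of new verifiers is $\Omega(\log n)$, enabling Chernoff. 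A secondary concern is that the adversary is adaptive, but since each phase's sampling, querying, and broadcasting occur within a single communication round, the adversary cannot corrupt the honest senders in time to disrupt them after their random choices are revealed, so the $\Harsh$ adversary does not strengthen what needs to be proven beyond the above sketch.
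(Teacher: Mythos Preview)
Your overall strategy---tracking the count of honest machines holding a verified index and arguing geometric growth---matches the paper's. There is, however, a genuine error in your final step: you invoke the ``all succeed'' conclusion once $g_r \ge \byzfrac k$. The calculation that $p_r \ge 1 - n^{-\Omega(1)}$ under that hypothesis is fine, but in the $\Harsh(\byzfrac<1)$ model with $\byzfrac \ge 1/2$ the hypothesis is \emph{unreachable}: there are only $\goodfrac k$ honest machines in total, so $g_r \le \goodfrac k \le \byzfrac k$ always. Your saturation-regime argument drives $g_r$ toward $\goodfrac k$, not toward $\byzfrac k$, so the handoff to the terminating step never fires and the argument does not close for large $\byzfrac$.

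The fix is to replace the threshold $\byzfrac k$ by a constant fraction of $\goodfrac k$. Once $g_r \ge \goodfrac k/4$ one has $g_r/(g_r+\byzfrac k) \ge \goodfrac/4$, hence $p_r \ge 1 - e^{-t\goodfrac/4} = 1 - n^{-2}$, and a union bound over the at most $k$ remaining honest machines finishes in one further phase. This is exactly the paper's Stage~2 threshold; the paper collapses your three regimes into two. For the doubling stage ($f(r) < \goodfrac k/4$) the paper also sidesteps your Chernoff on the aggregate success count: it partitions the not-yet-successful honest machines into $f(r)$ disjoint groups of size at least $\goodfrac k/f(r) - 2$ and shows each group independently produces a new success with probability $1 - n^{-4}$, giving $f(r+1)\ge 2f(r)$ directly. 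Your Chernoff route in regime~1 is sound in principle (the expected number of new successes is $\Omega(g_r\ln n/\byzfrac) \ge \Omega(\ln n)$), but the partition argument yields a cleaner failure bound without chasing constants.
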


\begin{proof}
Let $f(r)$ denote the number of honest machines $M$ that have $\Known_M \neq \emptyset$ (i.e., at least one known index) at the end of phase $r$ of the execution. Let $\byzfrac_r$ denote the fraction of machines corrupted by the adversary at the end of phase $r$. Observe that $\byzfrac_r$ must be non-decreasing in $r$. 
    
The premise of the lemma is that $f(0) > 0$. To prove the lemma we need to show that $f(r) = k(1-\byzfrac_r)$ when $r = \lceil \log k \rceil$. 
    
We prove this in two stages. First we show that as long as $f(r) < \goodfrac k / 4$, it doubles in the next phase, i.e., $f(r + 1) \geq 2 f(r)$. These phases comprise the first stage. In the second stage, when $f(r) \geq \goodfrac k / 4$, then within a single phase all honest machines receive a correct index, i.e., $f(r + 1) = \goodfrac k$.

\textbf{Stage 1} ($f(r) < \goodfrac k / 4$).  Consider any set of $\goodfrac k - f(r)$ honest machines that have not yet seen a set bit at the beginning of phase $r + 1$. Partition them into $f(r)$ (almost) equal disjoint parts, each of size no less than $\goodfrac k / f(r) - 2$. Consider any one part $P$. Let $p$ be the probability that none of the machines in $P$ queried a set index. We have,
$$p ~\leq~ 
\left(1 - \frac{f(r)}{k}\right)^{(\frac{\goodfrac k}{f(r)} - 2) \cdot \frac{8 \ln n}{\goodfrac}} 
~\leq~ \left( 1 - \frac{f(r)}{k}\right)^{\frac{\goodfrac k}{2f(r)} \cdot \frac{8 \ln n}{\goodfrac}} 
~\leq~ e^{-4\ln n} 
~=~ n^{-4}, $$
where the last inequality uses $1 - x \leq e^{-x}$.

Now applying the union bound over all such parts $P$, we get that w.h.p, each of the $f(r)$ parts have at least one machine querying a set index. Therefore, at the end of phase $r+1$, $f(r)$ new machines have queried a set index.

\textbf{Stage 2} ($f(r) \geq \frac{\goodfrac k}{4}$). At this stage, of the $k$ possible messages $(M',b(M'))$ received by an honest machine in  phase $r+1$, at least $\goodfrac k / 4$ must be valid. Note that this is a constant fraction $(\goodfrac / 4)$. Sampling $8 \ln n / \goodfrac$ is more than sufficient to hit one of the valid bits w.h.p. The probability of failure is $\left(1-\goodfrac/4\right)^{8\ln n / \goodfrac} \leq n^{-2}$. Taking union bound over all honest machines, we get that each of them hits at least one verified bit (w.h.p.).

Finally, observe that after $r = \lceil \log k \rceil - 1$ phases, the execution must reach Stage $2$, thus $\lceil \log k \rceil$ phases are sufficient for termination.
$\Query=O(\frac{\log k \ln n}{\goodfrac})$ and $\Time=O(\log k)$ follow directly from the algorithm.
\end{proof}

\textbf{Algorithm for $\ORp$}

Let $\byzfrac < 1-\varepsilon$ for some $\varepsilon > 0$. We present an efficient algorithm for $\ORp$ whose query complexity is near optimal.

\begin{algorithm}
    \begin{algorithmic}[1]
        \For{$r=1,2,\dots \lceil \log{n} \rceil$} 
        \State $\Density_r \gets \frac{1}{2^r}$  \Comment{Guess for the value of $\Density$ at round $r$}  
        \State Sample a set $\INDEX$ of 
        $\lceil \frac{1}{\Density_r} \cdot \frac{1}{\goodfrac k} \cdot \ln n \rceil$ 
        indices uniformly at random from $\{1,2\ldots n\}$
        
        \State $\res[j]\gets \CloudQuery(j)$ for every $j\in \INDEX$
        \State $\Known_M \gets \{j \mid res[j]=1\}$

    
        \State $\Call{Spread}{ \Known_M}$
        \If{$\res[j]=1$ for some $j$} 
        ~~ \Return 1
        \EndIf
        \EndFor
        \State \Return 0.
    \end{algorithmic}
    \caption{$\RandomizedDisjunctionAlgo$, $\Harsh(\byzfrac<1)$, Code for machine $\Machine$}
    \label{alg:or:gamma}
\end{algorithm}

\begin{theorem}
\label{thm: OR Harsh + alg name}
In the $\Harsh(\byzfrac<1)$ model, Algorithm $\RandomizedDisjunctionAlgo$ w.h.p. solves the $\ORp$ problem with 
$\Query = O\left( \frac{\log n}{\goodfrac k}\cdot\InverseDensity + \frac{ \log k \log n \log(1/\ModifiedDensity)}{\goodfrac} \right)$, 
$\Time = O\left(\log k \cdot \log \InverseDensity\right)$ and $\Message = O(k^2\log k \cdot \log \InverseDensity)$.
\end{theorem}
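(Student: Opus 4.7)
Let $r^{*} = \lceil \log(1/\ModifiedDensity)\rceil$, so the guess $\Density_{r^{*}} = 1/2^{r^{*}}$ lies in $[\ModifiedDensity/2, \ModifiedDensity]$. The approach is to show that, w.h.p., at iteration $r^{*}$ at least one honest machine cloud-verifies a set bit, and the subsequent call to $\Spread$ (Theorem~\ref{lem:inf:spread}) then disseminates a verified index to every honest machine, so all of them return~$1$. The case $\Density = 0$ is handled separately.

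\textbf{Correctness.} If $\Density = 0$ every cloud query returns $0$, so $\Known_M = \emptyset$ remains at every honest $M$ throughout the execution, $\Spread$ has nothing to disseminate, and each honest machine falls through to line~9 and outputs $0$. If $\Density > 0$, the integrality of $\Density n$ forces $\Density \geq 1/n$, so $\ModifiedDensity = \Density$ and $\Density \geq \Density_{r^{*}}$. At iteration $r^{*}$ the $\goodfrac k$ honest machines collectively draw at least $\log n / \Density_{r^{*}}$ uniform samples from $[n]$. The probability that none of these samples hits a $1$-bit is at most
\[
(1-\Density)^{\log n / \Density_{r^{*}}} ~\leq~ e^{-\Density \log n / \Density_{r^{*}}} ~\leq~ e^{-\log n} ~=~ 1/n.
\]
Hence w.h.p.\ some honest $M$ has $\Known_M \ne \emptyset$ when the iteration's call to $\Spread$ begins, and by Theorem~\ref{lem:inf:spread} every honest machine subsequently holds a verified index and returns $1$ at line~7.

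\textbf{Complexity.} In iteration $r$ each honest machine pays $O(\log n / (\Density_r \goodfrac k))$ queries for line~3 and $O(\log k \log n / \goodfrac)$ queries for $\Spread$. Summing the line-3 contribution over $r = 1,\dots,r^{*}$ produces a geometric series dominated by its last term, giving $O\bigl(\frac{\log n}{\goodfrac k}\cdot\InverseDensity\bigr)$; in the $\Density = 0$ case the sum runs to $\lceil\log n\rceil$ and $\InverseDensity = n$, matching the same expression. Summing the $\Spread$ cost across the $r^{*} = O(\log(1/\ModifiedDensity))$ iterations gives the second term of $\Query$. The $\Time$ and $\Message$ bounds follow analogously from the $O(\log k)$ round and $O(k^{2}\log k)$ message counts per $\Spread$ call (Theorem~\ref{lem:inf:spread}), multiplied by the iteration count.

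\textbf{Main obstacle.} The principal subtlety is the adaptive adversary of the $\Harsh$ model. Within a single round $\Advers$ cannot react, so the simultaneous sampling of line~3 at iteration $r^{*}$ is safe even if some of those samplers are corrupted in the very next round. The call to $\Spread$, however, spans $O(\log k)$ rounds during which further corruption may occur; for this I rely on the fact that Theorem~\ref{lem:inf:spread} already accounts for adaptive corruption, and on the property that $\Spread$ cloud-verifies every candidate index before adding it to $\Known_M$, so Byzantine-injected indices cannot mislead an honest machine into a false positive. A union bound over the $O(\log \InverseDensity)$ iterations and their $\Spread$ subroutines keeps the aggregate failure probability below $O(\log n / n)$, preserving the w.h.p.\ guarantee.
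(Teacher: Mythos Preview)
Your proposal is correct and follows essentially the same approach as the paper: the paper's proof also splits on whether $\Density=0$, and for $\Density>0$ argues that by the phase $J=\lceil\log(1/\ModifiedDensity)\rceil$ the $\goodfrac k$ honest machines have collectively made $\ln n\cdot\InverseDensity$ uniform queries so that some honest machine has $\Known_M\neq\emptyset$ with probability at least $1-1/n$, after which Theorem~\ref{lem:inf:spread} finishes the job. The paper additionally singles out a ``Case~3'' (some honest machine finds a $1$ before phase $J$), but since $\Spread$ then disseminates to all honest machines within that same phase, your treatment---which simply upper-bounds the termination phase by $r^*$---already covers it and yields the same complexity bounds.
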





\begin{proof}
We consider three cases.
    
\textbf{Case 1}: ($\Density = 0$). In this case, none of the honest agents would've seen a set bit till the end of the execution ($\log n$ phases).

\textbf{Case 2}: (No honest machine sees a set bit before phase $J=\lceil -\log \Density \rceil$). 
In this case, during phase $J$, the honest machines collectively query $\ln n\cdot\InverseDensity$ bits. The probability that none of them are set is at most $\left(1 - \Density \right)^{\ln n\cdot\InverseDensity} \leq 1/n$.

By Lemma \ref{lem:inf:spread}, by the beginning of phase $J+1$, all honest machines would've seen a set bit (w.h.p). The honest machines that directly queried for a set bit would've terminated at end of phase $J$. The remaining machines (that received correct index through communication) terminate at end of phase $J$.

The query complexity per machine consists of two parts. The queries due to Line 2 for a geometric series with highest term being $O(\frac{\ln n}{\goodfrac k}\cdot\InverseDensity)$. The queries due to Line 2 are fixed to $\frac{\ln n}{\goodfrac}$ per phase for at most $\lceil -\log{\Density} \rceil + 2$ phases.

\textbf{Case 3}: (Some honest machine sees a set bit at phase $r < J$). By Lemma \ref{lem:inf:spread}, all honest machines would've sampled a correct bit either during or before phase $r+1$. Regardless of what happens in phase $r+2$, by the end of phase $r+2$ all honest machines terminate.
\end{proof}

\section{\CCA\ Algorithms in the Benign Adversarial  Model}

\subsection{Primitives}

\subsubsection{Public (representative and majorizing) committees}
\label{sec:public:comm}

In this section we describe algorithms for electing public representative and public majorizing committees. In the $\Benign$ model, the cloud has the ability to generate global random bits. Therefore, a simple way of electing a public committee of size $\size$ is to query $O(\size\log n)$ global random bits which represent the IDs of the machines in the committee. If multiple committees are desired, say $\num$, then repeating the election for each committee can become expensive as the query complexity increases by a factor of $\num$. We therefore design more query efficient algorithms for electing public committees by using $c-$ wise independent family of hash functions.

\begin{defn}(c-wise Independent Hashing)
For $N, L, c \in \mathbb{N}$ such that $c \le N$, $H = \{h : [N] \rightarrow [L]\}$ is a family of \emph{$c$-wise independent Hash functions} if for all distinct $x_1,x_2,\dots x_c \in [N]$, whenever $h$ is chosen uniformly at random from $H$, the random variables $h(x_1), h(x_2), \dots h(x_c)$ are independent and uniformly distributed in $[L]$.
\end{defn}

\begin{lemma}
\label{lem:hashfamily}
For every $a,b,c$, there is a family of $c$-wise independent hash functions $H$ $=\{h : \{0, 1\}^a \rightarrow \{0, 1\}^b\}$ such that choosing a random function from $H$ takes $c \cdot \max\{a, b\}$ random bits, and evaluating a function from $H$ takes $\poly(a, b, c)$ computation steps.
\end{lemma}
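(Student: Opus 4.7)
The plan is to construct the family via random polynomials over a finite field, which is the standard textbook approach. Let $m = \max\{a,b\}$ and work in the finite field $\mathbb{F} = \mathbb{F}_{2^m}$, which can be represented using $m$ bits per element and supports addition/multiplication in $\poly(m)$ time via an explicit irreducible polynomial of degree $m$ over $\mathbb{F}_2$. I would identify the input domain $\{0,1\}^a$ with a subset of $\mathbb{F}$ in the natural way, and view $\{0,1\}^b$ as the image of a fixed $\mathbb{F}_2$-linear projection $\pi\colon\mathbb{F}\to\{0,1\}^b$ (for instance, taking the low-order $b$ bits of the field element's canonical representation).

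Next I would define the hash family
\[
H \;=\; \bigl\{ h_{\mathbf{a}}(x) \;=\; \pi\bigl(a_{c-1} x^{c-1} + a_{c-2} x^{c-2} + \cdots + a_1 x + a_0\bigr) \;:\; \mathbf{a} = (a_0,\ldots,a_{c-1}) \in \mathbb{F}^c \bigr\}.
\]
Sampling $h \in H$ uniformly at random amounts to sampling the $c$ coefficients $a_0,\ldots,a_{c-1}$ independently and uniformly from $\mathbb{F}$, which uses exactly $c \cdot m = c \cdot \max\{a,b\}$ random bits. Evaluating $h_{\mathbf{a}}(x)$ requires $O(c)$ field operations plus the linear projection, so $\poly(a,b,c)$ total computation steps.

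The $c$-wise independence follows from the classical polynomial-interpolation argument. For any distinct $x_1,\ldots,x_c \in \{0,1\}^a \subseteq \mathbb{F}$ and any target values $y_1,\ldots,y_c \in \mathbb{F}$, there is exactly one polynomial $p$ of degree at most $c-1$ over $\mathbb{F}$ satisfying $p(x_i) = y_i$ for all $i$ (Lagrange interpolation, valid because the $x_i$ are distinct and the base ring is a field). Hence the map $\mathbf{a} \mapsto \bigl(p_{\mathbf{a}}(x_1),\ldots,p_{\mathbf{a}}(x_c)\bigr)$ is a bijection from $\mathbb{F}^c$ onto $\mathbb{F}^c$, so when $\mathbf{a}$ is uniform on $\mathbb{F}^c$ these evaluations are independent and uniform on $\mathbb{F}$. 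Applying the fixed surjective linear projection $\pi$ coordinatewise preserves both uniformity on $\{0,1\}^b$ and independence across the $c$ coordinates, yielding the claimed $c$-wise independence of $h(x_1),\ldots,h(x_c)$.

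The only genuinely delicate step is the projection argument when $b < m$: one must check that $\pi$ maps the uniform distribution on $\mathbb{F}$ to the uniform distribution on $\{0,1\}^b$. This holds because $|\mathbb{F}| = 2^m$ is divisible by $2^b$ and $\pi$ is $\mathbb{F}_2$-linear and surjective, so every preimage has the same size $2^{m-b}$; combined with coordinatewise independence of the $c$ field-valued evaluations, this preserves independence under the product map $\pi^{\otimes c}$. Everything else (field arithmetic, Lagrange interpolation, bit counts) is standard, so this is the obstacle to articulate carefully rather than to overcome.
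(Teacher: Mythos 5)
The paper states this lemma as a standard known fact and gives no proof of its own (it is the textbook statement of $c$-wise independent families, e.g.\ as in Vadhan's pseudorandomness monograph). Your construction --- degree-$(c-1)$ polynomials over $\mathbb{F}_{2^{\max\{a,b\}}}$ followed by truncation to $b$ bits, with $c$-wise independence via the interpolation bijection and uniformity of the truncation preserved because $\pi$ is surjective and $\mathbb{F}_2$-linear with equal-size fibers --- is exactly the canonical argument the paper implicitly relies on, and it is correct, including the random-bit count of $c\cdot\max\{a,b\}$ and the $\poly(a,b,c)$ evaluation time (using a deterministically constructible irreducible polynomial of degree $\max\{a,b\}$).
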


The algorithm for electing $\num$ public committees of size $\size$ works as follows. Let $\cH=\{h:[\num]\rightarrow[k]\}$ be a $\size$-wise independent family of hash functions. The machines sample $\size$ hash functions $h_1,h_2,\ldots, h_{\size}$, which requires $\tilde{\Theta}(\size^2)$ global random bits. The $i^{th}$ committee $\cC_i$ is given by $\cC_i = \{h_j(i) \mid 1\leq j \leq \size\}$, for $i = 1, 2, \dots, \num$.

\begin{algorithm}
\begin{algorithmic}[1]
\Statex \textbf{Input}: Positive integers $\size, \num$, family of  $\size$-wise independent hash function $\cH=\{h:[\num]\rightarrow[k]\}$
\Statex \textbf{Output}: Return $\num$ committees $\cC_1, \cC_2, \dots, \cC_\num$ of size $\size$ each.
\State $\rand[j]\gets\CloudRG(j)$ for $j=1,2,\ldots,\max(\ceil{\log\num,\log k})$
\State Sample $h_1, h_2, \dots, h_\size$ independently from $\cH$ using $\rand$.
\State \Return $\cC_i = \{h_j(i) \mid j=1,2, \dots, \size \}$ for $i = 1, 2, \dots, \num$.
\end{algorithmic}
\caption{Procedure $\ElectPublic(\size, \num)$ , $\Benign(\byzfrac < 1)$, code for machine $\Machine$}
\label{alg:public:comm:election}
\end{algorithm}

The following lemmas summarize the algorithm's properties
(proofs are deferred to the appendix).

\begin{lemma}
\label{lem:pub:comm:num}
In the $\Benign(\byzfrac<1)$ model, for $\size \ge 9\log n$, at the end of procedure $\ElectPublic$, every machine belongs to at most $\frac{\size\num}{k}+\size^2$ committees w.h.p.
\end{lemma}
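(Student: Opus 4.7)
The plan is to fix a machine $m \in [k]$, bound the number $Y_m$ of committees containing $m$, and then union bound over the $k$ machines. The natural decomposition is through the individual hash functions: letting $Z_{m,j} := |\{i \in [\num] : h_j(i) = m\}|$ count how many indices the $j$th hash function maps to $m$, the fact that $m \in \cC_i$ iff some $h_j(i) = m$ gives $Y_m \le \sum_{j=1}^{\size} Z_{m,j}$.

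Since each $h_j$ is drawn from a $\size$-wise independent family and the $h_j$'s are sampled independently, the entire collection of indicators $\{\mathbb{1}[h_j(i)=m]\}_{(i,j) \in [\num]\times[\size]}$ is $\size$-wise independent, each with mean $1/k$. In particular $\mathbb{E}[Z_{m,j}] = \num/k$, and so $\mathbb{E}\big[\sum_j Z_{m,j}\big] = \size\num/k$, which matches the first term in the target bound. The remaining task is to show that the deviation of $\sum_j Z_{m,j}$ from its mean exceeds $\size^2$ only with probability $o(1/n)$.

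For this I would invoke a tail bound for sums of $c$-wise independent $\{0,1\}$-variables (Bellare--Rompel / Schmidt--Siegel--Srinivasan type), which for even $c = \size$ yields, for any $A \ge 1$,
$$\Pr\!\Big[\textstyle\sum_j Z_{m,j} - \size\num/k \ge A\Big] ~\le~ 8\cdot\!\left(\frac{\size \cdot \size\num/k + \size^2}{A^2}\right)^{\size/2}.$$
Setting $A = \size^2$ reduces the right-hand side to $8\cdot\!\big((\num/(k\size)+1)/\size^2\big)^{\size/2}$. Because $\size \ge 9\log n$, the exponent $\size/2$ is $\Omega(\log n)$, so whenever the base is bounded away from $1$ (the regime $\num \lesssim k\size^3$), the bound decays as $n^{-\omega(1)}$. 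In the complementary regime of very large $\num$, the multiplicative slack $\size^2 / (\size\num/k) = \size k/\num$ is small and one instead applies the same moment inequality with a larger $A$ (proportional to $\sqrt{\size \cdot \size\num/k}$) and combines it with an AM--GM step $2\sqrt{\size\mu} \le \size + \mu/\size$ to recover the stated additive form. A final union bound over the $k \le n$ machines gives the claim w.h.p.

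The main obstacle is that classical Chernoff bounds are unavailable: the indicators $\mathbb{1}[h_j(i)=m]$ are only $\size$-wise independent across $i$, not fully independent. The Bellare--Rompel moment method circumvents this by bounding only the $\size$-th central moment, which depends solely on joint expectations of at most $\size$ of the variables and therefore is computable using precisely the independence guarantee the hash family provides. The condition $\size \ge 9\log n$ is what makes this $\size/2$-th power tail suffice to overcome the union bound over $k$ machines.
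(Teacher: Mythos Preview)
Your approach is sound but takes a more involved route than the paper's. The paper does \emph{not} analyze the full sum $\sum_{j=1}^{\size} Z_{m,j}$ at once. Instead it bounds each $Z_{m,j}$ individually: applying a Chernoff bound with $\mu=\num/k$ and additive deviation $\size$ gives $\Pr[Z_{m,j} > \num/k + \size] \le e^{-\size/3}$. Since the $\size$ hash functions are sampled \emph{independently}, a union bound over $j=1,\ldots,\size$ then yields $\Pr\big[\sum_j Z_{m,j} > \size\num/k + \size^2\big] \le \size\, e^{-\size/3}$, and $\size \ge 9\log n$ makes this $\le 1/n^2$. A final union bound over machines finishes. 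This per-hash-function decomposition sidesteps your two-regime case analysis entirely.

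You correctly flag the limited-independence obstacle---within a single hash function the indicators $\{\mathbb{1}[h_j(i)=m]\}_i$ are only $\size$-wise independent---and this is precisely what the paper glosses over when it writes ``applying Chernoff bounds.'' Your Bellare--Rompel route makes this explicit, which is a genuine advantage in rigor. What the paper's route buys is simplicity: by exploiting the full independence \emph{across} the $\size$ hash functions and only needing an additive deviation of $\size$ per function (not $\size^2$ all at once), the union bound over $j$ replaces your moment calculation and AM--GM step. If you want to keep the per-function structure while being careful about limited independence, you could apply a $\size$-wise-independent tail bound to each $Z_{m,j}$ separately (deviation $\size$, independence $\size$) and then union-bound over $j$; this retains the paper's simplicity while closing the gap you identified.
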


\def\ProofFourThree{
Consider machine $\Machine_j$. The number of committees that $\Machine_j$ belongs to is given by the number of solutions to $\sum\limits_{u=1}^{\size}\sum\limits_{v=1}^{\num}h_u(v)=j$. Let $X_{u,v}=\{1\mid h_u(v) = j\}$. We know that $p=\Prob(X_{u,v}=j)=\frac{1}{k}$. Applying Chernoff bounds with $\epsilon = \frac{\size k}{\num}$ and $\mu = \num p=\frac{\num}{k}$, we get $\Prob\left(\sum\limits_{v=1}^\num X_{u,v} > \frac{\num}{k} + \size\right) \le \exp{\left(\frac{-\size}{3}\right)}$. Since $h_1,h_2,\ldots, h_\size$ are independently chosen from $\cH$, $\Prob\left(\sum\limits_{u=1}^{\size}\sum\limits_{v=1}^\num X_{u,v} > \frac{\size\num}{k} + \size^2\right) \le \size\exp{\left(\frac{-\size}{3}\right)}$. When $\size \ge 9\log n$, $\Prob\left(\sum\limits_{u=1}^{\size}\sum\limits_{v=1}^\num X_{u,v} > \frac{\size\num}{k} + \size^2\right) \le \frac{1}{n^2}$. The lemma follows from a simple application of union bound over the $k$ machines.
} 

\begin{lemma}
\label{lem:pub:comm:maj}
In the $\Benign(\byzfrac<1/2)$ model, for $\size \ge \frac{2\log n}{(1/2-\byzfrac)^2}$ and $\num \in O(n)$, at the end of Procedure $\ElectPublic$, the committees formed are public majorizing committees w.h.p.
\end{lemma}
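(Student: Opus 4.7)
The plan is to bound the Byzantine population inside a single committee $\cC_i$ by a Chernoff/Hoeffding argument, and then union-bound over the $\num = O(n)$ committees. First fix $i \in [\num]$. Since the algorithm samples the $\size$ hash functions $h_1, \ldots, h_\size$ \emph{independently} from $\cH$, the values $h_1(i), \ldots, h_\size(i)$ are mutually independent random variables, each uniform on $[k]$ (the $\size$-wise independence property of $\cH$ is only needed to handle different inputs to the \emph{same} $h_j$; within a single committee, independence comes for free from sampling the $h_j$'s independently).

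Next I would introduce indicator variables $X_j = \mathbf{1}[h_j(i) \in \BYZ]$ for $j = 1, \ldots, \size$. These are i.i.d. Bernoulli with mean $\byzfrac$, so their sum $B_i = \sum_{j=1}^{\size} X_j$ is a Binomial$(\size, \byzfrac)$ random variable. Since $\byzfrac < 1/2$, applying Hoeffding's inequality with deviation $(\tfrac{1}{2}-\byzfrac)\size$ gives
\[
\Prob\!\left[B_i \ge \tfrac{\size}{2}\right] \;\le\; \exp\!\left(-2\size\big(\tfrac{1}{2}-\byzfrac\big)^2\right).
\]
Plugging in the hypothesis $\size \ge \frac{2\log n}{(1/2-\byzfrac)^2}$ yields $\Prob[B_i \ge \size/2] \le \exp(-4\log n) = n^{-4}$. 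To conclude for the set $\cC_i$ itself, I would use the obvious containment $|\cC_i \cap \BYZ| \le B_i$ (each distinct Byzantine machine in $\cC_i$ occupies at least one slot among the $\size$ hash outputs), and analogously $|\cC_i \cap \cH| \ge$ number of distinct honest slots $\ge (\size - B_i)$ whenever collisions are absent; thus on the event $\{B_i < \size/2\}$ a strict honest majority holds in the multiset version of the committee, which is the one used for majority voting throughout this section.

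Finally, I would take a union bound over the $\num = O(n)$ committees. The total failure probability is at most $\num \cdot n^{-4} = O(n^{-3})$, which is comfortably within the w.h.p. threshold of $O(1/n)$, completing the proof.

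The main obstacle I anticipate is the subtlety between the multiset $(h_1(i),\ldots,h_\size(i))$ and the set $\cC_i$: hash collisions could in principle make $|\cC_i \cap \cH|$ smaller than $|\cC_i \cap \BYZ|$ even when the multiset has a strict honest majority. This is handled by observing that the paper's voting primitives treat committee membership with multiplicity (so each of the $\size$ slots casts one vote); if one insists on the pure set semantics, a separate birthday-style argument bounding the expected number of colliding pairs by $\binom{\size}{2}/k$ suffices, since the Chernoff slack $(1/2-\byzfrac)\size$ dominates the collision loss in every regime where the committees are used.
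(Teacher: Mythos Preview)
Your proposal is correct and follows essentially the same approach as the paper: define indicator variables for the $\size$ hash slots landing on Byzantine machines, apply a concentration bound (you use Hoeffding, the paper uses multiplicative Chernoff with $\varepsilon = 1/2-\byzfrac$) to get a per-committee failure probability of $n^{-\Omega(1)}$, then union-bound over the $\num=O(n)$ committees. Your explicit discussion of the multiset-versus-set subtlety is a nice addition that the paper's own proof glosses over; note, however, that in Algorithm $\MajorizingDownload$ each distinct machine sends one message, so the voting is really set-based, and your birthday-collision remark is the more honest patch (and it matches the implicit assumption in the paper's argument as well).
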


\def\ProofFourFour{
Consider committee $\cC_i$. Let $X_j=\{1\mid h_j(i)$ is honest$\}$ for $j=\{1,2,\ldots,\size\}$. $\Prob(X_j=1)=\frac{1}{2}-\epsilon$. 
We have $\mu = \Exp[\sum X_j]=\size(\frac{1}{2}-\varepsilon)$. Choosing $\epsilon=\frac{2\varepsilon}{1-2\varepsilon}$, we get that 
$$\Prob\left(\sum X_i > \size/2 \right) ~\leq~ \Prob\left(\sum X_i > (1 + \epsilon) \mu \right) ~\leq~ \exp{\left( -\frac{\varepsilon^2 \size}{1-\varepsilon} \right)} ~\leq~ \exp(-\varepsilon^2\size).$$
When $\size\ge\frac{2\log n}{\varepsilon^2}$, we have $\Prob\left(\sum X_i > \size/2 \right) \le \frac{1}{n^2}$. The lemma now follows by a simple application of union bound over the $\num$ committees.
} 

\begin{lemma}
\label{lem:pub:comm:weak}
In the $\Benign(\byzfrac<1)$ model, 
for $\size \ge \frac{2\log n}{\goodfrac}$ and $\num \in O(n)$, at the end of Procedure $\ElectPublic$, the committees formed are weak (\emph{1-representative public}) committees w.h.p.
\end{lemma}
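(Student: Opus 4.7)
The plan is to show that each committee $\cC_i$ fails to be $1$-representative only if \emph{every} one of its $\size$ members turns out to be Byzantine, and that this event is exponentially unlikely in $\size$. Since the hash functions $h_1,\ldots,h_\size$ are sampled independently from $\cH$, the values $h_1(i),h_2(i),\ldots,h_\size(i)$ are mutually independent and each is uniformly distributed in $[k]$ (we do not even need the $\size$-wise independence property here, only the independence of the sampled $h_j$'s). Each individual value falls on a Byzantine machine with probability exactly $\byzfrac$.

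First, I would fix a committee index $i\in[1,\num]$. Let $X_j$ denote the indicator that $h_j(i)\in\BYZ$. Then the $X_j$ are i.i.d. Bernoulli with parameter $\byzfrac=1-\goodfrac$, and $\cC_i$ fails to be $1$-representative iff $\sum_{j=1}^{\size} X_j = \size$. Hence
\[
\Prob[\cC_i \text{ is not } 1\text{-representative}] \;\le\; \byzfrac^{\size} \;=\; (1-\goodfrac)^{\size} \;\le\; e^{-\goodfrac \size}.
\]
Plugging in the hypothesis $\size\ge\frac{2\log n}{\goodfrac}$ yields an upper bound of $e^{-2\log n}=n^{-2}$.

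Finally, I would apply a union bound over all $\num=O(n)$ committees to conclude that the probability that some $\cC_i$ fails to contain an honest machine is at most $O(\num/n^2)=O(1/n)$. Since the committees are defined entirely by the publicly broadcast hash seeds drawn through $\CloudRG$, every machine agrees on the membership of each $\cC_i$, so the resulting committees are public as well. Therefore, w.h.p.\ every committee produced by $\ElectPublic(\size,\num)$ is a weak (public $1$-representative) committee, as required. The main obstacle is merely bookkeeping (confirming independence of the $h_j(i)$'s across $j$ and that publicness of the committees follows from the cloud-generated seeds); no delicate concentration inequality is needed beyond the direct product-of-probabilities bound.
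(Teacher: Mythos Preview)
Your proof is correct and follows essentially the same argument as the paper: bound the probability that a fixed committee consists entirely of Byzantine machines by $(1-\goodfrac)^{\size}\le e^{-\goodfrac\size}\le n^{-2}$, then take a union bound over the $\num=O(n)$ committees. The additional remarks you make about independence of the $h_j(i)$'s and about publicness are accurate elaborations on points the paper leaves implicit.
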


\def\ProofFourFive{
Consider committee $\cC_i$. The probability that all machines in $\cC_i$ are bad is at most $(1-\goodfrac)^\size \leq \exp(-\goodfrac\size)\leq \frac{1}{n^2}$ for $\size\ge \frac{2\log n}{\goodfrac}$. The lemma follows by applying the union bound over the $\num$ committees.
} 

Lemmas \ref{lem:pub:comm:num} and \ref{lem:pub:comm:maj} directly imply the following.

\begin{theorem}
\label{thm:pub:major:comm}
In the $\Benign(\byzfrac<1/2)$ model, for $\size > \frac{9\log n}{(1/2-\byzfrac)^2}$ and $\num \in O(n)$,
Procedure $\ElectPublic$ w.h.p. results in the election of $\num$ public majorizing committees each of size $\size$ with $\Query = O((1/2-\byzfrac)^{-4} \log^2 n \max(\log \num, \log k))$, $\Time = O(1)$ and $\Message = 0$. Further, each machine belongs to $O(\frac{\size\num}{k}+\size^2)$ committees.
\end{theorem}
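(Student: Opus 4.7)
The plan is to derive the theorem by directly combining Lemmas~\ref{lem:pub:comm:num} and~\ref{lem:pub:comm:maj} with a short accounting of the resource usage of Procedure $\ElectPublic$. First, since $\size > \frac{9\log n}{(1/2-\byzfrac)^2} \ge \frac{2\log n}{(1/2-\byzfrac)^2}$, the hypothesis of Lemma~\ref{lem:pub:comm:maj} is satisfied, so w.h.p.\ every one of the $\num$ committees produced has a strict majority of honest machines, i.e., all $\num$ committees are public majorizing. Second, because $1/2-\byzfrac \le 1/2$ gives $\size \ge 9\log n$, Lemma~\ref{lem:pub:comm:num} applies and w.h.p.\ each machine belongs to at most $\frac{\size\num}{k}+\size^2$ committees. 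A single union bound over these two polynomially small failure probabilities yields that all stated correctness and membership guarantees hold w.h.p.\ simultaneously.

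Next, I would compute the three complexity measures. For $\Query$: by Lemma~\ref{lem:hashfamily}, drawing one hash function $h:[\num]\to[k]$ from a $\size$-wise independent family requires $\size \cdot \max(\log\num,\log k)$ random bits, so drawing the $\size$ independent functions $h_1,\ldots,h_\size$ used in Procedure $\ElectPublic$ requires a total of
\[
\size^2 \cdot \max(\log\num,\log k) \;=\; O\!\left((1/2-\byzfrac)^{-4}\,\log^2 n \cdot \max(\log\num,\log k)\right)
\]
global random bits. Each such bit is obtained through a single $\CloudRG(\cdot)$ call, giving the claimed bound on $\Query$. (Here I am reading Line~1 of the pseudocode as shorthand for sampling all the random bits needed to specify $h_1,\ldots,h_\size$, rather than only $\max(\ceil{\log\num},\log k)$ bits.) For $\Time$: all $\CloudRG$ queries are issued in parallel within a single query sub-round, and the subsequent evaluation $\cC_i = \{h_j(i) \mid j\in[\size]\}$ is a purely local $\poly(\log n)$ computation, so the procedure terminates in $O(1)$ rounds. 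For $\Message$: Procedure $\ElectPublic$ never invokes machine-to-machine communication, so $\Message = 0$.

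There is essentially no obstacle beyond bookkeeping: the only subtlety is making the query accounting explicit, since the natural read of Line~1 of the pseudocode understates the number of random bits actually needed to realize the $\size$ independent hash functions; once the $\size^2\cdot\max(\log\num,\log k)$ bound from Lemma~\ref{lem:hashfamily} is substituted in and $\size = \Theta(\log n/(1/2-\byzfrac)^2)$ is plugged into $\size^2$, the stated $\Query$ bound drops out. Combining the w.h.p.\ guarantees from Lemmas~\ref{lem:pub:comm:num} and~\ref{lem:pub:comm:maj} with these three complexity bounds completes the proof.
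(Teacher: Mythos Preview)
Your proposal is correct and follows exactly the paper's approach: the paper's entire proof is the single sentence ``Lemmas \ref{lem:pub:comm:num} and \ref{lem:pub:comm:maj} directly imply the following,'' and you have simply fleshed out the verification of the lemmas' hypotheses and the resource accounting (including the point that Line~1 of the pseudocode understates the number of $\CloudRG$ calls actually required by Lemma~\ref{lem:hashfamily}). There is nothing to add.
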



Lemmas \ref{lem:pub:comm:num} and \ref{lem:pub:comm:weak} directly imply the following.

\begin{theorem}
\label{thm:pub:rep:comm}
In the $\Benign(\byzfrac<1)$ model, for $\size > \frac{9\log n}{\goodfrac}$ and $\num \in O(n)$,
Procedure $\ElectPublic$ w.h.p. results in the election of $\num$ weak committees each of size $\size$ with $\Query = O(\goodfrac^{-2} \log^2 n \max(\log \num, \log k))$, $\Time = O(1)$, and $\Message = 0$. Further, each machine belongs to $O(\frac{\size\num}{k}+\size^2)$ committees.
\end{theorem}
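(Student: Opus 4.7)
The plan is to derive Theorem \ref{thm:pub:rep:comm} as a corollary of Lemmas \ref{lem:pub:comm:num} and \ref{lem:pub:comm:weak}, together with a counting argument for the $\CloudRG$ calls made by Procedure $\ElectPublic$. The real technical content lies in those two lemmas (Chernoff bounds on the number of committees a machine lands in, and on the probability that every committee contains at least one honest machine); this theorem simply packages them with the cost analysis of the sampling step.

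First I would verify that the hypothesized lower bound $\size > 9\log n/\goodfrac$ simultaneously satisfies the preconditions of both supporting lemmas: since $\goodfrac \le 1$ we have $\size > 9\log n$, so Lemma \ref{lem:pub:comm:num} applies and yields the per-machine membership bound $O(\size\num/k + \size^2)$ w.h.p.; and $\size > 2\log n/\goodfrac$, so Lemma \ref{lem:pub:comm:weak} applies and all $\num$ committees are weak (each contains at least one honest machine) w.h.p. A union bound over these two low-probability failure events preserves the overall w.h.p. guarantee.

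Next I would bound the query complexity. Procedure $\ElectPublic$ samples $\size$ independent hash functions from a $\size$-wise independent family $\cH = \{h : [\num] \to [k]\}$. By Lemma \ref{lem:hashfamily}, instantiated with $c = \size$, $a = \lceil\log\num\rceil$, $b = \lceil\log k\rceil$, seeding one such hash function consumes $\size \cdot \max(\log\num, \log k)$ random bits, so the total number of $\CloudRG$ queries is
\[
  \size \cdot \size \cdot \max(\log\num, \log k) \;=\; O\!\left(\goodfrac^{-2} \log^2 n \cdot \max(\log\num, \log k)\right),
\]
matching the claimed $\Query$. All these queries can be issued in parallel in a single query sub-round, after which each machine evaluates the sampled (identical, publicly known) $h_j$'s locally to reconstruct the same $\num$ committees, so $\Time = O(1)$ and no machine-machine messages are exchanged, giving $\Message = 0$.

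The main obstacle, if any, is essentially cosmetic: the pseudocode in Algorithm \ref{alg:public:comm:election} writes Line~1 as drawing $\max(\lceil\log\num\rceil,\lceil\log k\rceil)$ random bits, whereas the analysis above requires $\Theta(\size^2 \max(\log\num, \log k))$ bits to seed $\size$ hash functions from a $\size$-wise independent family; one must read Line~1 as shorthand for drawing this larger block of $\CloudRG$ bits. With that reading, the three complexity bounds and the membership count follow from the two cited lemmas with no additional argument.
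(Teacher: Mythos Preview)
Your proposal is correct and matches the paper's approach exactly: the paper simply states that Lemmas \ref{lem:pub:comm:num} and \ref{lem:pub:comm:weak} directly imply the theorem, and you have correctly filled in the details (checking that $\size > 9\log n/\goodfrac$ triggers both lemmas, counting $\CloudRG$ calls via Lemma \ref{lem:hashfamily}, and noting the cosmetic mismatch in Line~1 of the pseudocode). If anything, your write-up is more thorough than the paper's one-line justification.
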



\subsubsection{The Weak Resolution Problem in $\Benign(\byzfrac<1)$}

In this problem, it is given that a weak committee $\cC$ knows the bits $x_i$ present in the cloud for indices $i\in\Known\subseteq[1,n]$. The problem requires a machine $\Machine$ to learn these bits from $\cC$. (Both $\cC$ and $\Known$ are known to $\Machine$.)

\begin{algorithm}
\begin{algorithmic}[1]
\Statex \textbf{Input}: Weak committee $\cC$, set of indices $\Known$, $\cC$ knows bits $x_j$ for every $j\in \Known$.
\Statex \textbf{Output}: array $\res$ such that $\res[j] = x_{j}$ for every $j \in \Known$
\For{every machine $M'\in\cC$}
\State $M'$ sends $\res_{M'}[j]$ to $M$ for every index $j\in\Known$
\EndFor
\label{weak:resolve:message}
\State $\BYZ \gets \emptyset$
\Comment{Machines blacklisted by $M$}
\For{every $j$ in $\Known$ {\bf sequentially}}
\If{machines in $\cC\setminus B$ sent different bits for index $j$}
\State $\res[j]\gets \CloudQuery(j)$
\State Add machines that sent $1-\res[j]$ to $\BYZ$. \Comment{Those that sent $1-\res[j]$}
\Else
~~ $\res[j]\gets$ bit received from $\cC\setminus B$
\EndIf
\EndFor
\end{algorithmic}
\caption{Procedure $\WeakResolve(\cC,\Known)$, $\Benign(\byzfrac<1)$, code for machine $\Machine$}
\label{alg:weak:resolve}
\end{algorithm}

\begin{theorem}
\label{thm:weak:resolve}
In the $\Benign(\byzfrac<1)$ model, Procedure $\WeakResolve$
w.h.p. solves the Weak Resolution problem with 
$\Query = O(|\cC|)$, $\Time = O(|\Known|)$ and $\Message = O(|\Known|\cdot|\cC|)$. 
\end{theorem}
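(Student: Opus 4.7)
The plan is first to invoke Theorem~\ref{thm:pub:rep:comm} so that, w.h.p., the weak committee $\cC$ contains at least one honest machine; every subsequent correctness argument is conditioned on this event, and this is the sole source of the w.h.p.\ qualifier in the statement. The main correctness claim then reduces to a single deterministic invariant: no honest machine in $\cC$ is ever added to $M$'s blacklist $\BYZ$. This will follow immediately from the problem specification, which guarantees that an honest $M'\in\cC$ already stores the true value $x_j$ in $\resM[j]$ for every $j\in\Known$, so $M'$ always reports the correct bit. Consequently, whenever $M$ sets $\res[j]$ by cloud-querying $j$, the machines that reported $1-\res[j]$ and are therefore added to $\BYZ$ must all be Byzantine.

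Given this invariant, $\cC\setminus\BYZ$ always contains at least one honest machine, so any bit on which the members of $\cC\setminus\BYZ$ agree unanimously for index $j$ must equal $x_j$. In the complementary case of a disagreement, $M$ directly cloud-queries $j$ and recovers the correct value. Together these two cases cover both branches of the inner loop and establish $\res[j]=x_j$ for every $j\in\Known$.

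For the query bound, I would charge each cloud query made inside the loop to at least one new machine added to $\BYZ$: a disagreement inside $\cC\setminus\BYZ$ implies that some machine sent $1-x_j$, and by the invariant every such machine is Byzantine and thus legitimately blacklisted. Since $|\BYZ|\le|\cC|$ at all times, at most $|\cC|$ cloud queries are ever performed, giving $\Query=O(|\cC|)$. For the remaining measures, the initial broadcast step has each of the $|\cC|$ committee members transmit $|\Known|$ values of $O(\log n)$ bits to $M$, which in the CONGEST model takes $O(|\Known|)$ rounds (since $M$ receives from all $|\cC|$ senders in parallel, one message per sender per round) and a total of $O(|\Known|\cdot|\cC|)$ messages. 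The sequential resolution loop contributes only $O(|\Known|)$ further rounds and no additional inter-machine messages, matching the stated complexities.

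The main subtlety, rather than a genuine obstacle, is to notice that the correct progress measure for bounding queries is the growing size of the blacklist, not the number of disagreements; naively the latter could be as large as $|\Known|$. I would also be careful to state explicitly that blacklisting sweeps up \emph{all} machines disagreeing with $\res[j]$, not just one, which is what keeps the analysis tight and avoids any dependence on $|\Known|$ in $\Query$.
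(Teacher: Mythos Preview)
Your proposal is correct and follows essentially the same approach as the paper: both argue that every cloud query blacklists at least one (Byzantine) committee member, bounding $\Query$ by $|\cC|$, and both derive correctness from the fact that $\cC\setminus\BYZ$ always retains an honest machine so unanimous agreement is trustworthy. Your treatment is in fact slightly more explicit than the paper's (spelling out the invariant that honest members are never blacklisted and identifying the source of the w.h.p.\ qualifier), but the underlying argument is identical.
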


\begin{proof}
The algorithm ensures that whenever $\Machine$ queries the cloud, at least one machine in $\cC$ is blacklisted, hence $\Query=O(|\cC|)$. The round complexity is $O(|\Known|)$ since $\Machine$ needs to receive the bits corresponding to the indices in $\Known$ from $\cC$ (line \ref{weak:resolve:message}), and it takes another $O(|\Known|)$ rounds to query the cloud sequentially and blacklist Byzantine machines. 
The communication performed in line \ref{weak:resolve:message} is also responsible for the message complexity, $\Message=O(|\Known|\cdot|\cC|)$. Correctness follows from the fact that for every index $j \in \Known$, $\Machine$ either learns $x_j$ directly from the cloud,
or all the non-blacklisted machines remaining in $\cC$ have sent the same value $b$ for it. In the latter case, $b$ must be correct since $\cC$ is 1-representative, implying that least one honest machine in $\cC$ sent $b$.
\end{proof}

\textbf{Fast Weak Resolution sensitive to $\Density$ in the $\Benign(\byzfrac<1)$ model}

We now discuss a slightly modified algorithm for the weak resolution problem, called Procedure $\FastWeakResolve$, that has improved round complexity compared to that of Procedure $\WeakResolve$
when an overwhelming majority of the bits at indices in $\Known$ are zeros (or alternatively, an overwhelming majority are ones). 
In this procedure, machine $\Machine$ asks the committee $\cC$ for the next smallest index $j \in\Known$ such that $x_j=b$, alternating between $b=0$ and $b=1$. This continues until there are no further indices satisfying this condition for some value of $b$. At this point, $\Machine$ terminates its execution.

\begin{theorem}
\label{thm:fast:weak:resolve}
In the $\Benign(\byzfrac<1)$ model, Procedure $\FastWeakResolve$
w.h.p. solves the weak resolution problem with $\Query = O(|\cC|)$, $\Time = O(|\cC|+\min(\Density_\Known,1-\Density_\Known)|\Known|)$ and message complexity $\Message = O\left((1+\min(\Density_\Known,1-\Density_\Known)|\Known|)|\cC|\right)$ where $\Density_k$ is the fraction of ones at indices in $\Known$
\end{theorem}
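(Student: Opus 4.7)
The plan is to mirror the analysis of Procedure $\WeakResolve$ from Theorem~\ref{thm:weak:resolve} for query complexity, while giving a sharper round/message accounting that exploits the alternation. Let $d=\min(\Density_\Known,1-\Density_\Known)|\Known|$ and WLOG let $b^\star$ be the minority value (so there are exactly $d$ indices $j\in\Known$ with $x_j=b^\star$).

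\emph{Correctness.} I would argue by induction over the sequence of ``next-$b$'' probes. At each probe, every non-blacklisted member $M'\in\cC$ returns a candidate index $j$ together with the implicit claim that $j$ is the smallest unprocessed index in $\Known$ whose bit equals the current $b$. If all non-blacklisted members agree on this pair, then because $\cC$ is weak (at least one honest member) and, by induction, no honest member has been blacklisted so far, at least one honest endorsement is present and the pair is correct. If they disagree, $\Machine$ cloud-queries the conflicting candidate indices and blacklists every member whose report is contradicted by the cloud's answer. Between two consecutive confirmed $b$-indices, $\Machine$ fills in all intermediate indices of $\Known$ with $1-b$; validity follows from the fact that the confirmed index is the \emph{next} $b$-valued one, so every smaller unprocessed index must carry $1-b$. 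The procedure halts when, for some $b$, the non-blacklisted members unanimously report ``no more $b$-indices'', at which point $\Machine$ writes $1-b$ into every remaining entry of $\Known$; the same unanimity-plus-honesty argument makes this sound.

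\emph{Query complexity.} Reuse the accounting from Theorem~\ref{thm:weak:resolve} verbatim: every cloud query is triggered by a disagreement among non-blacklisted members of $\cC$, and each such query exposes (and permanently blacklists) at least one member. Hence the total number of cloud queries is at most $|\cC|$, giving $\Query=O(|\cC|)$.

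\emph{Time and message complexity.} Each successful probe commits one new $b$-valued index; because $\Machine$ alternates between $b=0$ and $b=1$, the $b^\star$-side supplies at most $d$ probes before reporting exhaustion, so the number of successful probe rounds is at most $2d+1$. Each disagreement adds $O(1)$ extra rounds for the cloud query and the subsequent blacklisting, contributing $O(|\cC|)$ additional rounds in total. This yields $\Time=O(|\cC|+d)$. Machine-to-machine communication occurs only on probe rounds, where each of the $|\cC|$ committee members sends $\Machine$ an $O(\log n)$-bit reply, so summing over the $O(1+d)$ probes gives $\Message=O((1+d)|\cC|)$. The main subtlety will be ensuring the fill-in step remains correct in the face of adversarial reports about gaps between successive $b$-indices; the cleanest justification is that the inductive correctness of each confirmed ``next $b$-index'' automatically validates the intermediate $1-b$ assignments, so no extra verification work (or extra queries) is required.
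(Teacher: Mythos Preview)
Your argument is correct and mirrors the paper's proof: each cloud query blacklists at least one committee member (giving $\Query=O(|\cC|)$), and the alternation between $b=0$ and $b=1$ forces termination once the minority value is exhausted after $O(d)$ probe rounds, which together with the at most $|\cC|$ blacklisting rounds yields the stated $\Time$ and $\Message$. One small inaccuracy: the procedure does \emph{not} fill in intermediate indices between successive confirmed $b$-indices as you describe---intermediate indices are assigned only via the parallel $b'=1-b$ probes and the single final fill-in at line~\ref{fast:weak:resolve:ret}---but this misreading of the pseudocode does not affect your complexity accounting.
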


\begin{proof}
The algorithm guarantees that at least one machine is blacklisted whenever a query is made to the cloud. Hence, $\Query=O(|\cC|)$. Let us consider the case when $\Density_{\Known}<1-\Density_{\Known}$. When $i=\Density_\Known|\Known|+1$ (in line \ref{fast:weak:resolve:i}) and $b=1$ (in line \ref{fast:weak:resolve:b}), $\cC\setminus\BYZ$ sends the $(\Density_{\Known}|\Known|+1)^{th}$ smallest index $j\in\Known$ where $x_j=1$. Since such an index does not exist, the honest machines will send $n+1$ as per the algorithm. Thus $\Machine$ will eventually execute line \ref{fast:weak:resolve:ret}. At this point $\Machine$ has knows all the indices $j\in\Known$ that satisfy $x_j=1$. Therefore the remaining indices $j\in \Known$ satisfy $x_j=0$ and $\Machine$ terminates its execution. We know that $\Machine$ spends $O(|\cC|)$ rounds to blacklist machines and query the cloud. Therefore $\Time=O(\Density_{\Known}|\Known|+|\cC|)$. A similar argument holds when $\Density_{\Known}\ge1-\Density_{\Known}$. Hence $\Time=O(|\cC|+\min(\Density_\Known,1-\Density_\Known)|\Known|)$. Similarly $\Message=O((1+\min(\Density_\Known,1-\Density_\Known)|\Known|)|\cC|)$.
\end{proof}

\begin{algorithm}
\begin{algorithmic}[1]

\Statex \textbf{Input}: Weak committee $\cC$,  set of indices $\Known$. $\cC$ knows bits $x_j$ for every $j\in \Known$.
\Statex \textbf{Output}: Array $\res$ such that $\res[j]=x_j$ for every $j\in \Known$.
\Statex \textbf{Notation}: We use $\Known_{\bitval,i}$ where $\bitval \in \{0,1\}$ to denote the $i^{th}$ smallest index $j$ in $\Known$ 
such that $x_j=\bitval$. If $j$ doesn't exist, then  $\Known_{\bitval,i}=n+1$.
\State $\res[j]\gets\nul$ for every $j\in \Known$
\State $\BYZ\gets\emptyset$
\For{$i=1,2....n$ sequentially}\label{fast:weak:resolve:i}
\For{$b=0,1$ sequentially}\label{fast:weak:resolve:b}
\For{every machine $M'\in\cC\setminus\BYZ$}
\State $M'$ sends $\Known_{b,i}$ to $M$.
\EndFor
\State $\INDEX \gets$ set of distinct indices received
\For{every $j\in \INDEX$ such that $\res[j]=\nul$ sequentially in sorted order}
\If{$j=n+1$}
\State $\res[j]=1-b$ for every $j$ such that $j\in \Known$ and $\res[j]=\nul$\label{fast:weak:resolve:ret}
\State\Return
\EndIf
\If{$j$ is largest index in $\INDEX$}~~
$\res[j]\gets b$\Comment{No need to verify}
\Else~~
$\res[j]\gets \CloudQuery(j)$
\EndIf
\If{$\res[j]=b$}
\State Add machines that sent indices different from $j$ to $\BYZ$
\State $\INDEX\gets\emptyset$
\Else ~~
Add machines that sent $j$ to $\BYZ$
\EndIf
\EndFor
\EndFor
\EndFor
\end{algorithmic}
\caption{Procedure $\FastWeakResolve(\cC,\Known)$ , $\Benign(\byzfrac<1)$, code for machine $M$}
\label{alg:weak:resolve-density-based}
\end{algorithm}

\subsubsection{Input Convergecasting in $\Benign(\byzfrac<1)$}

In this section, 
we introduce the \emph{input convergecasting} problem and provide a procedure called $\InputConvergecast$ to solve it. This will be used later for solving the $\IDp$ and $\XORp$ problems with improved round complexities.

We make use of the following structures. 
A $d$-ary tree in which every level $i\ge 0$, except possibly the deepest, is entirely filled (i.e., has $d^i$ nodes) is called a \emph{complete tree}.
A \emph{public convergecast tree} $T_{(n,k)}$ is an $(\ceil{\frac{n}{k}+1})$-ary complete rooted tree with exactly $\ceil{\frac{k^2}{n}}$ leaves such that each node in the tree represents a weak committee.
A \emph{public convergecast forest} $\cF_{(n,k)}$ is a forest containing $\min(\ceil{\frac{n}{k}},k)$ public convergecast trees $T_{(n,k)}$.
The following lemma summarizes some straightforward properties of 
public convergecast trees and forests.
\begin{lemma}
\label{thm:Benign:CT}
In a public convergecast tree $T_{(n,k)}$,
the degree of every non leaf node is at least 2,
the number of nodes is $\Theta(1+\frac{k^2}{n})$,
the height is $O(\log k)$,
and when $k^2\le n$, the tree has exactly one node which is also the root.
%
%
In a public convergecast forest $\cF_{(n,k)}$,
the number of roots is $\min(\ceil{\frac{n}{k}},k)$,
the number of leaves is $\Theta(k)$,
and the total number of nodes is $\Theta(k)$.
\end{lemma}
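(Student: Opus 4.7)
The proof is essentially a direct verification of each listed property from the definitions of complete $d$-ary trees and of $T_{(n,k)}$ and $\cF_{(n,k)}$. My plan is to handle each claim in sequence.

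For the public convergecast tree $T_{(n,k)}$, let $d=\ceil{n/k+1}$ and $\ell=\ceil{k^2/n}$ be its arity and leaf count. The first claim (every non-leaf has degree at least $2$) follows from $d\ge 2$: in the definition of a complete tree, every node above the deepest level has exactly $d$ children, and nodes at the penultimate level have either $0$ or $d$ children provided we ``complete'' by adding $d$ children at a time (this minor convention is the only thing that needs to be fixed up, and is the one place I would be careful in writing the formal proof). For the count of nodes, I would use the standard observation that in a complete $d$-ary tree, the number of internal nodes is at most $\lceil\ell/(d-1)\rceil$, so the total is $\Theta(\ell) = \Theta(1 + k^2/n)$, with the additive $1$ absorbing the degenerate case $\ell=1$.

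For the height bound, the height of a $d$-ary tree with $\ell$ leaves is $\Theta(\log_d \ell)$. I would split into two cases. If $n\le k$, then $d\ge 2$ and $\ell=\ceil{k^2/n}\le k^2$, so the height is at most $\log_2(k^2)=2\log k$. If $n>k$, substitute $x=n/k>1$: then $d\ge x$ and $\ell \le k/x + 1$, yielding height $O\!\left(\frac{\log k}{\log x}\right)=O(\log k)$ since $\log x\ge 1$ whenever $x\ge 2$, and the boundary region $1<x<2$ can be absorbed into a constant. The last bullet for $T_{(n,k)}$ is immediate: $k^2\le n$ forces $\ell=\ceil{k^2/n}=1$, so the tree consists of only its root.

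For the forest $\cF_{(n,k)}$, the root count is built into the definition. For the total leaf count, I would split on whether $n\le k^2$ or $n>k^2$. In the first case, the forest contains $\ceil{n/k}$ trees of $\ceil{k^2/n}$ leaves each, and $(n/k)\cdot(k^2/n)=k$, so the ceilings give $\Theta(k)$; in the second case, $\ceil{k^2/n}=1$ and there are $k$ trees, again giving exactly $k$ leaves. The total node count follows by combining this with the per-tree bound $\Theta(1+k^2/n)$: in both regimes, one factor is at most a constant, so the product telescopes to $\Theta(k)$.

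The only step requiring any real care is the height estimate when $n/k$ is close to $1$ (i.e., when the arity $d$ is a small constant and $\log_d$ and $\log_2$ differ by a constant factor); outside of this small regime everything reduces to a direct geometric-sum calculation. I do not foresee any genuine obstacle: the lemma is purely combinatorial and every bound is an immediate consequence of the definitions.
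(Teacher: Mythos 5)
Your verification is correct. The paper itself gives no proof of this lemma --- it is stated as summarizing ``straightforward properties'' of the definitions --- and your direct case analysis (arity $d=\ceil{n/k+1}\ge 2$, leaf count $\ell=\ceil{k^2/n}$, height $\le\log_2\ell=O(\log k)$, and the two regimes $n\le k^2$ versus $n>k^2$ for the forest counts) is exactly the intended routine check. The one caveat you flag --- that a ``complete'' tree whose deepest level is partially filled can have a penultimate-level node with a single child, so the degree-at-least-2 claim needs a convention on how the last level is populated --- is a genuine ambiguity in the paper's definition rather than a gap in your argument, and your resolution (children are added $d$ at a time) is the natural fix.
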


Consider a public convergecast forest $\cF=\cF_{(n,k)}$ and number its nodes from 1 to $\num$ in some predefined order. Denote the $i^{th}$ node of $\cF$ by $\cF_i$. All we need to construct $\cF$ is a set of $\num$ weak committees. Hence, we represent $\cF$ as $\cF=\{\cC_1,\cC_2,\ldots,\cC_{\num}\}$, where $\cC_i$ is a weak committee and $\cF_i=\cC_i$ for $i=1,2,\ldots,\num$. Also, we use $\cF_i^p$ to denote the parent node of $\cF_i$ and $\{\cF_i^c\}$ to denote the set of children of $\cF_i$. 
The level of a node in $\cF$ is equal to the height of its subtree. Thus all leaves of $\cF$ are at level 1. The height of $\cF$ denoted by $\cHt$ is the maximum height of a tree in $\cF$.

Theorem \ref{thm:pub:rep:comm} directly implies the following.

\begin{theorem}
In the $\Benign(\byzfrac<1)$ model, Procedure $\ConvergecastForest$
w.h.p constructs a public convergecast forest with the following complexities: (i) $\Query = O(\goodfrac^{-2} \log^2 n \log k)$, (ii) $\Time = O(1)$, (iii) $\Message = 0$. Also, every machine belongs to $O(\frac{\log ^2n}{\goodfrac^2})$ nodes.
\end{theorem}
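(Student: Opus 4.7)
The plan is to construct Procedure $\ConvergecastForest$ as a direct wrapper around Procedure $\ElectPublic$, since the public convergecast forest is fully specified once we have the required collection of weak committees together with the predefined numbering of nodes. First I would invoke Lemma \ref{thm:Benign:CT} to determine the number of weak committees needed: a public convergecast forest $\cF_{(n,k)}$ has $\num=\Theta(k)$ total nodes, and each node is exactly one weak committee. Since $k\le n$, we have $\num\in O(n)$, which satisfies the hypothesis of Theorem \ref{thm:pub:rep:comm}.

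Next I would set the committee size parameter to $\size=\lceil 9\log n/\goodfrac\rceil$ (the smallest value allowed by Theorem \ref{thm:pub:rep:comm}) and run Procedure $\ElectPublic(\size,\num)$. The resulting $\num$ committees are, w.h.p., weak (i.e., 1-representative public) committees. I would then label them in the predefined order promised in the definition of $\cF_{(n,k)}$, thereby assigning each node $\cF_i$ of the forest its committee $\cC_i$. Since all machines draw the same global random bits via $\CloudRG$, every machine produces an identical labeling locally, so no additional communication is required to agree on the structure.

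The complexity bounds then follow by direct substitution into Theorem \ref{thm:pub:rep:comm}. The query complexity is
\[
\Query \;=\; O\!\left(\goodfrac^{-2}\log^2 n\cdot\max(\log\num,\log k)\right)\;=\;O(\goodfrac^{-2}\log^2 n\,\log k),
\]
using $\num=\Theta(k)$ so $\log\num=\Theta(\log k)$. The round complexity is $\Time=O(1)$ and the message complexity is $\Message=0$, both inherited unchanged from $\ElectPublic$, since no machine-machine communication takes place. For the membership bound, Theorem \ref{thm:pub:rep:comm} gives that each machine belongs to $O(\size\num/k+\size^2)$ committees; substituting $\size=O(\log n/\goodfrac)$ and $\num=O(k)$ yields
\[
O\!\left(\frac{(\log n/\goodfrac)\cdot k}{k}+\frac{\log^2 n}{\goodfrac^2}\right)\;=\;O\!\left(\frac{\log^2 n}{\goodfrac^2}\right),
\]
as required.

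There is no substantive obstacle here, since the result is explicitly labeled as a direct implication of Theorem \ref{thm:pub:rep:comm}. The only point worth checking carefully is that the two hypotheses of that theorem, namely $\size\ge 9\log n/\goodfrac$ and $\num\in O(n)$, are both met by the forest's parameters; both follow immediately from Lemma \ref{thm:Benign:CT}.
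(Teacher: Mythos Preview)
Your proposal is correct and follows exactly the paper's approach: the paper states that the theorem is a direct consequence of Theorem~\ref{thm:pub:rep:comm}, and your argument spells out precisely that implication by verifying the hypotheses ($\size\ge 9\log n/\goodfrac$ and $\num=\Theta(k)\in O(n)$ via Lemma~\ref{thm:Benign:CT}) and substituting into the complexity and membership bounds.
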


\begin{algorithm}
\begin{algorithmic}[1]
\State $\num\gets$ number of nodes in an $(n,k)$ public convergecast forest\Comment{$\num\in\Theta(k)$}
\State $\size\gets\ceil{\frac{9\log n}{\goodfrac}}$
\State $\cC_1,\cC_2,\ldots,\cC_{\num}\gets\ElectPublic(\size,\num)$
\State \Return $\cF=\{\cC_1,\cC_2,\ldots,\cC_{\num}\}$
\end{algorithmic}
\caption{Procedure $\ConvergecastForest$, $\Benign(\byzfrac<1)$, code for all machines}
\label{alg:Benign:CF}
\end{algorithm}


\begin{defn}(Input Convergecasting Problem) 
Every bit is known by some root of a public convergecast forest $\cF_{(n,k)}$.
\end{defn}
We now describe Procedure $\InputConvergecast$ that solves the Input Convergecasting problem. We first construct a convergecast forest $\cF=\cF_{(n,k)}$. Partition the indices in $[1,n]$ among the $\Theta(k)$ leaves of $\cF$ such that each leaf is assigned $\Theta(\frac{n}{k})$ indices. Every nonleaf node is assigned the union of the indices assigned to its children. Every leaf learns the bits at indices assigned to it by directly querying the cloud. Every nonleaf node learns the bits at indices assigned to it by invoking Procedure $\WeakResolve$ on its children.
The procedure terminates when all roots have learned the bits at indices assigned to it. 

\begin{algorithm}
\begin{algorithmic}[1]
\State $\cF\gets \ConvergecastForest$
\State Partition the indices in $[1,n]$ into disjoint parts such that every machine in leaf $v$ in $\cF$ is assigned the set $\INDEX_v$ such that $|\INDEX_v|=O(\frac{n}{k})$ 
\State Every machine in nonleaf node $v$ in $\cF$ is assigned the set $\INDEX_v$ containing the union of indices assigned to the leaves in its subtree.
\For{every leaf $v$ in $\cF$ in parallel }
\For{every machine $M\in v$}
\State $\res_M[j]\gets\CloudQuery(j)$ for every $j\in\INDEX_v$
\label{line:IC:Benign:leaf}
\EndFor
\EndFor
\For{$i=2,3,\ldots,\cHt$ sequentially}
\For{every nonleaf node $v$ in $\cF$ at level $i$ in parallel}
\For{every child $\cC$ of $v$ in parallel}
\State Every machine in $v$ invokes Procedure $\WeakResolve(\cC,\INDEX_{\cC})$
\label{line:IC:Benign:weak}
\EndFor
\EndFor
\EndFor
\end{algorithmic}
\caption{Procedure $\InputConvergecast$, $\Benign(\byzfrac<1)$, code for all machines}
\label{alg:IC:benign}
\end{algorithm}

\begin{theorem}
In the $\Benign(\byzfrac<1)$ model, 
Procedure $\InputConvergecast$ w.h.p.
solves the Input Convergecasting problem 
with the following complexities.
\\
- When $k^2>n$, (i) $\Query = O(\frac{n\log^3n}{\goodfrac^3 k})$, (ii) $\Time = O((\frac{n}{k}+\frac{k^2}{n})\frac{\log^2n}{\goodfrac^2})$, (iii) $\Message = O(\frac{n\log k\log^2 n}{\goodfrac^2})$.
\\
- When $k^2\le n$, (i) $\Query = O(\frac{n\log^3n}{\goodfrac^3 k})$, (ii) $\Time = O(1)$, (iii) $\Message =0$.
\label{thm:IC:benign}
\end{theorem}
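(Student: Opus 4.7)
The plan is to prove correctness by induction on the level of nodes in the convergecast forest $\cF$, then bound each complexity measure separately, treating the cases $k^2 \le n$ and $k^2 > n$ differently for $\Time$ and $\Message$.

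For correctness, first invoke Theorem \ref{thm:pub:rep:comm} together with a union bound over the $\Theta(k)$ committees of $\cF$ to conclude that w.h.p.\ every node of $\cF$ represents a weak (1-representative) committee, and condition on this event. I then argue by induction on the level $i$ of a node $v \in \cF$ that after iteration $i$ of the outer loop of Algorithm \ref{alg:IC:benign}, every honest machine in $v$ knows $x_j$ for every $j \in \INDEX_v$. The base case $i = 1$ (leaves) is handled by Line \ref{line:IC:Benign:leaf}, which queries the cloud directly. For the inductive step at level $i \ge 2$, fix $v$ and a child $\cC$; by the hypothesis, machines in $\cC$ know $x_j$ for every $j \in \INDEX_\cC$, so applying Theorem \ref{thm:weak:resolve} to Line \ref{line:IC:Benign:weak} guarantees that every honest machine in $v$ correctly learns these bits. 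Since $\INDEX_v = \bigcup_{\cC \text{ child of } v} \INDEX_\cC$, the inductive claim holds; and because the $\INDEX$ sets at the leaves partition $[1,n]$, at termination every bit is known to some root.

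For $\Query$, fix an honest machine $M$. By Theorem \ref{thm:pub:rep:comm} $M$ lies in at most $O(\log^2 n / \goodfrac^2)$ nodes of $\cF$. For each leaf it lies in, $M$ incurs $O(n/k)$ cloud queries in Line \ref{line:IC:Benign:leaf}; for each non-leaf it lies in, and for each of the $O(n/k)$ children $\cC$ of that node, $M$ incurs $O(|\cC|) = O(\log n / \goodfrac)$ queries inside $\WeakResolve$ by Theorem \ref{thm:weak:resolve}. Summing these contributions and absorbing the $O(\log^2 n \log k / \goodfrac^2)$ queries spent constructing $\cF$ yields $\Query = O(n \log^3 n / (\goodfrac^3 k))$ as claimed in both regimes.

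For $\Time$ and $\Message$, split on the regime. When $k^2 \le n$, Lemma \ref{thm:Benign:CT} ensures each tree of $\cF$ is a single node that is simultaneously leaf and root, so only Line \ref{line:IC:Benign:leaf} runs, giving $\Time = O(1)$ and $\Message = 0$. When $k^2 > n$, at each level $i \ge 2$ all non-leaf nodes run their $O(n/k)$ children's $\WeakResolve$ calls in parallel, and each call on a child $\cC$ takes time $O(|\INDEX_\cC|) = O((n/k)^{i-1})$ by Theorem \ref{thm:weak:resolve}. This geometric sequence ranges from $O(n/k)$ at level $2$ to $O(k^2/n)$ at the root, summing to $O(n/k + k^2/n)$ base rounds across the $O(\log k)$ levels of the forest. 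The extra $O(\log^2 n / \goodfrac^2)$ multiplier comes from CONGEST contention, since each machine participates concurrently in $O(\log^2 n / \goodfrac^2)$ convergecast nodes and its incident links can be oversubscribed by that factor. For $\Message$, each $\WeakResolve(\cC, \INDEX_\cC)$ contributes $O(|\INDEX_\cC| \cdot |\cC| \cdot |v|)$ messages because every machine in $v$ invokes it; the sum of $|\INDEX_\cC|$ over all parent-child pairs is $O(n \log k)$ since each bit is relayed along an $O(\log k)$-length path from leaf to root, and multiplying by $|\cC|\cdot|v| = O(\log^2 n / \goodfrac^2)$ gives $\Message = O(n \log k \log^2 n / \goodfrac^2)$.

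The main obstacle is the time analysis in the $k^2 > n$ regime: a single machine simultaneously participates in up to $O(\log^2 n / \goodfrac^2)$ nodes across multiple levels, and within each such node it runs $\WeakResolve$ on every child in parallel. Showing that all these concurrent CONGEST transmissions can be pipelined so that the overall running time stays at $O((n/k + k^2/n)\log^2 n / \goodfrac^2)$, rather than picking up an additional branching-factor blowup, will require a careful per-round scheduling argument on each machine's incident links.
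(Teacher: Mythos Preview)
Your approach is essentially the same as the paper's: both derive the query bound from each machine lying in $O(\log^2 n/\goodfrac^2)$ nodes, handle $k^2\le n$ by noting every tree degenerates to a single leaf, and for $k^2>n$ bound $\Time$ by the index-set size at the root's children (giving the $n/k+k^2/n$ term) times the $O(\log^2 n/\goodfrac^2)$ congestion factor, and bound $\Message$ by observing each bit is forwarded along an $O(\log k)$ path between committees of size $O(\log n/\goodfrac)$. The paper is slightly terser (it simply asserts the root level is the slowest phase rather than summing your geometric series) and, like you, leaves the per-link CONGEST scheduling implicit.
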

\begin{proof}
A machine belongs to 
$O((\log ^2n)/\goodfrac^2)$ 
leaves, so the number of queries made by a machine in line \ref{line:IC:Benign:leaf}  is 
$O((n\log^2n)/(\goodfrac^2 k))$. 
Further, a machine belongs to 
$O((\log ^2n)/\goodfrac^2)$ 
nonleaf nodes. Hence, a machine has $O((n\log^2n)/(\goodfrac^2 k))$
children. By Thm. \ref{thm:weak:resolve}, the total number of queries spent over all invocations of Procedure $\WeakResolve$ in line \ref{line:IC:Benign:weak} is 
$\Query=O((n\log^3n)/\goodfrac^3)$. 

When $k^2\le n$, all nodes are leaves, so every machine terminates after directly querying the cloud for bits at indices assigned to it, hence $\Time=O(1)$ and $\Message=0$.
When $k^2>n$, there are 
$\ceil{n/k}$ 
roots and each root is assigned $O(\frac{n}{n/k})=O(k)$ indices. If the root is filled, then each child of the root would be assigned $O(\frac{k}{n/k})=O(\frac{k^2}{n})$ indices. If the root is not filled, then each child of the root is also a leaf. In such a case, the child is assigned $O(\frac{n}{k})$ indices. Therefore a child of a root would be assigned $O(\frac{k^2}{n}+\frac{n}{k})$ indices. Also, a machine in a child of a root could belong to $O(\frac{\log^2 n}{\goodfrac^2})$ nodes. Therefore a root would spend $O((\frac{n}{k}+\frac{k^2}{n})\frac{\log^2n}{\goodfrac^2})$ rounds during Procedure $\WeakResolve$. This is also the slowest phase of the procedure and therefore $\Time=O((\frac{n}{k}+\frac{k^2}{n})\frac{\log^2n}{\goodfrac^2})$.
Every bit is sent by $O(\log k)$ committees to its parent committee. Therefore $\Message=O(\frac{n\log k\log^2 n}{\goodfrac^2})$
\end{proof}

\CommentedStart
\textbf{Faster convergecasting.}
Procedure $\WeakResolve$, used in Procedure $\InputConvergecast$, could be replaced with Procedure $\FastWeakResolve$ to improve the round and message complexities. However, when the density of ones and zeros is not uniform throughout the input,
this may not provide a speedup by a factor of $\gamin$. This is because the range $[1,n]$ is partitioned among the leaves of $\cF$ in a predefined manner, so the density of ones and zeros directly queried by a machine in a leaf in line \ref{line:IC:Benign:leaf} could be worse than the true density of ones and zeros in the input. To fix this, we need a technique by which the machines in the leaves can query random indices. We can use the cloud's \emph{random bit generator service} to generate a new array $\Random$ of size $O(n\log n)$ whose elements are chosen uniformly at random from $[1,n]$. This guarantees that every index $i\in[1,n]$
will appear in $\Random$ w.h.p. We now partition the array $\Random$ among the leaves in $\cF$. Every machine applies two rounds of queries to learn a bit. The first involves $\ceil{\log n}$ queries to obtain $\ceil{\log n}$ random bits in order to generate a random index $i\in[1,n]$. The second round consists of a query to learn $x_i$. This technique, along with Procedure $\FastWeakResolve$, allows us to improve the round and message complexities for input convergecasting by a factor of $O(\gamin\log n)$.
\CommentedEnd

\subsection{The $\IDp$ Problem in the Benign Model}

\subsubsection{Randomized algorithm in $\Benign(\byzfrac<1)$}

The approach taken in Sect. \ref{sec:idist:sol1} for performing $\IDp$ in the $\Harsh(\byzfrac<1)$ model fell short of achieving (near) optimal query complexity. Subsequently, in this section we describe an approach that achieves optimal query complexity up to logarithmic factors (whp) in the $\Benign$ model.

Algorithm $\LinearDownload$, presented next, has large time and message complexities, and later on we present an improved algorithm called $\FastLinearDownload$. Nevertheless, we find it instructive to present this algorithm first, as it illustrates the key idea behind our approach in a simpler setting, hence serving as a useful introduction to the more complex Algorithm $\FastLinearDownload$.
The idea is as follows. Construct $k$ weak committees $\cC_1,\cC_2...\cC_k$. The indices in $[1,n]$
is partitioned into $k$ disjoint parts of about $n/k$ bits such that the $i^{th}$ committee $\cC_i$ has the $i^{th}$ part. Now, the information is spread from $\cC_1$ to $\cC_k$ in a linear fashion. $\cC_i$ learns the bits in first $i-1$ parts from $\cC_{i-1}$ by invoking Procedure $\WeakResolve$, and then learns the bits in the $i^{th}$ part by directly querying the cloud. This way, $\cC_k$ learns the bits in all $k$ parts, and subsequently every machine learns all the bits from $\cC_k$ by invoking Procedure $\WeakResolve$.

\begin{algorithm}
\label{alg:ID:Benign:Linear:Slow}
\begin{algorithmic}[1]
    \State $\size\gets\ceil{\frac{9\log n}{\goodfrac}}$
    \State $\cC_1,\cC_2,\ldots,\cC_{k}\gets\ElectPublic(\size,k)$
    \State Partition indices in $\INDEX=[1,n]$ into $k$ disjoint subsets $\INDEX_1,\INDEX_2,\ldots,\INDEX_k$ each of about $n/k$ bits.
    \For{$i=1,2,\ldots, k$ in parallel}
    \label{line:ID:Benign:linear:slow:cloud}
    \For{every machine $M\in\cC_i$}
    \State $\res_M[j]\gets\CloudQuery(j)$ for every $j\in\INDEX_i$ 
    \EndFor
    \EndFor
    \For{$i=2,3,\ldots, k$ sequentially}
    \label{line:ID:Benign:linear:slow:weak}
    \State Every machine in $\cC_i$ invokes Procedure $\WeakResolve(\cC_{i-1},\bigcup_{j=1}^{i-1}\INDEX_j) $
    \EndFor
    \State Every machine invokes Procedure $\WeakResolve(\cC_{k},\INDEX)$
    \label{line:ID:Benign:linear:slow:last}
    \end{algorithmic}
    \caption{Algorithm $\LinearDownload$ , $\Benign(\byzfrac<1)$, code for all machines}
\end{algorithm}

\begin{theorem}
\label{thm:ID:Benign:Linear:Slow}
In the $\Benign(\byzfrac<1)$ model, Algorithm 
$\LinearDownload$ w.h.p.
solves the $\IDp$ problem with 
$\Query=O\left(\frac{n\log^2 n}{\goodfrac^2 k}+\frac{\log^3 n}{\goodfrac^3}\right)$, $\Time=O(nk)$ and $\Message=O\left(\frac{nk\log^2 n}{\goodfrac^2}\right)$.
\end{theorem}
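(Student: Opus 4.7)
The plan is to verify correctness first and then compute each complexity measure by carefully tracking work both per committee and per machine. For correctness, I would invoke Theorem~\ref{thm:pub:rep:comm} to argue that w.h.p.\ every $\cC_i$ is a weak (1-representative public) committee. Then by induction on $i$: after the cloud-query step, every honest machine in $\cC_i$ knows the bits on $\INDEX_i$; assuming every honest machine in $\cC_{i-1}$ knows the bits on $\bigcup_{j\le i-1}\INDEX_j$, Theorem~\ref{thm:weak:resolve} (applied to the $\WeakResolve$ call on line~\ref{line:ID:Benign:linear:slow:weak}) guarantees every honest machine in $\cC_i$ correctly learns those bits. After $i=k$, every honest machine in $\cC_k$ knows all $n$ bits, so the final $\WeakResolve$ on line~\ref{line:ID:Benign:linear:slow:last} completes $\IDp$ for every machine.

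For query complexity, the key quantitative input from Theorem~\ref{thm:pub:rep:comm} is that each machine belongs to $b = O(\sigma\nu/k+\sigma^2) = O(\log^2 n/\goodfrac^2)$ of the $k$ elected committees (with $\sigma=O(\log n/\goodfrac)$ and $\nu=k$). In the cloud step, each committee performs $\Theta(n/k)$ queries per machine on its slice, so a single machine pays at most $b \cdot n/k = O\bigl(\frac{n\log^2 n}{\goodfrac^2 k}\bigr)$ cloud queries. In the $\WeakResolve$ steps, by Theorem~\ref{thm:weak:resolve} each invocation costs $O(|\cC_{i-1}|)=O(\sigma)$ queries for the invoking machine; a machine participates in at most $b$ such invocations (one per committee it belongs to), contributing $O(b\sigma)=O(\log^3 n/\goodfrac^3)$. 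The final $\WeakResolve$ call on line~\ref{line:ID:Benign:linear:slow:last} adds another $O(\sigma)$ and is absorbed. Summing gives the stated $\Query$.

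For round complexity, all cloud queries happen in parallel in $O(1)$ rounds (line~\ref{line:ID:Benign:linear:slow:cloud} is a single parallel step), so the dominant cost is the sequential chain on line~\ref{line:ID:Benign:linear:slow:weak}. The $i$-th invocation of $\WeakResolve$ has $|\Known|=\sum_{j<i}|\INDEX_j|=O(in/k)$, and by Theorem~\ref{thm:weak:resolve} takes $O(in/k)$ rounds. Summing $\sum_{i=2}^{k} O(in/k) = O(nk)$, and the final parallel $\WeakResolve$ of $|\Known|=n$ contributes only $O(n)$, giving $\Time=O(nk)$. For message complexity, each $\WeakResolve(\cC_{i-1},\cdot)$ is invoked by every machine in $\cC_i$; by Theorem~\ref{thm:weak:resolve} one invocation costs $O(|\Known|\cdot|\cC_{i-1}|)=O((in/k)\cdot\sigma)$ messages, so the $i$-th step costs $O(|\cC_i|\cdot(in/k)\cdot\sigma) = O((in/k)\cdot\sigma^2)$ messages. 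Summing over $i$ yields $O(nk\sigma^2)=O(nk\log^2 n/\goodfrac^2)$, and the final broadcast-style $\WeakResolve$ costs at most $O(nk\sigma)$ which is dominated.

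The main subtlety I anticipate is justifying the per-machine multiplicity $b$ when bounding $\Query$: the bound on the number of committees each machine joins is a high-probability statement from Theorem~\ref{thm:pub:rep:comm}, and must be combined with the high-probability correctness of $\WeakResolve$ (Theorem~\ref{thm:weak:resolve}) via a union bound over the $O(k)$ calls. Once all the ``w.h.p.'' events are collected and a single union bound is taken (polynomially many events, each failing with probability $\le 1/n^c$ for large $c$), the stated bounds all hold simultaneously w.h.p.
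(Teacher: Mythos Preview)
Your proposal is correct and follows essentially the same approach as the paper's proof: both invoke Theorem~\ref{thm:pub:rep:comm} to bound the per-machine committee multiplicity by $O(\log^2 n/\goodfrac^2)$, use this to bound the cloud-query cost on line~\ref{line:ID:Benign:linear:slow:cloud}, and then apply Theorem~\ref{thm:weak:resolve} to bound the cost of the sequential $\WeakResolve$ chain and the final call. Your write-up is in fact more detailed than the paper's (you spell out the induction for correctness, the $\sum_i O(in/k)=O(nk)$ calculation for $\Time$, and the per-step accounting $|\cC_i|\cdot|\Known|\cdot|\cC_{i-1}|$ for $\Message$), but the underlying argument is identical.
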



\begin{proof}
By Thm. $\ref{thm:pub:rep:comm}$, a machine belongs to 
$O((\log^2 n)/\goodfrac^2)$
committees. Therefore the number of queries spent in line \ref{line:ID:Benign:linear:slow:cloud} is 
$O((n\log^2 n)/(\goodfrac^2 k))$.
By Thm. $\ref{thm:weak:resolve}$, the number of queries spent in line \ref{line:ID:Benign:linear:slow:weak} is 
$O((\log^3 n)/\goodfrac^3)$ and 
$O((\log n)/\goodfrac)$ 
in line \ref{line:ID:Benign:linear:slow:last}. Hence, $\Query=O\left(\frac{n\log^2 n}{\goodfrac^2 k}+\frac{\log^3 n}{\goodfrac^3}\right)$.
Since the $k$ committees learn bits sequentially in line  \ref{line:ID:Benign:linear:slow:weak}, $\Time=O(nk)$.
Since each committee is of size $O(\frac{\log n}{\goodfrac})$, from line \ref{line:ID:Benign:linear:slow:weak}, $\Message=O(\frac{nk\log^2 n}{\goodfrac^2})$ messages. 
\end{proof}

Again, the round and message complexities of the algorithm can be improved by replacing Procedure $\WeakResolve$ with $\FastWeakResolve$.
Theorems \ref{thm:ID:Benign:Linear:Slow} and \ref{thm:fast:weak:resolve} yield the following.

\begin{algorithm}
\label{alg:ID:Benign:linear:fast}
\begin{algorithmic}[1]
    \State $\size\gets\ceil{\frac{9\log n}{\goodfrac}}$
    \State $\cC_1,\cC_2,\ldots,\cC_{k}\gets\ElectPublic(\size,k)$
    \State Partition indices in $\INDEX=[1,n]$ into $k$ disjoint subsets $\INDEX_1,\INDEX_2,\ldots,\INDEX_k$ each of about $n/k$ bits.
    \For{$i=1,2,\ldots, k$ in parallel}
    \For{every machine $M\in\cC_i$}
    \State $\res_M[j]\gets\CloudQuery(j)$ for every $j\in\INDEX_i$ 
    \EndFor
    \EndFor
    \For{$i=2,3,\ldots, k$ sequentially}
    \State Every machine in $\cC_{i}$ invokes Procedure $\FastWeakResolve(\cC_{i-1},\bigcup_{j=1}^{i-1}\INDEX_j) $
    \EndFor
    \State Every machine invokes Procedure $\FastWeakResolve(\cC_{k},\INDEX)$
    \end{algorithmic}
    \caption{Algorithm $\FastLinearDownload$ , $\Benign(\byzfrac<1)$, code for all machines}
\end{algorithm}

\begin{theorem}
\label{thm: fast linear download}
In the $\Benign(\byzfrac<1)$ model, Algorithm
$\FastLinearDownload$ w.h.p.
solves the $\IDp$ problem with
$\Query=O\left(\frac{n\log^2 n}{\goodfrac^2 k}+\frac{\log^3 n}{\goodfrac^3}\right)$, 
$\Time=O(\gamin nk+\frac{n\log n}{\goodfrac})$ and $\Message=O\left(\frac{(\gamin k+1)n\log^2 n}{\goodfrac^2}\right)$.
\end{theorem}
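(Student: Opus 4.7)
The plan is to recycle the analysis of Theorem~\ref{thm:ID:Benign:Linear:Slow} almost unchanged: Algorithm~$\FastLinearDownload$ differs from $\LinearDownload$ only in that each call to $\WeakResolve$ is replaced by $\FastWeakResolve$ with identical arguments. Correctness therefore transfers immediately: since $\FastWeakResolve$ w.h.p.\ solves the Weak Resolution problem (Theorem~\ref{thm:fast:weak:resolve}), an inductive argument on $i$ shows that $\cC_i$ w.h.p.\ learns all bits at indices in $\bigcup_{j\le i}\INDEX_j$, and the final parallel invocation propagates all of $\INDEX$ to every machine.

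For the query bound, Theorem~\ref{thm:fast:weak:resolve} gives $\Query=O(|\cC|)$ per call, matching the bound for $\WeakResolve$ used in Theorem~\ref{thm:ID:Benign:Linear:Slow}. Hence the query accounting carries over verbatim and yields $\Query=O\!\left(\frac{n\log^2 n}{\goodfrac^2 k}+\frac{\log^3 n}{\goodfrac^3}\right)$, since every machine still belongs to $O(\log^2 n/\goodfrac^2)$ committees by Theorem~\ref{thm:pub:rep:comm}.

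The improvement shows up in time and messages. The key observation is that for any $\Known\subseteq[1,n]$ the number of minority-valued bits of $\INPUT$ restricted to $\Known$ is at most the number of minority-valued bits in all of $\INPUT$; consequently $\min(\Density_\Known,1-\Density_\Known)\cdot|\Known|\le \gamin\cdot n$. Each invocation of $\FastWeakResolve$ therefore runs in $O(|\cC|+\gamin n)=O(\frac{\log n}{\goodfrac}+\gamin n)$ rounds and incurs $O((1+\gamin n)|\cC|)$ messages per invoking machine. Summing over the $k-1$ sequential calls of the loop, each performed in parallel by the $O(\log n/\goodfrac)$ machines of $\cC_i$, and then adding the final parallel call executed by all $k$ machines, yields the stated $\Time=O(\gamin nk+\frac{n\log n}{\goodfrac})$ and $\Message=O((\gamin k+1)n\log^2 n/\goodfrac^2)$.

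The only step that requires a moment's thought is the density inequality above: it must hold uniformly over every prefix union $\bigcup_{j\le i}\INDEX_j$ arising during the execution, regardless of how the initial partition of $[1,n]$ into the $\INDEX_i$ was chosen. This is immediate, however, since the count of minority bits of $\INPUT$ inside any subset cannot exceed that in all of $\INPUT$. Beyond this, the argument is a mechanical reapplication of the $\LinearDownload$ bookkeeping with the tighter per-call bounds from Theorem~\ref{thm:fast:weak:resolve}.
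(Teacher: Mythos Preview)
Your proposal is correct and follows the same route as the paper: the paper's own proof is the single sentence ``Theorems~\ref{thm:ID:Benign:Linear:Slow} and~\ref{thm:fast:weak:resolve} yield the following,'' and you have spelled out exactly that combination. The one substantive observation you add---that $\min(\Density_\Known,1-\Density_\Known)\,|\Known|\le \gamin\, n$ for every $\Known\subseteq[1,n]$ because the count of each bit value in a subset is bounded by its global count---is precisely the missing link needed to pass from the per-call bound of Theorem~\ref{thm:fast:weak:resolve} to the stated $\Time$ and $\Message$, and your justification of it is sound.
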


We now describe a parallel algorithm for $\IDp$ that makes use of Procedure $\InputConvergecast$. At the end of Procedure $\InputConvergecast$, every bit is present in some root of the convergecast forest $\cF$. Every machine can learn all bits by invoking Procedure $\WeakResolve$ on each of the $\min(\frac{n}{k},k)$ roots of $\cF$ with the respective set of indices assigned to it.

\begin{algorithm}
\begin{algorithmic}[1]
\State Invoke Procedure $\InputConvergecast$.
\State Let $\cF$ denote the convergecast forest.
\For{every root $v$ in $\cF$ in parallel}
\State Let $\INDEX_v$ denote the set of indices assigned to $v$ during Procedure $\InputConvergecast$
\State Every machine invokes Procedure $\WeakResolve(v,\INDEX_v)$
\label{line:ID:Benign:Parallel:Weak}
\EndFor 
\State\Return
\end{algorithmic}
\caption{$\DownloadParallel$ , $\Benign(\byzfrac<1)$, code for all machines}
\label{alg:ID:Benign:Parallel}
\end{algorithm}

\begin{theorem}
\label{thm: download-parallel}
In the $\Benign(\byzfrac<1)$ model, Algorithm
$\DownloadParallel$ w.h.p.
solves the $\IDp$ problem with $\Query = O(\frac{n\log^3n}{\goodfrac^3k})$, $\Time = O((\frac{n}{k}+k)\frac{\log^2 n}{\goodfrac^2})$ and $\Message= O(\frac{nk\log n}{\goodfrac}+\frac{n\log k\log^2 n}{\goodfrac^2})$.
\end{theorem}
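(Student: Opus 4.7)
The plan is to derive the bounds for Algorithm $\DownloadParallel$ by composing the guarantees of Procedures $\InputConvergecast$ and $\WeakResolve$, and then doing a careful accounting of the parallel invocations in line \ref{line:ID:Benign:Parallel:Weak}.

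First, I would handle correctness. By Theorem \ref{thm:IC:benign}, after $\InputConvergecast$, every bit of $\INPUT$ is known by the machines in some root committee of $\cF$ w.h.p., and the sets $\{\INDEX_v\}_{v \text{ root}}$ partition $[1,n]$. Applying Theorem \ref{thm:weak:resolve} to each $\WeakResolve(v,\INDEX_v)$, every machine w.h.p. learns $x_j$ for every $j \in \INDEX_v$, so taking a union bound over the $O(\min(n/k,k))$ roots, every honest machine learns every $x_j$, $j \in [1,n]$, w.h.p.

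Next I would handle query complexity. The first contribution comes from $\InputConvergecast$ which by Theorem \ref{thm:IC:benign} costs $O(n\log^3 n / (\goodfrac^3 k))$. In the second phase, each machine invokes $\WeakResolve$ once per root, and by Theorem \ref{thm:weak:resolve} each call costs $O(|v|) = O(\log n/\goodfrac)$ queries. Summing over the $O(\min(n/k,k))$ roots gives an additive $O((\min(n/k,k))\cdot \log n/\goodfrac)$, which is subsumed by the first term in both the $k^2 \le n$ and $k^2 > n$ regimes. Thus $\Query = O(n\log^3 n / (\goodfrac^3 k))$.

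The main obstacle is the time and message accounting for the parallel weak resolves, since a single machine $M$ can belong to up to $O(\log^2 n/\goodfrac^2)$ distinct root committees (by Theorem \ref{thm:pub:rep:comm}), and all these parallel $\WeakResolve$ invocations use the \emph{same} point-to-point edges in the congested clique. I would bound this by observing that, for an edge $(M,M')$, the total number of $O(\log n)$-bit messages $M$ ever needs to send to $M'$ during line \ref{line:ID:Benign:Parallel:Weak} is at most $\sum_{v\,\text{root},\,M\in v}|\INDEX_v|$. Splitting into cases: when $k^2 \le n$ every root is a singleton with $|\INDEX_v| = O(n/k)$ and there are $k$ roots, yielding $O((n/k)\cdot \log^2 n/\goodfrac^2)$ congestion; when $k^2 > n$ there are $\lceil n/k\rceil$ roots with $|\INDEX_v| = O(k)$, yielding $O(k\log^2 n/\goodfrac^2)$ congestion. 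Combining with the $O((n/k + k^2/n)\log^2 n/\goodfrac^2)$ time of $\InputConvergecast$ (Theorem \ref{thm:IC:benign}) and using $k^2/n \le k$ when $k \le n$, this gives $\Time = O((n/k + k)\log^2 n/\goodfrac^2)$.

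Finally, for message complexity, I would sum the $\Message = O(|\Known|\cdot|\cC|)$ cost of each $\WeakResolve$ (Theorem \ref{thm:weak:resolve}) across all $(M', v)$ pairs: every root sends each of its $|\INDEX_v|$ bits from each of its $O(\log n/\goodfrac)$ committee members to each of the $k$ machines, giving a total of $\sum_v |\INDEX_v|\cdot O(k\log n/\goodfrac) = O(nk\log n/\goodfrac)$ messages in this phase. Adding the $O(n\log k\log^2 n/\goodfrac^2)$ from $\InputConvergecast$ yields the stated $\Message$ bound. The final step is to apply a union bound over the $O(k)$ high-probability events (one per root's $\WeakResolve$) together with the high-probability event inside $\InputConvergecast$, preserving the $1 - O(1/n)$ success probability.
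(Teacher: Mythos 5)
Your proposal is correct and follows essentially the same route as the paper: invoke Theorem \ref{thm:IC:benign} for the convergecast phase, then account for the parallel $\WeakResolve$ calls using the facts that each machine lies in $O(\log^2 n/\goodfrac^2)$ roots and each root holds $O(n/k+k)$ bits. Your per-edge congestion accounting and explicit union bound are just slightly more detailed versions of the paper's own argument.
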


\begin{proof}
By Thm. \ref{thm:IC:benign}, Procedure $\InputConvergecast$ takes $O(\frac{n\log^3 n}{\goodfrac^3 k})$ queries. In line \ref{line:ID:Benign:Parallel:Weak}, every machine invokes Procedure $\WeakResolve$ for $O(\frac{n}{k})$ times and spends $O(\frac{n\log n}{k\goodfrac})$ queries. Hence, $\Query=O(\frac{n\log^3 n}{\goodfrac^3 k})$. By Thm. \ref{thm:IC:benign}, Procedure $\InputConvergecast$ takes $O((\frac{n}{k}+k)\frac{\log^2 n}{\goodfrac^2})$ rounds. Since $\cF$ has  $\min(\ceil{\frac{n}{k}},k)$ roots, each root knows $O(\frac{n}{k}+k)$ bits. Also, a machine belongs to $O(\frac{\log^2 n}{\goodfrac^2})$ roots. Hence, by Thm. \ref{thm:weak:resolve}, executing line \ref{line:ID:Benign:Parallel:Weak} takes another $O((\frac{n}{k}+k)\frac{\log^2 n}{\goodfrac^2})$ rounds, so $\Time=O((\frac{n}{k}+k)\frac{\log^2 n}{\goodfrac^2})$. By Thm. \ref{thm:IC:benign}, Procedure $\InputConvergecast$ takes  $O(\frac{n\log k\log^2 n}{\goodfrac^2})$ messages. In line \ref{line:ID:Benign:Parallel:Weak}, each bit is sent by a root of size $O(\frac{\log n}{\goodfrac})$ to every machine, taking $O(\frac{nk\log n}{\goodfrac})$ messages. Hence, $\Message=O(\frac{nk\log n}{\goodfrac}+\frac{n\log k\log^2 n}{\goodfrac^2})$. 
\end{proof}

\subsubsection{Randomized algorithm in $\Benign(\byzfrac<1/2)$}
\label{sec:InDistr-beta =1/2}

When $\byzfrac < 1/2$, we can use public majorizing committees to obtain better solutions.

\begin{algorithm}
\begin{algorithmic}[1]
\State $\cC_1,\cC_2,\ldots,\cC_k\gets\ElectPublic(\frac{2\log n}{(1/2-\byzfrac)^2},k)$
\State Partition the indices in $\INDEX=[1,n]$ into k disjoint subsets $\INDEX_1,\INDEX_2,\ldots,\INDEX_k$ each of size almost $n/k$.
\For{$i=1,2,\ldots, k$ in parallel}
\label{line:ID:Benign:Major:cloud}
\For{every machine $\Machine\in\cC_i$}
\State $\res_M[j]\gets\CloudQuery(j)$ for every $j\in\INDEX_i$
\EndFor
\EndFor
\For{$i=1,2,\ldots, k$ in parallel}
\label{line:ID:Benign:Major:send}
\For{every machine $\Machine\in\cC_i$}
\State $\Machine$ sends $\langle\res_M[j],j\rangle$ for every $j\in\INDEX_i$ to every machine.
\EndFor
\EndFor
\For{every machine $M$ in parallel}
\For{$i=1,2,\ldots, n$ locally}
\State Let $\cC_j$ be the committee such that $i\in\INDEX_j$
\If{$M$ received more $\langle 1,i\rangle$ messages from $\cC_j$ than $\langle 0,i\rangle$ messages from $\cC_j$}
\State $\res_M[i]\gets1$
\Else
\State $\res_M[i]\gets0$
\EndIf
\EndFor
\EndFor
\end{algorithmic}
\caption{Algorithm $\MajorizingDownload$, $\Benign(\byzfrac<1/2)$, code for all machines}
\label{alg:ID:Benign:majorizing}
\end{algorithm}

\begin{theorem}
\label{thm: Disjunction majorizing Benign(1/2)}
In the $\Benign(\byzfrac<1/2)$ model, Algorithm
$\MajorizingDownload$
solves the $\IDp$ problem with $\Query=O\left(\frac{n \log^2 n}{k(1/2-\byzfrac)^4}  \right), \Time = O\left(\frac{n \log^2 n}{k(1/2-\byzfrac)^4}  \right)$, 
and $\Message = O\left(\frac{nk\log n}{(1/2-\byzfrac)^2}\right)$. 
\end{theorem}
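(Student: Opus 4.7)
The plan is to verify correctness from the majority property of public committees, and then to bound the three complexity measures by counting how many committees each machine joins and how much it must send per link under CONGEST.

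For correctness, I will first invoke Theorem \ref{thm:pub:major:comm} (with $\size = \Theta(\log n / (1/2-\byzfrac)^2)$ and $\num = k$) to conclude that, w.h.p., every $\cC_j$ produced by $\ElectPublic$ is a public majorizing committee, i.e., contains a strict majority of honest machines. Fix any index $i \in [1,n]$ and the unique $j$ with $i \in \INDEX_j$. Every honest machine in $\cC_j$ cloud-queries $i$ in line~\ref{line:ID:Benign:Major:cloud} and then broadcasts the true value $\langle x_i, i\rangle$ in line~\ref{line:ID:Benign:Major:send}. Since $\cC_j$ is publicly known and has an honest majority, the number of true votes strictly exceeds the number of false votes any receiving machine $M$ can get from $\cC_j$, so $M$'s local majority rule recovers $x_i$ correctly. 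Taking a union bound over the $k$ committees preserves correctness w.h.p.

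For the query complexity, Theorem \ref{thm:pub:major:comm} also states that each machine belongs to $O(\frac{\size \num}{k} + \size^2) = O(\size^2)$ committees when $\num = k$. For each committee it joins, a machine cloud-queries only the $O(n/k)$ indices of that committee's block. Hence each honest machine issues at most $O(\size^2 \cdot n/k) = O\!\left(\frac{n \log^2 n}{k(1/2-\byzfrac)^4}\right)$ cloud queries, plus $O(\poly\log n)$ queries to form the committees via $\ElectPublic$, matching the stated bound.

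For round complexity, the cloud-query phase (line~\ref{line:ID:Benign:Major:cloud}) finishes in $O(1)$ query sub-rounds because queries within a sub-round are unrestricted. The main obstacle, which I expect to be the bottleneck of the analysis, is the CONGEST-limited broadcast in line~\ref{line:ID:Benign:Major:send}. Each machine is in $O(\size^2)$ committees, so on each outgoing link it must push $O(\size^2 \cdot n/k)$ index-value pairs, each $O(\log n)$ bits; since only $O(\log n)$ bits fit per link per round, this takes $\Time = O(\size^2 \cdot n/k) = O\!\left(\frac{n \log^2 n}{k(1/2-\byzfrac)^4}\right)$ rounds, which absorbs the $O(1)$ query sub-rounds. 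The final local majority computation is free. For message complexity, each of the $k$ committees has $O(\size)$ members, each sending $O(n/k)$ tagged bits to $O(k)$ recipients, giving $O(\size \cdot n/k \cdot k) = O(n \size)$ messages per committee and $O(nk\size) = O\!\left(\frac{nk\log n}{(1/2-\byzfrac)^2}\right)$ messages overall, as claimed.
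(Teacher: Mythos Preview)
Your proposal is correct and follows essentially the same approach as the paper's own proof: invoke Theorem~\ref{thm:pub:major:comm} to get that all $k$ committees are majorizing w.h.p., derive correctness from the honest majority in each $\cC_j$, and bound $\Query$ and $\Time$ by the $O(\size^2)$ committee-membership count times $O(n/k)$ while bounding $\Message$ by $k$ committees of size $O(\size)$ each broadcasting $O(n/k)$ bits to $k$ machines. One tiny remark: the w.h.p.\ guarantee in Theorem~\ref{thm:pub:major:comm} already holds simultaneously for all $\num=k$ committees, so your extra union bound over committees is unnecessary (though harmless).
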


\begin{proof}
Correctness follows since $\cC_1,\ldots, \cC_k$ are all public majorizing committees. 
Consider index $i$, and let $\cC_j$ be the (unique) committee such that $i\in\INDEX_j$. 
All honest machines in $\cC_j$ sent $\langle x_i,i\rangle$ to all machines and since $\cC_j$ has a majority of honest machines,
every machine $M$ will learn $x_i$ correctly.

A  machine belongs to $O(\frac{\log^2 n}{(1/2-\byzfrac)^4})$ committees, so the number of queries spent in the algorithm (in line \ref{line:ID:Benign:Major:cloud}) is 
$\Query=O\left(\frac{n \log^2 n}{k(1/2-\byzfrac)^4}  \right)$.
Machines communicattion (in line \ref{line:ID:Benign:Major:send}) takes 
$\Time = O\left(\frac{n \log^2 n}{k(1/2-\byzfrac)^4}  \right)$.
Each committee of size $O(\frac{\log n}{(1/2-\byzfrac)^2})$ sends $O(\frac{n}{k})$ bits to $k$ machines. Hence $\Message=O\left(\frac{nk\log n}{(1/2-\byzfrac)^2}\right)$. 
\end{proof}

\subsection{The $\XORp$ Problem in the Benign Model}

\subsubsection{Randomized parity algorithm in $\Benign(\byzfrac<1)$}
Before solving the $\XORp$ problem, we define a sub problem called the Weak Parity Resolution problem and provide Procedure $\WeakParityResolve$ to solve it. This is similar to the Weak Resolution problem and the corresponding Procedure $\WeakResolve$ that we described earlier. 

\textbf{Weak Parity Resolution Problem} In this problem, it is given that a weak committee $\cC$ knows bits $x_i$ present in the cloud for indices $i\in\Known\subseteq[1,n]$ . A machine $\Machine$ needs to learn the parity of these bits, i.e., $\oplus_{j\in\Known}x_j$, from $\cC$.

Clearly, Procedure $\WeakResolve$ can solve the Weak Parity Resolution Problem, but the round complexity will be $O(|\Known|)$. We describe Procedure $\WeakParityResolve$ that solves it in $O(1)$ rounds as follows. Let $\Known[i]$ denote the $i^{th}$ index in $\Known$. Let $XOR_\Known(l,r) = x_{\Known[l]}\oplus x_{\Known[l+1]}\ldots\oplus x_{\Known[r]}$ for $1\le l\le r\le |\Known|$. In the Weak Parity Resolution Problem, a machine $\Machine$ needs to learn the value of $XOR_\Known(1,|\Known|)$ from a weak committee $\cC$. If machines in $\cC$ send different values for $XOR_\Known(1,|\Known|)$, $\Machine$ concludes that some machines are Byzantine and should
be blacklisted. $\Machine$ binary searches for the largest $r<|\Known|$ such that machines in $\cC$ send the same $XOR_\Known(1,r)$ value but different $XOR_\Known(1,r+1)$ values. At this point, $\Machine$ just queries the cloud for $x_{r}$ and blacklists the Byzantine machines. Since $\cC$ is a weak committee, this happens at most $|\cC|-1$ times, after which all remaining machines in $\cC$ send the same $XOR_\Known(1,|\Known|)$ value.

\begin{algorithm}
\begin{algorithmic}[1]
\Statex \textbf{Input:} set of indices $\Known\subseteq[1,n]$, weak committee $\cC$ that knows bits $x_i$ for all indices $i\in\Known$ 
\Statex \textbf{Notation:} We use $\Known[i]$ to denote the $i^{th}$ element in $\Known$. $XOR(l,r)=x_{\Known[l]}\oplus x_{\Known[l+1]}\ldots\oplus x_{\Known[r]}$ for $1\le l\le r\le |\Known|$. $XOR_{\Machine'}(l,r)$ denotes the value of $XOR(l,r)$ sent by $\Machine'$ to $\Machine$
\Statex \textbf{Output:} $XOR(1,|\Known|)$
\State $\BYZ\gets \emptyset$
\State Ask every machine in $\cC\setminus\BYZ$ for $XOR(1,|\Known|)$
\While{machines in $\cC\setminus B$ sent different values for $XOR(1,|\Known|)$}
\label{line:WeakXORResolve:outerwhile}
\State $l\gets 1, r\gets |\Known|$\Comment{Binary search based blacklisting}
\While{$\l\neq r$}
\label{line:WeakXORResolve:innerwhile}
\State $m\gets\floor{\frac{l+r}{2}}$
\State Ask every machine in $\cC\setminus\BYZ$ for $XOR(l,m)$
\For{every machine $\Machine'\in\cC\setminus\BYZ$ locally}
\State Deduce that $XOR_{\Machine'}(m+1,r)=XOR_{\Machine'}(l,r)\oplus XOR_{\Machine'}(l,m)$
\EndFor
\If{Every machine in $\cC\setminus\BYZ$ sent the same value for $XOR(l,m)$}
~~$l\gets\ m+1$
\Else ~~$r\gets m$
\EndIf
\EndWhile
\State $\res[\Known[l]]\gets\CloudQuery(\Known[l])$\Comment{$l=r$}
\State Add machines $\Machine'$ for which
$XOR_{\Machine'}(l,l)\neq\res[\Known[l]]$ to $\BYZ$
\label{line:WeakXORResolve:query}
\EndWhile
\State \Return $XOR(1,|\Known|)$ sent by machines in $\cC\setminus B$.
\end{algorithmic}
\caption{Procedure $\WeakXORResolve(\cC,\Known)$, $\Benign(\byzfrac<1)$, code for machine $\Machine$}
\label{alg:Weak:XOR:Resolve}
\end{algorithm}

\begin{theorem}
\label{thm:parity:Benign}
In the $\Benign(\byzfrac<1)$ model, 
Procedure $\WeakXORResolve$ w.h.p.
solves the Weak Parity Resolution problem with $\Query = O(|\cC|)$, $\Time = O(|\cC|\log n)$ and $\Message = O(|\cC|^2\log n)$.
\end{theorem}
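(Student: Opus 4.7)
The plan is to verify (i) correctness, and then (ii) bound the three complexity measures by analyzing the outer while loop of Procedure $\WeakXORResolve$ and exploiting the fact that every iteration of that loop forces the blacklisting of at least one Byzantine machine in $\cC$.

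First I would establish the key invariant that every honest machine $M'\in\cC$ reports the true value of every XOR-prefix it is queried for. Since $\cC$ is a weak (1-representative) committee, there is at least one honest machine $M^*\in\cC$, and by the invariant, $M^*$ is never blacklisted by $\Machine$. Hence when the outer while loop exits (all remaining machines in $\cC\setminus\BYZ$ agree on $XOR(1,|\Known|)$), the common value must coincide with what $M^*$ reports, which is the correct parity. This yields correctness.

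Next I would analyze the inner binary search (line \ref{line:WeakXORResolve:innerwhile}). At entry to the inner loop, machines in $\cC\setminus\BYZ$ disagree on $XOR(l,r)$ for $(l,r)=(1,|\Known|)$. I maintain the invariant that machines in $\cC\setminus\BYZ$ disagree on $XOR(l,r)$ throughout the binary search. On each halving step, by the identity $XOR(l,r)=XOR(l,m)\oplus XOR(m+1,r)$ (which $\Machine$ enforces locally on each machine's reports), at least one of the two sub-intervals must display a disagreement, and the code always descends into such a sub-interval. This preserves the invariant and guarantees that after $O(\log|\Known|)=O(\log n)$ rounds we reach $l=r$, at which point $\Machine$ makes exactly one cloud query for $x_{\Known[l]}$ and blacklists every machine in $\cC\setminus\BYZ$ whose reported single-bit value disagrees with the cloud's. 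Because there was a disagreement at $(l,l)$, at least one such machine exists, so $|\BYZ|$ strictly increases.

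Finally, I would combine these observations. Since $|\BYZ|$ strictly increases on each iteration of the outer loop and $M^*$ is never added, the outer loop executes at most $|\cC|-1$ times. Each iteration performs one cloud query, yielding $\Query = O(|\cC|)$. Each iteration spends $O(\log n)$ rounds on the binary search, yielding $\Time = O(|\cC|\log n)$. In each such round, up to $|\cC|$ machines send one $O(\log n)$-bit reply to $\Machine$, and the final cloud query contributes negligibly; summing over all iterations gives $\Message = O(|\cC|^2\log n)$. The main subtle point to handle carefully is the inner-loop invariant that a disagreement in $XOR(l,r)$ forces a disagreement in at least one of the two halves; once this is nailed down, the rest is a routine counting argument over the at most $|\cC|-1$ blacklistings guaranteed by the weak-committee property.
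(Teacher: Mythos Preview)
Your proposal is correct and follows essentially the same approach as the paper: both argue that honest committee members are never blacklisted, that each iteration of the outer loop blacklists at least one Byzantine machine (bounding $\Query$ and the number of iterations by $O(|\cC|)$), and that the inner binary search costs $O(\log n)$ rounds per iteration. Your treatment is in fact more careful than the paper's terse proof, since you make explicit the binary-search invariant that a disagreement on $XOR(l,r)$ forces a disagreement on one of the halves.
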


\begin{proof}
Clearly, an honest machine in $\cC$ is never blacklisted.
Correctness follows from the fact that the procedure returns only when all machines in $\cC\setminus B$ send the same value for $XOR(L,R)$. 

By line \ref{line:WeakXORResolve:query}, at least one machine is blacklisted after every query to the cloud, so $\Query=O(|\cC|)$. Hence, $\Query=O(|\cC|)$. Every execution of the while loop starting at line \ref{line:WeakXORResolve:innerwhile}
takes $O(\log n)$ rounds. Also, at least one machine is blacklisted every time the loop is executed. Hence, $\Time=O(|\cC|\log n)$, and similarly $\Message=O(|\cC|^2\log n)$.
\end{proof}

\begin{algorithm}
\begin{algorithmic}[1]
\State $\InputConvergecast()$. Let $\cF$ denote the convergecast forest.
\For{every root $v$ in $\cF$ in parallel}
\State Let $\INDEX_v$ denote the set of indices assigned to $v$ during Procedure $\InputConvergecast$
\State Every machine invokes Procedure $\WeakXORResolve(v,\INDEX_v)$
\label{line:XOR:Benign:Parallel:Weak}
\EndFor 
\State\Return
\end{algorithmic}
\caption{Algorithm $\ConvergeParity$, $\Benign(\byzfrac<1)$, code for all machines}
\label{alg:XOR:Benign:Parallel}
\end{algorithm}

\begin{theorem}\label{thm: benign parity converge}
In the $\Benign(\byzfrac<1)$ model, Algorithm 
$\ConvergeParity$ w.h.p.
solves the $\XORp$  problem with $\Query = O(\frac{n\log^3n}{\goodfrac^3k})$, 
$\Time = O((\frac{n}{k}+\frac{k^2}{n}+\frac{\log^2 n}{\goodfrac})\frac{\log^2 n}{\goodfrac^2})$ when $k^2>n$ and $\Time = O(\frac{\log ^4 n}{\goodfrac^3})$ when $k^2\le n$ and 
$\Message = O(\frac{n\log^3 n}{\goodfrac^2})$.
\end{theorem}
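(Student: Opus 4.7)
The plan is to combine the guarantees of Procedure $\InputConvergecast$ (Thm.~\ref{thm:IC:benign}) with those of Procedure $\WeakXORResolve$ (Thm.~\ref{thm:parity:Benign}). Correctness is immediate: after $\InputConvergecast$ completes, the convergecast forest $\cF$ has its roots collectively storing all $n$ bits, with the index sets $\INDEX_v$ over roots $v$ forming a partition of $[1,n]$, and each root $v$ (a weak public committee) knows every bit $x_j$ for $j\in \INDEX_v$. Each honest machine then invokes $\WeakXORResolve(v,\INDEX_v)$ on every root in parallel to obtain $\oplus_{j\in\INDEX_v} x_j$, and XORs these values to produce $\oplus_{j=1}^n x_j$. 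By Thm.~\ref{thm:parity:Benign}, each invocation returns the correct partial parity w.h.p., and a union bound over the $\min(\lceil n/k\rceil,k)$ roots preserves the high probability guarantee.

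For query complexity, Thm.~\ref{thm:IC:benign} gives $O(n\log^3 n/(\goodfrac^3 k))$ queries for $\InputConvergecast$. Each machine then invokes $\WeakXORResolve$ on $\min(\lceil n/k\rceil,k)$ roots; by Thm.~\ref{thm:parity:Benign}, each such call uses $O(|\cC|)=O(\log n/\goodfrac)$ queries, contributing an additional $O(\min(n/k,k)\cdot \log n/\goodfrac)$ queries, which is dominated by the $\InputConvergecast$ cost in either regime. Hence $\Query=O(n\log^3 n/(\goodfrac^3 k))$.

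For time complexity, I split into the two regimes. When $k^2\le n$, $\InputConvergecast$ takes $O(1)$ rounds, so the dominant cost comes from the parallel $\WeakXORResolve$ phase. A single invocation takes $O(|\cC|\log n)=O(\log^2 n/\goodfrac)$ rounds, but because each machine belongs to $O((\size\num)/k+\size^2)=O(\log^2 n/\goodfrac^2)$ root committees (by Thm.~\ref{thm:pub:rep:comm} applied with $\size=O(\log n/\goodfrac)$ and $\num=\min(\lceil n/k\rceil,k)$), when acting as a responder it must reply to a distinct asker for each of these committee memberships over the single outgoing link per round, inflating the parallel phase by this factor and yielding $O(\log^4 n/\goodfrac^3)$ rounds. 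When $k^2>n$, the same congestion analysis gives the same $O(\log^4 n/\goodfrac^3)$ bound for the parallel $\WeakXORResolve$ phase, and adding the $\InputConvergecast$ cost of $O((n/k+k^2/n)\log^2 n/\goodfrac^2)$ gives the stated bound $O((n/k+k^2/n+\log^2 n/\goodfrac)\cdot \log^2 n/\goodfrac^2)$. For messages, $\InputConvergecast$ contributes $O(n\log k\log^2 n/\goodfrac^2)$ (zero when $k^2\le n$), and each of the $k$ machines runs $\WeakXORResolve$ on $\min(n/k,k)$ roots, each using $O(|\cC|^2\log n)=O(\log^3 n/\goodfrac^2)$ messages, totaling $O(k\cdot\min(n/k,k)\cdot \log^3 n/\goodfrac^2)=O(n\log^3 n/\goodfrac^2)$.

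The main obstacle will be justifying the $O(\log^2 n/\goodfrac^2)$ congestion slowdown for the parallel $\WeakXORResolve$ phase rigorously, since the theorem's time bound crucially depends on it: within each $\WeakXORResolve$ call the asker interacts with its committee in a binary search over $O(|\cC|)$ blacklisting rounds, and a responder machine that happens to lie in many root committees must serialize its outgoing replies to askers across all those committees. Bounding the per-machine root-committee membership via Lem.~\ref{lem:pub:comm:num} is the key lemma, and the rest of the complexity analysis is then a direct aggregation of the two component theorems.
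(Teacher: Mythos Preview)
Your proof is correct and follows essentially the same approach as the paper: decompose into the cost of $\InputConvergecast$ (Thm.~\ref{thm:IC:benign}) plus the cost of the parallel $\WeakXORResolve$ calls (Thm.~\ref{thm:parity:Benign}), with the congestion factor for time arising because a single machine may belong to $O(\log^2 n/\goodfrac^2)$ root committees. One minor technical imprecision: you invoke the committee-membership bound with $\num=\min(\lceil n/k\rceil,k)$ (the number of roots), but $\ConvergecastForest$ actually elects all $\num=\Theta(k)$ forest nodes in a single call to $\ElectPublic$, so root-membership is bounded only indirectly via total membership; the resulting $O(\size^2)=O(\log^2 n/\goodfrac^2)$ bound is the same either way, so your conclusion stands.
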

\begin{proof}
By Thm. \ref{thm:IC:benign}, Procedure $\InputConvergecast$ takes $\Time=O(\frac{n\log^3 n}{\goodfrac^3})$ rounds. Since there are $O(\frac{n}{k})$ roots in $\cF$, it takes $\Query=O(\frac{n\log n}{k\goodfrac})$ queries per machine over all invocations of Procedure $\WeakXORResolve$. Hence, $\Query=O(\frac{n\log^3n}{\goodfrac^3k})$. Procedure $\InputConvergecast$ takes $O((\frac{n}{k}+\frac{k^2}{n})\frac{\log^2 n}{\goodfrac^2})$ rounds when $k^2>n$ and $O(1)$ rounds when $k^2\le n$. By Thm. \ref{thm:parity:Benign}, line \ref{line:XOR:Benign:Parallel:Weak} can take $O(\frac{\log^4 n}{\goodfrac^3})$ rounds since a machine can belong to $O(\frac{\log^2 n}{\goodfrac^2})$ roots in $\cF$. Each of the $O(\frac{n}{k})$ roots of size $O(\frac{\log n}{\goodfrac})$ in $\cF$ send $O(\frac{\log^2 n}{\goodfrac})$ bits to all $k$ machines. Hence $\Message=O(\frac{n\log^3 n}{\goodfrac^2})$.
\end{proof}

\subsubsection{Randomized parity algorithm in $\Benign(\byzfrac<1/2)$}

In this section we derive a more efficient algorithm for $\XORp$ in the $\Benign(\byzfrac<1/2)$ model, capitalizing on the availability of small majorizing public committees.

\begin{algorithm}
\begin{algorithmic}[1]
\State $\cC_1,\cC_2,\ldots,\cC_k\gets\ElectPublic(\frac{2\log n}{(1/2-\byzfrac)^2},k)$
\State Partition the indices in $\INDEX=[1,n]$ into k disjoint subsets $\INDEX_1,\INDEX_2,\ldots,\INDEX_k$ each of size almost $n/k$.
\For{$i=1,2,\ldots, k$ in parallel}
\label{line:Parity:Benign:Major:cloud}
\For{every machine $\Machine\in\cC_i$}
\State $\res_M[j]\gets\CloudQuery(j)$ for every $j\in\INDEX_i$
\EndFor
\EndFor
\For{$i=1,2,\ldots, k$ in parallel}
\label{line:Parity:Benign:Major:send}
\For{every machine $\Machine\in\cC_i$}
\State $\semiparity_M\gets0$
\EndFor
\For{every index $j\in \INDEX_i$ locally}
\For{every machine $\Machine\in\cC_i$}
\State $\semiparity_M\gets \semiparity_M\oplus\res_M[j]$
\EndFor
\EndFor
\For{every machine $\Machine\in\cC_i$}
\State $\Machine$ sends $<\semiparity_M,i>$ to every machine.
\EndFor
\EndFor
\For{every machine $M$ in parallel}
\State $\parity_M\gets0$
\For{$i=1,2,\ldots, k$ locally}
\If{$M$ received more $\langle 1,i\rangle$ messages from machines in $\cC_i$ than $\langle 0,i\rangle$ messages from machines in $\cC_i$}
~~$\parity_M\gets1-\parity_M$
\EndIf
\EndFor
\State\Return $\parity_M$
\EndFor
\end{algorithmic}
\caption{Algorithm $\MajorizingParity$, $\Benign(\byzfrac<1/2)$, code for all machines}
\label{alg:Parity:Benign:majorizing}
\end{algorithm}

\begin{theorem}
In the $\Benign(\byzfrac<1/2)$ model, 
Algorithm $\MajorizingParity$ w.h.p.
solves the $\XORp$ problem with 
$\Query=O\left(\frac{n \log^2n}{k(1/2-\byzfrac)^4}  \right), \Time = O(\frac{\log^2 n}{(1/2-\byzfrac)^4})$, 
and $\Message = O\left(\frac{k^2\log n}{(1/2-\byzfrac)^2}\right)$. 
\end{theorem}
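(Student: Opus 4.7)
The plan is to verify correctness first and then separately bound each of the three complexity measures, essentially piggy-backing on the earlier properties of $\ElectPublic$ (Thm.~\ref{thm:pub:major:comm}) and on the fact that parity distributes over the partition $\INDEX_1, \dots, \INDEX_k$.

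For correctness, I will argue that with high probability every committee $\cC_i$ is a public majorizing committee (the chosen size $\size=\Theta(\log n/(1/2-\byzfrac)^2)$ satisfies the hypothesis of Thm.~\ref{thm:pub:major:comm}). Conditioned on this event, all honest members of $\cC_i$ query the cloud for exactly the bits in $\INDEX_i$, compute the same value $\semiparity_M=\bigoplus_{j\in\INDEX_i}x_j$ locally, and send $(\semiparity_M,i)$ to every machine in a single message-passing sub-round. Because honest members form a strict majority of $\cC_i$, any machine $M$ receives strictly more copies of the correct $\semiparity$ than of its complement from the members of $\cC_i$; hence the local majority decision in the final loop recovers $\bigoplus_{j\in\INDEX_i}x_j$ for every~$i$, and XORing these $k$ values yields $\bigoplus_{j=1}^{n}x_j$, which is the $\XORp$ output.

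For the complexities, the key quantitative input is Thm.~\ref{thm:pub:major:comm} which says that after $\ElectPublic(\size,k)$, each machine lies in at most $O(\size k/k + \size^2)=O(\size^2)=O(\log^2 n/(1/2-\byzfrac)^4)$ committees w.h.p. Each committee membership costs $M$ at most $|\INDEX_i|=O(n/k)$ cloud queries, so
\[
\Query = O\!\left(\frac{\log^2 n}{(1/2-\byzfrac)^4}\cdot \frac{n}{k}\right)
       = O\!\left(\frac{n\log^2 n}{k(1/2-\byzfrac)^4}\right),
\]
matching the claim. For rounds, $\ElectPublic$ is $O(1)$, the cloud sub-round is a single round, and the broadcasting of $(\semiparity_M,i)$ pairs by each $M$ requires one $O(\log n)$-bit message per committee it belongs to and per destination; under CONGEST this can be scheduled in $O(\log^2 n/(1/2-\byzfrac)^4)$ rounds, giving the claimed $\Time$. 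For messages, each of the $k$ committees has $O(\log n/(1/2-\byzfrac)^2)$ members, each sending a single message to each of the $k$ machines, totalling $O(k^2\log n/(1/2-\byzfrac)^2)$.

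The only subtle step — and the one I expect to be mildly delicate — is the scheduling argument behind the $\Time$ bound: an honest machine $M$ may be a member of up to $\Theta(\log^2 n/(1/2-\byzfrac)^4)$ committees simultaneously, so it must push that many semiparity bits out to every other machine, and we need to check that a single $O(\log n)$ message can be used per round per directed edge without exceeding the CONGEST bandwidth. Since $\log^2 n/(1/2-\byzfrac)^4$ distinct $(\semiparity_M,i)$ pairs each fit in $O(\log n)$ bits, they can be sent in that many consecutive rounds, yielding exactly the stated $\Time$ bound. Everything else is a routine calculation combining Thm.~\ref{thm:pub:major:comm} with the observation that the parity of $\INPUT$ equals the XOR of the per-part parities.
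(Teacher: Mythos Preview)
Your proposal is correct and follows essentially the same approach as the paper: invoke Thm.~\ref{thm:pub:major:comm} to bound the number of committees each machine joins by $O(\size^2)=O(\log^2 n/(1/2-\byzfrac)^4)$, then multiply by $O(n/k)$ for $\Query$, use the per-machine committee count for $\Time$, and count $k$ committees of size $O(\log n/(1/2-\byzfrac)^2)$ each broadcasting one bit for $\Message$. Your write-up is in fact more detailed than the paper's proof, which omits the correctness argument (relying on the analogous Thm.~\ref{thm: Disjunction majorizing Benign(1/2)}) and does not spell out the CONGEST scheduling for $\Time$.
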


\begin{proof}
A machine belongs to $O(\log^2_\byzfrac n)$ committees, so $\Query=O((n \log^2_{\byzfrac} n)/k)$. A machine sends one bit to every machine for each committee it belongs to, so $\Time=O(\frac{\log^2 n}{(1/2-\byzfrac)^4})$. Each of the $k$ committees of size $O(\frac{\log n}{(1/2-\byzfrac)^2})$ sends one bit to all machines, so $\Message=O(\frac{k^2\log n}{(1/2-\byzfrac)^2})$
\end{proof}

\subsection{Lower Bounds for $\ORp$}


\begin{theorem}
\label{thm:LB-ORp-2}
In $\Benign(\byzfrac < 1)$ model, any randomized algorithm for $\ORp(\Density)$  that succeeds with constant probability has $\Query=\Omega(\frac{1}{\goodfrac k}\cdot\InverseDensity)$ in expectation. 
\end{theorem}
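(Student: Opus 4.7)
The plan is to apply Yao's minimax principle and reduce to a deterministic lower bound against a hard input distribution. Define the distribution $D$ over $\{0,1\}^n$: with probability $1/2$ the input is $\INPUT = 0^n$ (call this event $D_0$), and with probability $1/2$ the input is drawn uniformly from the subsets of $\{0,1\}^n$ having exactly $\Density n$ ones (event $D_1$). Note that both outcomes satisfy the $\ORp(\Density)$ promise, and the correct outputs differ between them. By Yao, it suffices to show that any deterministic algorithm $\Algor$ that outputs the correct $\ORp$ value with probability $\ge 2/3$ over $D$ (against some valid Byzantine behavior) requires expected per-machine query complexity $\Omega\!\left(\frac{1}{\goodfrac k}\cdot\InverseDensity\right)$.

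For the adversary, fix the Byzantine set to be any $\byzfrac k$ machines behaving \emph{silently} (i.e., never sending messages); this is a legal static strategy in $\Benign(\byzfrac < 1)$, so $\Algor$ must still be correct against it. Only the at most $\goodfrac k$ honest machines then issue queries. On input $D_0$ the execution is entirely deterministic; let $Q_0$ denote the set of indices ever queried by an honest machine during this execution and let $T_0 = |Q_0|$. Crucially, because the algorithm is deterministic and only depends on observations, the execution on any realization of $D_1$ coincides with the $D_0$-execution up to the first honest query that returns a $1$. Therefore, letting $E$ be the event ``some honest query in the $D_1$-execution sees a $1$,'' we have the clean identity $\Pr_{D_1}[E] = \Pr[Q_0 \cap I \ne \emptyset]$, where $I$ is the uniformly random $\Density n$-subset of $[n]$ holding the ones.

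The correctness budget forces $\Pr[E]$ to be $\Omega(1)$: if $E$ fails, the $D_1$-execution produces the same output as the $D_0$-execution, so the algorithm is wrong on exactly one of the two cases. A short calculation gives $\Pr[\text{correct on } D] \le \tfrac{1}{2} + \tfrac{1}{2}\Pr[E]$, and requiring this to be $\ge 2/3$ forces $\Pr[E] \ge 1/3$. Combining this with the standard hypergeometric estimate $\Pr[Q_0\cap I = \emptyset] = \prod_{i=0}^{\Density n - 1}\!\bigl(1 - \tfrac{T_0}{n-i}\bigr) \ge (1-2T_0/n)^{\Density n} \ge e^{-4 T_0 \Density}$ (valid for $T_0 \le n/4$, with the complementary regime handled trivially since then $T_0 = \Omega(n) = \Omega(\InverseDensity)$ whenever the bound is nontrivial), the condition $\Pr[E] \ge 1/3$ yields $T_0 = \Omega(1/\Density) = \Omega(\InverseDensity)$.

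Finally, because every query counted in $T_0$ is issued by some honest machine and each honest machine makes at most $\Query$ queries, $T_0 \le \goodfrac k \cdot \Query$, so $\Query = \Omega\!\left(\frac{1}{\goodfrac k}\cdot\InverseDensity\right)$ in expectation. The main delicate point is the coupling argument in the second paragraph: one has to verify that adaptivity does not help the algorithm, which works cleanly precisely because all responses are $0$ on the $D_0$ side and adaptive choices only diverge \emph{after} the event $E$ has already occurred, so the identity $\Pr[E] = \Pr[Q_0 \cap I \ne \emptyset]$ is exact rather than a loose bound.
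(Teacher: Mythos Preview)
Your argument is sound in spirit but takes a different route than the paper, and there is one imprecision worth flagging.

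\textbf{The Yao step is stated too loosely.} You write that ``by Yao, it suffices to show that any deterministic algorithm with success probability $\ge 2/3$ over $D$ has cost $\Omega(\cdot)$.'' The standard Monte Carlo version of Yao's principle says that if every deterministic algorithm with error $\le 2\epsilon$ over $D$ has expected cost $\ge L$, then any randomized algorithm with worst-case error $\le \epsilon$ has expected cost $\ge L/2$. So if the randomized success probability is $2/3$ (error $\epsilon=1/3$), you would need your deterministic lower bound to hold for algorithms with success probability merely $\ge 1/3$ over $D$, not $\ge 2/3$. But your inequality $\Pr[\text{correct}]\le \tfrac12+\tfrac12\Pr[E]$ becomes vacuous at threshold $1/3$. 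This is easily repaired: either assume the randomized success probability is a larger constant (the theorem only says ``constant''), or bypass Yao entirely by averaging directly over the algorithm's randomness. Concretely, $\mathbb{E}_r\bigl[\Pr_I[E]\bigr]\ge 1/3$ together with the union bound $\Pr_I[E]\le T_0(r)\cdot\Density$ immediately gives $\mathbb{E}_r[T_0(r)]\ge 1/(3\Density)$, and the per-machine bound follows.

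\textbf{Comparison with the paper.} The paper avoids both Yao's principle and the hypergeometric estimate. It works directly with the randomized algorithm on the all-zero input, lets $p_i$ be the probability that some honest machine queries bit $i$, notes that $\sum_i p_i \le \mathbb{E}[\TotQuery]$, and then uses averaging to exhibit a \emph{specific} set $\INDEX$ of $\Density n$ indices with $\sum_{i\in\INDEX} p_i < 1/2$. The input that is $1$ exactly on $\INDEX$ is then indistinguishable from all-zeros with probability $>1/2$. This is shorter and sidesteps the constant-juggling in the Yao reduction; your approach, once patched, is equally valid and has the merit of being more explicit about the adversary's role (fixing the Byzantine machines to be silent).
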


\begin{proof}  

Consider a randomized algorithm $\mathcal{R}$ that solves $\ORp(\Density)$ with probability at least 1/2 when $\Density$ is known to the algorithm. Recall that $\mathcal{R}$ must be able to distinguish between inputs with at least $\Density n$ 1's from the input with all 0's. Suppose (for the sake of contradiction)  that $\mathbb{E}[\Query]<\frac{1}{2\goodfrac k} \cdot  \InverseDensity$ for $\mathcal{R}$ on all inputs. Let $\TotQuery$ denote the total number of queries performed by honest machines executing $\mathcal{R}$. Clearly, $\mathbb{E}[\TotQuery] < (1/2) \InverseDensity$.

Consider the input $Z$ where all bits in $\INPUT$ are 0's. Let $p_i$ denote the probability that the $i^{th}$ bit $x_i$ is queried by some honest machine. Clearly $\sum_{i=1}^{n}p_i \le \mathbb{E}[\Query]<1/(2\ModifiedDensity)$. This implies that there is a set of indices $\INDEX$ of size $\Density n$ such that $\sum_{i\in\INDEX} p_i<\ModifiedDensity n/(2\ModifiedDensity n)<1/2$. 

Now consider the input $\bar{Z}$ such  that $x_i=1$ for $i\in\INDEX$ and 0 everywhere else.  The probability that $\mathcal{R}$ succeeds in distinguishing between $Z$ and $\bar{Z}$ is
$<1/2$, which is a contradiction. Therefore $\mathbb{E}[\TotQuery]\ge\frac{1}{2 \ModifiedDensity}=\Omega(\InverseDensity)$. Considering $\goodfrac k$ honest machines, we get $\mathbb{E}[\Query]=\Omega(\frac{1}{\goodfrac k}\cdot\frac{1}{\ModifiedDensity})$.
\end{proof}

\begin{corollary}
In the $\Benign(\byzfrac<1)$ model, any randomized algorithm for $\IDp$ that succeeds with constant probability has $\Query=\Omega\left(\frac{1}{\goodfrac k}\cdot\InverseDensity\right)$ in expectation. 
\end{corollary}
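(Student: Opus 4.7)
The plan is to derive this lower bound via a simple reduction from $\ORp(\Density)$ to $\IDp$. The intuition is that any algorithm that downloads every input bit can, as a free post-processing step, compute any Boolean function of the bits locally, so $\IDp$ is at least as hard as $\ORp(\Density)$ for every choice of $\Density$.

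More concretely, I would proceed as follows. Suppose $\cA$ is any randomized $\IDp$ algorithm in the $\Benign(\byzfrac<1)$ model that succeeds with constant probability. Given any parameter $\Density\in(0,1]$, I would construct a randomized algorithm $\cA'$ for $\ORp(\Density)$ by running $\cA$ and then having each honest machine output the Boolean OR of its obtained copy of the input array. Correctness of $\cA'$ follows from correctness of $\cA$: whenever $\cA$ returns the correct array $\INPUT$ at an honest machine, that machine computes the correct value of $\ORp$. The query complexity of $\cA'$ on any honest machine equals that of $\cA$, since the post-processing requires no additional queries and no additional machine-machine communication is needed either.

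Applying Theorem~\ref{thm:LB-ORp-2} to $\cA'$ gives $\mathbb{E}[\Query(\cA')]=\Omega\!\left(\frac{1}{\goodfrac k}\cdot\InverseDensity\right)$, and hence the same lower bound on $\mathbb{E}[\Query(\cA)]$. Since this holds for every choice of $\Density$, the corollary follows (and in particular, taking $\Density=1/n$ recovers the worst-case bound $\Omega(n/(\goodfrac k))$).

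There is essentially no obstacle here beyond verifying that the reduction is clean: the $\Benign$ adversary's power is the same for both problems, the honest machines already hold all of $\INPUT$ after $\cA$ terminates, and computing the OR locally costs nothing in the model. Thus the corollary is an immediate consequence of Theorem~\ref{thm:LB-ORp-2}.
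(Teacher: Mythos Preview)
Your proposal is correct and takes essentially the same approach as the paper: the paper's proof is the single line ``an algorithm for $\IDp$ implies an algorithm for $\ORp$ with the same complexity,'' which is exactly the reduction you spell out. Your added remark about quantifying over $\Density$ and specializing to $\Density=1/n$ is a helpful clarification the paper leaves implicit.
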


\begin{proof}
The claim holds since an algorithm for $\IDp$ implies an algorithm for $\ORp$ with the same complexity.
\end{proof}

\begin{corollary}
In the $\Benign(\byzfrac<1)$ model, any randomized algorithm for $\XORp$ that succeeds with constant probability  has 
$\Query=\Omega\left(\frac{1}{\goodfrac k}\cdot\InverseDensity\right)$ in expectation.
\end{corollary}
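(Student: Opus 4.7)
My plan mirrors the proof of the $\IDp$ corollary above: reduce $\ORp(\Density)$ to $\XORp$ and invoke Thm~\ref{thm:LB-ORp-2}. The reduction hinges on the observation that when the input is restricted to either the all-zeros vector or a vector with exactly $\Density n$ ones where $\Density n$ is odd, the $\XORp$ output equals the $\ORp$ output: the parity is $0$ for the all-zeros input and $1$ for any input with an odd number of ones. Hence any algorithm computing $\XORp$ on such inputs directly solves $\ORp(\Density)$ with the same success probability.

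To carry this out, I would first fix an arbitrary density $\Density \in (0,1]$ and choose a nearby density $\Density' \in (0,1]$ satisfying $\Density' n$ odd and $|\Density - \Density'| \le 1/n$. Given any randomized algorithm $\cR$ for $\XORp$ with constant success probability, the execution of $\cR$ on an $\ORp(\Density')$ instance produces the correct $\ORp(\Density')$ answer with the same probability. Applying Thm~\ref{thm:LB-ORp-2} to $\cR$ in its role as an $\ORp(\Density')$ algorithm, we conclude that $\Query = \Omega\bigl(\frac{1}{\goodfrac k}\cdot\frac{1}{\ModifiedDensity'}\bigr)$ in expectation. Since $|\Density' - \Density| \le 1/n$ and $\ModifiedDensity = \max(1/n, \Density)$, we have $\frac{1}{\ModifiedDensity'} = \Theta\bigl(\frac{1}{\ModifiedDensity}\bigr)$, yielding the desired bound.

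The only subtle point, and the main obstacle to a completely one-line proof analogous to the $\IDp$ corollary, is the parity issue: when $\Density n$ is even, both the all-zeros input and any input with $\Density n$ ones have parity $0$, and a direct reduction fails because $\XORp$ cannot distinguish these two cases. The small perturbation from $\Density$ to $\Density'$ (ensuring $\Density' n$ is odd) sidesteps this obstacle while preserving $\InverseDensity$ up to a constant factor, so the asymptotic lower bound is unaffected.
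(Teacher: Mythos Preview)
Your reduction is correct, but it is more elaborate than what the paper does. The paper's proof is a single line: it applies the argument of Thm.~\ref{thm:LB-ORp-2} with the specific choice $\Density = 1/n$. With this choice, $\Density n = 1$ is odd automatically, so the parity obstacle you identify never arises: the all-zeros input has parity $0$, any input with exactly one $1$ has parity $1$, and hence an $\XORp$ algorithm distinguishes them. This single choice already gives $\InverseDensity = n$, which is the strongest instance of the bound anyway, since $\XORp$ is not a density-parameterized problem.

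Your route---perturbing an arbitrary $\Density$ to a nearby $\Density'$ with $\Density' n$ odd---is valid and proves the stated inequality uniformly in $\Density$, but the extra generality buys nothing here: the bound $\Omega\bigl(\tfrac{1}{\goodfrac k}\cdot\InverseDensity\bigr)$ is maximized at $\Density = 1/n$, and that is all a lower bound for the (unparameterized) $\XORp$ problem needs. Your parity observation is the right one; the paper simply sidesteps it by picking the one density where it is vacuous.
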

\begin{proof}
The proof follows by adapting the proof of Thm.~\ref{thm:LB-ORp-2} with $\Density = 1/n$.
\end{proof}

\clearpage

\centerline{\Large\bf Appendix: Some Deferred Proofs}


\inline Proof of Lemma \ref{lem:large-set-expander}:
\label{proof:large-set-expander}
\PROOFA
\hfill $\Box$

\inline Proof of Lemma \ref{lem:glse}:
\label{proof:glse}
\PROOFGLSE
\hfill $\Box$

\inline Proof of Lemma \ref{lem:coins}:
\label{proof:lem:coins}
\PROOFB
\hfill $\Box$

\inline Proof of Lemma \ref{lem:pub:comm:num}:
\ProofFourThree
\hfill $\Box$

\inline Proof of Lemma \ref{lem:pub:comm:maj}:
\ProofFourFour
\hfill $\Box$

\inline Proof of Lemma \ref{lem:pub:comm:weak}:
\ProofFourFive
\hfill $\Box$

\bigskip
\bibliographystyle{plainurl}
\bibliography{main}

\end{document}